\newtheorem{theorem}{Theorem}[section]
\newtheorem{proposition}[theorem]{Proposition}
\newtheorem{corollary}[theorem]{Corollary}
\newtheorem{lemma}[theorem]{Lemma}
\theoremstyle{definition}
\newtheorem{remark}[theorem]{Remark}
\newtheorem{definition}[theorem]{Definition}
\newtheorem{example}[theorem]{Example}
\numberwithin{equation}{section}
\numberwithin{figure}{section}
\numberwithin{table}{section}
\newcommand\Pone{\textrm{P}_{\textrm{I}} }
\newcommand\Ptwo{\textrm{P}_{\textrm{II}} }
\newcommand\Ptwojm{\Ptwo^{\textrm{JM}}}
\newcommand\Ptwofn{\Ptwo^{\textrm{FN}}}
\newcommand\Pthree{\textrm{P}_{\textrm{III}}}
\newcommand\Pfour{\textrm{P}_{\textrm{IV}}}
\newcommand\Pfive{\textrm{P}_{\textrm{V}}}
\newcommand\Psix{\textrm{P}_{\textrm{VI}}}
\newcommand{\ns}{\scalebox{1.25}[1.0]{\( - \)}}
\newcommand{\tcr}{\textcolor{red}}
\newcommand{\tcb}{\textcolor{blue}}
\newcommand{\tcm}{\textcolor{black}}
\newcommand{\orcidauthorNJ}{0000-0001-7504-4444}
\newcommand{\orcidauthorMM}{0000-0001-9917-2547}
\newcommand{\nn}{\nonumber}
\begin{document}

\title{Segre surfaces and geometry of the Painlev\'e equations}
\author{Nalini Joshi}
\address{School of Mathematics and Statistics F07, The University of Sydney, New South Wales 2006, Australia.}
\email{nalini.joshi@sydney.edu.au}
\thanks{NJ's ORCID is \orcidauthorNJ, MM's is \orcidauthorMM. PR's is 0000-0002-0461-7580}
\author{Marta Mazzocco}
\address{School of Mathematics, University of Birmingham, UK.}
\email{m.mazzocco@bham.ac.uk}
\author{Pieter Roffelsen}
\address{School of Mathematics and Statistics F07, The University of Sydney, New South Wales 2006, Australia.}
\email{pieter.roffelsen@sydney.edu.au}
\begin{abstract}
In this paper, we consider a six parameter family of affine Segre surfaces embedded in $\mathbb C^6$. For generic values of the parameters, this family is associated to the $q$-difference sixth Painlev\'e equation. We show that different limiting forms of this family give Segre surfaces that are isomorphic as affine varieties to the monodromy manifolds of each  Painlev\'e differential equation.
\end{abstract}

%--------------------------------------------------------------------
\subjclass[2020]{34M55; 
%Painlevé equations;
39A13
%q-difference 
14J10
%Families of surfaces in AG
}
\keywords{Segre surfaces;
  Painlev\'e equations;
  Monodromy manifold}
%\date{\textcolor{red}{\DTMnow}}
\maketitle
\setcounter{tocdepth}{1}
\tableofcontents

\section{Introduction}
In this paper, we consider a family of affine Segre surfaces defined, for given $\mu_1,\ldots,\mu_6\in \mathbb C,\lambda_1,\lambda_2\in \mathbb C^*$, by
\begin{equation}\label{eq:segqp6}
\begin{cases}
&{z}_{1} + {z}_{2} + {z}_{3} + {z}_{4} + {z}_{5} + {z}_{6}=0,\\
&\mu_{1}{z}_{1} +\mu_{2} {z}_{2} + \mu_{3} {z}_{3} + \mu_{4} {z}_{4} + \mu_{5} {z}_{5} + 
  \mu_{6} {z}_{6} - 1=0,\\
&{z}_{3} {z}_{4} - {z}_{1} {z}_{2} \lambda_{1}=0,\\
&{z}_{5} {z}_{6} - {z}_{1} {z}_{2} \lambda_{2}=0.
\end{cases}
\end{equation}
% whose limiting forms give Segre surfaces as monodromy manifolds of the Painlev\'e equations. 
We show that limiting forms of this family give Segre surfaces as monodromy manifolds of the Painlev\'e equations. This result is compelling on two levels. First, the well-known monodromy manifolds of the differential Painlev\'e equations are cubic surfaces.
Second, the geometry of each Segre surface turns out to have a deep connection to distinguished solutions of the corresponding Painlev\'e equation and provides a natural Poisson structure. 

The generic family of surfaces \eqref{eq:segqp6}, which we will denote in this paper by $\mathcal Z_q$, is associated with a \textit{difference} equation known as the $q$-difference sixth Painlev\'e equation $q\Psix$ \eqref{eq:qpvi}.  \tcm{Namely, for generic parameter values, the monodromy manifold of $q\Psix$ can be identified with \eqref{eq:segqp6}, see \cite[Theorem 2.20]{jr_qp6}, with explicit formulas for the coefficients given in Section \ref{subsec:parametrisation}.
This $q$-difference equation was introduced by Jimbo-Sakai \cite{jimbosakai} and they showed that the continuum limit of this difference equation becomes the celebrated sixth differential Painlev\'e equation $\Psix$ \eqref{eq:pvi}.}  

Like $\Psix$, $q\Psix$ has been associated with conformal field theory and algebraic geometry. \tcm{In particular, the latter 
encodes discrete dynamics of $q$-deformed conformal blocks \cite{JNS}.}

\tcm{ $q\Psix$ } has also been related to the classical study of linear $q$-difference equations, series expansions and connection problems; see \cite{jimbosakai}, \cite{manoqpvi:2010}, \cite{ORS20}, \cite{jr_qp6} and references therein.

Although the equations \eqref{eq:segqp6} defining $\mathcal Z_q$ appear to contain eight parameters, in Corollary \ref{cor:6dim} we show  that, up to affine equivalence, they define a six-parameter family of embedded affine Segre surfaces in 
$\mathbb C^6$. 
It is worth noting that $\mathcal Z_q$ arises from the Riemann-Hilbert problem associated with $q\Psix$. See \cite{jr_qp6} for  details of its explicit derivation. We prove that the members of the $\mathcal Z_q$ family are embedded affine Segre surfaces of generic type. This in particular fills a gap stated as an open problem in \cite{RSpreprint}, see Remark \ref{remark:mostgeneric}.

A large amount of work has been devoted to the differential Painlev\'e equations. For $(\omega_1,\omega_2,\omega_3,\omega_4)\in \mathbb C^4$, the monodromy manifold of $\Psix$ is well-known to be given by the  Jimbo-Fricke family of affine cubic surfaces  \cite{FrickeKlein1897}, \cite{J} defined by
\begin{equation}\label{eq:fksurf}
\begin{cases}
x_1x_2x_3
+x_1^2+x_2^2+x_3^2
+\omega_1 x_1+\omega_2x_2+\omega_3 x_3+\omega_4=0, \\(x_1,x_2,x_3)\in\mathbb C^3 .
\end{cases}
\end{equation}
Monodromy manifolds of the remaining Painlev\'e equations have been unified and collected in \cite{SvdP} and in \cite{CMR} it was shown how to obtain them as limits of \eqref{eq:fksurf}, corresponding to confluence limits of the Painlev\'e equations. All such monodromy manifolds are cubic surfaces whose projective completion has a triangle of lines at infinity.

Since the continuum limit of  $q\Psix$ leads to $\Psix$, and confluence limits of the latter give rise to the remaining Painlev\'e equations, a natural question is whether Segre surfaces also arise as monodromy manifolds of the Painlev\'e differential equations. Motivated by this question, we show how to obtain explicit Segre surfaces in $\mathbb C^6$  for each Painlev\'e equation, from appropriately \tcm{parametrised} coordinates and coefficients. They all have the same form:
 \begin{subequations}\label{gen-segre}
\begin{align}
&\epsilon_1 {z}_{1} +\epsilon_2  {z}_{2} + {z}_{3} + \epsilon_4{z}_{4} +{z}_{5} +\epsilon_6 {z}_{6}=0\\
& \rho_{3} {z}_{3} +  {z}_{4} + \rho_{5} {z}_{5} + 
  \rho_{6} {z}_{6} - 1=0\\
&{z}_{3} {z}_{4} - {z}_{1} {z}_{2} \lambda_{1}=0\\
&{z}_{3} {z}_{4} - {z}_{5} {z}_{6} \lambda_{2}=0,
\end{align}
\end{subequations}
with different choices of the parameters $\epsilon_i$ and $\rho_j$ according to the chosen Painlev\'e equation.  We call these surfaces $\mathcal Z$--Segre surfaces. They are listed in Table \ref{tb:Z-Segre}.

\begin{longtable}{|c || c |} 
\caption{Affine Segre surfaces for $q\Psix$ and all differential Painlev\'e equations.}\label{tb:Z-Segre}\\
 \hline
$\begin{array}{c}
    \hbox{Painlev\'e} \\
    \hbox{ equation}\\
\end{array} $  &  $\mathcal Z$--Segre surface \\[10pt]
 \hline
$q\Psix$ & $\begin{array}{l}
 {z}_{1} +  {z}_{2} + {z}_{3} + {z}_{4} + {z}_{5} +  {z}_{6}=0,\\[2pt]
\rho_{2} {z}_{2}+\rho_{3} {z}_{3} +  {z}_{4} + \rho_{5} {z}_{5} + 
  \rho_{6} {z}_{6} = 1,\\[2pt]
{z}_{3} {z}_{4} -\lambda_{1} {z}_{1} {z}_{2} =0,\quad
z_5 z_6 - \lambda_2  z_1  z_2=0.
\end{array} $\\[10pt]
 \hline
$\Psix$ & $\begin{array}{l}
 {z}_{1} +  {z}_{2} + {z}_{3} + {z}_{4} + {z}_{5} + {z}_{6}=0,\\[2pt]
\rho_{3} {z}_{3} +  {z}_{4} + \rho_{5} {z}_{5} + 
  \rho_{6} {z}_{6} - 1=0,\\[2pt]
{z}_{3} {z}_{4} -\lambda_{1} {z}_{1} {z}_{2}=0,\quad
z_5 z_6 - \frac{\rho_3\lambda_1}{\rho_5 \rho_6}  z_1  z_2 =0.
\end{array} $\\[10pt]
 \hline
 $\Pfive$ & $\begin{array}{l}
  {z}_{1} +   {z}_{2} + {z}_{3} + {z}_{4} + {z}_{5} +  {z}_{6}=0,\\
   {z}_{4} + \rho_{5} {z}_{5}  - 1=0,\\
{z}_{3} {z}_{4} -\lambda_{1} {z}_{1} {z}_{2} =0,\quad
z_5 z_6 - \lambda_2  z_1  z_2=0.
\end{array} $\\[10pt]
 \hline
$\Pfive^{\text{deg}}$   & $\begin{array}{l}
 {z}_{1} + {z}_{3} + {z}_{4} + {z}_{5} +{z}_{6}=0,\\
\rho_{3} {z}_{3} +  {z}_{4} + \rho_{5} {z}_{5} + 
  \frac{\rho_3}{\rho_5} {z}_{6} - 1=0,\\
{z}_{3} {z}_{4} - {z}_{1} {z}_{2} =0,\quad
z_5 z_6 -   z_1  z_2=0.
\end{array} $\\[10pt]
 \hline
 $\Pfour$ & $\begin{array}{l}
 {z}_{1} +  {z}_{2} + {z}_{3} + {z}_{4} + {z}_{5} +  {z}_{6}=0,\\
   {z}_{4}  - 1=0,\\
{z}_{3} {z}_{4} - \lambda_{1}{z}_{1} {z}_{2}=0,\quad
z_5 z_6 - \lambda_2  z_1  z_2=0.
\end{array} $\\[10pt]
 \hline
 $\Pthree^{D_6}$ & $\begin{array}{l}
 {z}_{1} +  {z}_{2} + {z}_{3} + {z}_{4} + {z}_{5} =0,\\
  {z}_{4} + \rho_{5} {z}_{5} - 1=0,\\
{z}_{3} {z}_{4} - \lambda_{1}{z}_{1} {z}_{2} =0,\quad
z_5 z_6 -   z_1  z_2=0.
\end{array} $\\[10pt]
 \hline
 $\Pthree^{D_7}$ & $\begin{array}{l}
 {z}_{1} +  {z}_{2} + {z}_{3} + {z}_{4} + {z}_{5}  =0,\\
   {z}_{4} + \rho_{5} {z}_{5}   - 1=0,\\
{z}_{3} {z}_{4} - {z}_{1} {z}_{2}  =0,\quad
z_5 z_6 - z_1  z_2=0.
\end{array} $\\
%\hline
%$\Pthree^{D_8}$ & 
% $\begin{array}{l}
%    {z}_{1} +  {z}_{2} + {z}_{3} + {z}_{4} + {z}_{5} =0,\\
%      {z}_{4} + {z}_{5}   - 1 =0,\\
% {z}_{3} {z}_{4} - {z}_{1} {z}_{2}  =0,\quad
% {z}_{3} {z}_{4} - {z}_{5} {z}_{6} =0.
% \end{array} $
%?
 \hline
$\Ptwojm,\, \Ptwofn$ & $\begin{array}{l}
 {z}_{1} +{z}_{2} + {z}_{3} + {z}_{4} + {z}_{5} +{z}_{6}=0,\\
  {z}_{4}  - 1=0,\\
{z}_{3} {z}_{4} - {z}_{1} {z}_{2} =0,\quad
z_5 z_6 - \lambda_2  z_1  z_2=0.
\end{array} $\\
 \hline
$\Pone$ & $\begin{array}{l}
  {z}_{3} + {z}_{4} + {z}_{5} + {z}_{6}=0,\\
  {z}_{4}  - 1=0,\\
{z}_{3} {z}_{4} - {z}_{1} {z}_{2}  =0,\quad
{z}_{3} {z}_{4} - {z}_{5} {z}_{6} =0.
\end{array} $\\
 \hline
\end{longtable} 

We also show that the resulting Segre surfaces are isomorphic to the corresponding cubic surfaces as affine varieties, for each  Painlev\'e equation. To achieve this, first we calculate the continuum limit $\mathcal Z_1$ of the Segre surface $\mathcal Z_q$ and show that it is given by \eqref{gen-segre} with generic parameters defined by equations \eqref{eq:mu-cl} and \eqref{eq:la-cl}. Our first main result is the following theorem.

\begin{theorem}\label{thm:isomorphism_intro}
    The Jimbo-Fricke cubic surface for \emph{$\Psix$}
    and the Segre surface $\mathcal Z_1$ are isomorphic as affine varieties, with parameters related by equations \eqref{eq:mu-cl}, \eqref{eq:la-cl}, \eqref{eq:pvi-om} and \eqref{eq:pvi-om1}.
\end{theorem}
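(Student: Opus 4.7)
The plan is to exhibit an explicit polynomial map $\Phi\colon \mathcal{Z}_1 \to F_{\mathrm{JF}}$ to the Jimbo-Fricke cubic \eqref{eq:fksurf}, a regular inverse $\Psi$, and to extract the parameter identification \eqref{eq:pvi-om}, \eqref{eq:pvi-om1} as a by-product of the construction. As a preliminary reduction I would use the two linear equations defining $\mathcal{Z}_1$ to eliminate two of the six coordinates (say $z_2$ and $z_4$), so that $\mathcal{Z}_1$ is presented as a complete intersection of two quadrics in $\mathbb{C}^4$. The two remaining quadratic Segre relations then couple the three natural pairs $(z_1,z_2)$, $(z_3,z_4)$, $(z_5,z_6)$ through the common product $p = z_1 z_2$: one has $z_3 z_4 = \lambda_1 p$ and $z_5 z_6 = (\rho_3\lambda_1/\rho_5\rho_6)\,p$.

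Guided by this pair structure, I would postulate that each $x_i$ is an affine combination of one pair $(z_{2i-1},z_{2i})$ --- concretely $x_i = \alpha_i z_{2i-1} + \beta_i z_{2i} + \gamma_i$ --- so that the trilinear term $x_1 x_2 x_3$ in the Jimbo-Fricke equation arises from the three quadratic Segre relations, while the squared terms $x_i^2$ come from their residual cross-products. Substituting this ansatz into \eqref{eq:fksurf} produces a polynomial identity in $z_1,\ldots,z_6$ that must lie in the Segre ideal; matching coefficients determines the constants $\alpha_i,\beta_i,\gamma_i$ and simultaneously forces explicit formulas for $\omega_1,\ldots,\omega_4$ in terms of $\lambda_1,\rho_3,\rho_5,\rho_6$, which should reproduce \eqref{eq:pvi-om}, \eqref{eq:pvi-om1}. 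The inverse $\Psi$ is then obtained by solving the linear system $x_i = \alpha_i z_{2i-1} + \beta_i z_{2i} + \gamma_i$ jointly with the pair-product relations, yielding closed-form expressions for the $z_i$ that are rational in $(x_1,x_2,x_3)$.

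The main obstacle is twofold. First, there is the creative step of guessing the correct ansatz for $\Phi$; this can be shortcut by taking the continuum limit of the analogous $q$-difference identification between $\mathcal{Z}_q$ and the monodromy manifold of $q\Psix$, given by \cite[Theorem 2.20]{jr_qp6} together with the parametrisation in Section \ref{subsec:parametrisation}. Second, and more delicate, is the verification that $\Psi$ is regular on all of $F_{\mathrm{JF}}$ (not merely birational): since $\mathcal{Z}_1$ is a degree-four Segre surface while $F_{\mathrm{JF}}$ is a cubic, there is no projective isomorphism between them, so any denominators appearing in the formulas for $\Psi$ must be shown to be non-vanishing on the affine cubic by invoking the Jimbo-Fricke relation. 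The formal outcome is then an isomorphism of coordinate rings $\Phi^{*}\colon \mathbb{C}[F_{\mathrm{JF}}] \xrightarrow{\sim} \mathbb{C}[\mathcal{Z}_1]$, which is precisely what is required for isomorphism as affine varieties.
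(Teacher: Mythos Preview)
Your proposed ansatz $x_i = \alpha_i z_{2i-1} + \beta_i z_{2i} + \gamma_i$ does not work: the isomorphism is not block-diagonal with respect to the three pairs $(z_1,z_2)$, $(z_3,z_4)$, $(z_5,z_6)$. One can see this directly from the explicit formulas in the paper's Theorem~\ref{thm:isomorphism}: there $z_3,z_4,z_5,z_6$ are each of the form $c\,(x_2-a)(x_3-b)$, so no affine combination of $z_3$ and $z_4$ alone can recover $x_2$ (cancelling the $x_2x_3$-term forces the $x_3$-term to survive, and vice versa). Likewise the expressions for $z_1,z_2$ contain $x_2x_3$, $x_2$, $x_3$ with coefficients that cannot be simultaneously killed by a single linear combination. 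So the coefficient-matching you describe would yield an inconsistent system, and the ``pair-by-pair'' heuristic linking the trilinear term $x_1x_2x_3$ to ``the three quadratic Segre relations'' is off: there are only two quadratics in the Segre ideal, not three.

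The paper's route explains what the correct structure is. It first blows down the line $L_1^\infty$ of the Jimbo--Fricke cubic by setting $y_4=x_2x_3$, obtaining an affine Segre surface $\mathcal{Y}$; then it shows that $\mathcal{Y}$ and $\mathcal{Z}_1$ are \emph{affinely equivalent}, i.e.\ related by a genuine affine-linear map in $(y_1,y_2,y_3,y_4)$. Composing gives $\Phi_\mathcal{X}:\mathcal{X}\to\mathcal{Z}_1$ with each $z_k$ affine linear in $(x_1,x_2,x_3,x_2x_3)$ --- quadratic in $x$, not linear --- and the inverse is automatically polynomial because the affine equivalence $\mathcal{Y}\to\mathcal{Z}_1$ inverts linearly and $x_i=y_i$. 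This dissolves your regularity concern about denominators in $\Psi$ entirely. The discovery of the correct affine map is not done by ansatz-matching but by comparing Jimbo's connection formula for $\Psix$ with the $q\uparrow 1$ limit of the $q\Psix$ asymptotics, expressed through the Tyurin ratios $g_1^{(1)},g_2^{(1)},g_3^{(1)}$ (Proposition~\ref{prop:obtainingisomorphism}); these serve as the bridge variables on both sides and make the map essentially forced. Your instinct to take a continuum limit of the $q$-identification is therefore right, but what it produces is the $(x_1,x_2,x_3,x_2x_3)$-linear structure rather than a pairwise one.
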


In Remark \ref{remark:confluence} we show that this theorem answers certain questions asserted in \cite{RSpreprint}.
The proof of Theorem \ref{thm:isomorphism_intro} relies on two main ingredients. One is the comparison of the asymptotic behaviours of solutions of $\Psix$ and $q\Psix$ as $q\rightarrow 1$. The other is based on blow-downs of lines at infinity on the cubic surface.

% The key point in the proof of Theorem \ref{thm:isomorphism_intro} is that blowing down any of the lines at infinity on
% the cubic yields a corresponding quartic Segre surface in $\mathbb C^4$. We call this $\mathcal Y$--Segre. The linear isomorphism between the $\mathcal Y$--Segre and  $\mathcal Z_1$ is then built by comparing the asymptotic behaviours of solutions of $\Psix$ and $q\Psix$ as $q\rightarrow 1$. 

Using an explicit form of this isomorphism (see Theorem \ref{thm:isomorphism}), we
show that, in all un-ramified cases, the confluence scheme of the Painlev\'e monodromy manifolds can be carried through to the affine transformation, therefore producing isomorphisms between each cubic surface 
% (or its blow down to a $\mathcal Y$--Segre) 
and the $\mathcal Z$--Segre obtained by confluencing $\mathcal Z_1$. 

For the ramified cases, the confluence either produces a reducible Segre surface or a family which does not have the correct number of free parameters. This is not surprising because the confluences to ramified and to non-ramified Painlev\'e equations are deeply different in geometric as well as analytic terms. 

For this reason, we perform an in--depth study of the lines and singularities of all cubic surfaces with a
triangle of lines at infinity, as well as their blow downs,
% to the corresponding $\mathcal Y$--Segre surfaces
which allows us to build
the isomorphic $\mathcal Z$--Segre for ramified cases as well.

\begin{theorem}
    The monodromy manifold of each differential  Painlev\'e equation (except $\Pthree^{D_8}$) is isomorphic to the corresponding $\mathcal Z$--Segre surface reported in Table \ref{tb:Z-Segre} as affine varieties. 
\end{theorem}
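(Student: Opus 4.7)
My plan is to argue case by case along the confluence diagram of the Painlev\'e equations, bootstrapping from Theorem \ref{thm:isomorphism_intro} which supplies the base case $\Psix$. The explicit affine isomorphism between the Jimbo-Fricke cubic and $\mathcal Z_1$ (provided by the more detailed Theorem \ref{thm:isomorphism}) is the vehicle I would transport through successive confluences, splitting the analysis into unramified and ramified branches, as foreshadowed in the discussion preceding the theorem.

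For the unramified branch ($\Psix\to\Pfive\to\Pfour,\Ptwojm$ and the analogous route through $\Pthree^{D_6}$), the strategy is to apply on the Jimbo-Fricke side the confluence scaling of \cite{CMR} to the parameters $(\omega_1,\ldots,\omega_4)$ of \eqref{eq:fksurf}, obtaining the cubic for the target equation, and simultaneously apply a matching rescaling to the parameters $(\rho_i,\lambda_j)$ on the Segre side, obtaining the corresponding entry of Table \ref{tb:Z-Segre}. The remaining task is to check that the polynomial change of coordinates implementing the isomorphism of Theorem \ref{thm:isomorphism} has a well-defined, non-degenerate limit: no denominator is allowed to vanish and the Jacobian on the affine chart must remain non-zero. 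Once this is verified, the isomorphism at the previous confluence level descends to the next, and the unramified cases follow by induction along the scheme.

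For the ramified cases ($\Pfive^{\mathrm{deg}}$, $\Pthree^{D_7}$, $\Ptwofn$, $\Pone$), the confluence of $\mathcal Z_1$ is known to produce either a reducible Segre surface or a family with the wrong number of parameters, so the inductive step breaks down. Here I would instead construct the isomorphism directly, using the in-depth geometric study of cubics with a triangle of lines at infinity and of their blow-downs advertised in the text. For each such equation I would fix the relevant cubic from \cite{SvdP}, identify its lines at infinity and singular points, and perform blow-ups/blow-downs to realise a divisor configuration compatible with the embedding $\mathbb C^6\supset\mathcal Z$. From this geometric data I would then read off explicit rational coordinates $(z_1,\ldots,z_6)$ fulfilling the relations of the corresponding row of Table \ref{tb:Z-Segre}, and verify biregularity on the affine open sets.

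The main obstacle will be this last step for the genuinely ramified cases, particularly $\Pthree^{D_7}$ and $\Pone$, where the triangle of lines at infinity degenerates further and the space of blow-downs is narrower. There one has to ensure that the candidate coordinates $z_i$ not only satisfy the four defining equations but also separate points and tangent vectors on the affine chart; checking this amounts to a careful analysis of the exceptional divisors produced by the blow-downs and of the equations cutting out the Segre image. By contrast, the unramified part should reduce essentially to routine limit computations once Theorem \ref{thm:isomorphism_intro} is available.
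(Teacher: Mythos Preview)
Your proposal is correct and follows essentially the same route as the paper. The paper does exactly this: it transports the explicit affine map of Theorem \ref{thm:isomorphism} through the confluence rescalings of \cite{CMR} for the unramified chain $\Psix\to\Pfive\to\{\Pfour,\Pthree^{D_6}\}\to\Ptwojm$, checking in each case that the coefficients $\xi_{jk}$ rescale compatibly so that the limiting map is still an affine isomorphism; for the ramified cases it classifies cubics with a triangle at infinity by their corner singularities, blows down a chosen line at infinity to an intermediate $\mathcal Y$--Segre in $\mathbb C^4$, and then writes down an explicit affine linear change of variables $z_k=z_k(y_1,\dots,y_4)$ matching the relevant row of Table \ref{tb:Z-Segre}. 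Two small points where the paper is more concrete than your sketch: for $\Ptwofn$ it proceeds by first exhibiting an explicit isomorphism between the $\Ptwofn$ and $\Ptwojm$ cubics (so the same $\mathcal Z$--Segre serves both), and for $\Pfive^{\mathrm{deg}}$ it observes that the cubic coincides with that of $\Pthree^{D_6}$ after a simple parameter rescaling; no blow-ups are ever needed, only a single blow-down of a line at infinity followed by an affine linear identification.
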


\begin{remark}
    It is not unusual for $\Pthree^{D_8}$ to be an  exceptional case here. It is often the exception in many studies of the Painlev\'e equations in the literature. 
    For example, in \cite{BCR}, Lax pairs were proposed for all the generalised multi-particle Painlev\'e equations, and $\Pthree^{D_8}$ is the only one for which the isomonodromic Hamiltonian is rational rather than polynomial. 
\end{remark}

The study of lines not only helps in building the above mentioned isomorphism, but it has its own interest and a long history in algebraic geometry. The famous Cayley-Salmon theorem \cite{cay1849,sal1849} shows that there are 27 lines on a general smooth cubic surface in $\mathbb C\mathbb P^3$. For the highly transcendental solutions of the Painlev\'e equations, these lines give vital information about certain distinguished behaviours. For example, for the first Painlev\'e equation ($\Pone$), lines on the cubic surface correspond to the tronqu\'ee solutions \cite[Theorem 3]{kapaev_kitaev} and in the case of $\Psix$, they correspond to truncations of the generic asymptotic expansions, see  \cite[Prop. 4.5]{klimes} and \cite[Table 1]{guzzetti_tabulation}. On the other hand, geometric properties of the Jimbo-Fricke surfaces and its confluence limits make them important objects in many areas of mathematics, including Cherednik algebras \cite{Obl,MM}, mirror symmetry \cite{GHKS}, Calabi-Yau algebras \cite{CMR1}, and moduli spaces \cite{SvdP}.

Another perspective arises in Okamoto's characterization of the initial-value space and symmetry group of each Painlev\'e equation \tcm{\cite{N04,O79,sakai}} in terms of divisors,  intersection theory and the corresponding interpretation of such divisors as generators of affine Weyl groups.  In this context, the corresponding limits of the Painlev\'e equations are realized as coalescences of base points. 

The lattices associated with extended affine Weyl or Coxeter groups \cite{H92,C48,G67} lead naturally to the construction of discrete Painlev\'e equations, with each step of iteration being a translation on the lattice, mapping coordinates in one tile (or polytope) in the periodic lattice to another such tile. Multiplicative or $q$-difference Painlev\'e equations arise from such operations on a hyperbolic lattice. 

In the case of the differential Painlev\'e equations, the monodromy manifold is ``dual'' to the initial value space under the Riemann--Hilbert correspondence. For $q$-discrete Painlev\'e  equations, a notion of monodromy manifold was missing until recently \tcm{\cite{jr_qp6,jr_qp4,ORS20}}. By constructing a Segre surface for each differential Painlev\'e equation, we  provide the start of a uniform setting that naturally extends to a description of the monodromy manifold of their $q$-difference analogues. 

\tcm{The coordinate ring of a Painlev\'e monodromy manifold is endowed with a natural Poisson bracket \cite{nambu}, that coincides with the one induced by the natural symplectic structure \cite{tak} on the surface.} By a suitable deformation quantization of this Poisson algebra, the spherical sub-algebras of certain confluent versions of the Cherednik algebra of type $\check C_1C_1$ were obtained in \cite{MM}. This led to a representation--theoretic approach to the theory of the Painlev\'e equations and to surprising links with the theory of basic hypergeometric polynomials. 
In this paper, we show that the coordinate rings of the Segre surfaces arising as blow downs of such monodromy manifolds also carry a natural Poisson structure \cite{tak,OdR}. Looking beyond the current paper, it would be interesting to quantise these Segre surfaces  and understand their role in the theory of basic hypergeometric polynomials as well as the relation between such quantizations and the mononodromy manifolds of the corresponding multiplicative discrete Painlev\'e  equations.

% \subsection{Background}\label{s:bg}

% It approaches the sixth Painlev\'e equation \begin{equation}\label{eq:pvi}
% \begin{aligned}
% \Psix : \ u_{tt}=&\left(\frac{1}{u}+\frac{1}{u-1}+\frac{1}{u-t}\right)\frac{u_t^2}{2}-\left(\frac{1}{t}+\frac{1}{t-1}+\frac{1}{u-t}\right)u_t\\
% &+\frac{u(u-1)(u-t)}{t^2(t-1)^2}\left(\alpha+\frac{\beta t}{u^2}+\frac{\gamma(t-1)}{(u-1)^2}+\frac{\delta t(t-1)}{(u-t)^2}\right),
% \end{aligned}
% \end{equation}
% in the limit $q\rightarrow 1$, $f\rightarrow u$ and $g\rightarrow (u-t)/(u-1)$, where
% \begin{equation*}
%     \alpha=\frac{(2 \vartheta_\infty+1)^2}{2},\quad \beta=-2 \vartheta_0^2,\quad \gamma =2\vartheta_1^2,\quad \delta=\frac{1-4 \vartheta_t^2}{2}.
% \end{equation*} 

\subsection{Terminology}
In this paper, we always work over the field $\mathbb{C}$, and denote the $n$-dimensional complex projective space, $n\geq 1$, by $\mathbb{P}^n=\mathbb{CP}^n$. We also use the notation $\mathbb{C}^*:=\mathbb{C}\setminus\{0\}$.
Almost all algebraic varieties in this paper are affine varieties embedded in  $\mathbb{C}^n$, or projective varieties embedded in  $\mathbb{P}^n$, for a suitable positive integer $n$.

We use small roman letters for affine coordinates, e.g.
\begin{equation*}
    \mathbb{C}^n=\{(x_1,x_2,\ldots,x_n)\in\mathbb{C}^n\},
\end{equation*}
and capitalised roman letters for homogeneous coordinates, e.g.
\begin{equation*}
    \mathbb{P}^n=\{[X_0:X_1:X_2:\ldots:X_n]\in\mathbb{P}^n\}.
\end{equation*}
We consider $\mathbb{C}^n$ as a subset of $\mathbb{P}^n$ in the natural way, through the identification
\begin{equation}\label{eq:CPembedding}
    [X_0:X_1:X_2:\ldots:X_n]=[1:x_1:x_2:\ldots:x_n].
\end{equation}
Given an embedded affine variety $V\subseteq\mathbb{C}^n$, we define its canonical projective completion $\overline{V}\subseteq \mathbb{P}^n$  as the completion of the image of  $V$ in $\mathbb{P}^n$ through \eqref{eq:CPembedding}.

Given two embedded affine varieties $V\subseteq \mathbb{C}^m$ and $W\subseteq \mathbb{C}^n$, with $m\le n$,  we say that $V$ and $W$ are \textit{affinely equivalent} if there exists an affine linear mapping $\phi:\mathbb{C}^m\rightarrow \mathbb{C}^n$ of maximal rank $m$ that restricts to an isomorphism $V\rightarrow W$. In this case, we also call $W$ and $V$ affinely equivalent.
In particular, given an affine variety $V\subseteq \mathbb{C}^m\subseteq  \mathbb{C}^{m+n}$, with $m,n\geq 1$, where $\mathbb{C}^m$ lies linearly in $\mathbb{C}^{m+n}$,
the two views of $V$ as an embedded variety in $\mathbb{C}^m$ or $\mathbb{C}^{m+n}$ are affinely equivalent.
Note that affine equivalence is generally stronger than isomorphic as affine varieties.

Projective equivalence between embedded projective varieties is defined similarly.

\subsection{Acknowledgements}
This work was funded by the Leverhulme Trust visiting professorship grant VP2-2018-013.
Part of the work was carried out during NJ's and PR's residence at the Isaac Newton Institute for Mathematical Sciences in the program ``Applicable resurgent asymptotics: towards a universal theory'' during September-December 2022. We are grateful for the hospitality of the organizers of this program and the support of the Newton Institute. MM's research was supported by the Leverhulme Trust Research Project Grant RPG-2021-047. PR and NJ's research was supported by the Australian Research
Council Discovery Project \#DP210100129. The authors thank Davide Guzzetti, Oleg Lisovyy, Volodya Rubtsov for fruitful discussions \tcm{and the referees for carefully reading this manuscript and providing many helpful suggestions}.

\tcm{
\subsection{Data Availability Statement}
No datasets were analysed or generated during this study.}

\section{The affine Segre surface of the $q\Psix$ equation}\label{sec:qpsixsegre}

For simplicity, we refer to the Segre surface \eqref{eq:segqp6} as \begin{equation}\label{qp6:seg}
    \mathcal{Z}_q:=\{z\in\mathbb{C}^6:h_i=0\text{ for }1\leq i\leq 4\},
\end{equation}
where for given $\mu_1,\ldots,\mu_6,\lambda_1,\lambda_2\in \mathbb C$, the polynomials $h_i\in\mathbb C[z_1,\dots,z_6]$, $1\leq i\leq 4$, are defined by
\begin{subequations}\label{qp6:segh}
\begin{align}
&h_1={z}_{1} + {z}_{2} + {z}_{3} + {z}_{4} + {z}_{5} + {z}_{6},\\
&h_2=\mu_{1}{z}_{1} +\mu_{2} {z}_{2} + \mu_{3} {z}_{3} + \mu_{4} {z}_{4} + \mu_{5} {z}_{5} + 
  \mu_{6} {z}_{6} - 1,\\
&h_3={z}_{3} {z}_{4} - {z}_{1} {z}_{2} \lambda_{1},\\
&h_4={z}_{5} {z}_{6} - {z}_{1} {z}_{2} \lambda_{2},
\end{align}
\end{subequations}
where $\mu_k$, $1\leq k\leq 6$ and $\lambda_1,\lambda_2$, are some complex numbers.

We show that a generic member of the family $\mathcal{Z}_q$ is an affine Segre surface, with smooth canonical projective completion, and conversely, that a generic embedded affine Segre surface with smooth canonical projective completion, can be put into the form \eqref{qp6:seg} via an affine transformation.

The family of affine surfaces  $\mathcal{Z}_q$ first appeared in \cite{jr_qp6}
as the monodromy manifold of the linear problem for $q\Psix$. 
In that case, the coefficients  $\mu_1,\dots,\mu_6$ and $\lambda_1,\lambda_2$ are parametrised explicitly in terms of the six parameters of the $q\Psix$ equation. 
The fact that only six of the eight coefficients  the coefficients  $\mu_1,\dots,\mu_6$ and $\lambda_1,\lambda_2$ are independent is a consequence of the intrinsic scaling freedoms in the defining equations \eqref{qp6:segh}. 

\begin{lemma}\label{lm:parindep}
  The family of affine surfaces $\mathcal{Z}_q$ is equivalent to the six parameter family defined as the zero set in $\mathbb C^6$ of $h_1, h_3,h_4$ and $h_2''$ where
  \begin{equation}\label{eqp-h2-rho}
h_2''=\rho_2 {z}_{2} + \rho_{3} {z}_{3} + {z}_{4} + \rho_{5} {z}_{5} + 
  \rho_{6} {z}_{6} - 1.
\end{equation}
\end{lemma}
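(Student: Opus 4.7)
The plan is to exploit two obvious freedoms in the presentation \eqref{qp6:segh} to reduce the parameter count from eight to six. The first freedom is ideal-theoretic: since $h_2$ enters the defining ideal together with $h_1$, we may replace $h_2$ by $h_2-\mu_1 h_1$ without changing the zero set. This produces a linear form with vanishing coefficient of $z_1$, namely $\sum_{i=2}^{6}(\mu_i-\mu_1)z_i-1$. The second freedom is geometric: the uniform scaling $\phi_c:\mathbb{C}^6\to\mathbb{C}^6$, $z_i\mapsto c\,z_i$, is an affine isomorphism that preserves the varieties $\{h_1=0\}$, $\{h_3=0\}$, $\{h_4=0\}$ (the first is linear homogeneous, the latter two are homogeneous bilinear), while it multiplies the linear coefficients of the shifted $h_2$ by $c$ without touching the $-1$ constant term.

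Composing these two operations I would, under the generic assumption $\mu_4\neq\mu_1$, first subtract $\mu_1 h_1$ from $h_2$ and then apply $\phi_c$ with $c=\mu_4-\mu_1$. This simultaneously normalises the $z_1$-coefficient to $0$ and the $z_4$-coefficient to $1$, so that the resulting linear form is precisely $h_2''$ with
\begin{equation*}
\rho_i=\frac{\mu_i-\mu_1}{\mu_4-\mu_1}, \qquad i\in\{2,3,5,6\}.
\end{equation*}
Conversely, any surface of the form $\{h_1=h_2''=h_3=h_4=0\}$ is trivially a member of the eight-parameter family with $\mu_1=0$, $\mu_4=1$, $\mu_i=\rho_i$ otherwise, so the two families coincide as collections of embedded affine surfaces up to affine equivalence, and the six parameters $(\rho_2,\rho_3,\rho_5,\rho_6,\lambda_1,\lambda_2)$ are independent.

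The only subtlety I would flag is the degenerate locus $\mu_1=\mu_4$, where the particular normalisation above breaks down. I would handle it either by continuity (the identification extends by taking limits in parameter space), or by selecting a different pair of indices $(i,j)$ with $\mu_i\neq\mu_j$ and exploiting the evident permutation symmetries of $\mathcal{Z}_q$ (swaps $z_1\leftrightarrow z_2$, $z_3\leftrightarrow z_4$, $z_5\leftrightarrow z_6$, and exchange of the two bilinear constraints) to reduce back to the generic case. Beyond this point the argument is entirely routine, and the counting of free parameters is clear from the two-dimensional group of operations just identified.
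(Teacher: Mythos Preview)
Your proof is correct and follows essentially the same route as the paper: subtract $\mu_1 h_1$ from $h_2$ to eliminate the $z_1$-coefficient, then apply a uniform scaling of all $z_i$ to normalise the $z_4$-coefficient to $1$, with the degenerate case $\mu_4=\mu_1$ handled by relabelling. One trivial slip: if $\phi_c:z_i\mapsto c\,z_i$ multiplies the linear coefficients by $c$ as you say, then the correct choice is $c=(\mu_4-\mu_1)^{-1}$ rather than $c=\mu_4-\mu_1$; either convention yields the same $\rho_i$.
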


\begin{proof}
 Using the freedom of adding arbitrary multiples of $h_1$ to $h_2$, we may normalise $h_2$ so that it has no $z_1$ term,
\begin{equation*}
h_2'=(\mu_{2}-\mu_1){z}_{2} + (\mu_{3}-\mu_1) {z}_{3} + (\mu_{4}-\mu_1) {z}_{4} + (\mu_{5}-\mu_1) {z}_{5} + 
 (\mu_{6}-\mu_1) {z}_{6} - 1.
\end{equation*}
Moreover, using the freedom of scaling $z_k\mapsto c\, z_k$, $1\leq k\leq 6$, for a nonzero $c$,
which leaves $h_1$, $h_3$ and $h_4$ invariant, but scales each coefficient in $h_2$ by $c^{-1}$, except the constant term. 
Choose two coefficients $\mu_i\neq \mu_j$, then one can apply this scaling with the choice $c=\mu_i-\mu_j$. Without loss of generality, assume $\mu_4\neq \mu_1$ (if this is not the case, relabel accordingly), then we can normalise $h_2'$ such that the coefficient of $z_4$ equals one, resulting in \eqref{eqp-h2-rho},
where
\begin{equation}\label{eq:rhos}
    \rho_k:=\frac{\mu_k-\mu_1}{\mu_4-\mu_1},\qquad k=2,3,5,6.
\end{equation}
\end{proof}

Lemma \ref{lm:parindep} shows that the collection of algebraic surfaces \eqref{qp6:seg}, up to affine equivalence, constitutes at most a six-dimensional family.
In Section \ref{subsec:qpvi_characterisation}, we prove that it is in fact exactly six-dimensional. For this reason, and the fact that this family appears as the monodromy manifold of $q\Psix$, we give the following definition: 
\begin{definition}
   We refer to the embedded affine surface $\mathcal{Z}_q$ defined in \eqref{qp6:seg}, or equivalently the one defined as the zero set in $\mathbb C^6$ of $h_1, h_3,h_4$ and $h_2''$,  as the \textit{affine Segre surface of $q\Psix$.}
\end{definition}

Lemma \ref{lm:parindep} also provides a reduced form in which redundant parameters have been eliminated. This is important in order to speak about \textit{generic parameters.}

\begin{definition}
    We say that the parameters  $\mu_1,\dots,\mu_6$ and $\lambda_1,\lambda_2$ are generic if $\mu_4\neq \mu_1$ and $\lambda_1,\lambda_2,\rho_2,\rho_3,\rho_5,\rho_6$, defined in \eqref{eq:rhos}, do not satisfy any non-trivial polynomial relations with rational coefficients.
\end{definition}

This Section is organised as follows: in Section
 \ref{subsec:qpvi_characterisation} we show that generic members of the family $\mathcal Z_q$ are affine Segre surfaces and we study their projective completion. In 
 Section \ref{subsec:smooth_affine}, we show that the family $\mathcal Z_q$ is a six parameter family. In particular, 
 we derive a standard form for embedded affine Segre surfaces, with smooth canonical projective completion, and prove that a generic such surface can be transformed into \eqref{qp6:seg} via an affine transformation. This is followed by Section \ref{subsec:parametrisation}, where we study the parametrisation of the surface in terms of the parameters of $q\Psix$, and we further provide explicit formulas of the $16$ lines on the Segre surface in Section \ref{subsec:lines_qpviSegre}.

\subsection{Algebraic characterisation}\label{subsec:qpvi_characterisation}
Notice that \eqref{qp6:seg} indeed defines an affine Segre surface as, upon eliminating two of the variables using the linear equations $\{h_1=0,h_2=0\}$, one is left with two quadratic equations $\{h_3=0,h_4=0\}$, which define an irreducible variety given by the complete intersection of two quadrics in $\mathbb{C}^4$,  for generic coefficients.

% \textcolor{purple}{
% In order to give an algebraic characterisation of the family of affine Segre surfaces 
% defined by equations \eqref{qp6:segh}, it will be helpful to introduce a set of parameters that is invariant under the freedom of affine rescaling in equation \eqref{eq:affine_scaling_mu}. We thus introduce
% \begin{equation}\label{eq:lambdas}
%     \lambda_k:=\frac{\mu_k-\mu_1}{\mu_2-\mu_1},\qquad (3\leq k\leq 6),
% \end{equation}
% so that the six parameters $\lambda=(\lambda_k)_{1\leq k\leq 6}$ are manifestly invariant under \eqref{eq:affine_scaling_mu}.
% }

In order give an algebraic characterisation of its generic members, we define
the canonical projective completion $\overline{\mathcal{Z}}_q$ in $\mathbb{P}^6$ of $\mathcal{Z}_q$, replacing $h_2$ with $h_2''$ defined in \eqref{eqp-h2-rho} and using projective coordinates,
\begin{equation*}
  [Z_0:Z_1:Z_2:Z_3:Z_4:Z_5:Z_6]=[1:z_1:z_2:z_3:z_4:z_5:z_6], 
\end{equation*}
by
\begin{subequations}\label{qp6:seg_comp}
\begin{align}
&{Z}_{1} + {Z}_{2} + {Z}_{3} + {Z}_{4} + {Z}_{5} + {Z}_{6}=0,\label{qp6:seg_comp_a}\\
& \rho_{2} {Z}_{2} + \rho_{3} {Z}_{3} +  {Z}_{4} + \rho_{5} {Z}_{5} + 
  \rho_{6} {Z}_{6} - Z_0=0,\label{qp6:seg_comp_b}\\
&{Z}_{3} {Z}_{4} - {Z}_{1} {Z}_{2} \lambda_{1}=0,\label{qp6:seg_comp_c}\\
&{Z}_{5} {Z}_{6} - {Z}_{1} {Z}_{2} \lambda_{2}=0.\label{qp6:seg_comp_d}
\end{align}
\end{subequations}

\begin{proposition}\label{prop:qpvialgebraic}
For generic coefficients, the Segre surface $\overline{\mathcal{Z}}_q$ is smooth and the hyperplane section at infinity, $\overline{\mathcal{Z}}_q\setminus \mathcal{Z}_q$, is an irreducible smooth quartic curve, isomorphic to the intersection of two quadric surfaces in $\mathbb{P}^3$, of genus $1$.
\end{proposition}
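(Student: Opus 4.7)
The plan is to reduce the statement to the classical geometry of a pencil of quadrics. First, I would use the two linear equations \eqref{qp6:seg_comp_a} and \eqref{qp6:seg_comp_b} to eliminate two variables, say $Z_1$ and $Z_4$. For generic $\rho_i$ these linear forms are independent, and substitution realises $\overline{\mathcal Z}_q$ as a complete intersection of two quadrics $Q_1,Q_2$ in a $\mathbb P^4\subset \mathbb P^6$; this already identifies it as a Segre quartic surface in the classical sense. Restricting to $Z_0=0$ confines the section at infinity to a $\mathbb P^3\subset \mathbb P^4$, inside which $Q_1,Q_2$ cut out a $(2,2)$-complete intersection curve $C_\infty$.

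Next I would use the standard fact that a smooth complete intersection of two quadrics in $\mathbb P^3$ has degree $4$ and, by adjunction ($g=1+\tfrac{1}{2}d_1d_2(d_1+d_2-4)$ with $d_1=d_2=2$), genus $1$. Hence the substantive work is smoothness and irreducibility of $C_\infty$, which I would handle via the classical pencil-of-quadrics criterion: $Q_1\cap Q_2$ is a smooth irreducible curve of genus one precisely when the discriminant form $\Delta(s,t):=\det(sQ_1+tQ_2)$ is a binary quartic with four distinct roots in $\mathbb P^1$. Writing the pencil explicitly from \eqref{qp6:seg_comp_c}–\eqref{qp6:seg_comp_d} after substitution, its coefficients are polynomials in $\lambda_1,\lambda_2$ and the $\rho_k$; it suffices to exhibit one parameter choice (for instance a $q\Psix$-monodromy specialisation from \cite{jr_qp6}) for which $\Delta$ is separable, whereupon the discriminant of $\Delta$ is a nonzero polynomial in the parameters and generic values avoid its vanishing locus.

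Finally, for smoothness of the full surface $\overline{\mathcal Z}_q$, any singular point lies either on the curve $C_\infty$ or in the affine chart $\mathcal Z_q$. The first case is excluded because a singular point of $\overline{\mathcal Z}_q$ lying on the smooth hyperplane section $C_\infty$ would force $C_\infty$ to be singular there. For the affine chart, the Jacobian criterion applied to $(h_1,h_2'',h_3,h_4)$ gives a $4\times 4$-minor condition; combined with the four defining equations this is an overdetermined system whose vanishing loci, after eliminating the $z_i$, form a proper subvariety in parameter space cut out by a nonzero polynomial. Again, producing a single smooth member of the family suffices to conclude that generic parameters lie off this locus.

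The main obstacle I expect is confirming that $\Delta(s,t)$ is genuinely a generic binary quartic rather than some degenerate form, since a priori the two quadrics could share a common singular point or be tangent along a line, forcing $\Delta$ to vanish identically or have a repeated root. The cleanest way around this is to \emph{not} compute the discriminant symbolically in the six parameters, but rather to produce one explicit parameter value giving a smooth elliptic quartic in $\mathbb P^3$, and then invoke the fact that smoothness is an open condition. The same exhibit-one-example strategy simultaneously discharges the affine smoothness check, so the entire proposition reduces to a single well-chosen numerical verification plus openness of the smooth locus.
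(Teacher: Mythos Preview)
Your proposal is correct and the overall strategy is sound, but it differs from the paper's approach in several respects worth noting.

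The paper tackles smoothness of $\overline{\mathcal Z}_q$ directly via the $4\times 7$ Jacobian of the original four equations, reducing the rank-drop condition to a linear-dependence statement between two explicit row vectors; following \cite{rof_qpvi} this yields the concrete conditions $\lambda_1=0$, $\lambda_2=0$, or $(\lambda_1-\lambda_2)^2-2(\lambda_1+\lambda_2)+1=0$. For the curve at infinity, the paper eliminates $Z_3,Z_6$ (rather than $Z_1,Z_4$) and then checks smoothness by the gradient linear-dependence criterion $r_0\nabla Q_1-r_1\nabla Q_2=0$ directly, again obtaining explicit parameter constraints. Genus and irreducibility are then read off from the classical fact that a smooth intersection of two quadrics in $\mathbb P^3$ is an elliptic quartic.

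Your route---the pencil discriminant $\Delta(s,t)=\det(sQ_1+tQ_2)$ together with an exhibit-one-example-plus-openness argument---is cleaner conceptually and avoids the row-reduction bookkeeping. Your decomposition of surface smoothness into ``at infinity'' (forced by $C_\infty$ smooth, via the rank argument you sketch) and ``affine'' (one smooth example suffices) is a nice organising principle the paper does not use. The trade-off is that the paper's explicit singularity condition \eqref{eq:smoothnesstest} is reused verbatim later (see the proof of Proposition~\ref{prop:F1_smooth}, where it is evaluated under the $\Psix$ parametrisation), so the direct computation pays dividends downstream that your existence-only argument would not provide.
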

\begin{proof}
For $\overline{\mathcal{Z}}_q$ to have a singularity at some point $Z\in \mathbb{P}^6$, it is required that the the Jacobian, of the left-hand sides of equations \eqref{qp6:seg_comp},
\begin{equation*}
    J=\begin{bmatrix}
    0 & 1 & 1 & 1 & 1 & 1 & 1\\
    -1 &0 & \rho_2 & \rho_3 & 1 & \rho_5 & \rho_6 \\
    0 & -\lambda_1 Z_2 & -\lambda_1 Z_1 & Z_4 & Z_3 & 0 & 0\\
    0 & -\lambda_2 Z_2 & -\lambda_2 Z_1 & 0 & 0 & Z_6 & Z_5\\
    \end{bmatrix},
\end{equation*}
has rank less than four at this point. By substracting $Z_3$ times the first row from the third row, it follows that this is equivalent to the following two row vectors being linearly dependent,
\begin{align*}
    &v_1=[-\lambda_1 Z_2-Z_3 & &-\lambda_1 Z_1-Z_3 & &Z_4-Z_3 & -Z_3& & -Z_3]&,\\
    &v_2=[-\lambda_2 Z_2 & &-\lambda_2 Z_1 & &0 & Z_6& & Z_5]&.
\end{align*}
Following the proof of \cite[Proposition 2.6]{rof_qpvi}, we find that equations \eqref{qp6:seg_comp} and
\begin{equation*}
 r_0 v_1-r_1 v_2=0,   
\end{equation*}
have a common solution in $Z\in\mathbb{P}^6$ and $[r_0:r_1]\in\mathbb{P}^1$, only if 
 $\lambda_1=0$, $\lambda_2=0$ or
\begin{equation}\label{eq:smoothnesstest}
   (\lambda_1-\lambda_2)^2-2(\lambda_1+\lambda_2)+1=0. 
\end{equation}
As a consequence, $\overline{\mathcal{Z}}_q$ is smooth for generic coefficients.

The curve at infinity is obtained by setting $Z_0=0$ in equations \eqref{qp6:seg_comp}. Upon eliminating $Z_3$ and $Z_6$ using the first two resulting linear equations, we are left with two quadratic equations, 
%given by
% % \begin{align*}
% %  Q_1:=\lambda_1 Z_1Z_2-Z_4\frac{(\mu_1-\mu_6)Z_1+(\mu_2-\mu_6)Z_2+(\mu_4-\mu_6)Z_4+(\mu_5-\mu_6)Z_5}{(\mu_6-\mu_3)}=0,\\
% % Q_2:=\lambda_2 Z_1Z_2+Z_5\frac{(\mu_1-\mu_3)Z_1+(\mu_2-\mu_3)Z_2+(\mu_4-\mu_3)Z_4+(\mu_5-\mu_3)Z_5}{(\mu_6-\mu_3)}=0,
% % \end{align*}
% \begin{align*}
% Q_1:=\lambda_1 Z_1Z_2-Z_4\frac{(0-\lambda_6)Z_1+(1-\lambda_6)Z_2+(\lambda_4-\lambda_6)Z_4+
% (\lambda_5-\lambda_6)Z_5}{(\lambda_6-\lambda_3)}=0,\\
% Q_2:=\lambda_2 Z_1Z_2+Z_5\frac{(0-\lambda_3)Z_1+(1-\lambda_3)Z_2+(\lambda_4-\lambda_3)Z_4+
% (\lambda_5-\lambda_3)Z_5}{(\lambda_6-\lambda_3)}=0,
% \end{align*}
% where we used the $\lambda$-variables defined in equation \eqref{eq:lambdas}.
% which expressed in the parameters $\rho_i$ defined in \eqref{eq:rhos}, gives:
\begin{align*}
 Q_1:=\lambda_1 Z_1Z_2-Z_4\frac{-\rho_6 Z_1+(\rho_2-\rho_6)Z_2+(1-\rho_6)Z_4+(\rho_5-\rho_6)Z_5}{(\rho_6-\rho_3)}=0,\\
Q_2:=\lambda_2 Z_1Z_2+Z_5\frac{-\rho_3Z_1+(\rho_2-\rho_3)Z_2+(1-\rho_3)Z_4+(\rho_5-\rho_3)Z_5}{(\rho_6-\rho_3)}=0.
\end{align*}

This shows that the curve at infinity is isomorphic to the intersection of the two quadrics in $\{[Z_1:Z_2:Z_4:Z_5]\in\mathbb{P}^3\}$, defined by  the above two equations. 

If the curve has a singularity at some point, then the gradients of $Q_1$ and $Q_2$, with respect to $Z_1,Z_2,Z_4,Z_5$, must be multiples of each other at this point, that is,
\begin{equation}\label{eq:lin_dependence}
r_0\nabla Q_1-r_1\nabla Q_2=0,
\end{equation}
for some $[r_0:r_1]\in\mathbb{P}^1$. Equation \eqref{eq:lin_dependence} constitutes four multi-linear homogeneous equations among the variables $r_0,r_1,Z_1,Z_2,Z_4,Z_5$. 
A direct calculation shows that these four equations,
in addition to the two quadratic equations $Q_{1,2}=0$, admit a common solution only if certain algebraic relations among the parameters $\lambda_1,\lambda_2,\rho_2,\rho_3,\rho_5,\rho_6$  are satisfied. As a consequence, the curve at infinity is smooth for generic coefficients.

Finally, we recall that a smooth intersection of two quadrics in $\mathbb{P}^3$ with a rational point, is isomorphic to an elliptic curve \cite[\S 8 (iv)]{cassels_elliptic}.
The curve at infinity is therefore irreducible and has genus $1$. The proposition follows.
\end{proof}

\begin{remark}\label{remark:curve_at_infinity}
Recall that a smooth intersection of two quadrics in $\mathbb{P}^3$ has genus $1$, 
  so the hyperplane section at infinity, $\overline{\mathcal{Z}}_q\setminus \mathcal{Z}_q$, in Proposition \ref{prop:qpvialgebraic}, with rational point $[1:0:0:0]$,  is generically an elliptic curve. To put it into cubic form, note that both $Q_1$ and $Q_2$ are linear in $Z_1$, so their resultant with respect to $Z_1$ is a homogeneous cubic in $Z_2,Z_4,Z_5$, defining the following cubic curve in $\mathbb{P}^2$,
\begin{align*}
    Z_2 Z_5\lambda_1(Z_2(\rho_3-\rho_2)+Z_4(\rho_3-1)+ Z_5(\rho_3-\rho_5))-Z_4 Z_5(\rho_2 Z_2+ Z_4+ \rho_5 Z_5)+\\
     Z_2 Z_4\lambda_2(Z_2(\rho_6-\rho_2)+Z_4(\rho_6-1)+ Z_5(\rho_6-\rho_5))=0.
  \end{align*}
  The projection of the curve at infinity onto this cubic, regularised near the point $[1:0:0:0]$ by sending
 \begin{equation*}
    [1:0:0:0]\mapsto [1:\lambda_1(\rho_3/\rho_6-1):\lambda_2(\rho_6/\rho_3-1)],
  \end{equation*}
  is an
  isomorphism \cite[\S 8 (iv)]{cassels_elliptic}.
\end{remark}

\subsection{Smooth Segre surfaces up to affine equivalence}\label{subsec:smooth_affine}
In this section, we derive a standard form for embedded affine Segre surfaces with smooth canonical projective completion. 
Because as observed at the beginning of Subsection \ref{subsec:qpvi_characterisation}, each generic element of $\mathcal Z_q$ defines
an irreducible variety given by the complete intersection of two quadrics in  $\mathbb{C}^4$, in this subsection we work in $\mathbb{P}^4$.

Firstly, let us recall that any isomorphism between Segre surfaces, embedded in $\mathbb{P}^4$, comes from a projective equivalence. We formulate this as the following lemma. 
 \begin{lemma}\label{lem:segre_isomorpism}
Any isomorphism $\phi:\mathcal S\rightarrow S'$
between embedded smooth Segre surfaces $\mathcal S,\mathcal S'\subseteq \mathbb{P}^4$,  
extends to a unique linear projective transformation of $\mathbb{P}^4$.
\end{lemma}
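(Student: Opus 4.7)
The plan is to identify the embedding $\mathcal{S}\hookrightarrow \mathbb{P}^4$ with the complete anticanonical embedding, so that any abstract isomorphism $\phi:\mathcal{S}\to\mathcal{S}'$ of varieties automatically becomes linear via the functoriality of the canonical bundle.

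First I would apply the adjunction formula to $\mathcal{S}$, viewed as a smooth complete intersection of two quadrics in $\mathbb{P}^4$: since $K_{\mathbb{P}^4}=\mathcal{O}(-5)$ and the two defining quadrics contribute $\mathcal{O}(4)$ to the determinant of the normal bundle, one gets $\omega_{\mathcal{S}}\cong \mathcal{O}_{\mathcal{S}}(-1)$, and similarly for $\mathcal{S}'$. Thus, as line bundles on $\mathcal{S}$, the hyperplane class $\mathcal{O}_{\mathcal{S}}(1)$ coincides with $-K_{\mathcal{S}}$. Next I would check that the restriction map $H^0(\mathbb{P}^4,\mathcal{O}(1))\to H^0(\mathcal{S},-K_{\mathcal{S}})$ is an isomorphism of five-dimensional spaces: injectivity holds because the homogeneous ideal of $\mathcal{S}$ is generated by quadrics and hence contains no linear form, while the dimension $h^0(\mathcal{S},-K_{\mathcal{S}})=5$ on the right follows from Riemann--Roch on the degree-$4$ del Pezzo surface $\mathcal{S}$ combined with Kodaira vanishing for the ample anticanonical bundle. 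This identifies the given embedding, up to choice of basis, with the complete anticanonical embedding into $\mathbb{P}(H^0(\mathcal{S},-K_{\mathcal{S}})^*)\cong \mathbb{P}^4$.

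Given the abstract isomorphism $\phi:\mathcal{S}\to\mathcal{S}'$, functoriality of the canonical bundle yields $\phi^*(-K_{\mathcal{S}'})\cong -K_{\mathcal{S}}$, and pullback gives a linear isomorphism $\phi^*:H^0(\mathcal{S}',-K_{\mathcal{S}'})\to H^0(\mathcal{S},-K_{\mathcal{S}})$. Composing with the two restriction isomorphisms constructed above, I obtain a linear automorphism $\Phi\in \mathrm{GL}_5(\mathbb{C})$ of $H^0(\mathbb{P}^4,\mathcal{O}(1))$, whose projectivisation $\tilde\phi\in \mathrm{PGL}_5(\mathbb{C})$ is the desired extension; by construction, $\tilde\phi$ intertwines the two anticanonical embeddings, so $\tilde\phi|_{\mathcal{S}}=\phi$.

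For uniqueness, if $\tilde\phi_1,\tilde\phi_2$ are two such linear extensions, then $\tilde\phi_2^{-1}\circ\tilde\phi_1$ is a projective linear automorphism of $\mathbb{P}^4$ fixing every point of $\mathcal{S}$; since the linear span of $\mathcal{S}$ is all of $\mathbb{P}^4$ (its defining ideal contains no linear form), this forces $\tilde\phi_2^{-1}\circ\tilde\phi_1=\mathrm{id}$. The main technical hurdle I anticipate is the dimension count $h^0(\mathcal{S},-K_{\mathcal{S}})=5$, i.e.\ surjectivity of the restriction from $\mathcal{O}_{\mathbb{P}^4}(1)$; once this is in place via the del Pezzo/vanishing argument, everything else is formal bookkeeping around the anticanonical system.
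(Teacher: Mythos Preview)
Your proof is correct and takes a genuinely different, more intrinsic route than the paper. The paper argues via the explicit birational model: it realises $\mathcal{S}$ as the blow-up of $\mathbb{P}^2$ at five points in general position, identifies the embedding $\mathcal{S}\hookrightarrow\mathbb{P}^4$ with the map given by the five-dimensional linear system of plane cubics through those points, and then transports an isomorphism $\phi$ down to an automorphism of $\mathbb{P}^2$, which is automatically linear and induces the required change-of-basis matrix in $GL_5(\mathbb{C})$. Your argument bypasses the blow-up description entirely, using adjunction to recognise $\mathcal{O}_{\mathcal{S}}(1)\cong -K_{\mathcal{S}}$ directly and then invoking Riemann--Roch plus Kodaira vanishing for the dimension count. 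Your approach is cleaner and more portable (it works verbatim for any anticanonically embedded Fano variety), while the paper's approach has the advantage of making the line configuration explicit, which it exploits immediately afterwards in Remark~\ref{remark:lines_segre} to count the sixteen lines. One small point: the paper's version implicitly assumes the blow-down structures $\pi,\pi'$ can be chosen compatibly with $\phi$, which is automatic once one takes $\pi'$ to contract the $\phi$-images of the five chosen skew lines on $\mathcal{S}$; your argument avoids this bookkeeping altogether.
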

\begin{proof}
Any smooth Segre surface $\mathcal S\subseteq \mathbb{P}^4$ is the image under the anti-canonical mapping of the projective plane blown up at five points in general position \cite{Dolgachev}. Explicitly, choose five skew lines $L_k$, $1\leq k\leq 5$, in $\mathcal S$. Then there exist five points $u_k\in\mathbb{P}^2$, $1\leq k\leq 5$, in general position, and a bi-rational map
\begin{equation*}
     \pi: \mathcal S\rightarrow \mathbb{P}^2,
 \end{equation*}
 which is the simultaneous blow down of $L_k$ to $u_k$, $1\leq k\leq 5$.
 
 For an explicit description of $\pi^{-1}$, consider the vector space
\begin{equation*}
    \mathcal C=\{\text{cubic forms in $\{U_0,U_1,U_2\}$ that vanish at $u_k$, $1\leq k\leq 5$}\}, 
 \end{equation*}
 where we use projective coordinates $[U_0:U_1:U_2]$ on $\mathbb{P}^2$. This vector space has dimension $5$ and there exists a basis $\{C_0,\ldots,C_4\}$ of $\mathcal C$ such that $\pi^{-1}$ is given by
 \begin{equation*}
 \begin{array}{lccc}
     \pi^{-1}:&\mathbb{P}^2&\dashrightarrow &
     \mathcal S\\
     &[U_0:U_1:U_2]&\mapsto& [C_0:C_1:C_2:C_3:C_4].
      \end{array}
 \end{equation*}
Let $\mathcal S'\subseteq \mathbb{P}^4$ be another smooth Segre surface and denote correspondingly the objects introduced above by $\pi'$, $u_k'$,$1\leq k\leq 5$ and $\mathcal C'$ for this Segre surface. Then, an isomorphism $\phi:\mathcal S\rightarrow S'$, induces a corresponding automorphism $\widetilde{\phi}$ of $\mathbb{P}^2$, making the diagram
\begin{center}\begin{tikzcd}
\mathcal S \arrow{r}{\phi} \arrow{d}[swap]{\pi} &S' \arrow{d}{\pi'} \\   
\mathbb{P}^2\arrow[swap]{r}{\widetilde{\phi}} & \mathbb{P}^2
 \end{tikzcd}
\end{center}
commutative. So $\widetilde{\phi}$ is a projective linear transformation. Upon picking an $M\in GL_3(\mathbb{C})$ such that $\widetilde{\phi}(U)=U\cdot M$, $U\in\mathbb{P}^2$, we obtain a corresponding isomorphism
\begin{equation*}
\mathcal{L}:\mathcal C'\rightarrow \mathcal C,C'\mapsto C,\qquad C(U)=C'(U\cdot M).
\end{equation*}
Denote by $N\in GL_5(\mathbb{C})$ the unique change of basis matrix
\begin{equation}\label{eq:change_basis}
 (\mathcal{L}C_0', \mathcal{L}C_1',\mathcal{L}C_2', \mathcal{L}C_3',\mathcal{L}C_4')=(C_0,C_1,C_2,C_3,C_4)\cdot N,
\end{equation}
then $\phi(S)=S\cdot N$ for $S\in \mathcal S$ and the lemma follows.
\end{proof}
\begin{remark}\label{remark:lines_segre}
    From the blow-up model of smooth Segre surfaces in the proof of Lemma \ref{lem:segre_isomorpism}, it follows that a Segre surface has exactly $16$ lines: $5$ are the exceptional lines above the base points, $10$ are the total transforms of lines in $\mathbb{P}^2$ going through two of the base points and finally one further line is the total transform of the unique conic going through all five base points. In Section \ref{subsec:lines_qpviSegre}, we give explicit expressions for the $16$ lines on the affine Segre surface of $q\Psix$, as well as their intersection graph.
\end{remark}

It follows from the lemma above that, to classify smooth Segre surfaces up to isomorphism, it suffices to classify them up to projective equivalence. In turn, see \cite{mabuchi} or \cite[Theorem 8.5.1]{Dolgachev}, any smooth Segre surface $\mathcal S$ can, by a linear projective transformation, be put into diagonal form
\begin{equation*}
    S_0^2+S_1^2+S_2^2+S_3^3+S_4^2=0,\quad \gamma_0 S_0^2+\gamma_1 S_1^2+\gamma_2 S_2^2+\gamma_3 S_3^3+\gamma_4S_4^2=0,
\end{equation*}
for some mutually distinct $\gamma_k\in\mathbb{C}$, $0\leq k\leq 4$.
Since we are interested in embedded affine Segre surfaces (with smooth canonical projective completion), we derive a corresponding standard form for them, up to affine transformations. 

Given any embedded affine Segre surface $\mathcal S\subseteq \mathbb{C}^4$, we may choose symmetric $5\times 5$ matrices $M,N$ such that $\mathcal S$ is described by
\begin{equation}\label{eq:segre_form}
    s^T M s=0,\qquad s^T N s=0,\qquad s=[1,s_1,s_2,s_3,s_4]^T.
\end{equation} 
The canonical projective completion 
$\overline{\mathcal S}\subseteq \mathbb{P}^4$ of $\mathcal S$ is obtained through projective coordinates $$[S_0:S_1:S_2:S_3:S_4]=[1:s_1:s_2:s_3:s_4],$$ by replacing $s\mapsto S$ in equations \eqref{eq:segre_form}. The Segre surface $\overline{\mathcal S}$ is smooth
 if and only if the equation
\begin{equation}\label{eq:vanishdet}
   |r_0 M-r_1 N|=0, 
\end{equation}
has five distinct solutions $r=[r_0:r_1]\in\mathbb{P}^1$, see \cite[\S 8.5]{Dolgachev}.

We have the freedom of applying automorphisms of $\mathbb{C}^4$ to equations \eqref{eq:segre_form}, i.e. affine transformations
\begin{equation}\label{eq:affine_transformations}
s\mapsto s'=G^{-1} s,\quad M\mapsto M' =G^T M G,\quad N\mapsto N' =G^T N G,
\end{equation}
where $G\in GL_5(\mathbb{C})$ takes the form
\begin{equation*}
    G=\begin{bmatrix}
        1 & \underline{0}^T\\
        b & G_4
    \end{bmatrix},
\end{equation*}
for some $G_4\in GL_4(\mathbb{C})$ and vector $b\in\mathbb{C}^4$, where $\underline{0}^T$ denotes the transposed zero vector. Using such transformations, we arrive at a standard form for embedded affine  Segre surfaces with smooth canonical projective completion, as detailed in the following lemma.

\begin{lemma}\label{lemma:standard_form}
Given an embedded affine Segre surface $\mathcal S\subseteq \mathbb{C}^4$, whose canonical projective completion $\overline{\mathcal S}\subseteq \mathbb{P}^4$ is smooth, there exists an affine transformation that puts it in the standard form
\begin{subequations}\label{eq:segre_standard}
\begin{align}
&s_1^2+s_2^2+s_3^2+s_4^2=0,\label{eq:segre_standard_1}\\
    &\alpha_1 s_1^2+\alpha_2 s_2^2+\alpha_3 s_3^2+\alpha_4 s_4^2+\beta_1s_1+\beta_2 s_2+\beta_3 s_3+\beta_4 s_4+1=0,\label{eq:segre_standard_2}
\end{align}
\end{subequations}
for some $\alpha_k,\beta_k\in\mathbb{C}$, $1\leq k\leq 4$. 
\end{lemma}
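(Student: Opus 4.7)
The reduction proceeds in three stages, exploiting the defining pencil $(M,N)$ of $\mathcal S$ and the affine transformations $G = \begin{pmatrix} 1 & \underline{0}^T \\ b & G_4 \end{pmatrix}$ introduced in \eqref{eq:affine_transformations}.

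First, I select a rank-4 element of the pencil whose cone vertex is an affine point. By smoothness of $\overline{\mathcal S}$, equation \eqref{eq:vanishdet} has five distinct roots, yielding five rank-4 quadric cones in the pencil, each with a unique vertex in $\mathbb{P}^4$. If all five vertices lay on the hyperplane $\{S_0=0\}$, the restriction of the pencil to this hyperplane---a pencil of $4\times 4$ symmetric matrices---would have five rank-drop points, contradicting the degree bound of four. Hence at least one vertex $v=(v_0,v_1,\ldots,v_4)$ has $v_0\neq 0$, and after a change of basis of the pencil I may take $M$ to be the corresponding rank-4 element. Normalising $v_0=1$ and choosing $b=(v_1,\ldots,v_4)^T$, a direct block computation shows that the first row and column of $G^TMG$ vanish, leaving an invertible symmetric $4\times 4$ block $M''_4$. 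Picking $G_4\in GL_4(\mathbb{C})$ with $G_4^T M''_4 G_4=I_4$, possible because every nondegenerate complex symmetric bilinear form is congruent to the identity, brings the first equation to $s_1^2+s_2^2+s_3^2+s_4^2=0$.

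Second, I reduce the second equation. The residual subgroup preserving $M=\mathrm{diag}(0,I_4)$ up to a scalar consists of the complex orthogonal group $O_4(\mathbb{C})$ acting on $(s_1,\ldots,s_4)$, combined with the pencil gauge $N\mapsto aN+cM$ and an overall rescaling. I apply an orthogonal transformation to diagonalise the quadratic block $N''_4$ of the second equation; the constant term equals $v^TNv$, which is nonzero, for otherwise $v$ would lie on both quadrics, hence on $\overline{\mathcal S}$, and the identity $Mv=0$ would force the Jacobian to drop rank at $v$, making it a singular point against smoothness. A final rescaling produces the standard form \eqref{eq:segre_standard_2}.

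The principal technical hurdle is the orthogonal diagonalisation of $N''_4$ in the second stage, since complex symmetric matrices can fail to be orthogonally diagonalisable in the presence of isotropic eigenvectors. The resolution invokes the smoothness hypothesis more strongly: by the Weierstrass--Kronecker theorem the five distinct rank-drop points of $(M,N)$ force the pencil to have only simple elementary divisors, hence it is simultaneously diagonalisable by some general $GL_5(\mathbb{C})$-congruence. Combined with the freedom to choose which of the five rank-4 pencil elements plays the role of $M$ in the first stage, this ensures that a viable choice always exists for which $N''_4$ is orthogonally diagonalisable, completing the reduction.
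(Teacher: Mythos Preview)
Your approach parallels the paper's: select a rank-$4$ pencil element with affine vertex, normalise it to $\operatorname{diag}(0,I_4)$ via an affine congruence, then orthogonally diagonalise the residual $4\times4$ block of the second generator. You correctly flag the orthogonal diagonalisation as the crux---the paper simply asserts that an orthonormal $P$ with $P^TN_4P$ diagonal exists, without justification---but your proposed resolution does not work. The freedom to choose which rank-$4$ element plays the role of $M$ cannot help: under any affine congruence $G=\left(\begin{smallmatrix}1&0\\b&G_4\end{smallmatrix}\right)$ the $4\times4$ sub-pencil $(M_4,N_4)$ transforms by $G_4$-congruence alone, and replacing the generators of the full pencil does not change the sub-pencil they span. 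Hence simultaneous diagonalisability of $(M_4,N_4)$ by congruence is an affine invariant, and it is precisely what the standard form demands. The Weierstrass--Kronecker diagonalisation you invoke uses a general $GL_5$ congruence, not an affine one, so it does not descend to the hyperplane at infinity in the required way.

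In fact this gap cannot be repaired: the lemma is false as stated. Take $M=\operatorname{diag}(0,1,1,1,1)$ and $N$ symmetric with $N_{00}=N_{01}=1$, all other entries of the first row zero, and $N_4=\left(\begin{smallmatrix}2&i\\i&0\end{smallmatrix}\right)\oplus\operatorname{diag}(5,7)$. One computes $|M-rN|=-r(r^2-r+1)(1-5r)(1-7r)$, which has five simple roots, so $\overline{\mathcal S}$ is smooth. Yet $N_4$ carries a nontrivial Jordan block at eigenvalue $1$, so $(I_4,N_4)$ is not simultaneously diagonalisable by congruence and no affine transformation brings $\mathcal S$ to the standard form. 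The lemma does hold under the extra hypothesis that the hyperplane section at infinity is smooth (equivalently $|r_0M_4-r_1N_4|$ has four simple roots), which is generic and is all the paper actually needs downstream. A smaller point: your first-stage degree-bound argument tacitly assumes $|r_0M_4-r_1N_4|\not\equiv 0$; the paper disposes of the identically-zero case by a separate computation at the end of its proof.
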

\begin{remark}\label{remark:scaling_freedom_standard}
In the standard form \eqref{eq:segre_standard}, there is still the freedom of adding arbitrary multiples of the first equation to the second, so that we may for example normalise the $\alpha$'s such that $\alpha_1+\alpha_2+\alpha_3+\alpha_4=0$. Similarly, there is the freedom of scaling $s_k\mapsto g^{-1} s_k$, $1\leq k\leq 4$, so that
\begin{equation*}
    \alpha_k\mapsto g^2\alpha_k,\quad \beta_k\mapsto g\beta_k,\quad (1\leq k\leq 4),
\end{equation*}
 In the generic setting, when the $\alpha$'s are distinct, the only other freedom comes from permuting the variables $\{s_1,s_2,s_3,s_4\}$. In particular, the family of embedded affine Segre surfaces, with smooth canonical projective completion, up to affine equivalence, is six-dimensional.
\end{remark}
\begin{remark}\label{remark:smoothness_standard_form} We note that the canonical projective completion of the Segre surface, defined by equations \eqref{eq:segre_standard}, is only smooth for generic coefficients: it is smooth if and only if the discriminant $\operatorname{Disc}_r(|N-rM|)$ is nonzero, where $M$ and $N$ are the symmetric matrices corresponding to equations \eqref{eq:segre_standard}, as in equation \eqref{eq:segre_form}.
\end{remark}

\begin{proof}[Proof of Lemma \ref{lemma:standard_form}]
Choose symmetric $5\times 5$ matrices $M,N$ such that $S$ is described by equations \eqref{eq:segre_form}.
Our goal is to bring these equations into the standard form \eqref{eq:segre_standard}, using affine transformations \eqref{eq:affine_transformations}, as well as by taking linear combinations of the two equations,
\begin{equation*}
    M\mapsto a_{11} M+a_{12} N,\quad N\mapsto a_{21} M+a_{22} N,\quad (a_{ij})_{1\leq i,j\leq 2}\in GL_2(\mathbb{C}).
\end{equation*}
We are going to give an algorithmic method to accomplish this. We will be indexing $5\times 5$ matrices from $0$ to $4$, e.g. $P=(p_{ij})_{0\leq i,j\leq 4}$, and we further introduce the notation $P_4=(p_{ij})_{1\leq i,j\leq 4}$.

Let us for the moment assume that
\begin{equation}\label{eq:nondegenerate}
    |M_4-r N_4|\not\equiv 0.
\end{equation}
This  assumption is in fact moot, as it is always satisfied when $\overline{S}$ is smooth, but we will provide an argument for this a bit later.

Due to \eqref{eq:nondegenerate}, we know that
\begin{equation}\label{eq:vanishdet4}
   |r_0 M_4-r_1 N_4|=0, 
\end{equation}
has at most four solutions $r=[r_0:r_1]\in\mathbb{P}^1$. On the other hand, since $\overline{S}$ is smooth, equation \eqref{eq:vanishdet} 
has five distinct solutions $r=[r_0:r_1]\in\mathbb{P}^1$.
 Pick a solution $r\in\mathbb{P}^1$ to \eqref{eq:vanishdet}, which does not satisfy \eqref{eq:vanishdet4}. If $r_0\neq 0$, replace $M$ by $r_0 M_4-r_1 N_4$, otherwise simply swap $M$ and $N$.
Then $M$ is not invertible, but $M_4$ is. Since $M_4$ is symmetric and invertible, we can construct an orthogonal matrix $P$ such that $P^TM_4 P=D$, where $D$ is a diagonal matrix with nonzero entries. By applying the affine transformation \eqref{eq:affine_transformations}, with
\begin{equation*}
    G=\begin{bmatrix}
        1 & \underline{0}^T\\
        \underline{0} &  P D^{-\tfrac{1}{2}}
    \end{bmatrix},
\end{equation*}
 we normalise $M$ such that $M_4=I_4$, that is,
\begin{equation*}
    M=\begin{bmatrix}
        m_0 & m_1 & m_2 & m_3 & m_4\\
        m_1 & 1 & 0 & 0 & 0\\
        m_2 & 0 & 1 & 0 & 0\\
        m_3 & 0 & 0 & 1 & 0\\
        m_4 & 0 & 0 & 0 & 1
    \end{bmatrix},
\end{equation*}
with $m_0=m_1^2+m_2^2+m_3^2+m_4^2$, since $|M|=0$. Next, we apply the affine transformation \eqref{eq:affine_transformations}, with
\begin{equation*}
G=\begin{bmatrix}
        1 & 0 & 0 & 0 & 0\\
        -m_1 & 1 & 0 & 0 & 0\\
        -m_2 & 0 & 1 & 0 & 0\\
        -m_3 & 0 & 0 & 1 & 0\\
        -m_4 & 0 & 0 & 0 & 1
    \end{bmatrix},
\end{equation*}
which reduces the matrix $M$ to
\begin{equation}\label{eq:Mstandard}
    M=\begin{bmatrix}
        0 & \underline{0}^T\\
        \underline{0} &  I_4
    \end{bmatrix}.
\end{equation}
So, the first equation in \eqref{eq:segre_form} is now given by \eqref{eq:segre_standard_1}.

Next, we diagonalise $N_4$. Let $P$ be an orthonormal matrix such that $P^TN_4 P=D$, where $D$ is a diagonal matrix. Applying the affine transformation \eqref{eq:affine_transformations}, with
\begin{equation}\label{eq:algorithm_step_diagonalN}
    G=\begin{bmatrix}
        1 & \underline{0}^T\\
        \underline{0} &  P\hspace{0.1cm}
    \end{bmatrix},
\end{equation}
the matrix $M$ is left invariant, whereas $N$ takes the form
\begin{equation}\label{eq:Nstandard}
    N=\begin{bmatrix}
        \beta_0 & \tfrac{1}{2}\beta_1 & \tfrac{1}{2}\beta_2 & \tfrac{1}{2}\beta_3 & \tfrac{1}{2}\beta_4\\
        \tfrac{1}{2}\beta_1 & \alpha_1 & 0 & 0 & 0\\
        \tfrac{1}{2}\beta_2 & 0 & \alpha_2 & 0 & 0\\
        \tfrac{1}{2}\beta_3 & 0 & 0 & \alpha_3 & 0\\
        \tfrac{1}{2}\beta_4 & 0 & 0 & 0 & \alpha_4
    \end{bmatrix},
\end{equation}
for some $\alpha_k,\beta_k\in\mathbb{C}$, $1\leq k\leq 4$ and $\beta_0\in\mathbb{C}$. In fact, $\beta_0$ must be nonzero since $|M-rN|$ cannot have a double root at $r=0$. By dividing $N$ by $\beta_0$, equations \eqref{eq:segre_form} have been brought into the standard form \eqref{eq:segre_standard}.

% To finish the proof, we come back to the assumption \eqref{eq:nondegenerate}, that we made in the beginning. Suppose that it does not hold true. We are going to study the matrix $R(z)=m-z N$ over the field of rational functions over $\mathbb{C}$, $\mathbb{C}(z)$. Since $S$ is smooth, we know that
% \begin{equation*}
%     R(z)\in GL_5(\mathbb{C}(z)).
% \end{equation*}
% In particular, this means that the rank of $R_4(z)$ is at least $3$. On the other hand $|R_4(z)|\equiv 0$, and so the rank of $R_4(z)$ over $\mathbb{C}(z)$ is three. Construct an orthonormal matrix $P(z)\in GL_4(\mathbb{C}(z))$ such that
% $P^TR_4 P=D$, where $D$ is a diagonal matrix with nonzero entries. By applying an affine transformation \eqref{eq:affine_transformations}, with
% \begin{equation*}
%     G=\begin{bmatrix}
%         1 & \underline{0}^T\\
%         \underline{0} &  P D^{-\tfrac{1}{2}}
%     \end{bmatrix},
% \end{equation*}
%  we normalise $M$ such that $M_4=I_4$, that is,

% On the other hand, $|R_4(z)|=0$, and so $R_4(z)$ has 

To finish the proof, we come back to the assumption \eqref{eq:nondegenerate}, that we made in the beginning. Suppose that it does not hold true. 
Since $|M-rN|\not\equiv 0$, we may replace $M$ by $M-rN$, for a generic $r$, so that $M$ is invertible. In particular, for such a choice of $M$ the rank of $M_4$ must be at least $3$. On the other hand, as \eqref{eq:nondegenerate} does not hold true, $|M_4|=0$ and so $M_4$ must have rank exactly $3$. Analogous to how we brought $M$ and $N$ into the forms \eqref{eq:Mstandard} and \eqref{eq:Nstandard} above, and by additionally permuting the variables $\{s_1,s_2,s_3,s_4\}$ if necessary, we bring $M$ and $N$ into the form
\begin{equation*}
    M=\begin{bmatrix}
        1 & 0 & 0 & 0 & 0\\
        0 & 0 & 0 & 0 & 0\\
        0 & 0 & 1 & 0 & 0\\
        0 & 0 & 0 & 1 & 0\\
        0 & 0 & 0 & 0 & 1\\
    \end{bmatrix},\quad
      N=\begin{bmatrix}
        \beta_0 & \tfrac{1}{2}\beta_1 & \tfrac{1}{2}\beta_2 & \tfrac{1}{2}\beta_3 & \tfrac{1}{2}\beta_4\\
        \tfrac{1}{2}\beta_1 & \alpha_1 & \gamma_2 & \gamma_3 & \gamma_4\\
        \tfrac{1}{2}\beta_2 & \gamma_2 & \alpha_2 & 0 & 0\\
        \tfrac{1}{2}\beta_3 & \gamma_3 & 0 & \alpha_3 & 0\\
        \tfrac{1}{2}\beta_4 & \gamma_4 & 0 & 0 & \alpha_4
    \end{bmatrix}.
\end{equation*}
From the coefficient of $r^1$ in $|M_4-rN_4|\equiv 0$, we immediately read off that $\alpha_1=0$. By similarly looking at the coefficients of $r^2, r^3$ and $r^4$, we obtain the linear system
\begin{equation}\label{eq:technical_proof_linearsystem}
    \begin{bmatrix}
    1 & 1 & 1\\
    \alpha_3+\alpha_4 & \alpha_2+\alpha_4 & \alpha_2+\alpha_3\\
        \alpha_3\alpha_4 & \alpha_2 \alpha_4 & \alpha_2\alpha_3
    \end{bmatrix}\cdot
    \begin{bmatrix}
        \gamma_2^2\\
        \gamma_3^2\\
        \gamma_4^2
    \end{bmatrix}=0.
\end{equation}
If $\gamma_2=\gamma_3=\gamma_4=0$, then $|M-rN|$ has a double root at $r=0$, so $\overline{S}$ is not smooth and we have arrived at a contradiction. Else $\alpha_2,\alpha_3$ and $\alpha_4$ are not all distinct, and by solving \eqref{eq:technical_proof_linearsystem} for $\{\gamma_2,\gamma_3,\gamma_4\}$, one similarly sees that $|M-rN|$ always has a root with multiplicity, again yielding a contradiction. We conclude that assumption \eqref{eq:nondegenerate} always holds and the lemma follows.
\end{proof}

In the following proposition, it is shown that a generic embedded affine Segre surface, with smooth projective completion, can be put in the form \eqref{qp6:seg}, by an affine transformation.
\begin{proposition}\label{prop:segre_equivalence}
    For generic $\alpha_k,\beta_k\in\mathbb{C}$, $1\leq k\leq 4$, the Segre surface \eqref{eq:segre_standard} is affinely equivalent to the Segre surface \eqref{qp6:seg}, for some values of the coefficients in \eqref{qp6:seg}.
\end{proposition}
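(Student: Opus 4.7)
The plan is to produce an explicit affine equivalence between the standard form \eqref{eq:segre_standard} and a suitable member of the family \eqref{qp6:seg}, supplemented by a dimension count. By Remark \ref{remark:scaling_freedom_standard} the standard form, modulo intrinsic freedoms, has six essential parameters, and by Lemma \ref{lm:parindep} so does the family $\mathcal{Z}_q$. Solving the two linear equations of $\mathcal{Z}_q$ for $z_5,z_6$ as linear functions of $z_1,\ldots,z_4$ is itself an affine equivalence between $\mathcal{Z}_q$ and an embedded surface in $\mathbb{C}^4$, so the problem reduces to matching two pairs of quadrics in $\mathbb{C}^4$.

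After this reduction, $\mathcal{Z}_q$ in $\mathbb{C}^4$ is cut out by the homogeneous quadric $z_3z_4-\lambda_1 z_1z_2=0$ together with an inhomogeneous quadric $\widetilde Q(z;\mu,\lambda_2)=0$, obtained by substituting $z_5,z_6$ into $h_4$; the coefficients of $\widetilde Q$ depend polynomially on $\mu_1,\ldots,\mu_6,\lambda_2$. The first step is to identify the two homogeneous quadrics: since any two non-degenerate quadratic forms in four complex variables are linearly equivalent, there exists a linear change of coordinates $s=Az$ with $A\in GL_4(\mathbb{C})$ for which $s_1^2+s_2^2+s_3^2+s_4^2$ equals a nonzero scalar multiple of $z_3z_4-\lambda_1 z_1z_2$, for any prescribed $\lambda_1\in\mathbb{C}^*$. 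Such an $A$ is unique only up to left multiplication by the six-dimensional orthogonal group of the first form, together with an overall scaling.

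The second and main step is to match the inhomogeneous equations modulo the homogeneous one. Under $s=Az$, the second equation of \eqref{eq:segre_standard} becomes an inhomogeneous quadric $Q_2(z)+L_2(z)+1=0$, whose coefficients are explicit polynomial expressions in $\alpha_k,\beta_k,\lambda_1$ and the entries of $A$. Matching this against $\widetilde Q(z;\mu,\lambda_2)=0$ modulo $z_3z_4-\lambda_1 z_1z_2$ yields a polynomial system whose unknowns are $\mu_1,\ldots,\mu_6,\lambda_2$, the six residual orthogonal parameters in $A$, and one scalar parameter measuring the admissible mixture with the first equation. Setting up this system gives an algebraic map between the six-dimensional quotients of the $\mathcal{Z}_q$- and standard-form parameter spaces; since both are irreducible of the same dimension, it is enough to prove that this map is dominant. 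I expect the main technical obstacle to lie precisely here. The cleanest way to verify dominance is to exhibit one specific base point $(\mu,\lambda_1,\lambda_2)$ at which the Jacobian of the map to $(\alpha,\beta)$ can be computed and shown to have maximal rank; alternatively, one can solve the polynomial system explicitly at a generic $(\alpha,\beta)$ to construct a rational inverse on a Zariski-open set. Either approach, combined with the dimension count above, yields the claimed affine equivalence for generic $\alpha_k,\beta_k$.
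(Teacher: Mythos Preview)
Your strategy is essentially the paper's: reduce both surfaces to a pair of quadrics in $\mathbb{C}^4$, match the homogeneous ones by a linear change of variables, and then set up a polynomial system equating the remaining data, to be solved by checking the Jacobian at a single explicit point. The paper runs the construction in the reverse direction---it applies the algorithm of Lemma~\ref{lemma:standard_form} to $\mathcal{Z}_q$ (eliminating $z_1,z_2$ rather than $z_5,z_6$), arriving at seven polynomial equations $F_k(g,\lambda,\rho;\alpha,\beta)=0$ in the normalised parameters $(\lambda_1,\lambda_2,\rho_2,\rho_3,\rho_5,\rho_6)$ plus a scaling $g$---but the underlying idea is the same.

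The gap is that you stop precisely where the work begins. You write ``I expect the main technical obstacle to lie precisely here'' and then do not resolve it: you neither produce the base point nor compute the Jacobian nor construct the rational inverse. The paper does exactly this, exhibiting a concrete numerical solution (with $\lambda_1=\lambda_2=-3$, $\rho_2=2$, etc.) at which the Jacobian determinant with respect to $(g,\lambda,\rho)$ is a nonzero rational number, and then invokes the implicit function theorem plus a constructibility argument (the solvable locus in $(\alpha,\beta)$-space is constructible, open, and nonempty, hence Zariski-dense). Without this verification your argument is a plan, not a proof; the dimension count alone does not guarantee dominance, since a map between equidimensional irreducible varieties can still have image contained in a proper subvariety.
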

\begin{proof}
The proof of Lemma \ref{lemma:standard_form} gives an algorithmic procedure to put an embedded affine Segre surface, with smooth projective completion, into the form \eqref{eq:segre_standard}. We are going to apply this procedure to the family of Segre surfaces $\mathcal Z_q$ \eqref{qp6:seg}. This will provide a set of algebraic equations among the parameters of both families of Segre surfaces. By application of the implicit function theorem and standard algebraic arguments, we obtain that this set of algebraic equations is solvable for generic parameter values
$\alpha_k,\beta_k\in\mathbb{C}$, $1\leq k\leq 4$, from which the proposition will follow.

% Firstly, we use the linear equations $\{h_1=0,h_2=0\}$ to eliminate $z_1$ and $z_2$ from \eqref{qp6:seg}, and, by renaming the remaining variables as $z_{k+2}=x_k$, $1\leq k\leq 4$, we arrive at
% \begin{align*}
%     x_1x_2-\frac{\lambda_1}{\lambda_2}&\cdot x_3x_4=0,\\
%     x_1x_2-\hspace{0.5mm}\lambda_1\hspace{0.5mm}&\cdot (1+(1-\lambda_3)x_1+(1-\lambda_4)x_2+(1-\lambda_5)x_3+(1-\lambda_6)x_4)\\
%     &\cdot(\lambda_3x_1+\lambda_4x_2+\lambda_5 x_3+\lambda_6 x_4-1)=0,
% \end{align*}
% where we used the $\lambda_k$, $3\leq k\leq 6$, defined in \eqref{eq:lambdas}.

As a first step, we eliminate two of the variables in \eqref{qp6:seg}.
Assuming $\rho_2\neq 0$, we use the linear equations $\{h_1=0,h_2''=0\}$ to eliminate $z_1$ and $z_2$ from the  equations $\{h_3=0,h_4=0\}$, and, by renaming the remaining variables as $z_{k+2}=s_k$, $1\leq k\leq 4$, we arrive at
\begin{align*}
    s_1s_2-\frac{\lambda_1}{\lambda_2}&  s_3s_4=0,\\
    s_1s_2-\hspace{0.5mm}\frac{\lambda_1}{\rho_2^2}\hspace{0.5mm}&  (1+(\rho_2-\rho_3)s_1+(\rho_2-1)s_2+(\rho_2-\rho_5)s_3+(\rho_2-\rho_6)s_4)\\
    &(\rho_3 s_1+ s_2+ \rho_5 s_3+ \rho_6 s_4-1)=0,
\end{align*}
where we used the $\rho_k$, $1\leq k\leq 4$, defined in \eqref{eq:rhos}.

Now, construct the corresponding symmetric $5\, 5$ matrices $M,N$, as in \eqref{eq:segre_form}. Applying the affine transformation \eqref{eq:affine_transformations}, with
\begin{equation*}
    G=\frac{1}{\sqrt{2}}\begin{bmatrix}
        1 & 0 & 0 & 0 & 0\\
        0 & -i & i & 0 & 0\\
        0 & 1 & 1 & 0 & 0\\
        0 & 0 & 0 & -\sqrt{\frac{\lambda_2}{\lambda_1}} & \sqrt{\frac{\lambda_2}{\lambda_1}}\\
        0 & 0 & 0 & i\sqrt{\frac{\lambda_2}{\lambda_1}} & i\sqrt{\frac{\lambda_2}{\lambda_1}}\\
    \end{bmatrix},
\end{equation*}
readily puts $M$ into the form \eqref{eq:Mstandard}. 

What is left, is to construct an orthonormal matrix $P$ that diagonalizes $N_4$, i.e. $P^TN_4 P=D$, where $D$ is a diagonal matrix. Upon introducing such a $P$ formally, applying \eqref{eq:algorithm_step_diagonalN}, comparing the result with \eqref{eq:Nstandard} (with $\beta_0=1$) and finally eliminating the entries of $P$ by taking resultants, one is left with $7$ polynomial equations, among the variables $\lambda_1,\lambda_2,\rho_2,\rho_3,\rho_5,\rho_6$ and $\alpha_k,\beta_k$, $1\leq k\leq 4$. By remark \eqref{remark:scaling_freedom_standard}, we may normalise $\alpha_4=-\alpha_1-\alpha_2-\alpha_3$, and by further introducing the scaling by nonzero $g$ as specified in the remark, we end up with seven polynomial equations, with rational coefficients, among two sets of seven variables,
\begin{equation*}
F_k=F_k(g,\lambda,\rho;\alpha,\beta)=0\quad (1\leq k\leq 7),  
\end{equation*}
where $\lambda=(\lambda_1,\lambda_2)$, $\rho=(\rho_2,\rho_3,\rho_5,\rho_6)$, $\alpha=(\alpha_k)_{1\leq k\leq 3}$ and $\beta=(\beta_k)_{1\leq k\leq 4}$.

Explicit formulas for these polynomials are lengthy and best obtained using mathematical software, e.g. Mathematica. One of the simplest solutions that we could find to this polynomial system, is given by

\begin{align*}
\lambda_1&=-3, &
\lambda_2&=-3, &
\rho_2&=2,\\
\rho_3&=\tfrac{2}{3}, &
\rho_5&=-1, &
\rho_6&=\tfrac{2}{3},\\
\alpha_1&=\tfrac{1}{6}(2+\sqrt{6})i, &
\alpha_2&=\tfrac{1}{6}(-2+\sqrt{6})i, &
\alpha_3&=\tfrac{1}{6}(2-\sqrt{6})i,
\end{align*}
together with $g=1$ and
\begin{align*}
\beta_1&=\frac{1}{6}\sqrt{(-24+39 i)-(16-15i)\sqrt{6}}, &
\beta_2&=\frac{1}{6}\sqrt{(-24-39 i)+(16+15i)\sqrt{6}},\\
\beta_3&=\frac{1}{6}\sqrt{(-24+39 i)+(16-15i)\sqrt{6}}, &
\beta_4&=\frac{1}{6}\sqrt{(-24-39 i)-(16+15i)\sqrt{6}}.
\end{align*}

% \begin{align*}
% \lambda_1&=-3, &
% \lambda_2&=-3, &
% \lambda_3&=\tfrac{1}{3},\\
% \lambda_4&=\tfrac{1}{2}, &
% \lambda_5&=-\tfrac{1}{2}, &
% \lambda_6&=\tfrac{1}{3},\\
% \alpha_1&=\tfrac{1}{6}(2+\sqrt{6})i, &
% \alpha_2&=\tfrac{1}{6}(-2+\sqrt{6})i, &
% \alpha_3&=\tfrac{1}{6}(2-\sqrt{6})i,
% \end{align*}
% together with $g=1$ and
% \begin{align*}
% \beta_1&=\frac{1}{6}\sqrt{(-24+39 i)-(16-15i)\sqrt{6}}, &
% \beta_2&=\frac{1}{6}\sqrt{(-24-39 i)+(16+15i)\sqrt{6}},\\
% \beta_3&=\frac{1}{6}\sqrt{(-24+39 i)+(16-15i)\sqrt{6}}, &
% \beta_4&=\frac{1}{6}\sqrt{(-24-39 i)-(16+15i)\sqrt{6}}.
% \end{align*}
In particular, the system $F=(F_k)_{1\leq k\leq 7}$ is consistent and defines a non-empty algebraic set $V\subseteq \mathbb{C}^{14}$.
Furthermore, the Jacobian determinant $|J_{\lambda,\rho,g}(F)|$ of $F=(F_k)_{1\leq k\leq 7}$ with respect to the seven variables $\{g,\lambda,\rho\}$, evaluates to a nonzero rational number at this point. Though we will not need it, we remark that the Jacobian of $F$ with respect to the other set of seven variables, $\{\alpha,\beta\}$, is also nonzero at this point.

By the implicit function theorem, the set
\begin{equation*}
V=\{(\alpha,\beta)\in\mathbb{C}^7:\exists_{(g,\lambda,\rho)\in\mathbb{C}^7}:F=0,|J_{\lambda,\rho,g}(F)|\neq 0\text{ and }g,\lambda_1,\lambda_2\neq 0\},    
\end{equation*}
is open and non-empty. But $V$ is a constructible set with respect to the Zariski topology,
so either $V$ or its complement is dense in $\mathbb{C}^7$ with respect to the Euclidean topology. It follows that $V$ is dense in $\mathbb{C}^7$.
 That is, for generic $(\alpha,\beta)\in\mathbb{C}^7$, we can find $\lambda_1,\lambda_2$,  $\rho_2,\rho_3,\rho_5,\rho_6$, and an affine transformation (which generally depends on $g$), that transforms the Segre surface, defined by equations \eqref{eq:segre_standard}, into \eqref{qp6:seg}. The proposition follows. 
\end{proof}

From Lemma \ref{lemma:standard_form}, Remark \ref{remark:scaling_freedom_standard} and Proposition \ref{prop:segre_equivalence}, we obtain the following result.
\begin{corollary}\label{cor:6dim}
The embedded family of affine Segre surfaces $\mathcal Z_q$ defined in \eqref{qp6:seg},  constitute a six-dimensional family, up to affine equivalence.
\end{corollary}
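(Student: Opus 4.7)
The plan is to combine an upper and a lower bound on the dimension of $\mathcal Z_q$ up to affine equivalence, both of which have essentially been established in the preceding results. So the argument is really an assembly step rather than new work.

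For the upper bound, I would cite Lemma \ref{lm:parindep} directly: by first subtracting a multiple of $h_1$ from $h_2$ to kill the $z_1$-coefficient, and then using the scaling freedom $z_k\mapsto c\,z_k$ (which leaves $h_1,h_3,h_4$ invariant and rescales the coefficients of $h_2$) to normalise one nonzero coefficient of $h_2$ to $1$, the surface $\mathcal Z_q$ is presented using only the six parameters $\lambda_1,\lambda_2,\rho_2,\rho_3,\rho_5,\rho_6$. Thus the family is at most six-dimensional up to affine equivalence.

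For the lower bound, I would invoke Lemma \ref{lemma:standard_form} together with Remark \ref{remark:scaling_freedom_standard}: every embedded affine Segre surface with smooth canonical projective completion is affinely equivalent to one in the standard form \eqref{eq:segre_standard} parametrised by $(\alpha_1,\dots,\alpha_4,\beta_1,\dots,\beta_4)\in\mathbb C^8$, and modulo the one linear normalisation among the $\alpha$'s (e.g.\ $\alpha_1+\alpha_2+\alpha_3+\alpha_4=0$) together with the scaling $s_k\mapsto g^{-1}s_k$, this yields a genuinely six-dimensional family. Then Proposition \ref{prop:segre_equivalence} says that for generic $(\alpha,\beta)$ the standard-form surface is affinely equivalent to a member of $\mathcal Z_q$; i.e.\ the image of the parameter map $(\lambda_1,\lambda_2,\rho_2,\rho_3,\rho_5,\rho_6)\mapsto$ (affine equivalence class of $\mathcal Z_q$) is Zariski-dense in the full six-dimensional moduli space of such Segre surfaces. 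Hence the family $\mathcal Z_q$ is at least six-dimensional up to affine equivalence.

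Combining these two bounds gives the required equality, and so the dimension is exactly six. There is no real obstacle: once Lemma \ref{lm:parindep}, Lemma \ref{lemma:standard_form}, Remark \ref{remark:scaling_freedom_standard} and Proposition \ref{prop:segre_equivalence} are in place, the corollary follows by direct comparison. The only subtlety to watch is to phrase ``dimension up to affine equivalence'' consistently; since Proposition \ref{prop:segre_equivalence} already produces the affine transformation realising the equivalence for each generic $(\alpha,\beta)$, and Lemma \ref{lm:parindep} realises the reverse reduction explicitly, no further argument is needed.
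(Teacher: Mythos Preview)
Your proposal is correct and matches the paper's approach: the corollary is stated immediately after Lemma \ref{lemma:standard_form}, Remark \ref{remark:scaling_freedom_standard} and Proposition \ref{prop:segre_equivalence}, citing exactly those results, with the upper bound from Lemma \ref{lm:parindep} already noted in the surrounding text. There is nothing to add.
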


\subsection{Parametrisation} \label{subsec:parametrisation}
In this section, we introduce and study the parametrisation of the affine Segre surface \eqref{qp6:seg} in terms of the parameter of $q\Psix$, which is defined below. This parametrisation will be helpful for two separate purposes. The first is to give explicit expressions for the lines on the Segre surface. The second is to compute a continuum limit of the Segre surface.

Given $q\in\mathbb{C}$, $0<|q|<1$, and $\kappa=(\kappa_0,\kappa_t,\kappa_1,\kappa_\infty)\in (\mathbb C^*)^4$, the $q$-difference sixth Painlev\'e equation \cite{jimbosakai} is 
\begin{align}\label{eq:qpvi}
q\Psix:\ \begin{cases}\  f\overline{f}&=\dfrac{(g-\kappa_0\,t)(g-\kappa_0^{-1}t)}{(g-\kappa_\infty)(g-q^{-1}\kappa_\infty^{-1})},\\
  \   g\overline{g}&=\dfrac{(f-\kappa_t\,t)(f-\kappa_t^{-1}t)}{q(f-\kappa_1)(f-\kappa_1^{-1})},
 \\
  f,g &:T\rightarrow \mathbb{P}^1
     \end{cases} 
\end{align}
where the domain is given by a $q$-spiral, $T=q^{\mathbb{Z}}t_0$,
 and  $f=f(t)$, $g=g(t)$, $\overline{f}=f(q\,t)$, $\overline{g}=g(q\,t)$, for $t\in T$. Here $q^\alpha:=e^{\alpha\log q}$ and the parameters of the equation are given by  $\kappa_j=q^{\vartheta_j}$ for $j=0,t,1,\infty$, with
\begin{equation}\label{eq:qpvi_parameters}
\vartheta_0,\vartheta_t,\vartheta_1,\vartheta_\infty\in\mathbb{C},\quad t_0\in\mathbb{C}^*,\quad \log q\in \{x\in \mathbb C:\Re x<0\}.
\end{equation}

In order to define the parametrisation of the $q\Psix$ Segre surface, we require some elementary $q$-special functions. Firstly, the $q$-Pochhammer symbol is defined by the product
\begin{equation*}
(z;q)_\infty=\prod_{k=0}^{\infty}{(1-q^kz)},
\end{equation*}
which is locally uniformly convergent in $(z,q)\in\mathbb{C}\times \mathbb{D}$, where $\mathbb{D}$ denotes the open unit disc $\mathbb{D}=\{x\in\mathbb{C}: |x|<1\}$.
Secondly, the $\mathit{q}$-theta function, defined by
\begin{equation*}
\theta_q(z)=(q;q)_\infty(z;q)_\infty(q/z;q)_\infty,
\end{equation*}
is analytic in $(z,q)\in\mathbb{C}^*\times \mathbb{D}$, and admits the following convergent expansion on its domain,
\begin{equation}\label{eq:jacobitripleproduct}
 \theta_q(z)=\sum_{n=-\infty}^\infty(-1)^n q^{\frac{1}{2}n(n-1)}z^n.
\end{equation}
This is known as the Jacobi triple product formula, which shows that
$\theta_q(z)$ has essential singularities at $z=0$ and $z=\infty$, unless $q=0$. 
For $n\in\mathbb{N}^*$, we use the common abbreviation for repeated products of this function
\begin{equation*}
\theta_q(z_1,\ldots,z_n)=\theta_q(z_1)\cdot \ldots\cdot \theta_q(z_n).
\end{equation*}

Now, the coefficients in \eqref{qp6:segh} are explicitly parametrised by the parameters in \eqref{eq:qpvi_parameters} as follows \cite{jr_qp6}, 
\begin{equation}\label{eq:qpvimu}
\begin{aligned}
    &\mu_{1}=\prod_{\epsilon=\pm 1}\frac{\theta_q\left(q^{+\vartheta_\infty}t_0\right)}{\theta_q\left(q^{\epsilon \hspace{0.1mm}\vartheta_0+\vartheta_\infty}t_0\right)}, &
    &\mu_{2}=\prod_{\epsilon=\pm 1}    \frac{\theta_q\left(q^{-\vartheta_\infty}t_0\right)}{\theta_q\left(q^{\epsilon \hspace{0.1mm}\vartheta_0-\vartheta_\infty}t_0\right)},\\
    &\mu_{3}=\prod_{\epsilon=\pm 1} \frac{\theta_q(q^{\vartheta_t+\vartheta_1+\vartheta_\infty})}{\theta_q(q^{\epsilon \,\vartheta_0+\vartheta_t+\vartheta_1+\vartheta_\infty})}, &
    &\mu_{4}=\prod_{\epsilon=\pm 1}\frac{\theta_q(q^{-\vartheta_t-\vartheta_1+\vartheta_\infty})}{\theta_q(q^{\epsilon \,\vartheta_0-\vartheta_t-\vartheta_1+\vartheta_\infty})},\\
    &\mu_{5}=\prod_{\epsilon=\pm 1}\frac{\theta_q(q^{\vartheta_t-\vartheta_1+\vartheta_\infty})}{\theta_q(q^{\epsilon \,\vartheta_0+\vartheta_t-\vartheta_1+\vartheta_\infty})}, &
    &\mu_{6}=\prod_{\epsilon=\pm 1}\frac{\theta_q(q^{-\vartheta_t+\vartheta_1+\vartheta_\infty})}{\theta_q(q^{\epsilon \,\vartheta_0-\vartheta_t+\vartheta_1+\vartheta_\infty})},
\end{aligned}
\end{equation}
and
\begin{equation}\label{eq:lambda12}
    \lambda_1=\frac{\eta_{3}^{(q)}\eta_{4}^{(q)}}{\eta_{1}^{(q)}\eta_{2}^{(q)}},\qquad \lambda_2=\frac{\eta_{5}^{(q)}\eta_{6}^{(q)}}{\eta_{1}^{(q)}\eta_{2}^{(q)}},
\end{equation}
where the $\eta_k^{(q)}$ are given by
\begin{align*}
\eta_{1}^{(q)}&=+\theta_q\left(q^{\vartheta_0+\vartheta_\infty}t_0,q^{-\vartheta_0+\vartheta_\infty}t_0\right)\theta_q(q^{2\vartheta_t},q^{2\vartheta_1}),\\
\eta_{2}^{(q)}&=+\theta_q\left(q^{\vartheta_0-\vartheta_\infty}t_0,q^{-\vartheta_0-\vartheta_\infty}t_0\right)\theta_q(q^{2\vartheta_t},q^{2\vartheta_1})q^{2\vartheta_\infty},\\
\eta_{3}^{(q)}&=-\theta_q\left(q^{\vartheta_t-\vartheta_1}t_0,q^{-\vartheta_t+\vartheta_1}t_0\right)\theta_q(q^{\vartheta_0+\vartheta_t+\vartheta_1+\vartheta_\infty},q^{-\vartheta_0+\vartheta_t+\vartheta_1+\vartheta_\infty}),\\
\eta_{4}^{(q)}&=-\theta_q\left(q^{\vartheta_t-\vartheta_1}t_0,q^{-\vartheta_t+\vartheta_1}t_0\right)\theta_q(q^{\vartheta_0-\vartheta_t-\vartheta_1+\vartheta_\infty},q^{-\vartheta_0-\vartheta_t-\vartheta_1+\vartheta_\infty})q^{2\vartheta_t+2\vartheta_1},\\
\eta_{5}^{(q)}&=+\theta_q\left(q^{\vartheta_t+\vartheta_1}t_0,q^{-\vartheta_t-\vartheta_1}t_0\right)\theta_q(q^{\vartheta_0+\vartheta_t-\vartheta_1+\vartheta_\infty},q^{-\vartheta_0+\vartheta_t-\vartheta_1+\vartheta_\infty})q^{2\vartheta_1},\\
\eta_{6}^{(q)}&=+\theta_q\left(q^{\vartheta_t+\vartheta_1}t_0,q^{-\vartheta_t-\vartheta_1}t_0\right)\theta_q(q^{\vartheta_0-\vartheta_t+\vartheta_1+\vartheta_\infty},q^{-\vartheta_0-\vartheta_t+\vartheta_1+\vartheta_\infty})q^{2\vartheta_t}.
\end{align*}
We further note that
\begin{equation}\label{eq:Tk_identities}
    \mu_k=\frac{\widehat{\eta}_k^{(q)}}{\eta_k^{(q)}},\quad\widehat{\eta}_k^{(q)}:=\eta_k^{(q)}|_{\vartheta_0=0}\qquad (1\leq k \leq 6).
\end{equation}

Under the parametrisation above, the six invariant quantities, $\lambda_1,\lambda_2,\rho_2,\rho_3,\rho_5,\rho_6$, see equation \eqref{eq:rhos}, depend on six free parameters \eqref{eq:qpvi_parameters}. A priori, it is conceivable that there might exist an (algebraic) relation among the coefficients in \eqref{qp6:seg} under this parametrisation, but it follows from the following lemma that this is not the case.

\begin{lemma}\label{lem:genericcoef}
    The meromorphic mapping,
    \begin{equation}\label{eq:map_param_to_coef}
(\vartheta_0,\vartheta_t,\vartheta_1,\vartheta_\infty,t_0,\log q)\mapsto (\lambda_1,\lambda_2,\rho_2,\rho_3,\rho_5,\rho_6),
    \end{equation}
is locally biholomorphic near almost any point in its domain \eqref{eq:qpvi_parameters}.  
\end{lemma}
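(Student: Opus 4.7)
The domain $\{(\vartheta_0,\vartheta_t,\vartheta_1,\vartheta_\infty,t_0,\log q)\}$ described by \eqref{eq:qpvi_parameters} and the target $\mathbb{C}^6$ are both connected complex manifolds of dimension six, and the map \eqref{eq:map_param_to_coef} is meromorphic on the domain, being holomorphic away from the vanishing loci of the denominator theta functions in \eqref{eq:qpvimu} and \eqref{eq:lambda12}. Let $J$ denote its Jacobian determinant. By the inverse function theorem, the map is a local biholomorphism at any point where $J$ is nonzero. Hence the lemma reduces to showing that $J$ is not identically zero as a meromorphic function on the domain; then the set $\{J=0\}$, together with the poles of the map, forms a proper analytic subset off of which the map is a local biholomorphism, which is precisely the assertion of the lemma.

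To verify $J \not\equiv 0$, the cleanest strategy is to evaluate $J$ at a single convenient reference point. One picks $\log q$ of sufficiently negative real part so that $|q|$ is small and the $q$-theta products in \eqref{eq:qpvimu} and \eqref{eq:lambda12} converge rapidly, picks generic numerical values for $\vartheta_0,\vartheta_t,\vartheta_1,\vartheta_\infty,t_0$, and computes $J$ to sufficient precision, in the spirit of the numerical verification carried out in the proof of Proposition \ref{prop:segre_equivalence}. The entries of the Jacobian are explicit combinations of theta functions and their logarithmic derivatives, all of which admit convergent expansions via \eqref{eq:jacobitripleproduct}, so the computation is tractable. Confirming $J\ne 0$ at a single such point forces $J\not\equiv 0$ on the entire connected domain by analyticity, completing the argument.

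A conceptual cross-check supporting this plan is given by Corollary \ref{cor:6dim}: the family of embedded affine Segre surfaces $\mathcal Z_q$ is genuinely six-dimensional in the coordinates $(\lambda_1,\lambda_2,\rho_2,\rho_3,\rho_5,\rho_6)$, and the map \eqref{eq:map_param_to_coef} is precisely the identification between $q\Psix$ parameters and these Segre-surface parameters coming from the monodromy manifold description of \cite{jr_qp6}. The main obstacle is the transcendental, $\theta_q$-laden structure of the map: closed-form symbolic manipulation of the $6\times 6$ Jacobian is unwieldy, so in practice one must rely on the numerical point-evaluation above (or a computer-algebra equivalent). One could alternatively study a degeneration in $q$, for example $q\to 0$, where the theta functions simplify dramatically via $\theta_q(z)\to 1-z$, and compute the Jacobian of the resulting limit map; this would give a more transparent derivation of $J\not\equiv 0$, at the cost of a careful asymptotic analysis of the $q^{\alpha}t_0$ arguments in the limiting regime.
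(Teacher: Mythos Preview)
Your reduction is correct: the map is meromorphic on a connected six-dimensional domain, so it suffices to show the Jacobian determinant $J$ is not identically zero. The paper proceeds exactly this way.

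However, your primary method---numerical evaluation of $J$ at a generic point ``to sufficient precision''---is not a proof as written. Unlike the situation in Proposition~\ref{prop:segre_equivalence}, where the polynomial system and its Jacobian take exact algebraic values at the exhibited point, here the entries of the Jacobian are transcendental theta-function expressions, and a floating-point computation does not establish $J\neq 0$ without a rigorous error bound, which you do not supply. Also, your ``conceptual cross-check'' via Corollary~\ref{cor:6dim} does not help: that corollary concerns the family of Segre surfaces parametrised by the abstract coefficients $(\lambda_1,\lambda_2,\rho_2,\rho_3,\rho_5,\rho_6)$, and says nothing about whether the specific parametrisation \eqref{eq:map_param_to_coef} via the $q\Psix$ data has full rank. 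Indeed, in the paper the logical flow runs the other way: Lemma~\ref{lem:genericcoef} is what allows one to conclude that the $q\Psix$ parametrisation sweeps out an open set.

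Your throwaway alternative---degenerating $q\to 0$ so that $\theta_q(z)\to 1-z$---is precisely what the paper does, and it resolves the rigor issue cleanly. The paper first passes to the intermediate variables $(\kappa_0,\kappa_t,\kappa_1,\kappa_\infty,t_0,q)$ via \eqref{eq:theta-kappa}, so that all theta arguments become algebraic in these variables; it then expands $\lambda_1,\lambda_2,\rho_k$ in powers of $q$ using the Jacobi triple product, and computes the constant term $|J_0|$ of the Jacobian determinant symbolically. The result is an explicit nonzero rational function of $\kappa_0,\kappa_t,\kappa_1,\kappa_\infty,t_0$, which immediately gives $|J|\not\equiv 0$. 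You should promote this from an afterthought to the actual argument; the ``careful asymptotic analysis'' you worry about is not needed, since the theta expansions around $q=0$ are convergent power series and one only needs the leading coefficient.
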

\begin{proof}
It is helpful to write \eqref{eq:map_param_to_coef} as the composition of the mappings
\begin{align}\label{eq:composition1}
(\vartheta_0,\vartheta_t,\vartheta_1,\vartheta_\infty,t_0,\log q)&\mapsto 
(\kappa_0,\kappa_t,\kappa_1,\kappa_\infty,t_0,q),\\
(\kappa_0,\kappa_t,\kappa_1,\kappa_\infty,t_0,q)&\mapsto 
(\lambda_1,\lambda_2,\rho_2,\rho_3,\rho_5,\rho_6), \label{eq:composition2}
\end{align}
where 
\begin{equation}\label{eq:theta-kappa}
    (\kappa_0,\kappa_t,\kappa_1,\kappa_\infty)=(q^{\vartheta_0},q^{\vartheta_t},q^{\vartheta_1},q^{-\vartheta_\infty}),
\end{equation}
and the domain of \eqref{eq:composition2} is defined by
\begin{equation}\label{eq:composition_2_domain}
\{(\kappa_0,\kappa_t,\kappa_1,\kappa_\infty,t_0,q)\in (\mathbb{C}^*)^5\times \mathbb{D}\}.
\end{equation}

Clearly, the mapping \eqref{eq:composition1} is a local biholomorphism. To prove the lemma, it thus suffices to show that the mapping \eqref{eq:composition2} is locally biholomorphic near almost any point in its domain. Since the domain \eqref{eq:composition_2_domain} is connected, and the mapping \eqref{eq:composition2} is meromorphic, it thus suffices to show that the  Jacobian determinant of \eqref{eq:composition2} is not identically zero. To prove the latter, we are going to compute an expansion of the Jacobian determinant  around $q=0$. 

Recall that $\theta_q(z)$ is an analytic function with respect to $(z,q)\in\mathbb{C}^*\times\mathbb{D}$. From the Jacobi triple product formula \eqref{eq:jacobitripleproduct}, we obtain the following convergent expansion around $q=0$,
\begin{equation*}
    \theta_q(z)=\sum_{n=0}^\infty (-1)^n (z^{-n}-z^{n+1})q^{\frac{1}{2}n(n+1)},
\end{equation*}
which allows us to compute expansions around $q=0$ for $\lambda_1,\lambda_2$ and $\rho_k$, $k=2,3,5,6$. For example,
\begin{align*}
\lambda_1&=\frac{\theta_q\big(t_0\kappa_t\kappa_1^{-1}\big)^2\theta_q\big(t_0\kappa_t^{-1}\kappa_1\big)^2}
{\theta_q(\kappa_t^2)^2\theta_q(\kappa_1^2)^2}
\prod_{\epsilon_{1,2}\in\{\pm 1\}}{\frac{\theta_q(\kappa_0^{\epsilon_1}\kappa_\infty^{\epsilon_2}\kappa_t\kappa_1)}{\theta_q(\kappa_0^{\epsilon_1}\kappa_\infty^{\epsilon_2} t_0)}}\\
&=\lambda_1^{(0)}+q \lambda_1^{(1)}+\mathcal{O}(q^2),
\end{align*}
as $q\rightarrow 0$, where
\begin{align*}
  \lambda_1^{(0)}=&  \frac{(t_0\kappa_t\kappa_1^{-1}-1)(t_0\kappa_t^{-1}\kappa_1-1)}{(\kappa_t^2-1)^2(\kappa_1^2-1)^2}
\prod_{\epsilon_{1,2}\in\{\pm 1\}}{\frac{(\kappa_0^{\epsilon_1}\kappa_\infty^{\epsilon_2}\kappa_t\kappa_1-1)}{(\kappa_0^{\epsilon_1}\kappa_\infty^{\epsilon_2} t_0-1)}},\\
\lambda_1^{(1)}=& 
t_0^{-1}\lambda_1^{(0)}(t_0\kappa_t\kappa_1-1)(t_0\kappa_t^{-1}\kappa_1^{-1}-1) \\
& \cdot
(\kappa_0\kappa_\infty+\kappa_0^{-1}\kappa_\infty+\kappa_0\kappa_\infty^{-1}+\kappa_0^{-1}\kappa_\infty^{-1}-2 \kappa_t\kappa_1^{-1}-2\kappa_t^{-1}\kappa_1).
\end{align*}
Similarly, we obtain expansions,
\begin{equation*}
    \rho_k=\rho_k^{(0)}+q \rho_k^{(1)}+\mathcal{O}(q^2)\qquad (q\rightarrow 0),
\end{equation*}
for $k=2,3,5,6$, with coefficients that are easily computed explicitly.

Let $J$ denote the Jacobian of the mapping \eqref{eq:composition2},
\begin{equation*}
  J=\left[\begin{array}{cccccc}
       \frac{\partial \lambda_1}{\partial \kappa_0}&
  \frac{\partial \lambda_1}{\partial \kappa_t}&
  \frac{\partial \lambda_1}{\partial \kappa_1}&
  \frac{\partial \lambda_1}{\partial \kappa_\infty}&
  \frac{\partial \lambda_1}{\partial t_0}&
  \frac{\partial \lambda_1}{\partial q}\\
   \frac{\partial \lambda_2}{\partial \kappa_0}&
  \frac{\partial \lambda_2}{\partial \kappa_t}&
  \frac{\partial \lambda_2}{\partial \kappa_1}&
  \frac{\partial \lambda_2}{\partial \kappa_\infty}&
  \frac{\partial \lambda_2}{\partial t_0}&
  \frac{\partial \lambda_2}{\partial q}\\
 \frac{\partial \rho_2}{\partial \kappa_0}&
  \frac{\partial \rho_2}{\partial \kappa_t}&
  \frac{\partial \rho_2}{\partial \kappa_1}&
  \frac{\partial \rho_2}{\partial \kappa_\infty}&
  \frac{\partial \rho_2}{\partial t_0}&
  \frac{\partial \rho_2}{\partial q}\\
\vdots&\vdots&\vdots&\vdots&\vdots&\vdots\\
  \frac{\partial \rho_6}{\partial \kappa_0}&
  \frac{\partial \rho_6}{\partial \kappa_t}&
  \frac{\partial \rho_6}{\partial \kappa_1}&
  \frac{\partial \rho_6}{\partial \kappa_\infty}&
  \frac{\partial \rho_6}{\partial t_0}&
  \frac{\partial \rho_6}{\partial q}\\
  \end{array}
  \right].
\end{equation*}
% \begin{equation*}
%   J=\biggl(\frac{\partial \lambda_k}{\partial \kappa_0},
%   \frac{\partial \lambda_k}{\partial \kappa_t},
%   \frac{\partial \lambda_k}{\partial \kappa_1},
%   \frac{\partial \lambda_k}{\partial \kappa_\infty},
%   \frac{\partial \lambda_k}{\partial t_0},
%   \frac{\partial \lambda_k}{\partial q}\biggr)_{1\leq k\leq 6}.
% \end{equation*}
Then $J$ admits an expansion around $q=0$,
\begin{equation*}
  J=\sum_{k=0}^\infty J_k(\kappa_0,\kappa_t,\kappa_1,\kappa_\infty,t_0) q^k,  
\end{equation*}
which is locally uniformly convergent with respect to the remaining variables away from singularities of $J$. Now, a direct computation yields that the determinant of the constant term  is given by
\begin{align*}
|J_0|=&\left|
\begin{array}{cccccc}
       \frac{\partial \lambda_1^{(0)}}{\partial \kappa_0}&
  \frac{\partial \lambda_1^{(0)}}{\partial \kappa_t}&
  \frac{\partial \lambda_1^{(0)}}{\partial \kappa_1}&
  \frac{\partial \lambda_1^{(0)}}{\partial \kappa_\infty}&
  \frac{\partial \lambda_1^{(0)}}{\partial t_0}&
  \frac{\partial \lambda_1^{(0)}}{\partial q}\\
   \frac{\partial \lambda_2^{(0)}}{\partial \kappa_0}&
  \frac{\partial \lambda_2^{(0)}}{\partial \kappa_t}&
  \frac{\partial \lambda_2^{(0)}}{\partial \kappa_1}&
  \frac{\partial \lambda_2^{(0)}}{\partial \kappa_\infty}&
  \frac{\partial \lambda_2^{(0)}}{\partial t_0}&
  \frac{\partial \lambda_2^{(0)}}{\partial q}\\
 \frac{\partial \rho_2^{(0)}}{\partial \kappa_0}&
  \frac{\partial \rho_2^{(0)}}{\partial \kappa_t}&
  \frac{\partial \rho_2^{(0)}}{\partial \kappa_1}&
  \frac{\partial \rho_2^{(0)}}{\partial \kappa_\infty}&
  \frac{\partial \rho_2^{(0)}}{\partial t_0}&
  \frac{\partial \rho_2^{(0)}}{\partial q}\\
\vdots&\vdots&\vdots&\vdots&\vdots&\vdots\\
  \frac{\partial \rho_6^{(0)}}{\partial \kappa_0}&
  \frac{\partial \rho_6^{(0)}}{\partial \kappa_t}&
  \frac{\partial \rho_6^{(0)}}{\partial \kappa_1}&
  \frac{\partial \rho_6^{(0)}}{\partial \kappa_\infty}&
  \frac{\partial \rho_6^{(0)}}{\partial t_0}&
  \frac{\partial \rho_6^{(0)}}{\partial q}\\
  \end{array}\right|\\
=&\,8\frac{(\kappa_t^{-1}\kappa_1^{-1}t_0-1)^4(\kappa_t\kappa_1^{-1}t_0-1)^4(\kappa_t^{-1}\kappa_1t_0-1)^4}{\kappa_0^2\kappa_t^2\kappa_1^2\kappa_\infty^{-2}(\kappa_t\kappa_1t_0-1)(\kappa_t^{-1}\kappa_1^{-1}\kappa_\infty^{-2}t_0-1)^5}\cdot
\frac{(\kappa_0^2-1)(\kappa_\infty^{-2}-1)^6}{(\kappa_t^2-1)^2(\kappa_1^2-1)^2}
\\
&\cdot\prod_{\epsilon\in \{\pm 1\}}
\frac{(\kappa_0^{\epsilon}\kappa_t\kappa_1\kappa_\infty-1)^4(\kappa_0^{\epsilon}\kappa_\infty^{-1}t_0-1)}
{(\kappa_0^{\epsilon}\kappa_\infty t_0-1)^4(\kappa_0^{\epsilon}\kappa_t^{-1}\kappa_1\kappa_\infty-1)
(\kappa_0^{\epsilon}\kappa_t\kappa_1^{-1}\kappa_\infty-1)
(\kappa_0^{\epsilon}\kappa_t\kappa_1\kappa_\infty^{-1}-1)
}.
\end{align*}

% \textcolor{purple}{The old answer:
% \begin{align*}
% |J_0|=&-8\frac{\kappa_0^2\kappa_t\kappa_1}{t_0^6\kappa_\infty^2}
% \frac{(\kappa_0^2-1)(\kappa_\infty^2-1)}{(\kappa_t^2-1)^2(\kappa_1^2-1)^2(t_0^2-1)^5}\\
% &\cdot\prod_{\epsilon_{1,2}\in \{\pm 1\}}
% \frac{(\kappa_0^{\epsilon_1}\kappa_\infty^{\epsilon_2} t_0-1)(\kappa_t^{\epsilon_1}\kappa_1^{\epsilon_2} t_0-1)^4}{(\kappa_t^{\epsilon_1}\kappa_1^{\epsilon_2} \kappa_0\kappa_\infty-1)
% (\kappa_t^{\epsilon_1}\kappa_1^{\epsilon_2} \kappa_0\kappa_\infty^{-1}-1)}.
% \end{align*}
% }

% \begin{align*}
% |J_0|=&\biggl|\frac{\partial \lambda_k^{(0)}}{\partial \kappa_0},
%   \frac{\partial \lambda_k^{(0)}}{\partial \kappa_t},
%   \frac{\partial \lambda_k^{(0)}}{\partial \kappa_1},
%   \frac{\partial \lambda_k^{(0)}}{\partial \kappa_\infty},
%   \frac{\partial \lambda_k^{(0)}}{\partial t_0},
%   \lambda_k^{(1)}\biggr|_{1\leq k\leq 6}\\
% =&-8\frac{\kappa_0^2\kappa_t\kappa_1}{t_0^6\kappa_\infty^2}
% \frac{(\kappa_0^2-1)(\kappa_\infty^2-1)}{(\kappa_t^2-1)^2(\kappa_1^2-1)^2(t_0^2-1)^5}\\
% &\cdot\prod_{\epsilon_{1,2}\in \{\pm 1\}}
% \frac{(\kappa_0^{\epsilon_1}\kappa_\infty^{\epsilon_2} t_0-1)(\kappa_t^{\epsilon_1}\kappa_1^{\epsilon_2} t_0-1)^4}{(\kappa_t^{\epsilon_1}\kappa_1^{\epsilon_2} \kappa_0\kappa_\infty-1)
% (\kappa_t^{\epsilon_1}\kappa_1^{\epsilon_2} \kappa_0\kappa_\infty^{-1}-1)}.
% \end{align*}
As a consequence, $|J|\not \equiv 0$ and the lemma follows.
\end{proof}

Roughly speaking, Lemma \ref{lemma:standard_form} and Remarks \ref{remark:scaling_freedom_standard} and \ref{remark:smoothness_standard_form} allow us to identify generic embedded
 affine Segre surfaces, up to affine equivalence, with generic tuples $\{(\alpha,\beta)\in\mathbb{C}^8\}$, quotiented by the actions in Remark \ref{remark:scaling_freedom_standard}.
 As a consequence of Proposition \ref{prop:segre_equivalence} and  Lemma \ref{lem:genericcoef}, we find that the parametrisation with respect to the parameters of $q\Psix$ in \eqref{eq:qpvi_parameters}, maps out a subset with non-empty interior in this quotient space. This yields the following remark. \begin{remark}\label{remark:mostgeneric}
The embedded affine Segre surface for $q\Psix$ is of the most generic type. In particular, its projective completion is smooth and the curve at infinity is a smooth irreducible quartic curve of genus $1$. 

As a consequence, we obtain negations of certain assertions or conjectures made in \cite{RSpreprint}. In particular, item 7.2.1 in the list of problems in \cite[\S 7.2]{RSpreprint} contains an open problem and three conjectures. Our result resolves the open problem and negates these conjectures.
\end{remark}

\subsection{Lines}\label{subsec:lines_qpviSegre}
It is a classical fact, essentially due to Segre himself \cite{segre1884}, that a smooth Segre surface contains $16$ lines. Half of the lines in $\overline{\mathcal{Z}}_q$ admit a most simple description. Namely, take any $i\in \{1,2\}$, $j\in\{3,4\}$ and $k\in \{5,6\}$, then
\begin{equation}\label{eq:linesqpviI}
L_{i,j,k}^{(q)}:=\{Z\in\overline{\mathcal{Z}}_q:Z_i=Z_j=Z_k=0\},
\end{equation}
defines a line in $\overline{\mathcal{Z}}_q$.

	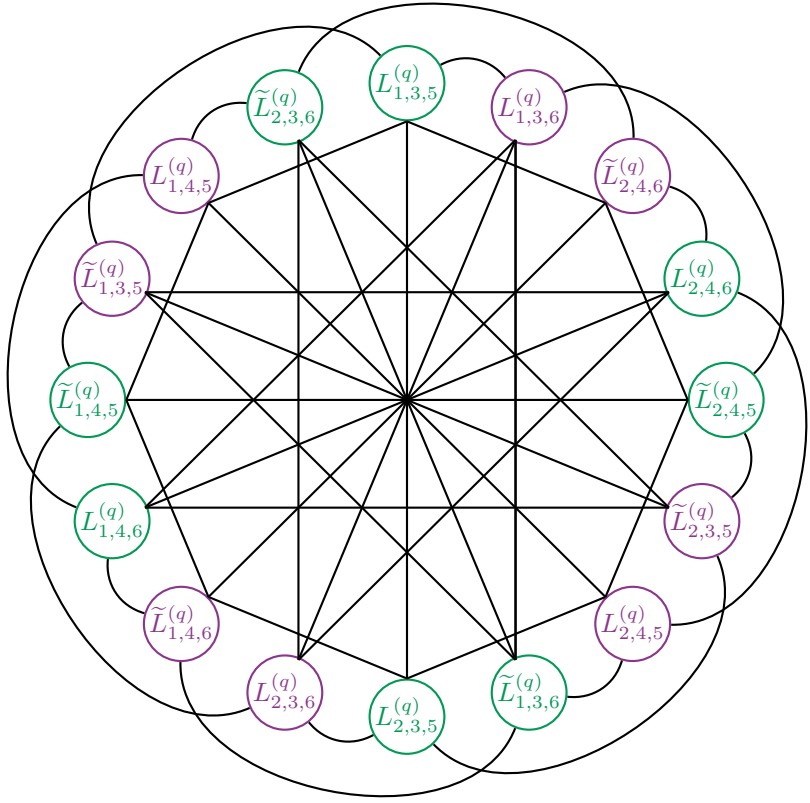
\begin{figure}[h!t]
		\centering
\begin{tikzpicture}
\tikzset{every state/.style={inner sep=-1pt,minimum size=2.8em}}
\def\ra{4.2}

\node[state,Fuchsia,thick] (Li4) at ( {\ra*cos(90-1*22.5)} , {\ra*sin(90-1*22.5)} ) {$L_{1,3,6}^{(q)}$};
\node[state,Fuchsia,thick] (Lhi1) at ( {\ra*cos(90-2*22.5)} , {\ra*sin(90-2*22.5)} ) {$\widetilde{L}_{2,4,6}^{(q)}$};

\node[state,ForestGreen,thick] (Li1) at ( {\ra*cos(90-3*22.5)} , {\ra*sin(90-3*22.5)} ) {$L_{2,4,6}^{(q)}$};
\node[state,ForestGreen,thick] (Lh04) at ( {\ra*cos(90-4*22.5)} , {\ra*sin(90-4*22.5)} ) {$\widetilde{L}_{2,4,5}^{(q)}$};

\node[state,Fuchsia,thick] (Lhi2) at ( {\ra*cos(90-5*22.5)} , {\ra*sin(90-5*22.5)} ) {$\widetilde{L}_{2,3,5}^{(q)}$};
\node[state,Fuchsia,thick] (L04) at ( {\ra*cos(90-6*22.5)} , {\ra*sin(90-6*22.5)} ) {$L_{2,4,5}^{(q)}$};

\node[state,ForestGreen,thick] (Lhi4) at ( {\ra*cos(90-7*22.5)} , {\ra*sin(90-7*22.5)} ) {$\widetilde{L}_{1,3,6}^{(q)}$};
\node[state,ForestGreen,thick] (Li2) at ( {\ra*cos(90-8*22.5)} , {\ra*sin(90-8*22.5)} ) {$L_{2,3,5}^{(q)}$};

\node[state,Fuchsia,thick] (L03) at ( {\ra*cos(90-9*22.5)} , {\ra*sin(90-9*22.5)} ) {$L_{2,3,6}^{(q)}$};
\node[state,Fuchsia,thick] (Lh02) at ( {\ra*cos(90-10*22.5)} , {\ra*sin(90-10*22.5)} ) {$\widetilde{L}_{1,4,6}^{(q)}$};

\node[state,ForestGreen,thick] (L02) at ( {\ra*cos(90-11*22.5)} , {\ra*sin(90-11*22.5)} ) {$L_{1,4,6}^{(q)}$};
\node[state,ForestGreen,thick] (Lhi3) at ( {\ra*cos(90-12*22.5)} , {\ra*sin(90-12*22.5)} ) {$\widetilde{L}_{1,4,5}^{(q)}$};

\node[state,Fuchsia,thick] (Lh01) at ( {\ra*cos(90-13*22.5)} , {\ra*sin(90-13*22.5)} ) {$\widetilde{L}_{1,3,5}^{(q)}$};
\node[state,Fuchsia,thick] (Li3) at ( {\ra*cos(90-14*22.5)} , {\ra*sin(90-14*22.5)} ) {$L_{1,4,5}^{(q)}$};

\node[state,ForestGreen,thick] (Lh03) at ( {\ra*cos(90-15*22.5)} , {\ra*sin(90-15*22.5)} ) {$\widetilde{L}_{2,3,6}^{(q)}$};
\node[state,ForestGreen,thick] (L01) at ( {\ra*cos(90-0*22.5)} , {\ra*sin(90-0*22.5)} ) {$L_{1,3,5}^{(q)}$};

\draw[thick] (L01) -- (Li2);
\draw[thick] (Lh03) -- (Lhi4);
\draw[thick] (Lhi3) -- (Lh04);
\draw[thick] (L02) -- (Li1);

\draw[thick] (Li3) -- (L04);
\draw[thick] (Lh01) -- (Lhi2);
\draw[thick] (Li4) -- (L03);
\draw[thick] (Lhi1) -- (Lh02);

\draw[black,thick] (Li3.south east) -- (L01.south);
\draw[black,thick] (L01.south) -- (Lhi1.south west);
\draw[black,thick] (Lhi1.south west) -- (Lh04.west);
\draw[black,thick] (Lh04.west) -- (L04.north west);
\draw[black,thick] (L04.north west) -- (Li2.north);
\draw[black,thick] (Li2.north) -- (Lh02.north east);
\draw[black,thick] (Lh02.north east) -- (Lhi3.east);
\draw[black,thick] (Lhi3.east) -- (Li3.south east);

\draw[black,thick] ($(Lh03.south)!0.5!(Lh03.south east)$) -- ($(Lhi2.west)!0.5!(Lhi2.north west)$);
\draw[black,thick] ($(Lhi2.west)!0.5!(Lhi2.north west)$) -- ($(L02.east)!0.5!(L02.north east)$);
\draw[black,thick] ($(L02.east)!0.5!(L02.north east)$) -- ($(Li4.south)!0.5!(Li4.south west)$) ;
\draw[black,thick] ($(Li4.south)!0.5!(Li4.south west)$) -- ($(Lhi4.north west)!0.5!(Lhi4.north)$);
\draw[black,thick] ($(Li4.south)!0.5!(Li4.south west)$) -- ($(Lhi4.north west)!0.5!(Lhi4.north)$);
\draw[black,thick] ($(Lhi4.north west)!0.5!(Lhi4.north)$) -- ($(Lh01.south east)!0.5!(Lh01.east)$);
\draw[black,thick] ($(Lh01.south east)!0.5!(Lh01.east)$) -- ($(Li1.south west)!0.5!(Li1.west)$);
\draw[black,thick] ($(Li1.south west)!0.5!(Li1.west)$) -- ($(L03.north east)!0.5!(L03.north)$);
\draw[black,thick] ($(L03.north east)!0.5!(L03.north)$) -- ($(Lh03.south)!0.5!(Lh03.south east)$);

\path[-]     (Lh03) edge   [bend left=80,thick]   (Lhi1);
\path[-]     (Lhi1) edge   [bend left=40,thick]   (Li1);
\path[-]     (Li1) edge   [bend left=80,thick]   (L04);
\path[-]     (L04) edge   [bend left=40,thick]   (Lhi4);
\path[-]     (Lhi4) edge   [bend left=80,thick]   (Lh02);
\path[-]     (Lh02) edge   [bend left=40,thick]   (L02);
\path[-]     (L02) edge   [bend left=80,thick]   (Li3);
\path[-]     (Li3) edge   [bend left=40,thick]   (Lh03);

\path[-]     (Lh01) edge   [bend left=80,thick]   (L01);
\path[-]     (L01) edge   [bend left=40,thick]   (Li4);
\path[-]     (Li4) edge   [bend left=80,thick]   (Lh04);
\path[-]     (Lh04) edge   [bend left=40,thick]   (Lhi2);
\path[-]     (Lhi2) edge   [bend left=80,thick]   (Li2);
\path[-]     (Li2) edge   [bend left=40,thick]   (L03);
\path[-]     (L03) edge   [bend left=80,thick]   (Lhi3);
\path[-]     (Lhi3) edge   [bend left=40,thick]   (Lh01);
\end{tikzpicture}
\caption{Clebsch graph encoding the configuration of lines and their points of intersection on the Segre surface $\overline{\mathcal{Z}}_q$.
The lines are explicitly described in Section \ref{subsec:lines_qpviSegre} for $0<|q|<1$ and in Section \ref{sec:contlines} when $q=1$.
Each vertex represents a line and each edge represents an intersection point of the two lines corresponding to its endpoints. The lines are colour-coded ForestGreen or Fuchsia dependent on whether, when $q=1$, they intersect the curve at infinity of $\overline{\mathcal{Z}}_1$ respectively in conic \eqref{eq:conicIcontinuum} or conic \eqref{eq:conicIIcontinuum}.}
\label{fig:lines_intersection}
\end{figure}

For the remaining eight lines, we make use of the parametrisation in Section \ref{subsec:parametrisation}.
To describe them, certain rational functions on $\overline{\mathcal{Z}}_q$, called Tyurin ratios, are helpful. To define these rational functions, note that equations \eqref{qp6:seg_comp_c} and \eqref{qp6:seg_comp_d} imply
\begin{equation*}
\frac{Z_1}{\eta_1^{(q)}}\cdot \frac{Z_2}{\eta_2^{(q)}}=\frac{Z_3}{\eta_3^{(q)}}\cdot \frac{Z_4}{\eta_4^{(q)}}=\frac{Z_5}{\eta_5^{(q)}}\cdot \frac{Z_6}{\eta_6^{(q)}}.
\end{equation*}
We now consider the following rational function,
\begin{equation*}
 {T}_{13}^{(q)}:=\frac{Z_1}{\eta_1^{(q)}}\Big/\frac{Z_3}{\eta_3^{(q)}}=\frac{Z_4}{\eta_4^{(q)}}\Big/\frac{Z_2}{\eta_2^{(q)}}. 
\end{equation*}
This is a mermorphic function on the Segre surface,
\begin{equation*}
   {T}_{13}^{(q)}: \overline{\mathcal{Z}}_q\rightarrow \mathbb{P}^1,
\end{equation*}
as there is no point on the Segre surface with $Z_1=Z_2=Z_3=Z_4=0$.

Let $\alpha\in S_6$ denote the permutation,
\begin{equation}\label{eq:alpha}
    \alpha=(1\;2)\;(3\;4)\;(5\;6),
\end{equation}
then, for any $1\leq i,j\leq 6$ not in a same cycle of $\alpha$, we similarly have a meromorphic rational function
\begin{equation}\label{def:Tyurinratios}
 {T}_{ij}^{(q)}:=\frac{Z_i}{\eta_i^{(q)}}\Big/\frac{Z_j}{\eta_j^{(q)}}=\frac{Z_{\alpha(j)}}{\eta_{\alpha(j)}^{(q)}}\Big/\frac{Z_{\alpha(i)}}{\eta_{\alpha(i)}^{(q)}}. 
\end{equation}
We call these functions Tyurin ratios, as they can be interpreted as ratios of Tyurin parameters of some elliptic matrix functions, see \cite[\S 2.4]{jr_qp6} or \cite[\S 2.2]{rof_qpvi}.
We have the following obvious symmetries
\begin{equation}\label{eq:tyurin_sym}
     {T}_{ij}^{(q)}= (T_{ji}^{(q)})^{-1}=T_{\alpha(j)\alpha(i)}^{(q)}=
     (T_{\alpha(i)\alpha(j)}^{(q)})^{-1},
\end{equation}
so that, the  $6\times 4=24$ choices of indices, only yield $6$ Tyurin ratios which are not trivially equivalent. We further have the following multiplicative formula,
\begin{equation}\label{eq:tyurin_mult}
    {T}_{ij}^{(q)}T_{jk}^{(q)}=T_{ik}^{(q)},
\end{equation}
for any choice of $1\leq i,j,k\leq 6$ in mutually disjoint cycles of $\alpha$.

The remaining eight lines can now be described as follows. For any $i\in \{1,2\}$, $j\in\{3,4\}$ and $k\in \{5,6\}$, the following set is a line in $\overline{\mathcal{Z}}_q$,
\begin{equation} \label{eq:linesqpviII}
    \widetilde{L}_{i,j,k}^{(q)}=\{Z\in \overline{\mathcal{Z}}_q:{T}_{ij}^{(q)}(Z)=\tau_{i}^{(q)}/\tau_{j}^{(q)},{T}_{jk}^{(q)}(Z)=\tau_{j}^{(q)}/\tau_{k}^{(q)}\},
\end{equation}
where
\begin{align*}
\tau_{1}^{(q)}&=\frac{\theta_q\left(t_0^{+1} q^{-\vartheta_\infty-\vartheta_0}\right)}{\theta_q\left(t_0^{+1} q^{+\vartheta_\infty-\vartheta_0} \right)}, & 
\tau_{3}^{(q)}&=\frac{\theta_q\left(q^{+\vartheta_t+\vartheta_1-\vartheta_\infty-\vartheta_0}\right)}{\theta_q\left(q^{+\vartheta_t+\vartheta_1+\vartheta_\infty-\vartheta_0} \right)}, &
\tau_{5}^{(q)}&=\frac{\theta_q\left(q^{+\vartheta_t-\vartheta_1-\vartheta_\infty-\vartheta_0}\right)}{\theta_q\left(q^{+\vartheta_t-\vartheta_1+\vartheta_\infty-\vartheta_0} \right)},\\
\tau_{2}^{(q)}&=\frac{\theta_q\left(t_0^{-1}q^{-\vartheta_\infty-\vartheta_0}\right)}{\theta_q\left(t_0^{-1} q^{+\vartheta_\infty-\vartheta_0} \right)}, &
\tau_{4}^{(q)}&=\frac{\theta_q\left(q^{-\vartheta_t-\vartheta_1-\vartheta_\infty-\vartheta_0}\right)}{\theta_q\left(q^{-\vartheta_t-\vartheta_1+\vartheta_\infty-\vartheta_0} \right)}, &
\tau_{6}^{(q)}&=\frac{\theta_q\left(q^{-\vartheta_t+\vartheta_1-\vartheta_\infty-\vartheta_0}\right)}{\theta_q\left(q^{-\vartheta_t+\vartheta_1+\vartheta_\infty-\vartheta_0}\right)}.
\end{align*}
These descriptions of the remaining eight lines follow from the explicit expressions for the lines in \cite[\S 3.3]{rof_qpvi}, where we note the following correspondence with the notation used in that paper,
\begin{align*}
\widetilde{L}_{1,3,5}^{(q)}&=\widetilde{\mathcal{L}}_1^0, &  \widetilde{L}_{1,4,6}^{(q)}&=\widetilde{\mathcal{L}}_2^0, &  \widetilde{L}_{2,3,6}^{(q)}&=\widetilde{\mathcal{L}}_3^0, &  \widetilde{L}_{2,4,5}^{(q)}&=\widetilde{\mathcal{L}}_4^0, \\
\widetilde{L}_{2,4,6}^{(q)}&=\widetilde{\mathcal{L}}_1^\infty, &  \widetilde{L}_{2,3,5}^{(q)}&=\widetilde{\mathcal{L}}_2^\infty, &  \widetilde{L}_{1,4,5}^{(q)}&=\widetilde{\mathcal{L}}_3^\infty, &  \widetilde{L}_{1,3,6}^{(q)}&=\widetilde{\mathcal{L}}_4^\infty.
\end{align*}
The correspondence between the two notations for the other eight lines is the same, with all the tildes removed. We further recall from \cite[Theorem 2.11]{rof_qpvi} the intersection graph of lines in Figure \ref{fig:lines_intersection}.

\section{The limit of $\mathcal Z_q$ as $q\uparrow 1$ and Jimbo-Fricke cubic for $\Psix$}\label{sec:pvi}
In the singular limit $q\uparrow 1$, known as the continuum limit, $q\Psix$ formally reduces to the classical sixth Painlev\'e equation. To see this explicitly, one substitutes formal Taylor series around $q=1$,
\begin{equation*}
   \begin{aligned}
       f(t)&=f_0(t)+(q-1)f_1(t)+(q-1)^2f_2(t)+\ldots,\\
       g(t)&=g_0(t)+(q-1)g_1(t)+(q-1)^2g_2(t)+\ldots,
   \end{aligned} 
\end{equation*}
and compares terms, which leads to
\begin{equation*}
    f_0=u,\quad g_0=\frac{u-t}{u-1},
\end{equation*}
where $u$ satisfies the sixth Painlev\'e equation
\begin{equation}\label{eq:pvi}
\begin{aligned}
\Psix: \quad u_{tt}&=\left(\frac{1}{u}+\frac{1}{u-1}+\frac{1}{u-t}\right)\frac{u_t^2}{2}-\left(\frac{1}{t}+\frac{1}{t-1}+\frac{1}{u-t}\right)u_t\\
&\qquad +\frac{u(u-1)(u-t)}{2t^2(t-1)^2}\biggl((2\vartheta_\infty-1)^2-\frac{4\vartheta_0^2 t}{u^2}+\frac{4\vartheta_1^2(t-1)}{(u-1)^2}\\
&\qquad\qquad\qquad\qquad +\frac{(1-4\vartheta_t^2) t(t-1)}{(u-t)^2}\biggr).
\end{aligned}
\end{equation}

In this section, we correspondingly compute the continuum limit of the Segre surface $\mathcal Z_q$ and compare it to the Jimbo-Fricke cubic for $\Psix$. The main result of the section, Theorem \ref{thm:isomorphism}, shows that the two are isomorphic as affine varieties. More precisely, the theorem states that the two are affinely equivalent
 after blowing down one of the lines at infinity of the cubic.

In Section \ref{subsec:semiclassicallimit}, the continuum limit is worked out, leading to an affine Segre surface $\mathcal{Z}_1$. We further give an algebraic characterisation of this surface and show that the descriptions of the lines on $\mathcal{Z}_q$ remain intact as $q\uparrow 1$.

In Section \ref{subsec:cubicblowdown}, we blow down the Jimbo-Fricke cubic surface to a Segre surface and describe the lines on both surfaces. In Section \ref{suse:iso} we explain how to find an isomorphism between the resulting Segre surface and the Turyn ratios.
Then, in Section \ref{subsec:isomorphism}, it is shown that the Jimbo-Friecke cubic is affinelyisomorphic to $\mathcal{Z}_1$.

\subsection{The  limit of $\mathcal Z_q$ as $q\uparrow 1$.}\label{subsec:semiclassicallimit}
With regards to the Segre surface $\mathcal{Z}_q$, note that all its coefficients, see equations \eqref{eq:qpvimu} and \eqref{eq:lambda12}, are rational in terms of simple factors of the form $\theta_q(q^\alpha)$ and $\theta_q(q^\alpha t_0)$, $\alpha\in\mathbb{C}$. To compute the limits of the coefficients as $q\uparrow 1$, it is thus sufficient to understand the limiting behaviours of these simple factors. Correspondingly, we have the following lemma.
\begin{lemma}
For any $\alpha\in\mathbb{C}$ and $t_0\in\mathbb{C}^*$, with $\arg(-t_0)<\pi$, we have the following converging limits
\begin{equation*}
    \lim_{q\uparrow 1}(1-q)^{-1}\frac{\theta_q(q^\alpha)}{(q;q)_\infty^3}=\frac{\sin{\pi \alpha}}{\pi},\qquad
    \lim_{q\uparrow 1}\frac{\theta_q(q^\alpha t_0)}{\theta_q(t_0)}=(-t_0)^{-\alpha}.
\end{equation*}
\end{lemma}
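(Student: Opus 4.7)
The plan is to handle the two limits separately, using the Jacobi triple product form of $\theta_q$ in both cases.

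For the first limit, the cleanest route goes through the Jackson $q$-Gamma function, $\Gamma_q(x) := (1-q)^{1-x}(q;q)_\infty/(q^x;q)_\infty$, which satisfies the classical limit $\Gamma_q(x) \to \Gamma(x)$ as $q\uparrow 1$ (uniformly on compact subsets of $\mathbb{C}\setminus\{0,-1,-2,\dots\}$). Substituting back into the definition of $\theta_q$ gives
$$\theta_q(q^\alpha) = (q;q)_\infty(q^\alpha;q)_\infty(q^{1-\alpha};q)_\infty = \frac{(1-q)(q;q)_\infty^3}{\Gamma_q(\alpha)\,\Gamma_q(1-\alpha)},$$
so that $(1-q)^{-1}\theta_q(q^\alpha)/(q;q)_\infty^3 \to 1/[\Gamma(\alpha)\Gamma(1-\alpha)]$, and Euler's reflection formula $\Gamma(\alpha)\Gamma(1-\alpha)=\pi/\sin(\pi\alpha)$ closes out the first assertion.

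The second limit is more delicate, since there is no similarly clean product factorisation. I would proceed from the Jacobi triple product expansion. Writing $\epsilon := -\log q > 0$ and completing the square,
$$\theta_q(z) = \sum_{n\in\mathbb{Z}}(-1)^n q^{n(n-1)/2} z^n = e^{\epsilon n_*^2/2}\sum_{n\in\mathbb{Z}} e^{-\tfrac{\epsilon}{2}(n-n_*)^2},\qquad n_* := \tfrac{1}{2}+\tfrac{\log(-z)}{\epsilon},$$
where the hypothesis $\arg(-t_0)<\pi$ secures a single-valued branch of $\log(-z)$ in a neighbourhood of $t_0$. Poisson summation then gives
$$\sum_{n\in\mathbb{Z}} e^{-\tfrac{\epsilon}{2}(n-n_*)^2}=\sqrt{\tfrac{2\pi}{\epsilon}}\sum_{k\in\mathbb{Z}} e^{-2\pi^2 k^2/\epsilon}\,e^{-2\pi i k n_*},$$
and as $\epsilon\to 0^+$ each $k\neq 0$ contribution is exponentially suppressed, producing the uniform asymptotic
$$\theta_q(z) \sim \sqrt{2\pi/\epsilon}\;e^{\epsilon/8}\,(-z)^{1/2}\,\exp\bigl(\log^2(-z)/(2\epsilon)\bigr).$$

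Forming the ratio, the prefactors contribute $(-q^\alpha t_0)^{1/2}/(-t_0)^{1/2} = q^{\alpha/2}\to 1$, while from $\log(-q^\alpha t_0)=\log(-t_0)-\alpha\epsilon$ one gets $\log^2(-q^\alpha t_0)-\log^2(-t_0) = -2\alpha\epsilon\log(-t_0)+\alpha^2\epsilon^2$, so the Gaussian exponent contributes $e^{-\alpha\log(-t_0)+\alpha^2\epsilon/2}\to(-t_0)^{-\alpha}$. The main technical obstacle is justifying the Poisson summation asymptotic uniformly in a neighbourhood large enough to accommodate the shift $n_*\mapsto n_*-\alpha$ caused by replacing $z=t_0$ with $z=q^\alpha t_0$, and tracking the branch of $\log(-z)$; this is precisely where the restriction $\arg(-t_0)<\pi$ enters, guaranteeing the principal-branch identity $\log(-q^\alpha t_0)=\log(-t_0)-\alpha\epsilon$ along the limiting path.
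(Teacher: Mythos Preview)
Your treatment of the first limit is identical to the paper's: both rewrite $\theta_q(q^\alpha)$ in terms of the Jackson $q$-Gamma function and invoke $\Gamma_q\to\Gamma$ together with Euler's reflection formula.

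For the second limit you take a genuinely different route. The paper does not use the series form of $\theta_q$ or Poisson summation at all; it simply observes that
\[
\frac{\theta_q(q^\alpha t_0)}{\theta_q(t_0)}=\frac{(q^\alpha t_0;q)_\infty}{(t_0;q)_\infty}\cdot\frac{(q^{1-\alpha}/t_0;q)_\infty}{(q/t_0;q)_\infty}
\]
and applies the classical limit $\lim_{q\uparrow 1}(q^\alpha z;q)_\infty/(z;q)_\infty=(1-z)^{-\alpha}$ for $z\in\mathbb{C}\setminus[1,\infty)$, quoted from Gasper--Rahman. Your modular-transformation argument is correct and self-contained, and it makes transparent exactly why the branch restriction $|\arg(-t_0)|<\pi$ is needed: with $n_*=\tfrac{1}{2}+\log(-t_0)/\epsilon$ complex, the non-zero Fourier modes in your Poisson sum carry a factor $e^{-2\pi i k n_*}$ of size $e^{2\pi k\,\mathrm{Im}\log(-t_0)/\epsilon}$, and these are dominated by $e^{-2\pi^2 k^2/\epsilon}$ precisely when $|\mathrm{Im}\log(-t_0)|<\pi$. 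The trade-off is that the paper's argument is a two-line citation, while yours requires controlling the full asymptotic expansion; on the other hand, yours does not depend on an external reference and gives a sharper picture of the mechanism.
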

\begin{proof}
To derive the first limit, we recall that the $q$-gamma function, defined by
\begin{equation*}
    \Gamma_q(\alpha)=(1-q)^{1-\alpha}\frac{(q;q)_\infty}{(q^\alpha;q)_\infty},
\end{equation*}
converges to the usual gamma function as $q\uparrow 1$. Therefore,
\begin{equation*}
    \theta_q(q^\alpha)=\frac{(1-q)(q;q)_\infty^3}{\Gamma_q(\alpha)\Gamma_q(1-\alpha)}\sim
    \frac{(1-q)(q;q)_\infty^3}{\Gamma(\alpha)\Gamma(1-\alpha)}
    =(1-q)(q;q)_\infty^3  \frac{\sin{\pi \alpha}}{\pi},
\end{equation*}
as $q\uparrow 1$, from which the first limit in the lemma follows. The second is a direct consequence of
\begin{equation*}
    \lim_{q\uparrow 1}\frac{(q^\alpha z;q)_\infty}{(z;q)_\infty}=(1-z)^{-\alpha}\qquad (z\in \mathbb{C}\setminus [1,\infty)),
\end{equation*}
which we take from \cite[Eq. (1.3.19)]{gasperrahman}.
\end{proof}

As a consequence of the above lemma, we have the following limits for ratios of $q$-theta functions,
\begin{equation}\label{eq:limitlaws}
  \frac{\theta_q(q^\alpha)}{\theta_q(q^\beta)}\rightarrow \frac{\sin \pi \alpha}{\sin \pi \beta}\quad (\beta\notin\mathbb{Z}),\qquad \frac{\theta_q(q^\alpha t_0)}{\theta_q(q^\beta t_0)}\rightarrow (-t_0)^{\beta-\alpha}\qquad (\arg(-t_0)<\pi).
\end{equation}
From these identities, we obtain the continuum limit of the Segre surface $\mathcal Z_q$ for $q\Psix$, which is 
 a family of algebraic varieties depending on 
four generic parameters,
\begin{equation}\label{eq:pvi_parameters}
  \vartheta_0,\vartheta_t,\vartheta_1,\vartheta_\infty\in\mathbb{C},
\end{equation}
defined as the zero set of four polynomials $h_i\in\mathbb C[z_1,\dots,z_6]$, $i=1,\dots,4$
given by \eqref{qp6:seg} with the choice of parameters
 defined by 
  \begin{align}
    &\mu_1=1, & &\mu_2=1,\label{eq:mu-cl}\\
    &\mu_{3}=\prod_{\epsilon=\pm 1} \frac{\sin(\pi\left(\vartheta_t+\vartheta_1+\vartheta_\infty\right))}{\sin(\pi\left(\epsilon \,\vartheta_0+\vartheta_t+\vartheta_1+\vartheta_\infty\right))}, &
    &\mu_{4}=\prod_{\epsilon=\pm 1}\frac{\sin(\pi\left(-\vartheta_t-\vartheta_1+\vartheta_\infty\right))}{\sin(\pi\left(\epsilon \,\vartheta_0-\vartheta_t-\vartheta_1+\vartheta_\infty\right))},\nonumber\\
    &\mu_{5}=\prod_{\epsilon=\pm 1}\frac{\sin(\pi\left(\vartheta_t-\vartheta_1+\vartheta_\infty\right))}{\sin(\pi\left(\epsilon \,\vartheta_0+\vartheta_t-\vartheta_1+\vartheta_\infty\right))}, &
    &\mu_{6}=\prod_{\epsilon=\pm 1}\frac{\sin(\pi\left(-\vartheta_t+\vartheta_1+\vartheta_\infty\right))}{\sin(\pi\left(\epsilon \,\vartheta_0-\vartheta_t+\vartheta_1+\vartheta_\infty\right))}\nonumber,
\end{align}
and
\begin{align}
     \lambda_1&=\prod_{\epsilon=\pm 1} \frac{\sin(\pi\left(\epsilon\, \vartheta_0-\vartheta_t-\vartheta_1+\vartheta_\infty\right))\sin(\pi\left(\epsilon\, \vartheta_0+\vartheta_t+\vartheta_1+\vartheta_\infty\right))}{\sin(2\pi\vartheta_t)\sin(2\pi\vartheta_1)},\label{eq:la-cl}\\
    \lambda_2&=\prod_{\epsilon=\pm 1} \frac{\sin(\pi\left(\epsilon \, \vartheta_0-\vartheta_t+\vartheta_1+\vartheta_\infty\right))\sin(\pi\left(\epsilon \, \vartheta_0+\vartheta_t-\vartheta_1+\vartheta_\infty\right))}{\sin(2\pi\vartheta_t)\sin(2\pi\vartheta_1)}. \nonumber
\end{align}

Note that the $t_0$ dependence of the coefficients drops out in the continuum limit.

\begin{lemma}\label{lm:pvi_segre}
Equations \eqref{qp6:seg}, with coefficients given by \eqref{eq:mu-cl} and \eqref{eq:la-cl}, define a four-parameter family of affine Segre surfaces. Among the coefficients, there are the following non-trivial relations,
    \begin{equation}\label{p6:paramcond}
    \frac{\lambda_1}{\lambda_2}=\frac{(\mu_5-1)(\mu_6-1)}{(\mu_3-1)(\mu_4-1)},\qquad \mu_1=\mu_2=1,
\end{equation}
and these are effectively all, that is, any other relation among the coefficients is generated by them.
\end{lemma}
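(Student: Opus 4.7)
The plan is to verify the two explicit algebraic relations and then establish the four-parameter count by a Jacobian argument in the spirit of Lemma \ref{lem:genericcoef}, working in the invariant coordinates of Lemma \ref{lm:parindep}.

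The equalities $\mu_1 = \mu_2 = 1$ follow immediately from the formulas \eqref{eq:mu-cl}. For the ratio relation, I would apply the product-to-difference identity $\sin(X+Y)\sin(X-Y) = \sin^2 X - \sin^2 Y$ to the denominator of each $\mu_k$ in \eqref{eq:mu-cl}, taking $X = \pi(\epsilon_1 \vartheta_t + \epsilon_2 \vartheta_1 + \vartheta_\infty)$ and $Y = \pi \vartheta_0$. For $k \in \{3,4,5,6\}$ this yields
\begin{equation*}
\mu_k - 1 \;=\; \frac{\sin^2(\pi \vartheta_0)}{D_k(\vartheta)},
\end{equation*}
where $D_k(\vartheta)$ denotes the denominator of $\mu_k$ in \eqref{eq:mu-cl}. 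A direct comparison with the formulas \eqref{eq:la-cl} for $\lambda_1$ and $\lambda_2$ then gives the common value
\begin{equation*}
\lambda_1(\mu_3 - 1)(\mu_4 - 1) \;=\; \lambda_2(\mu_5 - 1)(\mu_6 - 1) \;=\; \frac{\sin^4(\pi \vartheta_0)}{\sin^2(2\pi \vartheta_t)\sin^2(2\pi \vartheta_1)},
\end{equation*}
which is exactly the ratio relation.

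For the four-parameter count, I would pass to the invariant coordinates $(\lambda_1, \lambda_2, \rho_2, \rho_3, \rho_5, \rho_6)$ of Lemma \ref{lm:parindep}. The relation $\mu_1 = \mu_2$ forces $\rho_2 = 0$ automatically, and among the remaining invariants $(\lambda_1, \lambda_2, \rho_3, \rho_5, \rho_6)$ the ratio relation becomes $\lambda_1 \rho_3 = \lambda_2 \rho_5 \rho_6$, a single polynomial equation that is linear in each of $\lambda_1, \lambda_2$ and hence irreducible, cutting out a four-dimensional irreducible subvariety $V$. Following the strategy of Lemma \ref{lem:genericcoef}, I would compute the Jacobian of the meromorphic mapping
\begin{equation*}
(\vartheta_0, \vartheta_t, \vartheta_1, \vartheta_\infty) \;\longmapsto\; (\lambda_1, \rho_3, \rho_5, \rho_6),
\end{equation*}
verify by direct symbolic calculation that its determinant is nonzero at some conveniently chosen point (for instance at small rational values of the $\vartheta_j$ where the sine values are handled explicitly), and invoke the density/connectedness argument of Lemma \ref{lem:genericcoef} to conclude that the map has generically maximal rank four. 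The image is therefore Zariski-dense in $V$, so the family of Segre surfaces is genuinely four-dimensional.

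Since $V$ is irreducible of dimension four and the image is Zariski-dense in $V$, every polynomial relation among the invariant coordinates that holds on the image vanishes identically on $V$ and is therefore generated by $\lambda_1 \rho_3 - \lambda_2 \rho_5 \rho_6$. Transferring back through \eqref{eq:rhos} and combining with the relations $\mu_1 = \mu_2 = 1$, every polynomial relation among the coefficients $(\mu_1, \ldots, \mu_6, \lambda_1, \lambda_2)$ is generated by the three stated relations, which is the completeness claim. Finally, that a generic member of the family is genuinely an affine Segre surface (rather than a degenerate variety) follows from Proposition \ref{prop:qpvialgebraic}, since the Zariski-dense image in $V$ meets the Zariski-open locus of generic coefficients considered there. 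The main technical obstacle is the Jacobian computation, most cleanly performed in the exponential coordinates $e^{i\pi \vartheta_j}$ analogous to those used in the proof of Lemma \ref{lem:genericcoef}.
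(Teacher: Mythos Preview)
Your proposal is correct and follows essentially the same approach as the paper: verify the relations by direct trigonometric manipulation (the paper refers to the equivalent identities \eqref{eq:Tk_identitiescontinuum} and \eqref{eq:mu_qpvilim}), then show there are no further relations via a Jacobian rank computation in the $\rho$-coordinates. The only cosmetic differences are that the paper chooses the target quadruple $(\lambda_1,\lambda_2,\rho_5,\rho_6)$ rather than your $(\lambda_1,\rho_3,\rho_5,\rho_6)$ and writes out the Jacobian determinant in closed form in the variables $\upsilon_k=e^{2\pi i\vartheta_k}$ rather than evaluating at a sample point; your added irreducibility argument for the ``generated by'' claim is a welcome clarification that the paper leaves implicit.
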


  Similarly to the case of $q\Psix$, we can introduce the parameters $\rho$ defined in \eqref{eq:rhos} which in the case of $q\uparrow 1$ become:
\begin{equation}\label{eq:rhos1}
   \rho_2=0,\quad  \rho_3=\frac{\mu_3-1}{\mu_4-1},\quad  \rho_5=\frac{\mu_5-1}{\mu_4-1},\quad\rho_6=\frac{\mu_6-1}{\mu_4-1},
\end{equation}  
and replace $h_2$ by $h_2''$ defined in \eqref{eqp-h2-rho}.

\begin{definition}\label{defi:pvi_segre}
    We denote the family of affine surfaces defined by equations \eqref{qp6:seg}, with coefficients given by \eqref{eq:mu-cl} and \eqref{eq:la-cl} by $\mathcal Z_1$. 
    The family $\mathcal Z_1$ is called \textit{$\Psix$ Segre}.
\end{definition}

\begin{proof}
Equations \eqref{p6:paramcond} follow by direct computation. Checking them by hand is easiest via equations \eqref{eq:Tk_identitiescontinuum} and \eqref{eq:mu_qpvilim} in Section \ref{sec:contlines}.

In terms of the parameters $\lambda_1,\lambda_2$ and  $\rho_2,\rho_3,\rho_5,\rho_6$ introduced in \eqref{eq:rhos1}, relations \eqref{p6:paramcond} yield
\begin{equation}\label{eq:rho_relations_pvi}
 \frac{\lambda_1}{\lambda_2}=\frac{\rho_5\rho_6}{\rho_3},\qquad \rho_2=0.
\end{equation}
To prove that these are effectively all the relations among the coefficients, we consider the mapping defined by \eqref{eq:mu-cl}, \eqref{eq:la-cl} and \eqref{eq:rhos1},
\begin{equation*}
   (\theta_0,\theta_t,\theta_1,\theta_\infty)\mapsto (\lambda_1,\lambda_2,\rho_5,\rho_6).
\end{equation*}
The Jacobian determinant of this mapping equals
\begin{equation*}
    -128\pi^4 \frac{(\upsilon_0-\upsilon_0^{-1})(\upsilon_\infty-\upsilon_\infty^{-1})^3}{(\upsilon_t-\upsilon_t^{-1})^3(\upsilon_1-\upsilon_1^{-1})^3}\prod_{\epsilon=\pm 1}\frac{(\upsilon_0^{\epsilon} \upsilon_\infty-\upsilon_t\upsilon_1)^2}{(\upsilon_0^{\epsilon} \upsilon_t-\upsilon_1\upsilon_\infty)(\upsilon_0^{\epsilon} \upsilon_1-\upsilon_t\upsilon_\infty)},
\end{equation*}
where we used the notation $\upsilon_k=e^{2\pi i \theta_k}$ for $k=0,t,1,\infty$. In particular, this mapping is locally biholomorphic near generic points in its domain, so that there are no relations among the coefficients, other than those generated by
\eqref{eq:rho_relations_pvi}.
\end{proof}

\begin{proposition}\label{prop:F1_smooth}
For generic parameter values $(\vartheta_0,\vartheta_t,\vartheta_1,\vartheta_\infty)\in\mathbb{C}^4$, the canonical projective completion $\overline{\mathcal{Z}}_1\subseteq \mathbb{P}^6$, of the affine Segre surface $\mathcal{Z}_1$, is smooth and the curve at infinity is reducible; it consists of two conics which intersect at two points.
\end{proposition}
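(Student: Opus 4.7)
The smoothness of $\overline{\mathcal Z}_1$ is essentially a direct transcription of the argument for $\overline{\mathcal Z}_q$; the real content of the proposition is the reducibility of the curve at infinity, which reflects the expected degeneration of a smooth genus-$1$ curve (the curve at infinity of $\overline{\mathcal Z}_q$, cf.\ Proposition \ref{prop:qpvialgebraic}) into a cycle of two rational components as $q\uparrow 1$.

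\textbf{Smoothness.} I would rerun the Jacobian criterion of Proposition \ref{prop:qpvialgebraic} with $\rho_2=0$ and the constraint $\rho_3\lambda_1=\rho_5\rho_6\lambda_2$ imposed by Lemma \ref{lm:pvi_segre}. The rank-drop locus of the $4\times 7$ Jacobian matrix $J$ there depends on the parameters only through $\lambda_1,\lambda_2,\rho_2,\rho_3,\rho_5,\rho_6$, and neither specialisation introduces a new source of degeneracy. The same three alternatives $\lambda_1=0$, $\lambda_2=0$, or $(\lambda_1-\lambda_2)^2-2(\lambda_1+\lambda_2)+1=0$ emerge; using the explicit formulas \eqref{eq:la-cl}, these are proper algebraic conditions on $(\vartheta_0,\vartheta_t,\vartheta_1,\vartheta_\infty)$, hence generically avoided.

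\textbf{Reducibility at infinity.} Setting $Z_0=0$, equations \eqref{qp6:seg_comp_a}, \eqref{qp6:seg_comp_b} (with $\rho_2=0$) cut out a $\mathbb P^3$. The key observation is that the relation $\rho_3\lambda_1=\rho_5\rho_6\lambda_2$ makes the $Z_1Z_2$-terms cancel in a specific linear combination of the two quadrics:
\begin{equation*}
\rho_3 h_3-\rho_5\rho_6 h_4=\rho_3 Z_3 Z_4-\rho_5\rho_6 Z_5 Z_6.
\end{equation*}
Introducing $a=\rho_3 Z_3$, $b=Z_4$, $c=\rho_5 Z_5$, $d=\rho_6 Z_6$, the hyperplane \eqref{qp6:seg_comp_b} at infinity reads $a+b+c+d=0$, and the elementary identity
\begin{equation*}
4(ab-cd)\equiv -(a-b-c+d)(a-b+c-d)\pmod{a+b+c+d}
\end{equation*}
shows that, inside the ambient $\mathbb P^3$, the quadric $\rho_3 h_3-\rho_5\rho_6 h_4=0$ splits as the union of the two planes
\begin{equation*}
\Pi_\pm:\ \rho_3 Z_3-Z_4\mp(\rho_5 Z_5-\rho_6 Z_6)=0.
\end{equation*}
The curve at infinity is therefore the intersection of the smooth quadric surface $\{h_3=0\}$ with the pair of planes $\Pi_+\cup\Pi_-$, which decomposes it as the union of two conics $C_\pm=\{h_3=0\}\cap\Pi_\pm$.

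\textbf{Two intersection points, and the main obstacle.} Finally, $C_+\cap C_-$ equals the intersection of $\{h_3=0\}$ with the line $\Pi_+\cap\Pi_-\subset\mathbb P^3$, which by B\'ezout consists of exactly two points whenever the line is not tangent to the quadric. Genericity of the $\vartheta$-parameters, combined with the smoothness established in the first step, guarantees that each $C_\pm$ is irreducible and that $C_+$ meets $C_-$ transversally in two distinct points; these are open non-vanishing conditions that can be checked by substitution. I expect the only real obstacle to be spotting the correct linear combination of $h_3$ and $h_4$ and the associated algebraic identity that produces the factorisation; once that is in hand, everything else follows from standard intersection theory for quadrics in $\mathbb P^3$.
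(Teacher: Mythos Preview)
Your proposal is correct and follows essentially the same route as the paper: smoothness via the Jacobian criterion of Proposition~\ref{prop:qpvialgebraic}, and reducibility at infinity by using the relation $\rho_3\lambda_1=\rho_5\rho_6\lambda_2$ to eliminate $Z_1Z_2$ and then factoring the resulting quadric. Your symmetric identity $4(ab-cd)\equiv-(a-b-c+d)(a-b+c-d)\pmod{a+b+c+d}$ is a clean repackaging of the paper's direct substitution (which yields the equivalent factorisation $(Z_4+\rho_5 Z_5)(Z_4+\rho_6 Z_6)=0$), and your B\'ezout argument for the two intersection points replaces the paper's explicit computation of those points.
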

\begin{proof}
We use projective coordinates,
\begin{equation*}
    [Z_0:Z_1:Z_3:Z_4:Z_5:Z_6]=[1:z_1:z_2:z_3:z_4:z_5:z_6],
\end{equation*}
so that $\overline{\mathcal{Z}}_1$ is described by equations \eqref{qp6:seg_comp}, with coefficients satisfying \eqref{p6:paramcond}.

Following the first part of the proof of Proposition \ref{prop:qpvialgebraic}, for $\overline{\mathcal{Z}}_1$
to be singular, it is necessary that $\lambda_1=0$, $\lambda_2=0$ or
\begin{equation}\label{eq:resultant}
 (\lambda_1-\lambda_2)^2-2(\lambda_1+\lambda_2)+1=0.
\end{equation}
But the left-hand side reads
\begin{equation*}
    (\lambda_1-\lambda_2)^2-2(\lambda_1+\lambda_2)+1=\frac{\sin(2\pi \vartheta_0)\sin(2\pi \vartheta_\infty\hspace{-0.2mm})}{\sin(2\pi \vartheta_t)\sin(2\pi \vartheta_1)},
\end{equation*}
which is clearly nonzero for generic parameter values. In fact, for any choice of parameters $(\vartheta_0,\vartheta_t,\vartheta_1,\vartheta_\infty)\in\mathbb{C}^4$, such that
\begin{equation}\label{eq:smoothness_conditions}
2\vartheta_0,2\vartheta_t,2\vartheta_1,2\vartheta_\infty\notin \mathbb{Z},\qquad
\epsilon_0 \vartheta_0+\epsilon_t \vartheta_t+\epsilon_1 \vartheta_1+\epsilon_\infty \vartheta_\infty\notin \mathbb{Z},
\end{equation}
for any $\epsilon_j\in\{\pm 1\}$, $j=0,t,1,\infty$, the Segre surface
$\overline{\mathcal{Z}}_1$ is well-defined and smooth. 

Next, we consider the curve at infinity, described by
% \begin{subequations}\label{eq:curve_at_infinity}
% \begin{align}
% &{Z}_{1} + {Z}_{2} + {Z}_{3} + {Z}_{4} + {Z}_{5} + {Z}_{6}=0, \label{eq:curve_at_infinity_a}\\
% &\mu_1{Z}_{1} + \mu_2 {Z}_{2} + \mu_{3} {Z}_{3} + \mu_{4} {Z}_{4} + \mu_{5} {Z}_{5} + 
%   \mu_{6} {Z}_{6}=0, \label{eq:curve_at_infinity_b}\\
% &{Z}_{3} {Z}_{4} - {Z}_{1} {Z}_{2} \lambda_{1}=0,\label{eq:curve_at_infinity_c}\\
% &{Z}_{5} {Z}_{6} - {Z}_{1} {Z}_{2} \lambda_{2}=0.\label{eq:curve_at_infinity_d}
% \end{align}
% \end{subequations}
\begin{subequations}\label{eq:curve_at_infinity}
\begin{align}
&{Z}_{1} + {Z}_{2} + {Z}_{3} + {Z}_{4} + {Z}_{5} + {Z}_{6}=0, \label{eq:curve_at_infinity_a}\\
&\rho_{3} {Z}_{3} +{Z}_{4} + \rho_{5} {Z}_{5} + 
  \rho_{6} {Z}_{6}=0, \label{eq:curve_at_infinity_b}\\
&{Z}_{3} {Z}_{4} - {Z}_{1} {Z}_{2} \lambda_{1}=0,\label{eq:curve_at_infinity_c}\\
&{Z}_{5} {Z}_{6} - {Z}_{1} {Z}_{2} \lambda_{2}=0.\label{eq:curve_at_infinity_d}
\end{align}
\end{subequations}
Due to the special conditions  \eqref{eq:rho_relations_pvi} on the coefficients, this curve is reducible. To see this, note that equations \eqref{eq:curve_at_infinity_c}, \eqref{eq:curve_at_infinity_d} 
imply
\begin{equation}\label{eq;inftyalt}
    Z_{3}Z_4-Z_5Z_6\frac{\lambda_1}{\lambda_2}=0.
\end{equation}
Now, recall from \eqref{eq:rho_relations_pvi} that  $\lambda_1/\lambda_2=\rho_5\rho_6/\rho_3$, and by solving equation \eqref{eq:curve_at_infinity_c} for $Z_3$ and substituting the result into \eqref{eq;inftyalt}, we obtain the following factorisation
\begin{equation*}
    (Z_4+\rho_6 Z_6)(Z_4+\rho_5 Z_5)=0.
\end{equation*}
It follows that the curve at infinity  consists of two components. The first, is described by
\begin{equation}\label{eq:conicIproof}
   \rho_3 Z_3=-\rho_5 Z_5,\qquad  
   Z_4=-\rho_6 Z_6,
\end{equation}
and
\begin{align*}
 &Z_1+Z_2=\big(\frac{\rho_5}{\rho_3}-1\big)Z_5+(\rho_6-1)Z_6,\\
 &Z_1Z_2=\lambda_2^{-1}Z_5Z_6.
\end{align*}
The second, is described by
\begin{equation}\label{eq:conicIIproof}
  \rho_3 Z_3=-\rho_6 Z_6,\qquad  
   Z_4=-\rho_5Z_5,
\end{equation}
and
\begin{align*}
 &Z_1+Z_2=(\rho_5-1)Z_5+\big(\frac{\rho_6}{\rho_3}-1\big)Z_6,\\
 &Z_1Z_2=\lambda_2^{-1}Z_5Z_6.
\end{align*}
For generic parameter values, both of these components are smooth, and thus irreducible, conics.

Next, we consider where the conics intersect. At an intersection point, we have
\begin{equation*}
    Z_0=0,\qquad \rho_3 Z_3=Z_4=-\rho_5Z_5=-\rho_6 Z_6.
\end{equation*}
It follows that the intersection points are given by
\begin{equation*}
Z_0=0, \quad Z_3=\rho_3^{-1},\quad Z_4=1,\quad Z_5=-\rho_5^{-1},\quad Z_6=-\rho_6^{-1},
\end{equation*}
with $\{Z_1,Z_2\}$ solutions of
\begin{align*}
&Z_1+Z_2=-\rho_3^{-1}-1+\rho_5^{-1}+\rho_6^{-1},\\
    &Z_1Z_2=\lambda_2^{-1}Z_5Z_6.
\end{align*}
The latter system clearly has two distinct solutions for generic parameter values,
so that the two conics intersect at two points.
The proposition follows.
\end{proof}

For an explicit description of the intersection points of the two conics in the proof of the proposition above, it is helpful to compute the continuum limits of 
the $\eta_k^{(q)}$ $1,\leq k\leq 6$, in equations \eqref{eq:Tk_identities}. These will also be helpful when describing the continuum limits of the lines. We thus compute,
\begin{equation}\label{eq:eta_limits}
\lim_{q\uparrow 1}\frac{\eta_k^{(q)}}{\theta_q(q^{\frac{1}{2}})\theta_q(t_0)^2}=\begin{cases}
    (-t_0)^{-2\theta_\infty}\eta_1^{(1)} &\text{if } k=1,\\
    (-t_0)^{+2\theta_\infty} \eta_2^{(1)}&\text{if }  k=2,\\
    \eta_k^{(1)}                     &\text{if }  3\leq k\leq 6,
    
\end{cases}    
\end{equation}
where
 \begin{align}
\eta_{1}^{(1)}&=\eta_{2}^{(1)}=+\sin(2\pi \vartheta_t)\sin(2\pi \vartheta_1),\nonumber\\
\eta_{3}^{(1)}&=-\sin(\pi(\vartheta_0+\vartheta_t+\vartheta_1+\vartheta_\infty))\sin(\pi(-\vartheta_0+\vartheta_t+\vartheta_1+\vartheta_\infty)),\nonumber\\
\eta_{4}^{(1)}&=-\sin(\pi(\vartheta_0-\vartheta_t-\vartheta_1+\vartheta_\infty))\sin(\pi(-\vartheta_0-\vartheta_t-\vartheta_1+\vartheta_\infty)),\label{eq:continuumquadratic_coef}\\
\eta_{5}^{(1)}&=+\sin(\pi(\vartheta_0+\vartheta_t-\vartheta_1+\vartheta_\infty))\sin(\pi(-\vartheta_0+\vartheta_t-\vartheta_1+\vartheta_\infty)),\nonumber\\
\eta_{6}^{(1)}&=+\sin(\pi(\vartheta_0-\vartheta_t+\vartheta_1+\vartheta_\infty))\sin(\pi(-\vartheta_0-\vartheta_t+\vartheta_1+\vartheta_\infty)).\nonumber
\end{align}

Writing $\widehat{\eta}_k^{(1)}=\eta_k^{(1)}|_{\vartheta_0=0}$, $1\leq k\leq 6$, we obtain the following expressions for the coefficients  of $\mathcal Z_1$,
\begin{equation}\label{eq:Tk_identitiescontinuum}
    \lambda_1=\frac{\eta_{3}^{(1)}\eta_{4}^{(1)}}{\eta_{1}^{(1)}\eta_{2}^{(1)}},\qquad \lambda_2=\frac{\eta_{5}^{(1)}\eta_{6}^{(1)}}{\eta_{1}^{(1)}\eta_{2}^{(1)}},\quad \mu_k=\frac{{\widehat{\eta}_k^{(1)}}}{\eta_k^{(1)}}\quad (1\leq k \leq 6).
\end{equation}
Moreover, we have the following special relations,
 \begin{equation}\label{eq:Tk_continuumhat}
    \eta_3^{(1)}-\widehat{\eta}_3^{(1)}= \eta_4^{(1)}-\widehat{\eta}_4^{(1)}=\sin(\pi \theta_0)^2=\widehat{\eta}_5^{(1)}-\eta_5^{(1)}=\widehat{\eta}_6^{(1)}-\eta_6^{(1)}.
 \end{equation}
from which we obtain the following expressions
 \begin{equation}\label{eq:mu_qpvilim}
     \mu_k=\begin{cases}1-\displaystyle \frac{\sin (\pi \vartheta_0)^2}{\eta_k^{(1)}} & \text{if $k=3,4$,}\\
     1+\displaystyle\frac{\sin (\pi \vartheta_0)^2}{\eta_k^{(1)}} & \text{if $k=5,6$.}
     \end{cases}
 \end{equation}

\begin{remark}\label{remark:conics_at_infinity}
Using equations \eqref{eq:Tk_identitiescontinuum} and \eqref{eq:mu_qpvilim}, we can rewrite the equations for the two conics at infinity in the proof of Proposition \ref{prop:F1_smooth} as follows. The first conic, see equation \eqref{eq:conicIproof}, is given by
\begin{gather}
    Z_0=0, \quad \frac{Z_3}{\eta_3^{(1)}}=\frac{Z_5}{\eta_5^{(1)}}, \quad \frac{Z_4}{\eta_4^{(1)}}=\frac{Z_6}{\eta_6^{(1)}},\label{eq:conicIcontinuum}\\
    Z_1+Z_2+(\eta_3^{(1)}+\eta_5^{(1)})\frac{Z_5}{\eta_5^{(1)}}+(\eta_4^{(1)}+\eta_6^{(1)})\frac{Z_6}{\eta_6^{(1)}}=0,\quad  \frac{Z_1Z_2}{\eta_{1}^{(1)} \eta_{2}^{(1)}}=\frac{Z_5Z_6}{\eta_{5}^{(1)} \eta_{6}^{(1)}}. \nonumber
\end{gather}
The second,  see equation \eqref{eq:conicIIproof}, is given by
\begin{gather}
    Z_0=0, \quad \frac{Z_3}{\eta_3^{(1)}}=\frac{Z_6}{\eta_6^{(1)}}, \quad \frac{Z_4}{\eta_4^{(1)}}=\frac{Z_5}{\eta_5^{(1)}},\label{eq:conicIIcontinuum}\\
    Z_1+Z_2+(\eta_4^{(1)}+\eta_5^{(1)})\frac{Z_5}{\eta_5^{(1)}}+(\eta_3^{(1)}+\eta_6^{(1)})\frac{Z_6}{\eta_6^{(1)}}=0,\quad  \frac{Z_1Z_2}{\eta_{1}^{(1)} \eta_{2}^{(1)}}=\frac{Z_5Z_6}{\eta_{5}^{(1)} \eta_{6}^{(1)}}. \nonumber
\end{gather}
Their two intersection points are given by
\begin{equation*}
Z_0=0,\quad
Z_1=\eta_{1}^{(1)}e^{+2\pi i \vartheta_\infty \epsilon},\quad
Z_2=\eta_{2}^{(1)}e^{-2\pi i \vartheta_\infty \epsilon},\quad
Z_k=\eta_{k}^{(1)}\quad (3\leq k\leq 6),
\end{equation*}
with $\epsilon=\pm 1$. 
\end{remark}

\subsubsection{Continuum limit of the lines}\label{sec:contlines}
As $q\uparrow 1$, the descriptions of the lines on $\overline{\mathcal{Z}}_q$, in Section \ref{subsec:lines_qpviSegre}, remain intact and define lines on $\overline{\mathcal{Z}}_1$.

Firstly, regarding the eight lines defined in equation \eqref{eq:linesqpviI}, the corresponding lines as $q\uparrow 1$ are given as follows.
For $i\in \{1,2\}$, $j\in\{3,4\}$ and $k\in \{5,6\}$, the line $L_{i,j,k}^{(q)}$ becomes 
\begin{equation*}
L_{i,j,k}^{(1)}:=\{Z\in\overline{\mathcal{Z}}_1:Z_i=Z_j=Z_k=0\},
\end{equation*}
as $q\uparrow 1$, which defines a line in $\overline{\mathcal{Z}}_1$.

To describe the limit of the remaining eight lines, we extend the definition of Tyurin ratios to $\overline{\mathcal{Z}}_1$, by setting
\begin{equation*}
 {T}_{ij}^{(1)}:=\frac{Z_i}{\eta_i^{(1)}}\Big/\frac{Z_j}{\eta_j^{(1)}}=\frac{Z_{\alpha(j)}}{\eta_{\alpha(j)}^{(1)}}\Big/\frac{Z_{\alpha(i)}}{\eta_{\alpha(i)}^{(1)}},
\end{equation*}
when $q=1$, for any $i,j$ not in the same cycle of the permutation $\alpha$, defined in equation \eqref{eq:alpha}.

The eight lines defined in equation \eqref{eq:linesqpviII} then converge to the following in the continuum limit.
For any $i\in \{1,2\}$, $j\in\{3,4\}$ and $k\in \{5,6\}$, the line $\widetilde{L}_{i,j,k}^{(q)}$ becomes
\begin{equation*}
    \widetilde{L}_{i,j,k}^{(1)}=\{Z\in \overline{\mathcal{Z}}_1:{T}_{ij}^{(1)}(Z)=\tau_{i}^{(1)}/\tau_{j}^{(1)},{T}_{jk}^{(1)}(Z)=\tau_{j}^{(1)}/\tau_{k}^{(1)}\},
\end{equation*}
as $q\uparrow 1$, 
where 
\begin{align*}
\tau_{1}^{(1)}&=1, & 
\tau_{3}^{(1)}&=\frac{\sin \pi\left(+\vartheta_t+\vartheta_1-\vartheta_\infty-\vartheta_0\right)}{\sin \pi\left(+\vartheta_t+\vartheta_1+\vartheta_\infty-\vartheta_0\right)}, &
\tau_{5}^{(1)}&=\frac{\sin \pi\left(+\vartheta_t-\vartheta_1-\vartheta_\infty-\vartheta_0\right)}{\sin \pi\left(+\vartheta_t-\vartheta_1+\vartheta_\infty-\vartheta_0 \right)},\\
\tau_{2}^{(1)}&=1, &
\tau_{4}^{(1)}&=\frac{\sin \pi\left(-\vartheta_t-\vartheta_1-\vartheta_\infty-\vartheta_0\right)}{\sin \pi\left(-\vartheta_t-\vartheta_1+\vartheta_\infty-\vartheta_0 \right)}, &
\tau_{6}^{(1)}&=\frac{\sin \pi\left(-\vartheta_t+\vartheta_1-\vartheta_\infty-\vartheta_0\right)}{\sin \pi\left(-\vartheta_t+\vartheta_1+\vartheta_\infty-\vartheta_0\right)}.
\end{align*}
We further note that the intersection graph of lines in Figure \ref{fig:intersectiongraph} remains valid when $q=1$. Furthermore, we have colour-coded the lines in the intersection graph in accordance to which conic at infinity in Remark \ref{remark:conics_at_infinity} they intersect.

\subsection{The Segre surface of the $\Psix$ equation}\label{subsec:cubicblowdown}
In this section, we blow down the monodromy manifold of the differential sixth Painlev\'e equation along a line to obtain a corresponding Segre surface. There are generally a number of ways to do this, but we are going to show that it is possible to choose a way that produces a surface affinely equivalent to $\mathcal Z_1$.

The monodromy manifold of the sixth Painlev\'e equation is given by the character variety
\begin{equation*}
\{(M_0,M_t,M_1)\in SL_2(\mathbb{C})^3\}/\!/SL_2(\mathbb{C}),
\end{equation*}
which, through trace coordinates \tcm{\cite{vogt1889}}
\begin{align*}
   x_1&=\operatorname{Tr} M_tM_1, &
   x_2&=\operatorname{Tr} M_0M_1, &
   x_3&=\operatorname{Tr} M_0M_t, &
   &\\
   \nu_0&=\operatorname{Tr} M_0, & 
   \nu_t&=\operatorname{Tr} M_t, &
   \nu_1&=\operatorname{Tr} M_1, &
   \nu_\infty&=\operatorname{Tr} M_1M_tM_0,
\end{align*}
is identified with the hypersurface in $\mathbb{C}^7$ defined by
\begin{equation}
    x_1x_2x_3
+x_1^2+x_2^2+x_3^2
+\omega_1 x_1+\omega_2x_2+\omega_3 x_3+\omega_4=0,\label{eq:pvicubic}
\end{equation}
where
\begin{gather}
     \omega_1:=-(\nu_0\nu_\infty+\nu_t \nu_1),\quad
    \omega_2:=-(\nu_0\nu_1+\nu_t\nu_\infty),\quad
    \omega_3:=-(\nu_0\nu_t+\nu_1\nu_\infty),\nonumber\\  \omega_4:=\nu_0^2+\nu_t^2+\nu_1^2+\nu_\infty^2+\nu_0\nu_t\nu_1\nu_\infty-4.  \label{eq:pvi-om}
\end{gather}
In the theory of Painlev\'e VI, the traces $\nu_k$ are typically considered fixed, and related to the parameters $\vartheta_k$ of the nonlinear ODE, by 
\begin{equation}
 \nu_k=\upsilon_k+\upsilon_k^{-1},\quad \upsilon_k:=e^{2\pi i \vartheta_k}\quad (k=0,t,1,\infty).\label{eq:pvi-om1}
\end{equation}
The algebraic equation \eqref{eq:pvicubic} then defines an embedded affine cubic surface with respect to the remaining variables, known as the Jimbo-Fricke cubic surface. We remark that the variables $\upsilon_k$, $k=0,t,1,\infty$, are very useful for our purposes, since every computation that follows can essentially be done over the field $\mathbb{Q}(\upsilon_0,\upsilon_t,\upsilon_1,\upsilon_\infty)$, and thus be carried out in a computer algebra system.

Let $\mathcal{X}\subseteq \mathbb{C}^3$ denote the embedded family of affine cubic surfaces defined by equations \eqref{eq:pvicubic}. Using projective coordinates,
\begin{equation*}
    [X_0:X_1:X_2:X_3]=[1:x_1:x_2:x_3],
\end{equation*}
its canonical projective completion $\overline{\mathcal{X}}\subseteq \mathbb{P}^3$, is described by
\begin{equation}\label{eq:pvicubic_comp}
\quad X_1X_2X_3
+(X_1^2+X_2^2+X_3^2)X_0
+(\omega_1 X_1+\omega_2X_2+\omega_3 X_3)X_0^2+\omega_4X_0^3=0.
\end{equation}
The hypersurface at infinity is a tritangent plane of the cubic, so that the hyperplane section at infinity is a triangle of lines,
\begin{equation*}
\overline{\mathcal{X}}\setminus\mathcal{X}=L_1^\infty\cup L_2^\infty\cup L_3^\infty,
\end{equation*}
where
\begin{equation*}
    L_k^\infty:=\{[\;0:X_1:X_2:X_3]\in\mathbb{P}^3:X_k=0\}\qquad (k=1,2,3).
\end{equation*}
Moreover, for any choice of parameters, there is no singularity at infinity. Conversely, any embedded affine cubic surface, with a hyperplane section at infinity given by a triangle of lines, consisting of only smooth points, can be brought into the form \eqref{eq:pvicubic} by an affine transformation \cite[Proposition 3.2]{Obl}.

\subsubsection{Lines}\label{sec:pvicubiclines}
For generic values of the parameters, the cubic surface $\overline{\mathcal{X}}$ is smooth. In fact, the precise conditions are given by \eqref{eq:smoothness_conditions}, see \cite{iwasaki}. According to the Cayley-Salmon theorem, in addition to the three lines at infinity, there must be a further $24$  distinct lines on the cubic. They were first written down explicitly in \cite{klimes}. Their explicit descriptions are necessary for the proof of Theorem \ref{thm:isomorphism} and so we provide them here. We give the $24$ lines in three sets of eight, such that the the lines in the $k$-th set intersect with the line $L_k^\infty$ at infinity, for $1\leq k\leq 3$. The first eight lines are given by
\begingroup
\allowdisplaybreaks
\begin{align*}
L_1:& & x_1&=\upsilon_0\upsilon_\infty+\frac{1}{\upsilon_0\upsilon_\infty}, & \upsilon_0\upsilon_\infty x_2+x_3&=\upsilon_0\,\nu_t+\upsilon_\infty \,\nu_1,\\
L_2:& & x_1&=\upsilon_0\upsilon_\infty+\frac{1}{\upsilon_0\upsilon_\infty}, & \upsilon_0\upsilon_\infty x_3+x_2&=\upsilon_0\,\nu_1+\upsilon_\infty \,\nu_t,\\
L_3:& & x_1&=\frac{\upsilon_0}{\upsilon_\infty}+\frac{\upsilon_\infty}{\upsilon_0}, & \upsilon_0\upsilon_\infty\,\nu_t\,+\nu_1&=\upsilon_0x_2+\upsilon_\infty x_3,\\
L_4:& & x_1&=\frac{\upsilon_0}{\upsilon_\infty}+\frac{\upsilon_\infty}{\upsilon_0}, & \upsilon_0\upsilon_\infty \,\nu_1\,+\nu_t&=\upsilon_0x_3+\upsilon_\infty x_2,\\
L_5:& & x_1&=\upsilon_t\upsilon_1+\frac{1}{\upsilon_t\upsilon_1}, & \upsilon_t\upsilon_1 x_2+x_3&=\upsilon_t\,\nu_0+\upsilon_1 \nu_\infty,\\
L_6:& & x_1&=\upsilon_t\upsilon_1+\frac{1}{\upsilon_t\upsilon_1}, & \upsilon_t\upsilon_1 x_3+x_2&=\upsilon_t \nu_\infty+\upsilon_1 \nu_0,\\
L_7:& & x_1&=\frac{\upsilon_t}{\upsilon_1}+\frac{\upsilon_1}{\upsilon_t}, & \upsilon_t\upsilon_1  \nu_0+\nu_\infty&=\upsilon_tx_2+\upsilon_1 x_3,\\
L_8:& & x_1&=\frac{\upsilon_t}{\upsilon_1}+\frac{\upsilon_1}{\upsilon_t}, & \upsilon_t\upsilon_1 \nu_\infty+\nu_0&=\upsilon_tx_3+\upsilon_1 x_2.
\end{align*}
\endgroup
Each of these intersects with $L_1^\infty$ at infinity. The next eight lines are given by
\begingroup
\allowdisplaybreaks
\begin{align*}
L_9:&  & x_2&=\upsilon_0\upsilon_1+\frac{1}{\upsilon_0\upsilon_1}, & \upsilon_0\upsilon_1 x_1+x_3&=\upsilon_0\,\nu_t+\upsilon_1 \nu_\infty,\\
L_{10}:& & x_2&=\upsilon_0\upsilon_1+\frac{1}{\upsilon_0\upsilon_1}, & \upsilon_0\upsilon_1 x_3+x_1&=\upsilon_0 \nu_\infty+\upsilon_1 \nu_t,\\
L_{11}:& & x_2&=\frac{\upsilon_0}{\upsilon_1}+\frac{\upsilon_1}{\upsilon_0}, & \upsilon_0\upsilon_1 \nu_t+\nu_\infty&=\upsilon_0 x_1+\upsilon_1 x_3,\\
L_{12}:& & x_2&=\frac{\upsilon_0}{\upsilon_1}+\frac{\upsilon_1}{\upsilon_0}, & \upsilon_0\upsilon_1 \nu_\infty+\nu_t&=\upsilon_0 x_3+\upsilon_1 x_1,\\
L_{13}:& & x_2&=\upsilon_t\upsilon_\infty+\frac{1}{\upsilon_t\upsilon_\infty}, & \upsilon_t\upsilon_\infty x_1+x_3&=\upsilon_t\,\nu_0+\upsilon_\infty \nu_1,\\
L_{14}:& & x_2&=\upsilon_t\upsilon_\infty+\frac{1}{\upsilon_t\upsilon_\infty}, & \upsilon_t\upsilon_\infty x_3+x_1&=\upsilon_t\,\nu_1+\upsilon_\infty \nu_0,\\
L_{15}:& & x_2&=\frac{\upsilon_t}{\upsilon_\infty}+\frac{\upsilon_\infty}{\upsilon_t}, & \upsilon_t\upsilon_\infty \nu_0+\nu_1&=\upsilon_t x_1+\upsilon_\infty x_3,\\
L_{16}:& & x_2&=\frac{\upsilon_t}{\upsilon_\infty}+\frac{\upsilon_\infty}{\upsilon_t}, & \upsilon_t\upsilon_\infty \nu_1+\nu_0&=\upsilon_t x_3+\upsilon_\infty x_1.
\end{align*}
\endgroup
Each of these intersects with $L_2^\infty$ at infinity. The final eight lines are given by
\begingroup
\allowdisplaybreaks
\begin{align*}
L_{17}:& & x_3&=\upsilon_0\upsilon_t+\frac{1}{\upsilon_0\upsilon_t}, & \upsilon_0\upsilon_t x_1+x_2&=\upsilon_0\,\nu_1+\upsilon_t \nu_\infty,\\
L_{18}:& & x_3&=\upsilon_0\upsilon_t+\frac{1}{\upsilon_0\upsilon_t}, & \upsilon_0\upsilon_t x_2+x_1&=\upsilon_0 \nu_\infty+\upsilon_t \nu_1,\\
L_{19}:& & x_3&=\frac{\upsilon_0}{\upsilon_t}+\frac{\upsilon_t}{\upsilon_0}, & \upsilon_0\upsilon_t \nu_1+\nu_\infty&=\upsilon_0 x_1+\upsilon_t x_2,\\
L_{20}:& & x_3&=\frac{\upsilon_0}{\upsilon_t}+\frac{\upsilon_t}{\upsilon_0}, & \upsilon_0\upsilon_t \nu_\infty+\nu_1&=\upsilon_0 x_2+\upsilon_t x_1,\\
L_{21}:& & x_3&=\upsilon_1\upsilon_\infty+\frac{1}{\upsilon_1\upsilon_\infty}, & \upsilon_1\upsilon_\infty x_1+x_2&=\upsilon_1\,\nu_0+\upsilon_\infty \nu_t,\\
L_{22}:& & x_3&=\upsilon_1\upsilon_\infty+\frac{1}{\upsilon_1\upsilon_\infty}, & \upsilon_1\upsilon_\infty x_2+x_1&=\upsilon_1\,\nu_t+\upsilon_\infty \nu_0,\\
L_{23}:& & x_3&=\frac{\upsilon_1}{\upsilon_\infty}+\frac{\upsilon_\infty}{\upsilon_1}, & \upsilon_1\upsilon_\infty \nu_0+\nu_t&=\upsilon_1 x_1+\upsilon_\infty x_2,\\
L_{24}:& & x_3&=\frac{\upsilon_1}{\upsilon_\infty}+\frac{\upsilon_\infty}{\upsilon_1}, & \upsilon_1\upsilon_\infty \nu_t+\nu_0&=\upsilon_1 x_2+\upsilon_\infty x_1.
\end{align*}
\endgroup
Each of these intersects with $L_3^\infty$ at infinity.

In Figure \ref{fig:intersectiongraph}, the intersection graph of the above $24$ lines is given. Each line $L_k$ intersects with one line at infinity and $9$ further lines. We proceed to describe these intersections in more detail. Consider one of the first eight lines, $L_i$, $1\leq i \leq 8$. This line intersects with precisely one other line among the first eight lines, given by $L_{i-1}$ if $i$ is even and $L_{i+1}$ if $i$ is odd. Furthermore, for any $5\leq j\leq 12$, the line $L_i$ intersects either  $L_{2j}$ or $L_{2j-1}$. This yields a total of $1+8=9$ affine intersection points of $L_i$ with other lines. Analogous accounting holds for any line in the remaining two sets of lines. In particular, there are a total of $\tfrac{1}{2}\times 24\times 9=108$ affine intersection points, corresponding to the $108$ edges in Figure \ref{fig:intersectiongraph}. Adding to this the further $27$ intersection points at infinity, yields a total of $135$ intersection points among lines in $\overline{\mathcal{X}}$.

\begin{figure}[H]
\centering
\includegraphics[width=0.9\textwidth]{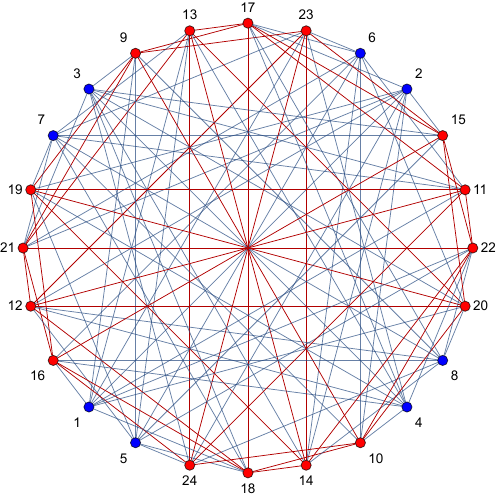}
\caption{Intersection graph of the $24$ affine lines on the Jimbo-Fricke surface. The vertex with index $k$ represents the line $L_k$, for $1\leq k\leq 24$, as defined in Section \ref{sec:pvicubiclines}. An edge represents an intersection point of the lines corresponding to its endpoints. The subgraph in red, with vertices $k$, $9\leq k\leq 24$, is the Clebsch graph which similarly encodes intersections among the lines $L_k^{\mathcal{Y}}$ on the Segre surface $\mathcal{Y}$, see Section \ref{sec:blowdownsegre}.}
   \label{fig:intersectiongraph}
\end{figure}

\subsubsection{Blowing down the monodromy manifold}\label{sec:blowdownsegre}
Blowing down any of the lines on the cubic $\overline{\mathcal{X}}$ yields a corresponding Segre (quartic) surface. Generally, different choices of lines lead to non-isomorphic Segre surfaces, since the automorphism group of a generic (projective) cubic is trivial. As we will see, blowing down the line $L_1^\infty\subseteq \overline{\mathcal{X}}$, leads to a Segre surface isomorphic to $\overline{\mathcal{Z}}_1$ coming from $q\Psix$ as $q\uparrow 1$. To realise this blow down, we introduce coordinates
\begin{equation*}
  y_1=x_1,\quad y_2=x_2,\quad y_3=x_3,\quad y_4=x_2 x_3,  
\end{equation*}
which satisfy
\begin{subequations}\label{eq:segre_blow_down}
\begin{align}
   &y_2y_3-y_4=0,\\
   &y_1y_4
+y_1^2+y_2^2+y_3^2
+\omega_1 y_2+\omega_2y_2+\omega_3 y_3+\omega_4=0.
\end{align}
\end{subequations}
These two equations define an affine Segre surface, which we denote by $\mathcal Y$. Using projective coordinates \begin{equation*}
    [Y_0:Y_1:Y_2:Y_3:Y_4]=[1:y_1:y_2:y_3:y_4],
\end{equation*}
its canonical projective completion $\overline{\mathcal{Y}}$ in $\mathbb{P}^4$, is given by
\begin{subequations}\label{eq:segre_blow_down_comp}
\begin{align}
   &Y_2Y_3-Y_0Y_4=0,\\
   &Y_1Y_4
+Y_1^2+Y_2^2+Y_3^2
+(\omega_1 Y_1+\omega_2Y_2+\omega_3 Y_3)Y_0+\omega_4Y_0^2=0.
\end{align}
\end{subequations}
The mapping $x\mapsto y$ is bi-rational, it maps $\mathcal{X}$ biholomorphically onto $\mathcal{Y}$, projects $L_1^\infty$ onto a point, and maps $L_2^\infty$ and $L_3^\infty$ to two conics which make up $\overline{\mathcal{Y}}\setminus \mathcal{Y}$, explicitly,
\begin{align}
\pi([0:0:X_2:X_3])&=[0:0:0:0:1],\label{eq:intpoint}\\
\pi([0:X_1:0:X_3])&=[0:X_1^2:0:X_1X_3:-(X_1^2+ X_3^2)],\label{eq:conic1im}\\
\pi([0:X_1:X_2:0])&=[0:X_1^2:X_1X_2:0:-(X_1^2+X_2^2)].\label{eq:conic2im}
\end{align}
It follows that
\begin{equation}\label{eq:blowdownpi}
   \begin{array}{cccc}  \pi:&\overline{\mathcal{X}}&\rightarrow &\overline{\mathcal{Y}}\\
    & x&\mapsto& y\\
     \end{array}
\end{equation} 
is indeed a blow-down of the line $L_1^\infty\subseteq \overline{\mathcal{X}}$.

Note that the two conics, traced out by \eqref{eq:conic1im} and \eqref{eq:conic2im}, are described by the following equations respectively,
\begin{subequations}\label{eq:conics}
\begin{align}
    &Y_0=0,\quad Y_2=0,\quad Y_1Y_4+Y_1^2+Y_3^2=0, \text{ and}\label{eq:conic1}\\
    &Y_0=0,\quad Y_3=0,\quad Y_1Y_4+Y_1^2+Y_2^2=0. \label{eq:conic2}
\end{align}
\end{subequations}
In particular, they are irreducible and intersect only at the point defined in equation \eqref{eq:intpoint} and the point $Y=[0:-1:0:0:1]$.

Under the blow-down, each of the lines $L_k$, $1\leq k\leq 8$, is also mapped to a conic. For example, $\pi(L_1)$ is given by
\begin{equation*}
y_1=\upsilon_0\upsilon_\infty+\frac{1}{\upsilon_0\upsilon_\infty}, \quad \upsilon_0\upsilon_\infty y_2+y_3=\upsilon_0\,\nu_t+\upsilon_\infty \,\nu_1,\quad y_4=y_2y_3.
\end{equation*}
On the other hand, each of the lines $L_k$, $9\leq k\leq 24$, is mapped to a line on the Segre surface. For instance, $\pi(L_9)$ reads
\begin{equation*}
y_2=\upsilon_0\upsilon_1+\frac{1}{\upsilon_0\upsilon_1}, \quad \upsilon_0\upsilon_1 y_1+y_3=\upsilon_0\,\nu_t+\upsilon_1 \nu_\infty,\quad y_4=\Big(\upsilon_0\upsilon_1+\frac{1}{\upsilon_0\upsilon_1}\Big)y_3.
\end{equation*}
In other words, those lines that intersect with $L_1^\infty$ become conics, and, those that do not, remain lines under the blow-down. Moreover, these are all the lines on the Segre surface. To see this, note that any line $L$ on $\overline{\mathcal{Y}}$ cannot lie inside the curve at infinity, $\overline{\mathcal{Y}}\setminus\mathcal Y$, as both conics at infinity are irreducible. Its affine part thus admits a parametrisation,
\begin{equation*}
    L:\quad y=t\,a+b\quad (t\in\mathbb{C}),
\end{equation*}
for some $a,b\in\mathbb{C}^4$, with at least one $a_k\neq 0$, $1\leq k\leq 3$, since $y_4=y_2y_3$. It follows that
\begin{equation*}
\pi^{-1}(L)=\{[t_0:a_1 t_1+b_1 t_0:a_2t_1+b_2 t_0:a_3t_1+b_3t_0]:[t_0:t_1]\in\mathbb{P}^1\},
\end{equation*}
is a line on the cubic.

So, the images 
\begin{equation}\label{eq:linesS}
  L_k^\mathcal{Y}:=\pi(L_k)\quad (9\leq k\leq 24),
\end{equation}
 are all the lines on $\overline{\mathcal{Y}}$, and their intersection graph is given by the red subgraph in Figure \ref{fig:intersectiongraph}, which is the Clebsch graph. In particular, this showcases the classical fact that a smooth Segre surface contains exactly $16$ lines \cite{segre1884}.

\subsection{Finding an isomorphism}\label{suse:iso}
We have constructed two affine Segre surfaces naturally associated with Painlev\'e VI. The first, denoted by $\mathcal{Z}_1$, is the result of the continuum limit $q\uparrow 1$ of the monodromy manifold $\mathcal{Z}_q$ for $q\Psix$. The second, denoted by $\mathcal{Y}$, was constructed by blowing down one of the lines at infinity of the Jimbo-Fricke cubic. 

It is natural to ask whether these two Segre surfaces are affinely equivalent. We answer this question in the positive in Theorem \ref{thm:isomorphism}. 

A brute force way to construct such an affine equivalence explicitly, would be to use the intersections of the $16$ lines in $\mathcal{Z}_1$ and in $\mathcal{Y}$. Any affine equivalence induces an isomorphism between the Clebsch graph in Figure \ref{fig:lines_intersection} and the Clebsch subgraph in Figure \ref{fig:intersectiongraph}. So, one could loop through all such isomorphisms and, for each, check whether it is compatible with an affine mapping from $\mathbb{C}^6$ to $\mathbb{C}^4$, using explicit expressions for some of the intersection points. If so, then one can finally check whether it maps $\mathcal{Z}_1$ onto $\mathcal{Y}$. However, this approach would be computationally heavy, devoid of leveraging any existing knowledge within the theory of the Painlev\'e equations.

Our strategy instead, is to observe that the Tyurin ratios defined in \eqref{def:Tyurinratios} appear in the asymptotic expansions for solutions of $q\Psix$, see \cite{rof_qpvi}. This is very interesting because, on one side the Tyurin ratios are natural globally defined rational function on the Segre surface $\overline{\mathcal{Z}}_q$, on the other side, their limit as $q\uparrow 1$ can be related to the asymptotics of $\Psix$, which are expressed in terms of the $x_1,x_2,x_3$ associated to the monodromy manifold.

To make the last step concrete, we will consider asymptotics of solutions to $q\Psix$ and $\Psix$ on the sheet
$\mathbb{C}\setminus\mathbb{R}_{\geq 0}$.
We thus pick a $t_0\in \mathbb{C}\setminus\mathbb{R}_{\geq 0}$ and assume $0<q<1$.
We start with explicit formulas for the asymptotics of $f(t)$, $t\in q^{\mathbb{Z}}t_0$, as $t\rightarrow 0$ and $t\rightarrow \infty$ in terms of $z\in \mathcal{Z}_q$ under the Riemann-Hilbert correspondence. 
For this purpose, the following triple of Tyurin ratios will be very helpful,
\begin{equation}\label{eq:gdefi}
g^{(q)}=(g_1^{(q)},g_2^{(q)},g_3^{(q)}),\qquad  g_1^{(q)}:=T_{62}^{(q)},\quad g_2^{(q)}:=T_{36}^{(q)},\quad g_3^{(q)}:=T_{64}^{(q)}.
\end{equation}
We recall that these are globally defined rational functions of $z\in\mathcal{Z}_q$. All the asymptotic formulas will be expressed in terms of these three Tyurin ratios.

For generic monodromy data $z\in \mathcal{Z}_q$, the following asymptotic results were obtained in \cite{rof_qpvi}. 

Define exponents $\sigma_{2}$ and $\sigma_{3}$, each with $0<\Re \sigma<\tfrac{1}{2}$, through
    \begin{equation}\label{eq:g2g3}
       g_2^{(q)}=\frac{\theta_q(q^{\sigma_2-\vartheta_t+\vartheta_\infty},q^{\sigma_2+\vartheta_t-\vartheta_\infty})}{\theta_q(q^{\sigma_2+\vartheta_t+\vartheta_\infty},q^{\sigma_2-\vartheta_t-\vartheta_\infty})},\quad
    g_3^{(q)}=\frac{\theta_q(q^{\sigma_3-\vartheta_1+\vartheta_\infty},q^{\sigma_3+\vartheta_1-\vartheta_\infty})}{\theta_q(q^{\sigma_3+\vartheta_1+\vartheta_\infty},q^{\sigma_3-\vartheta_1-\vartheta_\infty})}.
\end{equation}
The leading order asymptotic behaviour of $f(t)$ as $t\rightarrow 0$, is given by
\begin{equation*}
    f(t)\sim
q^{-\vartheta_t}\frac{\bigl(q^{\vartheta_t+\vartheta_0+\sigma_3}-1\bigr)\bigl(q^{\vartheta_t-\vartheta_0+\sigma_3}-1\bigr)\bigl(q^{\vartheta_1+\vartheta_\infty+\sigma_3}-1\bigr)}{\bigl(q^{\vartheta_1+\vartheta_\infty- \sigma_3}-1\bigr)\bigl(q^{\sigma_3}-q^{-\sigma_3}\bigr)^2} \frac{1}{c\, s}(-t)^{1-2\sigma_3}, 
\end{equation*}
 with the branch of $(-t)^{1-2\sigma_3}$ principle and $c$ and $s$ given by
\begin{align*}
c&=\frac{\Gamma_q(1-2\sigma_3)^2}{\Gamma_q(1+2\sigma_3)^2}\prod_{\epsilon=\pm1}\frac{ \Gamma_q(1+\vartheta_t+\epsilon\,\vartheta_0+\sigma_3)  \Gamma_q(1+\vartheta_1+\epsilon\,\vartheta_\infty+\sigma_3)}{ \Gamma_q(1+\vartheta_t+\epsilon\,\vartheta_0-\sigma_3)  \Gamma_q(1+\vartheta_1+\epsilon\,\vartheta_\infty-\sigma_3)},\\
s&=-(-t_0)^{-2\sigma_3}M(g_1^{(q)}),
\end{align*}
where $M(\cdot)$ is the M\"obius transformation
\begin{equation*}
M(G)=\frac{\theta_q(q^{\vartheta_1-\vartheta_\infty+\sigma_3})\theta_q(q^{\vartheta_t+\vartheta_\infty+\sigma_3}t_0^{-1})-G\,
\theta_q(q^{\vartheta_1+\vartheta_\infty+\sigma_3})\theta_q(q^{\vartheta_t-\vartheta_\infty+\sigma_3}t_0^{-1})}{\theta_q(q^{\vartheta_1-\vartheta_\infty-\sigma_3})\theta_q(q^{\vartheta_t+\vartheta_\infty-\sigma_3}t_0^{-1})-G\,
\theta_q(q^{\vartheta_1+\vartheta_\infty-\sigma_3})\theta_q(q^{\vartheta_t-\vartheta_\infty-\sigma_3}t_0^{-1})}.
\end{equation*}
We will refer to $s$ as the twist parameter.

The leading order asymptotic behaviour of $f(t)$ as $t\rightarrow \infty$, is given by
\begin{equation}\label{eq:asymptoticsinfty}
    f(t)\sim F_\infty (-1/t)^{-2\sigma_2},
\end{equation}
for a nonzero multiplier $F_\infty=F_\infty(q,\sigma_2,g_1)$, whose explicit expression is given in \cite{rof_qpvi} but will not be required for our purposes, since we only need the value of the exponent $\sigma_2$.

Next, we will consider the continuum limit of these asymptotic formulas, and compare them with Jimbo's asymptotic formulas for $\Psix$ \cite{J}.

As $q\uparrow 1$, a formal computation yields that $f(t)$ converges to a solution $u(t)$, as explained at the beginning of Section \ref{sec:pvi}. To take formal limits of the asymptotic formulas above, we use the limit laws in equation \eqref{eq:limitlaws} as well as
\begin{equation*}
    \lim_{q\uparrow 1}g_1^{(q)}=(-t_0)^{2\vartheta_\infty}g_1^{(1)},\quad
    \lim_{q\uparrow 1}g_2^{(q)}=g_2^{(1)},\quad
    \lim_{q\uparrow 1}g_3^{(q)}=g_3^{(1)},
\end{equation*}
where the discontinuity in the first expression is a consequence of the discontinuity of $\eta_2^{(q)}$ at $q=1$, see equation \eqref{eq:eta_limits}.
A direct formal computation now yields
\begin{equation*}
    u(t)\sim \frac{(\vartheta_0+\vartheta_t+\sigma_{3}\hspace{0.5mm})(-\vartheta_0+\vartheta_t+\sigma_{3}\hspace{0.5mm})(\vartheta_\infty+\vartheta_1+\sigma_{3}\hspace{0.5mm})}{4(\vartheta_\infty+\vartheta_1-\sigma_{3}\hspace{0.5mm})\sigma_{3}^2}\frac{1}{c\, s}(-t)^{1-2\sigma_{3}},
\end{equation*}
as $t\rightarrow 0$, where
\begin{align}
c&=\frac{\Gamma(1-2\sigma_3)^2}{\Gamma(1+2\sigma_3)^2}\prod_{\epsilon=\pm1}\frac{ \Gamma(1+\vartheta_t+\epsilon\,\vartheta_0+\sigma_3)  \Gamma(1+\vartheta_1+\epsilon\,\vartheta_\infty+\sigma_3)}{ \Gamma(1+\vartheta_t+\epsilon\,\vartheta_0-\sigma_3)  \Gamma(1+\vartheta_1+\epsilon\,\vartheta_\infty-\sigma_3)},\nonumber\\
s&=\frac{\sin(\pi(-\vartheta_\infty+\vartheta_1+\sigma_{3}))-g_1^{(1)}\sin(\pi(+\vartheta_\infty+\vartheta_1+\sigma_{3}))}{\sin(\pi(+\vartheta_\infty-\vartheta_1+\sigma_{3}))-g_1^{(1)}\sin(\pi(-\vartheta_\infty-\vartheta_1+\sigma_{3}))}.\label{eq:scontinuum}
\end{align}
At the same time, the asymptotic exponent at $t=\infty$ in \eqref{eq:asymptoticsinfty} is unchanged under the limit $q\uparrow 1$. Furthermore, note that, as $q\uparrow 1$, equations \eqref{eq:g2g3} become
\begin{equation}\label{eq:g2g3continuum}
\begin{aligned}
    g_2^{(1)}&=\frac{\sin(\pi(+\vartheta_\infty-\vartheta_t+\sigma_{2}))\sin(\pi(-\vartheta_\infty+\vartheta_t+\sigma_{2}))}{\sin(\pi(+\vartheta_\infty+\vartheta_t+\sigma_{2}))\sin(\pi(-\vartheta_\infty-\vartheta_t+\sigma_{2}))},\\
    g_3^{(1)}&=\frac{\sin(\pi(+\vartheta_\infty-\vartheta_1+\sigma_{3}))\sin(\pi(-\vartheta_\infty+\vartheta_1+\sigma_{3}))}{\sin(\pi(+\vartheta_\infty+\vartheta_1+\sigma_{3}))\sin(\pi(-\vartheta_\infty-\vartheta_1+\sigma_{3}))}.
\end{aligned}
\end{equation}

We now recall Jimbo's asymptotic formulas \cite{J}, which relate the exponents $\sigma_2$ and $\sigma_3$ at $t=0$ and $t=\infty$, as well as the value of twist parameter $s$, to the coordinates $(x_1,x_2,x_3)$ of the Jimbo-Fricke cubic \eqref{eq:pvicubic}.

Firstly, the exponents are given by
\begin{equation}\label{eq:x2x3formula}
x_2=2\cos(2\pi \sigma_{2}),\quad  x_3=2\cos(2\pi \sigma_{3}).
\end{equation}
The explicit formula for the twist parameter $s$ is given in \cite[eq. 1.8]{J}\footnote{We note the following typo in \cite{J}: the last `$\mp$' at the bottom of page 1141 should have been a $\pm$, as pointed out in \cite{guzzetti_tabulation}.},
and translates to the following in our notation,
\begin{align}
    s&=\frac{a+b\; e^{2\pi i\sigma_{3}}}{d},\label{eq:sjimbo}\\
    a&=\tfrac{1}{2}i \sin(2\pi \sigma_{3})x_1+\cos(2\pi\vartheta_0)\cos(2\pi\vartheta_1)+\cos(2\pi\vartheta_t)\cos(2\pi\vartheta_\infty),\nonumber\\
    b&=\tfrac{1}{2}i \sin(2\pi \sigma_{3})x_2-\cos(2\pi\vartheta_t)\cos(2\pi\vartheta_1)-\cos(2\pi\vartheta_0)\cos(2\pi\vartheta_\infty),\nonumber\\
    d&=4\prod_{\epsilon=\pm 1}\sin[\pi(\vartheta_0+\epsilon(\vartheta_t-\sigma_{3}))]\sin[\pi(\vartheta_\infty+\epsilon(\vartheta_1-\sigma_{3}))].\nonumber
\end{align}
To obtain these formulas, we note the following correspondence between the notation in Jimbo \cite{J} and ours,
\begin{align*}
    &s^{[\text{J}]}=-e^{-2\pi i \sigma_3}s,\\
    &\vartheta_k^{[\text{J}]}=2\vartheta_k\quad (k=0,t,1,\infty),\\
    &\sigma_{1t}^{[\text{J}]}=2\sigma_1,\quad
    \sigma_{01}^{[\text{J}]}=2\sigma_2',\quad
    \sigma^{[\text{J}]}=2\sigma_3,
\end{align*}
with
\begin{equation*}
 2\cos(2\pi \sigma_{1})=x_1,\quad 2\cos(2\pi \sigma_{2}')=x_2'=-x_2-x_1x_3+\nu_0\nu_1+\nu_t\nu_\infty,
\end{equation*}
where we recall that $\nu_k=2\cos(2\pi \vartheta_k)$.
The transformation $x_2\mapsto x_2'$ is an element of the standard extended modular group action \cite{iwasaki} on the Jimbo-Fricke cubic. Its application was necessary to obtain the correct result on the sheet $\mathbb{C}\setminus\mathbb{R}_{\geq 0}$, since Jimbo's original formulas are with respect to the sheet $\mathbb{C}\setminus((-\infty,0]\cup[1,+\infty))$.
 
Now, we are in a position to compare Jimbo's asymptotics formulas with those obtained from the continuum limit $q\uparrow 1$.
\begin{proposition}\label{prop:obtainingisomorphism}
Consistency of equations \eqref{eq:pvicubic}, \eqref{eq:scontinuum},  \eqref{eq:g2g3continuum}, \eqref{eq:x2x3formula} and \eqref{eq:sjimbo}
implies
\begin{equation}\label{eq:g2g3explicit}
g_2^{(1)}=\frac{x_2-(\upsilon_t\upsilon_\infty^{-1}+\upsilon_t^{-1}\upsilon_\infty)}{x_2-(\upsilon_t\upsilon_\infty+\upsilon_t^{-1}\upsilon_\infty^{-1})},\qquad
g_3^{(1)}=\frac{x_3-(\upsilon_1\upsilon_\infty^{-1}+\upsilon_1^{-1}\upsilon_\infty)}{x_3-(\upsilon_1\upsilon_\infty+\upsilon_1^{-1}\upsilon_\infty^{-1})},
\end{equation}
and
\begin{align*}
g_1^{(1)}&=\upsilon_\infty\frac{x_1-\upsilon_1\upsilon_\infty^{-1}x_2-\upsilon_t^{-1}\upsilon_\infty^{-1}x_3+x_2 x_3-
\nu_0\upsilon_\infty^{+1}-\upsilon_t\upsilon_1^{-1}+\upsilon_t^{-1}\upsilon_1 \upsilon_\infty^{-2}}{x_1-\upsilon_1\upsilon_\infty^{+1}x_2-\upsilon_t^{-1}\upsilon_\infty^{+1}x_3+x_2 x_3-
\nu_0\upsilon_\infty^{-1}-\upsilon_t\upsilon_1^{-1}+\upsilon_t^{-1}\upsilon_1 \upsilon_\infty^{+2}}\\
&=\upsilon_\infty^{-1}\frac{x_1+\upsilon_1^{-1}\upsilon_\infty^{+ 1} x_2+\upsilon_t\upsilon_\infty^{+ 1} x_3-\nu_0\upsilon_\infty^{+ 1}-\nu_1\upsilon_t-\upsilon_t^{-1}\upsilon_1 \upsilon_\infty^{+ 2}}{x_1+\upsilon_1^{-1}\upsilon_\infty^{- 1} x_2+\upsilon_t\upsilon_\infty^{- 1} x_3-\nu_0\upsilon_\infty^{- 1}-\nu_1\upsilon_t-\upsilon_t^{-1}\upsilon_1 \upsilon_\infty^{- 2}},
\end{align*}
where we remind the reader of the definition of $\nu_k$ and $\upsilon_k$, $k=0,t,1,\infty$, in equation \eqref{eq:pvi-om1}.
\end{proposition}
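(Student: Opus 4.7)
The plan is to derive the three formulas by two distinct routes and equate them. The cases $g_2^{(1)}$ and $g_3^{(1)}$ follow directly by eliminating the exponents from \eqref{eq:g2g3continuum} via \eqref{eq:x2x3formula}, while $g_1^{(1)}$ follows from equating the continuum-limit expression \eqref{eq:scontinuum} for the twist parameter $s$ with Jimbo's expression \eqref{eq:sjimbo}.

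For $g_2^{(1)}$ and $g_3^{(1)}$, I would apply the product-to-sum identity
\begin{equation*}
\sin(\pi a)\sin(\pi b)=\tfrac{1}{2}[\cos(\pi(a-b))-\cos(\pi(a+b))]
\end{equation*}
to the numerator and denominator of each expression in \eqref{eq:g2g3continuum}. For instance, the numerator of $g_2^{(1)}$ equals $\tfrac14[(\upsilon_t^{-1}\upsilon_\infty+\upsilon_t\upsilon_\infty^{-1})-x_2]$ and the denominator equals $\tfrac14[(\upsilon_t\upsilon_\infty+\upsilon_t^{-1}\upsilon_\infty^{-1})-x_2]$, using $\nu_k=\upsilon_k+\upsilon_k^{-1}$, $x_2=2\cos(2\pi\sigma_2)$, and the same identity. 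Multiplying by $-1$ on top and bottom yields the first claimed formula; $g_3^{(1)}$ is completely analogous.

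For $g_1^{(1)}$, equating \eqref{eq:scontinuum} and \eqref{eq:sjimbo} gives an equation linear in $g_1^{(1)}$, which I would solve to obtain
\begin{equation*}
g_1^{(1)}=\frac{d\sin(\pi(-\vartheta_\infty+\vartheta_1+\sigma_3))-(a+b\,e^{2\pi i\sigma_3})\sin(\pi(\vartheta_\infty-\vartheta_1+\sigma_3))}{d\sin(\pi(\vartheta_\infty+\vartheta_1+\sigma_3))-(a+b\,e^{2\pi i\sigma_3})\sin(\pi(-\vartheta_\infty-\vartheta_1+\sigma_3))}.
\end{equation*}
Expanding every sine, together with the explicit product $d$ from \eqref{eq:sjimbo}, as Laurent polynomials in $\upsilon_0^{\pm1},\upsilon_t^{\pm1},\upsilon_1^{\pm1},\upsilon_\infty^{\pm1}$ and $e^{\pm 2\pi i\sigma_3}$, turns this into a ratio of quadratic polynomials in $e^{2\pi i\sigma_3}$. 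Using the Newton identity $e^{-2\pi i\sigma_3}=x_3-e^{2\pi i\sigma_3}$ to reduce to degree one and then cancelling a common factor of $e^{2\pi i\sigma_3}$ (which appears because of the prefactor $e^{-2\pi i\sigma_3}$ in the correspondence $s^{[\text{J}]}=-e^{-2\pi i\sigma_3}s$) yields the first displayed formula.

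The main obstacle is the final algebraic simplification and the equivalence of the two stated formulas for $g_1^{(1)}$. The symmetry $e^{2\pi i\sigma_3}\mapsto e^{-2\pi i\sigma_3}$ leaves $x_3$ fixed but transforms the solved formula into a second, a priori distinct, rational expression; the two must therefore agree modulo the cubic \eqref{eq:pvicubic}. I would verify this by clearing denominators in the difference of the two displayed expressions and checking that the resulting polynomial in $\mathbb{Z}[\upsilon_0^{\pm1},\upsilon_t^{\pm1},\upsilon_1^{\pm1},\upsilon_\infty^{\pm1}][x_1,x_2,x_3]$ is a scalar multiple of $x_1x_2x_3+x_1^2+x_2^2+x_3^2+\omega_1x_1+\omega_2x_2+\omega_3x_3+\omega_4$. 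Since every step above is polynomial manipulation in a known ring, the entire proof is best executed in a computer algebra system, as the paper suggests.
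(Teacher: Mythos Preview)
Your treatment of $g_2^{(1)}$ and $g_3^{(1)}$ is exactly the paper's: the product-to-sum identity converts \eqref{eq:g2g3continuum} into expressions in $\cos(2\pi\sigma_{2,3})$, which are then replaced by $x_{2,3}/2$ via \eqref{eq:x2x3formula}.

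For $g_1^{(1)}$ you have the right ingredients but the assembly is slightly off. After you solve for $g_1^{(1)}$ and reduce modulo $\upsilon_3+\upsilon_3^{-1}=x_3$ (writing $\upsilon_3=e^{2\pi i\sigma_3}$), you obtain $g_1^{(1)}$ as a ratio of two expressions that are \emph{linear} in $\upsilon_3$; cancelling a common factor of $\upsilon_3$ does not make this ratio $\upsilon_3$-independent, and the parenthetical about the Jimbo prefactor does not justify it. The residual $\upsilon_3$-dependence only disappears once you invoke the symmetry $\upsilon_3\mapsto\upsilon_3^{-1}$ together with the cubic \eqref{eq:pvicubic}, so your step ``reduce, then read off the first formula, then use symmetry for the second'' is out of order.

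The paper handles this more cleanly by working with the equation rather than the solved expression. Cross-multiplying the two formulas for $s$ gives a relation that is degree~$1$ in $g_1^{(1)}$ and degree~$3$ in $\upsilon_3$; reducing modulo $\upsilon_3+\upsilon_3^{-1}=x_3$ brings it to the form $C_0(g_1^{(1)})+\upsilon_3\,C_1(g_1^{(1)})=0$ with $C_0,C_1$ affine-linear in $g_1^{(1)}$. The key observation is that both formulas for $s$ satisfy $s|_{\upsilon_3\mapsto\upsilon_3^{-1}}=s^{-1}$ (for \eqref{eq:sjimbo} this is equivalent to the cubic relation \eqref{eq:pvicubic}), so the equation with $\upsilon_3$ replaced by $\upsilon_3^{-1}$ also holds. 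From $C_0+\upsilon_3 C_1=0$ and $C_0+\upsilon_3^{-1}C_1=0$ one deduces $C_0=C_1=0$, and each of these, solved for $g_1^{(1)}$, gives one of the two displayed formulas. Thus the symmetry is what splits the single $\upsilon_3$-dependent relation into two $\upsilon_3$-free ones, rather than being a post-hoc consistency check.
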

\begin{proof}
Let us start with deriving equations \eqref{eq:g2g3explicit}. By applying the familiar product to sum formula for the sine function to the numerator and denominator of the formula for $g_2^{(1)}$ in \eqref{eq:g2g3continuum}, we find
\begin{align*}
    g_2^{(1)}&=\frac{\sin(\pi(+\vartheta_\infty-\vartheta_t+\sigma_{2}))\sin(\pi(-\vartheta_\infty+\vartheta_t+\sigma_{2}))}{\sin(\pi(+\vartheta_\infty+\vartheta_t+\sigma_{2}))\sin(\pi(-\vartheta_\infty-\vartheta_t+\sigma_{2}))}\\
    &=\frac{\cos(2\pi \sigma_2)-\cos(2\pi(\vartheta_t-\vartheta_\infty))}{\cos(2\pi \sigma_2)-\cos(2\pi(\vartheta_t+\vartheta_\infty))}\\
    &=\frac{x_2-2\cos(2\pi(\vartheta_t-\vartheta_\infty))}{x_2-2\cos(2\pi(\vartheta_t+\vartheta_\infty))}\\
    &=\frac{x_2-(\upsilon_t\upsilon_\infty^{-1}+\upsilon_t^{-1}\upsilon_\infty)}{x_2-(\upsilon_t\upsilon_\infty+\upsilon_t^{-1}\upsilon_\infty^{-1})}.
\end{align*}
A similar computation leads to the formula for $g_3^{(1)}$ in \eqref{eq:g2g3explicit}. 

To obtain a formula for $g_1^{(1)}$, we compare the two formulas for the twist parameter that we have, equations \eqref{eq:scontinuum} and \eqref{eq:sjimbo}. Rather than dealing with trigonometric identities, we rewrite both formulas as rational functions in $\upsilon_0,\upsilon_t,\upsilon_1,\upsilon_\infty$ and $\upsilon_3$, where, by definition, $\upsilon_3=e^{2\pi i\sigma_3}$. Equation \eqref{eq:scontinuum} then reads
\begin{equation}\label{eq:scontinuumalg}
    s=\frac{(\upsilon_3\upsilon_1-\upsilon_\infty)-g_1^{(1)}(\upsilon_3\upsilon_1\upsilon_\infty-1)}{(\upsilon_3\upsilon_\infty-\upsilon_1)-g_1^{(1)}(\upsilon_3-\upsilon_1\upsilon_\infty)\hspace{0.14cm}},
\end{equation}
and equation \eqref{eq:sjimbo} becomes
\begin{align}
    s&=\frac{a+b\; \upsilon_3}{d},\label{eq:sjim}\\
    a&=\frac{1}{4}\left((\upsilon_3-\upsilon_3^{-1})x_1+(\upsilon_0+\upsilon_0^{-1})(\upsilon_1+\upsilon_1^{-1})+(\upsilon_t+\upsilon_t^{-1})(\upsilon_\infty+\upsilon_\infty^{-1})\right),\nonumber\\
    b&=\frac{1}{4}\left((\upsilon_3-\upsilon_3^{-1})x_2+(\upsilon_0+\upsilon_0^{-1})(\upsilon_\infty+\upsilon_\infty^{-1})+(\upsilon_t+\upsilon_t^{-1})(\upsilon_1+\upsilon_1^{-1})\right),\nonumber\\
    d&=\frac{(\upsilon_t-\upsilon_0\upsilon_3)(\upsilon_3-\upsilon_0\upsilon_t)(\upsilon_1-\upsilon_3\upsilon_\infty)(\upsilon_3-\upsilon_1\upsilon_\infty)}{4 \upsilon_0\upsilon_t\upsilon_1\upsilon_\infty \upsilon_3^2}.\nonumber
\end{align}

Now, we want to point out a symmetry that will be important in what follows. If we take either formula for the twist parameter $s$, and substitute $\upsilon_3$ for its reciprocal, then this is equivalent to taking the reciprocal of $s$, in other words,
\begin{equation}\label{eq:twistparametersymmetry}
    s|_{\upsilon_3\mapsto \upsilon_3^{-1}}=s^{-1}.
\end{equation}
In formula \eqref{eq:scontinuumalg}, this fact is obvious. In formula \eqref{eq:sjim} it is not that transparent and a direct computation shows that it is in fact equivalent to the cubic relation \eqref{eq:pvicubic} among $x_1,x_2$ and $x_3=\upsilon_3+\upsilon_3^{-1}$. 
We come back to this symmetry in a moment.

By equating the right-hand sides of equations \eqref{eq:scontinuumalg} and \eqref{eq:sjim}
and multiplying out the denominators, we obtain a polynomial equation, of degree $1$ in the variables $g_1^{(1)}$, $x_2$ and $x_3$, and of degree $3$ in $\upsilon_3$. Considering this as a polynomial equation just in $\upsilon_3$, we can eliminate the constant term and highest order term, by replacing respectively $1\mapsto \upsilon_3 x_3-\upsilon_3^2$
and $\upsilon_3^3\rightarrow \upsilon_3^2x_3-\upsilon_3$. Dividing the result by $\upsilon_3$, we obtain a polynomial equation in $\upsilon_3$ of degree one,
\begin{equation}\label{eq:polynomialeqg1}
\upsilon_1\upsilon_\infty^{-1}(A_+-\upsilon_\infty g_1^{(1)}A_-)-\upsilon_3(B_+-\upsilon_\infty^{-1}g_1^{(1)}B_-)=0,
\end{equation}
where
\begin{align*}
A_{\pm}&=x_1-\upsilon_1\upsilon_\infty^{\mp 1}x_2-\upsilon_t^{-1}\upsilon_\infty^{\mp 1}x_3+x_2 x_3-
\nu_0\upsilon_\infty^{\pm 1}-\upsilon_t\upsilon_1^{-1}+\upsilon_t^{-1}\upsilon_1 \upsilon_\infty^{\mp 2},\\
B_{\pm}&=x_1+\upsilon_1^{-1}\upsilon_\infty^{\pm 1} x_2+\upsilon_t\upsilon_\infty^{\pm 1} x_3-\nu_0\upsilon_\infty^{\pm 1}-\nu_1\upsilon_t-\upsilon_t^{-1}\upsilon_1 \upsilon_\infty^{\pm 2}.
\end{align*}
By symmetry \eqref{eq:twistparametersymmetry}, both equation \eqref{eq:polynomialeqg1} and the same equation with $\upsilon_3$ replaced by $\upsilon_3^{-1}$ hold true simultaneously. This implies 
\begin{equation*}
    (A_+-\upsilon_\infty g_1^{(1)}A_-)=0,\quad (B_+-\upsilon_\infty^{-1}g_1^{(1)}B_-)=0,
\end{equation*}
and therefore
\begin{equation*}
    g_1^{(1)}=\upsilon_\infty \frac{A_+}{A_-}=\upsilon_\infty^{-1}\frac{B_+}{B_-},
\end{equation*}
which are the two formulas for $g_1^{(1)}$ in the proposition, and concludes the proof.
\end{proof}

\begin{remark}
In Proposition \ref{prop:obtainingisomorphism}, two different formulas are given for $g_1^{(1)}$. Their equality is equivalent to the Jimbo-Fricke cubic equation.
\end{remark}

\subsection{An isomorphism}\label{subsec:isomorphism} 
In Section \ref{suse:iso}, we found explicit formulas relating the $z$-variables on the Segre surface $\mathcal{Z}_1$, with the $x$-variables on the Jimbo-Fricke cubic $\mathcal{X}$, through the intermediate Tyurin ratios $g^{(1)}$, see in particular Proposition \ref{prop:obtainingisomorphism}. These formulas lead to the isomorphism in the following theorem, which is the focus of this section.

\begin{theorem}\label{thm:isomorphism}
The Jimbo-Fricke cubic surface $\mathcal{X}$ and the Segre surface $\mathcal{Z}_1$, defined in Definition \ref{defi:pvi_segre}, are isomorphic as affine varieties. An explicit isomorphism is given by the polynomial mapping
\begin{equation*}
   \begin{array}{cccc}   \Phi_\mathcal{X}:&\mathcal{X}&\rightarrow &\mathcal{Z}_1\\
    & x&\mapsto& z\\
     \end{array}
\end{equation*} 
where
\begin{align*}
z_1&=+\gamma^{-1}\Bigl(\hspace{0.5mm}\upsilon_\infty\hspace{0.5mm} \Bigl(x_2-\upsilon_\infty^{-1}\nu_t\Bigr)\Bigl(x_3-\upsilon_\infty^{-1}\nu_1\Bigr)+(\upsilon_\infty-\upsilon_\infty^{-1})\Bigl(x_1-\hspace{0.5mm}\upsilon_\infty\hspace{0.5mm}\nu_0 \Bigr)\Bigr),\\
z_2&=+\gamma^{-1}\Bigl(\upsilon_\infty^{-1} \Bigl(x_2-\hspace{0.5mm}\upsilon_\infty\hspace{0.5mm}\nu_t\Bigr)\Bigl(x_3-\hspace{0.5mm}\upsilon_\infty\hspace{0.5mm}\nu_1\Bigr)-(\upsilon_\infty-\upsilon_\infty^{-1})\Bigl(x_1-\upsilon_\infty^{-1}\nu_0 \Bigr)\Bigr),\\
z_3&=-\delta^{-1}\Bigl(\upsilon_0-\upsilon_t\upsilon_1\upsilon_\infty\Bigr)\Bigl(\upsilon_0-\frac{1}{\upsilon_t\upsilon_1\upsilon_\infty}\Bigr)\Bigl(x_2-\frac{\upsilon_t}{\upsilon_\infty}-\frac{\upsilon_\infty}{\upsilon_t}\Bigr)\Bigl(x_3-\frac{\upsilon_1}{\upsilon_\infty}-\frac{\upsilon_\infty}{\upsilon_1}\Bigr),\\
z_4&=-\delta^{-1}\Bigl(\upsilon_0-\frac{\upsilon_t\upsilon_1}{\upsilon_\infty}\Bigr)\Bigl(\upsilon_0-\frac{\upsilon_\infty}{\upsilon_t\upsilon_1}\Bigr)\Bigl(x_2-\upsilon_t\upsilon_\infty-\frac{1}{\upsilon_t\upsilon_\infty}\Bigr)\Bigl(x_3-\upsilon_1\upsilon_\infty-\frac{1}{\upsilon_1\upsilon_\infty}\Bigr),\\
z_5&=+\delta^{-1}\Bigl(\upsilon_0-\frac{\upsilon_t\upsilon_\infty}{\upsilon_1}\Bigr)\Bigl(\upsilon_0-\frac{\upsilon_1}{\upsilon_t \upsilon_\infty}\Bigr)\Bigl(x_2-\frac{\upsilon_t}{\upsilon_\infty}-\frac{\upsilon_\infty}{\upsilon_t}\Bigr)\Bigl(x_3-\upsilon_1\upsilon_\infty-\frac{1}{\upsilon_1\upsilon_\infty}\Bigr),
\\
z_6&=+\delta^{-1}\Bigl(\upsilon_0-\frac{\upsilon_1\upsilon_\infty}{\upsilon_t}\Bigr)\Bigl(\upsilon_0-\frac{\upsilon_t}{\upsilon_1\upsilon_\infty}\Bigr) \Bigl(x_2-\upsilon_t\upsilon_\infty-\frac{1}{\upsilon_t\upsilon_\infty}\Bigr)\Bigl(x_3-\frac{\upsilon_1}{\upsilon_\infty}-\frac{\upsilon_\infty}{\upsilon_1}\Bigr),
\end{align*}
with the constants $\gamma$ and $\delta$ defined by
\begin{align*}
   \gamma&=(\upsilon_0-1)(\upsilon_0^{-1}-1)(\upsilon_\infty-\upsilon_\infty^{-1})^2,\\
   \delta&=(\upsilon_0-1)^2(\upsilon_t-\upsilon_t^{-1})(\upsilon_1-\upsilon_1^{-1})(\upsilon_\infty-\upsilon_\infty^{-1})^2.
\end{align*}
Composition of this isomorphism with the inverse of the blow-down mapping $\pi$ restricted to $\mathcal{Y}$,
\begin{equation*}
    \Phi_\mathcal{Y}:=\Phi_\mathcal{X}\circ (\pi^{-1}|_\mathcal{Y}),
\end{equation*}
defines an affine equivalence
\begin{equation*}
   \begin{array}{cccc}   \Phi_\mathcal{Y}:&\mathcal{Y}&\rightarrow &\mathcal{Z}_1\\
    & y&\mapsto& z\\
     \end{array}
\end{equation*} 
between the Segre surfaces $\mathcal{Y}$ and $\mathcal{Z}_1$ and, in particular, extends to a projective equivalence between their completions $\overline{\mathcal{Y}}$ and $\overline{\mathcal{Z}}_1$.
\end{theorem}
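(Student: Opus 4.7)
The plan is to establish the formulas for $\Phi_\mathcal{X}$ by combining Proposition \ref{prop:obtainingisomorphism} with the defining equations of $\mathcal{Z}_1$, verify directly that $\Phi_\mathcal{X}$ maps $\mathcal{X}$ into $\mathcal{Z}_1$, and then deduce the remaining claims.

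First, I would reconstruct the six expressions for the $z_k$. By Proposition \ref{prop:obtainingisomorphism}, the three Tyurin ratios $g_1^{(1)}$, $g_2^{(1)}$, $g_3^{(1)}$ from \eqref{eq:gdefi} are explicit rational functions of $x\in\mathcal{X}$. Combined with the symmetry and multiplicative relations \eqref{eq:tyurin_sym}--\eqref{eq:tyurin_mult}, these three ratios determine every quotient $Z_i/\eta_i^{(1)}$, and hence fix the point $z\in\mathcal{Z}_1$ up to one overall scalar. That scalar is pinned down by the affine normalization $\rho_3 z_3 + z_4 + \rho_5 z_5 + \rho_6 z_6 = 1$. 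Writing everything out and clearing denominators using \eqref{eq:mu-cl}, \eqref{eq:la-cl} and \eqref{eq:continuumquadratic_coef} should yield precisely the polynomial expressions stated in the theorem, with the constants $\gamma$ and $\delta$ arising as the residual common factors.

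Second, I would verify directly that $\Phi_\mathcal{X}(\mathcal{X})\subseteq \mathcal{Z}_1$. This reduces to four polynomial identities in $(x_1,x_2,x_3)$. The two linear ones, $\sum_{k=1}^{6}z_k=0$ and $\rho_3 z_3+z_4+\rho_5 z_5+\rho_6 z_6=1$, should hold as free polynomial identities, since the $z_k$ are built to satisfy the Tyurin symmetries and the chosen normalization. The two quadratic ones, $z_3 z_4=\lambda_1 z_1 z_2$ and $z_5 z_6=\lambda_2 z_1 z_2$, are quartic in $(x_2,x_3)$ and quadratic in $x_1$, and they are identities only modulo the cubic relation \eqref{eq:pvicubic}. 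This is the computational core of the proof, and the point at which the specific cubic form of the Jimbo-Fricke relation is used in an essential way.

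Third, the statement about $\Phi_\mathcal{Y}$ and the isomorphism claim follow structurally. Each $z_k$ is polynomial of degree at most one in $x_1$ and at most two in $(x_2,x_3)$, and the only quadratic monomial that appears is $x_2 x_3$. Under the substitution $y_1=x_1$, $y_2=x_2$, $y_3=x_3$, $y_4=x_2 x_3$ defining $\pi^{-1}|_\mathcal{Y}$, every such expression becomes affine linear in $(y_1,y_2,y_3,y_4)$, so $\Phi_\mathcal{Y}$ is the restriction of an affine map $\mathbb{C}^4\to \mathbb{C}^6$. To see that it is an isomorphism, I would either construct an explicit inverse by inverting each Tyurin-ratio formula in Proposition \ref{prop:obtainingisomorphism} (each is a M\"obius transformation in one of $x_1$, $x_2$, $x_3$), or apply Lemma \ref{lem:segre_isomorpism} to the extension of $\Phi_\mathcal{Y}$ between the smooth projective completions $\overline{\mathcal{Y}}$ and $\overline{\mathcal{Z}}_1$.

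The principal obstacle is the verification of the two quadratic identities modulo \eqref{eq:pvicubic}. These are lengthy polynomial identities whose coefficients are rational functions in $\upsilon_0,\upsilon_t,\upsilon_1,\upsilon_\infty$ and are not realistically checkable by hand. The natural route is to carry the computation out over $\mathbb{Q}(\upsilon_0,\upsilon_t,\upsilon_1,\upsilon_\infty)$ in a computer algebra system, reducing modulo the polynomial defining \eqref{eq:pvicubic}; once these two identities are established, the isomorphism claim and the affine equivalence between $\mathcal{Y}$ and $\mathcal{Z}_1$ follow directly from the structural observations above.
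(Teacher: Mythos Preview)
Your approach is broadly correct but takes a genuinely different route from the paper. You propose to derive the formulas from Proposition~\ref{prop:obtainingisomorphism}, then verify by computer algebra over $\mathbb{Q}(\upsilon_0,\upsilon_t,\upsilon_1,\upsilon_\infty)$ that $\Phi_\mathcal{X}$ lands in $\mathcal{Z}_1$, and finally argue isomorphism from the affine-linear shape of $\Phi_\mathcal{Y}$. The paper instead first constructs a bi-rational map $\Phi:\overline{\mathcal{Y}}\dashrightarrow\overline{\mathcal{Z}}_1$ by composing $g_\mathcal{Y}$ with the rational inverse of $g_\mathcal{Z}$ through the intermediate Tyurin-ratio surface $\overline{\mathcal{G}}$ of Definition~\ref{defi:surfaceG}. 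Lemma~\ref{lem:denseopenbiholo} shows this is a biholomorphism between explicit dense open subsets $U_\mathcal{Y}$ and $U_\mathcal{Z}$; the paper then extends $\Phi$ across the two conics at infinity and the eight excluded lines on each side by direct local computation, and invokes Zariski's main theorem (or a local-biholomorphism check) to conclude that $\Phi$ is a global isomorphism. Only \emph{after} this is Lemma~\ref{lem:segre_isomorpism} applied to deduce that $\Phi$ is projective linear, and the explicit polynomial formulas for $z_1,z_2$ are recovered a posteriori by evaluating $\Phi_\mathcal{X}$ at five specific points $p_1,\ldots,p_5\in\mathcal{X}$ and solving a linear system. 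Your route is shorter and more computational; the paper's route establishes the isomorphism geometrically before any large symbolic identity is needed, and yields the line and conic correspondences of Remarks~\ref{remark:lines_correspondence} and~\ref{remark:conics_correspondence} as byproducts.

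One step in your outline needs tightening. Your second option for proving that $\Phi_\mathcal{Y}$ is an isomorphism, namely ``apply Lemma~\ref{lem:segre_isomorpism} to the extension of $\Phi_\mathcal{Y}$'', is circular: that lemma takes an isomorphism as hypothesis and outputs projective linearity, not the other way around. Your first option (inverting the M\"obius formulas of Proposition~\ref{prop:obtainingisomorphism}) does work, but the cleanest fix is this: once you know $\Phi_\mathcal{Y}$ is the restriction of an affine linear map $\mathbb{C}^4\to\mathbb{C}^6$, simply check that this map has rank~$4$ (immediate from the coefficient matrix for generic $\upsilon_0,\upsilon_t,\upsilon_1,\upsilon_\infty$). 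Then $\Phi_\mathcal{Y}$ is a closed embedding of $\mathcal{Y}$ into $\mathbb{C}^6$; since its image is an irreducible closed $2$-dimensional subvariety of the irreducible $2$-dimensional $\mathcal{Z}_1$, it equals $\mathcal{Z}_1$, and the linear inverse on the image furnishes the inverse morphism.
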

\begin{remark}
Note that the expressions for the polynomial mapping $\Phi_\mathcal{X}$ in Theorem \ref{thm:isomorphism} are affine linear in $\{x_1,x_2,x_3,x_2x_3\}$. Indeed, they can be written as
\begin{equation*}
 z_k = \xi_{0k}+\xi_{1k} x_1+\xi_{2k} x_2+\xi_{3k} x_3+\xi_{4k} x_2x_3,
\qquad 1\leq k\leq 6,   
\end{equation*}
for some coefficients that can be read off directly from the formulas in the theorem. Consequently, $\Phi_\mathcal{Y}$ can be written as
\begin{equation}\label{eq:affine_equivalence_pvi}
 z_k = \xi_{0k}+\xi_{1k} y_1+\xi_{2k} y_2+\xi_{3k} y_3+\xi_{4k} y_4,
\qquad 1\leq k\leq 6,
\end{equation}
and extends to an affine linear map from $\mathbb{C}^4$ to $\mathbb{C}^6$ that maps $\mathcal{Y}$ to $\mathcal{Z}_1$, as stated in the theorem.
\end{remark}
\begin{remark}\label{remark:lines_correspondence}
The affine equivalence $\Phi_\mathcal{Y}$ in Theorem \ref{thm:isomorphism} maps the $16$ lines in $\mathcal{Y}$, see equation \eqref{eq:linesS}, to the $16$ lines in $\mathcal{Z}_1$, see Section \ref{sec:contlines}, as follows
    \begin{equation}\label{eq:linescorrespondence}
\begin{aligned}
L_{9}^\mathcal{Y}&\mapsto \widetilde{L}_{2,4,5}^{(1)},   & 
L_{10}^\mathcal{Y}&\mapsto \widetilde{L}_{1,4,5}^{(1)},   & 
L_{11}^\mathcal{Y}&\mapsto \widetilde{L}_{2,3,6}^{(1)},   & 
L_{12}^\mathcal{Y}&\mapsto \widetilde{L}_{1,3,6}^{(1)},   \\
L_{13}^\mathcal{Y}&\mapsto L_{2,4,6}^{(1)},   & 
L_{14}^\mathcal{Y}&\mapsto L_{1,4,6}^{(1)},   & 
L_{15}^\mathcal{Y}&\mapsto L_{1,3,5}^{(1)},   & 
L_{16}^\mathcal{Y}&\mapsto L_{2,3,5}^{(1)},   \\
L_{17}^\mathcal{Y}&\mapsto \widetilde{L}_{2,4,6}^{(1)},   & 
L_{18}^\mathcal{Y}&\mapsto \widetilde{L}_{1,4,6}^{(1)},   & 
L_{19}^\mathcal{Y}&\mapsto \widetilde{L}_{2,3,5}^{(1)},   & 
L_{20}^\mathcal{Y}&\mapsto \widetilde{L}_{1,3,5}^{(1)},   \\
L_{21}^\mathcal{Y}&\mapsto L_{2,4,5}^{(1)},   & 
L_{22}^\mathcal{Y}&\mapsto L_{1,4,5}^{(1)},   & 
L_{23}^\mathcal{Y}&\mapsto L_{1,3,6}^{(1)},   & 
L_{24}^\mathcal{Y}&\mapsto L_{2,3,6}^{(1)}.
\end{aligned}
\end{equation}
In particular, it induces an isomorphism between the Clebsch graph in Figure \ref{fig:lines_intersection}  and the Clebsch subgraph in Figure \ref{fig:intersectiongraph}.
\end{remark}
\begin{remark}\label{remark:conics_correspondence}
Recall that the curve at infinity $\overline{\mathcal{Y}}\setminus\mathcal{Y}$ factorises into two conics \eqref{eq:conic1} and \eqref{eq:conic2} and, similarly, the curve at infinity $\overline{\mathcal{Z}}_1\setminus\mathcal{Z}_1$ factorises into two conics \eqref{eq:conicIcontinuum} and \eqref{eq:conicIIcontinuum}. The extension of $\Phi_\mathcal{Y}$ to a projective equivalence between $\overline{\mathcal{Y}}$ and $\overline{\mathcal{Z}}_1$ in Theorem \ref{thm:isomorphism}, maps conics \eqref{eq:conic1} and \eqref{eq:conic2} respectively to conics \eqref{eq:conicIcontinuum} and \eqref{eq:conicIIcontinuum}.

This is consistent with Remark \ref{remark:lines_correspondence} in the following way.
In Figure \ref{fig:lines_intersection}, lines in $\mathcal{Z}_1$ are coloured respectively ForestGreen or Fuchsia dependent on whether they intersect conic \eqref{eq:conicIcontinuum} or conic \eqref{eq:conicIIcontinuum} at infinity. The lines $L_k^\mathcal{Y}$, $9\leq k\leq 16$, which intersect  with conic \eqref{eq:conic1} at infinity, are mapped to the ForestGreen lines, whilst the lines $L_k^\mathcal{Y}$, $17\leq k\leq 24$, which  intersect  with conic \eqref{eq:conic2} at infinity, are mapped to the Fuchsia ones.
\end{remark}

\begin{remark}\label{remark:confluence}
  A question is raised in Item 7.2.10 (b) of \cite[\S 7.2]{RSpreprint}, about the correspondence between the monodromy manifold of $q\Psix$ and that of $\Psix$ based on the number of lines each contains. It is stated that ``\emph{It seems dubious that it can be tranlated into a birational map: the number of lines increases by confluence.''} (\/Note that the term confluence in this quote refers to the continuum limit.\/) But it is well known that one can increase the number of lines on an algebraic surface by allowing blow-ups. 
 Theorem \ref{thm:isomorphism}, shows that there is a  birational map of a most simple kind involving a blow-up between these two monodromy manifolds.
\end{remark}

We prove Theorem \ref{thm:isomorphism} in several steps and in a reverse order compared to the way the theorem is stated. Namely, we use the results in Section \ref{suse:iso}, to construct a bi-rational mapping between $\overline{\mathcal{Y}}$ and $\overline{\mathcal{Z}}_1$. We show that this mapping is a projective equivalence with the right properties, so that it induces an affine equivalence between $\mathcal{Y}$ and $\mathcal{Z}_1$ and in the end leads to the isomorphism $\Phi_\mathcal{X}$. We further prove Remarks \ref{remark:lines_correspondence} and \ref{remark:conics_correspondence} on the way.

Our starting point for the proof of Theorem \ref{thm:isomorphism}, is the set of explicit formulas obtained in Section \ref{suse:iso} that relate the $z$-variables on $\mathcal{Z}_1$ with the $x$-variables on $\mathcal{X}$, through the intermediate Tyurin ratios $g_k^{(1)}$, $1\leq k\leq 3$. We recall them here, but written with respect to homogeneous coordinates $Z$ and $Y$ on $\overline{\mathcal{Z}}_1$ and $\overline{\mathcal{Y}}$ respectively.

On the one hand, with respect to homogeneous coordinates for $\overline{\mathcal{Z}}_1$, we have
\begin{equation}\label{eq:ztogdescription}
\begin{aligned}
   g_1^{(1)}&=\frac{Z_6/\eta_6^{(1)}}{Z_2/\eta_2^{(1)}}=\frac{Z_1/\eta_1^{(1)}}{Z_5/\eta_5^{(1)}},\\
   g_2^{(1)}&=\frac{Z_3/\eta_3^{(1)}}{Z_6/\eta_6^{(1)}}=\frac{Z_5/\eta_5^{(1)}}{Z_4/\eta_4^{(1)}},\\
   g_3^{(1)}&=\frac{Z_6/\eta_6^{(1)}}{Z_4/\eta_4^{(1)}}=\frac{Z_3/\eta_3^{(1)}}{Z_5/\eta_5^{(1)}}.
\end{aligned}
\end{equation}
On the other hand,  with respect to homogeneous coordinates for $\overline{\mathcal{Y}}$, we have, see Proposition \ref{prop:obtainingisomorphism},
\begin{align}
g_1^{(1)}&=\upsilon_\infty\frac{Y_1-\upsilon_1\upsilon_\infty^{-1}Y_2-\upsilon_t^{-1}\upsilon_\infty^{-1}Y_3+Y_4-Y_0(
\nu_0\upsilon_\infty^{+1}+\upsilon_t\upsilon_1^{-1}-\upsilon_t^{-1}\upsilon_1 \upsilon_\infty^{-2})}{Y_1-\upsilon_1\upsilon_\infty^{+1}Y_2-\upsilon_t^{-1}\upsilon_\infty^{+1}Y_3+Y_4-Y_0(
\nu_0\upsilon_\infty^{-1}+\upsilon_t\upsilon_1^{-1}-\upsilon_t^{-1}\upsilon_1 \upsilon_\infty^{+2})}\nonumber\\
&=\upsilon_\infty^{-1}\frac{Y_1+\upsilon_1^{-1}\upsilon_\infty^{+ 1} Y_2+\upsilon_t\upsilon_\infty^{+ 1} Y_3-Y_0(\nu_0\upsilon_\infty^{+ 1}+\nu_1\upsilon_t+\upsilon_t^{-1}\upsilon_1 \upsilon_\infty^{+ 2})}{Y_1+\upsilon_1^{-1}\upsilon_\infty^{- 1} Y_2+\upsilon_t\upsilon_\infty^{- 1} Y_3-Y_0(\nu_0\upsilon_\infty^{- 1}+\nu_1\upsilon_t+\upsilon_t^{-1}\upsilon_1 \upsilon_\infty^{- 2})}.\nonumber\\
    g_2^{(1)}&=\frac{Y_2-Y_0(\upsilon_t\upsilon_\infty^{-1}+\upsilon_t^{-1}\upsilon_\infty)}{Y_2-Y_0(\upsilon_t\upsilon_\infty+\upsilon_t^{-1}\upsilon_\infty^{-1})},\label{eq:ytogdescription}\\
g_3^{(1)}&=\frac{Y_3-Y_0(\upsilon_1\upsilon_\infty^{-1}+\upsilon_1^{-1}\upsilon_\infty)}{Y_3-Y_0(\upsilon_1\upsilon_\infty+\upsilon_1^{-1}\upsilon_\infty^{-1})}.\nonumber
\end{align}

We correspondingly define the following mappings
\begin{align}
g_\mathcal{Z}&:\overline{\mathcal{Z}}_1\rightarrow (\mathbb{P}^1)^3,\quad Z\mapsto g^{(1)},\label{eq:gzdef}\\
g_\mathcal{Y}&:\overline{\mathcal{Y}}\rightarrow (\mathbb{P}^1)^3,\quad Y\mapsto g^{(1)},\label{eq:gydef}
\end{align}
which we prove to be analytic mappings in the following lemma.

\begin{lemma}\label{lem:analyticmaps}
The mappings $g_\mathcal{Z}$ and $g_\mathcal{Y}$ are analytic, that is, each of their components defines a meromorphic function on the respective complex varieties $\overline{\mathcal{Z}}_1$ and  $\overline{\mathcal{Y}}$.
\end{lemma}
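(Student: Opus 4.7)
The plan is to invoke the standard fact that any rational function on a smooth projective algebraic surface is automatically meromorphic (equivalently, extends to a morphism to $\mathbb{P}^1$). Since $\overline{\mathcal{Z}}_1$ is smooth for generic parameters by Proposition \ref{prop:F1_smooth}, and $\overline{\mathcal{Y}}$ is smooth as the image of the smooth Jimbo-Fricke cubic $\overline{\mathcal{X}}$ under the blow-down of a single smooth line (see Section \ref{sec:blowdownsegre}), it suffices to verify that each of the three components of $g_\mathcal{Z}$ and $g_\mathcal{Y}$ is a well-defined rational function on its domain, i.e., that the two ratios presented for each component actually agree on the respective surface.

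For $g_\mathcal{Z}$, the key observation is that the defining quadratic equations $Z_3 Z_4 = \lambda_1 Z_1 Z_2$ and $Z_5 Z_6 = \lambda_2 Z_1 Z_2$ of $\overline{\mathcal{Z}}_1$, combined with the identities $\lambda_1 = \eta_3^{(1)}\eta_4^{(1)}/(\eta_1^{(1)}\eta_2^{(1)})$ and $\lambda_2 = \eta_5^{(1)}\eta_6^{(1)}/(\eta_1^{(1)}\eta_2^{(1)})$ from \eqref{eq:Tk_identitiescontinuum}, yield the global chain of equalities
\begin{equation*}
\frac{Z_1 Z_2}{\eta_1^{(1)} \eta_2^{(1)}} = \frac{Z_3 Z_4}{\eta_3^{(1)} \eta_4^{(1)}} = \frac{Z_5 Z_6}{\eta_5^{(1)} \eta_6^{(1)}}
\end{equation*}
on $\overline{\mathcal{Z}}_1$. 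Rearranging each equality in this chain in turn recovers the equivalence of the two ratios given in \eqref{eq:ztogdescription} for $g_1^{(1)}$, $g_2^{(1)}$ and $g_3^{(1)}$ respectively; no further computation is needed.

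For $g_\mathcal{Y}$, the components $g_2^{(1)}$ and $g_3^{(1)}$ are already presented as single ratios in \eqref{eq:ytogdescription} and are therefore tautologically rational. Only the equality of the two expressions for $g_1^{(1)}$ requires justification. This is in fact already implicit in the proof of Proposition \ref{prop:obtainingisomorphism}: transcribing both expressions for $g_1^{(1)}$ back to the $x$-coordinates on the Jimbo-Fricke cubic via the substitution $y_k = x_k$ for $k=1,2,3$ and $y_4 = x_2 x_3$, their equality was shown there to be equivalent to the cubic relation \eqref{eq:pvicubic}. Under the birational blow-down $\pi : \overline{\mathcal{X}} \to \overline{\mathcal{Y}}$ of \eqref{eq:blowdownpi}, this identity then descends to $\overline{\mathcal{Y}}$. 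The only mild obstacle is ensuring that the identity, which a priori holds on the dense open subset of $\overline{\mathcal{Y}}$ on which $\pi$ restricts to an isomorphism, extends across the exceptional point \eqref{eq:intpoint}; but since both sides are rational functions on the smooth surface $\overline{\mathcal{Y}}$ and coincide on a dense open subset, they must coincide everywhere, completing the argument.
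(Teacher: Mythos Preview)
Your argument has a genuine gap: the ``standard fact'' you invoke is false. A rational function on a smooth projective surface does \emph{not} in general extend to a morphism to $\mathbb{P}^1$. The simplest counterexample is $\mathbb{P}^1\times\mathbb{P}^1$ with bihomogeneous coordinates $([x_0:x_1],[y_0:y_1])$: the rational function $x_0y_1/(x_1y_0)$ is indeterminate at $([0:1],[0:1])$ and $([1:0],[1:0])$, and no amount of smoothness resolves this. Smoothness only guarantees that the indeterminacy locus has codimension at least two, i.e.\ is a finite set of points---but those points can genuinely persist. So showing that the two ratio presentations of each $g_k^{(1)}$ agree as elements of the function field is necessary but not sufficient; you still owe the verification that at every point of the surface at least one of the presentations has a non-vanishing denominator (or numerator, so that the value $\infty$ is well-defined).

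This is exactly what the paper supplies. For $g_{\mathcal{Z}}$ it refers back to the argument in Section~\ref{subsec:lines_qpviSegre}: a Tyurin ratio $T_{ij}^{(1)}$ is everywhere defined because there is no point on $\overline{\mathcal{Z}}_1$ where $Z_i,Z_j,Z_{\alpha(i)},Z_{\alpha(j)}$ all vanish simultaneously. For $g_{\mathcal{Y}}$, the components $g_2^{(1)},g_3^{(1)}$ are M\"obius in a single projective coordinate and hence unproblematic, but for $g_1^{(1)}$ the paper carries out the explicit check: the four linear forms appearing as numerators and denominators in the two formulas \eqref{eq:ytogdescription} cut out a single point in $\mathbb{P}^4$, and one verifies directly that this point does not lie on $\overline{\mathcal{Y}}$ for generic parameters. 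Your appeal to the equivalence with the cubic relation establishes only that the two formulas define the same rational function, not that their common indeterminacy locus is empty on the surface.
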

\begin{proof}
For the mapping $g_\mathcal{Z}$ this assertion is trivial, since its components are Tyurin ratios on $\overline{\mathcal{Z}}_1$, which we know to be meromorphic functions (see the argument in the beginning of Section \ref{subsec:lines_qpviSegre}).

Regarding $g_\mathcal{Y}$, the components $g_2^{(1)}$ and $g_3^{(1)}$ are clearly meromorphic as they can be written as compositions of M\"obius transforms with the respective projections $y\mapsto y_2$ and $y\mapsto y_3$ from $\overline{\mathcal{Y}}$ to $\mathbb{P}^1$.

Regarding $g_1^{(1)}$, we have two formulas \eqref{eq:ytogdescription}, which are equivalent on $\overline{\mathcal{Y}}$. Now, if $g_1^{(1)}$ is not meromorphic at some point, then this point must be a common root of the numerators and denominators in both formulas. Each of those numerators and denominators is projective linear, and thus defines a hyperplane in $\mathbb{P}^4$. Their intersection is a single point, given by
\begin{align*}
Y_0&=\upsilon_t \upsilon_1-\upsilon_t^{-1} \upsilon_1^{-1},\\
Y_1&=\upsilon_t^2 \upsilon_1^2-\upsilon_t^{-2} \upsilon_1^{-2},\\
Y_2&=\nu_0(\upsilon_t-\upsilon_t^{-1})+\nu_\infty(\upsilon_1-\upsilon_1^{-1}),\\
Y_3&=\nu_0(\upsilon_1-\upsilon_1^{-1})+\nu_\infty(\upsilon_t-\upsilon_t^{-1}),\\
Y_4&=\left(\upsilon_t \upsilon_1-\upsilon_t^{-1} \upsilon_1^{-1}\right)\left(\nu_0\nu_\infty-(\upsilon_t-\upsilon_t^{-1})(\upsilon_1-\upsilon_1^{-1})\right).
\end{align*}
This point does not lie on $\overline{\mathcal{Y}}$ for generic parameter values, as it for example does not satisfy $Y_0Y_4-Y_2Y_3=0$. It follows that  $g_1^{(1)}$ is also a meromorphic function on $\overline{\mathcal{Y}}$, and the lemma follows.
\end{proof}

Since $\overline{\mathcal{Z}}_1$ is two-dimensional, the meromorphic functions $g_k^{(1)}$, $1\leq k\leq 3$, cannot be independent. They are explicitly related by the following identity,
\begin{equation}\label{eq:tyurincubic}
\eta_{1}^{(1)}g_2^{(1)} g_1^{(1)}+\eta_{2}^{(1)}g_3^{(1)}/g_1^{(1)}+\eta_{3}^{(1)}g_2^{(1)}g_3^{(1)}+\eta_{4}^{(1)}+\eta_{5}^{(1)}g_2^{(1)}+\eta_{6}^{(1)}g_3^{(1)}=0.
\end{equation}
To verify this identity, we take the left-hand side and first apply the multiplication rule \eqref{eq:tyurin_mult}, and then apply the symmetry ${T}_{ij}^{(1)}= T_{\alpha(j)\alpha(i)}^{(1)}$ a couple of times:
\begin{align*}
&\eta_{1}^{(1)}g_2^{(1)} g_1^{(1)}+\eta_{2}^{(1)}g_3^{(1)}/g_1^{(1)}+\eta_{3}^{(1)}g_2^{(1)}g_3^{(1)}+\eta_{4}^{(1)}+\eta_{5}^{(1)}g_2^{(1)}+\eta_{6}^{(1)}g_3^{(1)}=\\
&\eta_{1}^{(1)}{T}^{(1)}_{36}{T}^{(1)}_{62}+\eta_{2}^{(1)}{T}^{(1)}_{64}/{T}^{(1)}_{62}+\eta_{3}^{(1)}{T}^{(1)}_{36}{T}^{(1)}_{64}+\eta_{4}^{(1)}+\eta_{5}^{(1)}{T}^{(1)}_{36}+\eta_{6}^{(1)}{T}^{(1)}_{64}=\\
 &\eta_{1}^{(1)}{T}^{(1)}_{32}+\eta_{2}^{(1)}{T}^{(1)}_{24}+\eta_{3}^{(1)}{T}^{(1)}_{34}+\eta_{4}^{(1)}+\eta_{5}^{(1)}{T}^{(1)}_{36}+\eta_{6}^{(1)}{T}^{(1)}_{64}=\\
 &\eta_{1}^{(1)}{T}^{(1)}_{14}+\eta_{2}^{(1)}{T}^{(1)}_{24}+\eta_{3}^{(1)}{T}^{(1)}_{34}+\eta_{4}^{(1)}+\eta_{5}^{(1)}{T}^{(1)}_{54}+\eta_{6}^{(1)}{T}^{(1)}_{64}=\\
 &\frac{\eta_4^{(1)}}{Z_4}(Z_1+Z_2+Z_3+Z_4+Z_5+Z_6)=0,
\end{align*}
where the last equality follows from \eqref{qp6:seg_comp_a}.

Equation \eqref{eq:tyurincubic} is an equality among meromorphic functions on 
$\overline{\mathcal{Z}}_1$. However, it also defines a surface.
\begin{definition}\label{defi:surfaceG}
Define $\overline{\mathcal{G}}\subseteq \mathbb{P}^1\times \mathbb{P}^1\times \mathbb{P}^1$ as the topological closure of the surface defined by equation \eqref{eq:tyurincubic} in $\{(g_1,g_2,g_3)\in(\mathbb{C}^*)^3\}$.  Also, set
\begin{equation*}
    \mathcal{G}:=\overline{\mathcal{G}}\setminus \{g\in \overline{\mathcal{G}}: g_2=1\text{ or }g_3=1\},
\end{equation*}
so that $\overline{\mathcal{G}}$ is the topological closure of $\mathcal{G}$.
\end{definition}
\begin{remark}
Even though we defined the surface $\overline{\mathcal{G}}$ through topological closure, we note that it makes algebro-geometric sense. Namely, under the generalised Segre embedding
\begin{equation*}
\begin{array}{ccc}
     \mathbb{P}^1\times \mathbb{P}^1\times \mathbb{P}^1&\rightarrow &\mathbb{P}^7,\\
    (g_1,g_2,g_3)&\mapsto& U,\\
\end{array}
\end{equation*}
where, in homogeneous coordinates $g_k=[g_k^a:g_k^b]$, $1\leq k\leq 3$,
\begin{align*}
 &U=[U_{aaa}:U_{aab}:U_{aba}:U_{baa}:U_{abb}:U_{bab}:U_{bba}:U_{bbb}],\\
 &U_{s_1s_2s_3}=g_1^{s_1} g_2^{s_2} g_3^{s_3},\qquad (s_1,s_2,s_3\in\{a,b\}),
\end{align*}
the surface $\overline{\mathcal{G}}$ becomes a projective variety in $\mathbb{P}^7$. This projective variety admits a cumbersome description involving $13$ quadratic equations, $9$ of which describe the image of $\mathbb{P}^1\times \mathbb{P}^1\times \mathbb{P}^1$ itself under the embedding. For our purposes, it is much more convenient to work with the analytic definition of $\overline{\mathcal{G}}$ in $\mathbb{P}^1\times \mathbb{P}^1\times \mathbb{P}^1$.
% \begin{align*}
%     &U_{aaa}U_{abb}=U_{aab}U_{aba},\qquad U_{bbb}U_{baa}=U_{bba}U_{bab},\\
%     &U_{aaa}U_{bab}=U_{baa}U_{aab},\qquad U_{bbb}U_{aba}=U_{abb}U_{bba},\\
%     &U_{aaa}U_{bba}=U_{baa}U_{aba},\qquad U_{bbb}U_{aab}=U_{abb}U_{bab},
% \end{align*}
% in addition
% \begin{equation*}
%        U_{aaa}U_{bbb}=U_{aab}U_{bba}=U_{aba}U_{bab}=U_{baa}U_{abb},
% \end{equation*}
% and
\end{remark}

By a similar token, since $\overline{\mathcal{Y}}$ is two-dimensional,
the meromorphic functions $g_k^{(1)}$, $1\leq k\leq 3$, considered as functions of $Y$ through equations \eqref{eq:ytogdescription},
 cannot be independent. Remarkably, they are related by the exact same formula, equation \eqref{eq:tyurincubic}, as can be verified by direct computation.

For the next important preparatory step, for the proof of Theorem \ref{thm:isomorphism}, we define three dense open subsets
\begin{equation*}
   U_{\mathcal{Z}}\subseteq \overline{\mathcal{Z}}_1,\qquad
   U_{\mathcal{G}}\subseteq \overline{\mathcal{G}},\qquad
   U_{\mathcal{Y}}\subseteq \overline{\mathcal{Y}}.
\end{equation*}
The set $U_{\mathcal{Z}}$ is defined by the Segre surface $\overline{\mathcal{Z}}_1$, minus eight lines and the hyperplane section at infinity,
\begin{equation}\label{eq:defi_open_F}
  U_{\mathcal{Z}}=\mathcal{Z}_1\setminus \bigcup_{(i,j,k)\in I}L_{i,j,k}^{(1)},\qquad I:=\{1,2\}\times\{3,4\}\times \{5,6\},  
\end{equation}
where we recall the notation for lines on $\mathcal{Z}_1$ in Section \ref{sec:contlines}.
Similarly, $U_{\mathcal{Y}}$ is defined by the Segre surface $\overline{\mathcal{Y}}$, minus eight lines and the hyperplane section at infinity,
\begin{equation*}
U_{\mathcal{Y}}=\mathcal{Y}\setminus \bigcup_{j\in J}L_k^\mathcal{Y},\qquad J:=\{13,14,15,16,21,22,23,24\}, 
\end{equation*}
where we recall the notation for lines on $\overline{\mathcal{Y}}$ introduced in equation \eqref{eq:linesS}.\\
Finally, $U_{\mathcal{G}}$ is simply defined as the intersection
\begin{equation*}
U_{\mathcal{G}}=\mathcal{G}\cap(\mathbb{C}^*\times \mathbb{C}^*\times \mathbb{C}^*),
\end{equation*}
where we recall that $\mathcal{G}$ is defined in Definition \ref{defi:surfaceG}.

These dense open subsets are chosen exactly such that $g_1^{(1)},g_2^{(1)},g_3^{(1)}$ take finite, nonzero values on them. Regarding $U_{\mathcal{G}}$, this is tautological; regarding the two other open subsets, it follows from the following lemma.
\begin{lemma}\label{lem:denseopenbiholo}
    The mapping $g_\mathcal{Z}$ maps $U_\mathcal{Z}$ biholomorphically onto $U_\mathcal{G}$. Similarly, $g_\mathcal{Y}$ maps $U_\mathcal{Y}$ biholomorphically onto $U_\mathcal{G}$.
\end{lemma}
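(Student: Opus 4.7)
The plan is to treat the two assertions separately, in each case constructing an explicit rational inverse to the map. The common strategy is that the Tyurin ratios already recover the homogeneous coordinates on the respective Segre surface up to one overall scalar, which is then uniquely fixed by the affine normalization hyperplane.

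For $g_\mathcal{Z}$, I would first check that the image lies in $U_\mathcal{G}$. Nonvanishing of every $Z_i$, $1 \le i \le 6$, on $U_\mathcal{Z}$ follows because any simultaneous vanishing of a triple $(Z_i,Z_j,Z_k)$ with $(i,j,k)\in I$ is forced by the quadrics $h_3,h_4$ to coincide with one of the removed lines $L_{i,j,k}^{(1)}$, so the Tyurin ratios lie in $\mathbb{C}^*$. The cubic relation \eqref{eq:tyurincubic} holds tautologically on $\mathcal{Z}_1$. The conditions $g_2^{(1)} \ne 1$ and $g_3^{(1)}\ne 1$ on $U_\mathcal{Z}$ follow from a key short calculation: substituting the identities
\[
\rho_3 = \eta_4^{(1)}/\eta_3^{(1)}, \qquad \rho_5 = -\eta_4^{(1)}/\eta_5^{(1)}, \qquad \rho_6 = -\eta_4^{(1)}/\eta_6^{(1)}
\]
(deduced from \eqref{eq:rhos1} together with \eqref{eq:Tk_continuumhat}) into the linear constraint $h_2''$ shows that the locus $\{g_2^{(1)}=1\}$ on $\overline{\mathcal{Z}}_1$ coincides with conic \eqref{eq:conicIIcontinuum}, and analogously $\{g_3^{(1)}=1\}$ coincides with conic \eqref{eq:conicIcontinuum}. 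The inverse then comes from the ansatz
\[
Z_1 = c\,\eta_1^{(1)}g_1 g_2, \ Z_2 = c\,\eta_2^{(1)}g_3/g_1, \ Z_3 = c\,\eta_3^{(1)}g_2 g_3, \ Z_4 = c\,\eta_4^{(1)}, \ Z_5 = c\,\eta_5^{(1)}g_2, \ Z_6 = c\,\eta_6^{(1)}g_3,
\]
for which $h_3=h_4=0$ are identities, $h_1=0$ is the cubic relation \eqref{eq:tyurincubic}, and the $\rho_k$-identities reduce $h_2''=0$ to $c\,\eta_4^{(1)}(g_2-1)(g_3-1)=1$, uniquely solving for $c$ on $U_\mathcal{G}$.

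For $g_\mathcal{Y}$, the formulas \eqref{eq:ytogdescription} exhibit $g_2^{(1)}$ and $g_3^{(1)}$ as Möbius transformations of $Y_2/Y_0$ and $Y_3/Y_0$ respectively, invertible precisely because $g_2,g_3\ne 1$ on $U_\mathcal{G}$. The defining equation $Y_0Y_4 = Y_2 Y_3$ of $\mathcal{Y}$ then determines $Y_4/Y_0$, and either of the two projective linear expressions for $g_1^{(1)}$ in \eqref{eq:ytogdescription} determines $Y_1/Y_0$; their compatibility is, by the derivation in Proposition \ref{prop:obtainingisomorphism}, exactly the cubic relation \eqref{eq:tyurincubic}. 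Well-definedness in the forward direction was already handled by Lemma \ref{lem:analyticmaps}, and the removed lines $L_k^\mathcal{Y}$, $k\in J$, are precisely the affine loci on $\mathcal{Y}$ on which some $g_k^{(1)}$ becomes $0$ or $\infty$, which can be seen directly from the explicit parametrization of $L_k$ in Section \ref{sec:pvicubiclines} together with the blow-down formulas in Section \ref{sec:blowdownsegre}.

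The principal obstacle is the verification that the two inverses land in $U_\mathcal{Z}$ and $U_\mathcal{Y}$ respectively rather than at points on the removed lines or at infinity. For the $\mathcal{Z}$-side this reduces to the $\rho_k$-identities displayed above, after which $c$ is simultaneously finite and nonzero on $U_\mathcal{G}$, and hence all six coordinates $Z_i$ are nonzero. For the $\mathcal{Y}$-side one must trace the preimages of the eight removed lines $L_k^\mathcal{Y}$, $k\in J$, through the Möbius inversion; using the correspondence between the lines on $\overline{\mathcal{Y}}$ and the characteristic values of the Tyurin ratios gives the required case analysis. Once these nonvanishing checks are in place, the two-sided identities $\phi_\mathcal{Z} \circ g_\mathcal{Z} = \mathrm{id}$ and $g_\mathcal{Z} \circ \phi_\mathcal{Z} = \mathrm{id}$ (and similarly for $\mathcal{Y}$) hold by construction, and biholomorphy is immediate since all maps involved are regular rational maps of irreducible complex surfaces.
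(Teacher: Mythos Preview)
Your approach is essentially the paper's: construct explicit rational inverses on both sides and verify they land in the correct open subsets. Your factorisation $h_2'' \;=\; c\,\eta_4^{(1)}(g_2-1)(g_3-1) - 1$ on the $\mathcal{Z}$-side is a genuine streamlining of the paper's argument, since it simultaneously (i) shows $g_2,g_3\neq 1$ on $U_\mathcal{Z}$, (ii) pins down the affine scalar $c$, and (iii) shows the inverse lands in the affine chart with all $z_k$ nonzero, whereas the paper treats these via separate computations (including a resultant to check $Z_0\neq 0$); on the $\mathcal{Y}$-side your appeal to Proposition~\ref{prop:obtainingisomorphism} for the compatibility step is slightly imprecise---what that proposition establishes is that the two $g_1$-expressions agree iff the Jimbo--Fricke relation holds, and the identification of this with \eqref{eq:tyurincubic} requires the separate verification stated after Definition~\ref{defi:surfaceG} (or a direct substitution, as the paper implicitly does).
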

\begin{proof}
We start with the, due to Lemma \ref{lem:analyticmaps}, analytic mapping $g_\mathcal{Z}$. Note that, for any Tyurin ratio ${T}_{ij}^{(1)}$, 
\begin{align*}
    {T}_{ij}^{(1)}=0&\iff Z_i=Z_{\alpha(j)}=0,\text{ and}\\
        {T}_{ij}^{(1)}=\infty&\iff Z_j=Z_{\alpha(i)}=0.
\end{align*}
In particular, if $T^{(1)}(Z)\in\{0,\infty\}$, for some Tyurin ratio $T^{(1)}$, then $Z$ necessarily lies on one of the eight lines $L_{i,j,k}^{(1)}$, $(i,j,k)\in I$. By construction, those lines do not intersect with $U_{\mathcal{Z}}$. Since all three components of $g_\mathcal{Z}$ are Tyurin ratios, it follows that $g_\mathcal{Z}(U_{\mathcal{Z}})\subseteq (\mathbb{C}^*)^3$.

Now, take any point $z\in U_\mathcal{Z}$ and let $g^{(1)}=g_\mathcal{Z}(z)\in (\mathbb{C}^*)^3$, then the cubic equation \eqref{eq:tyurincubic} is satisfied by $g^{(1)}$. To infer that $g^{(1)}\in U_\mathcal{G}$, it remains to be checked that neither $g_2^{(1)}=1$ nor $g_3^{(1)}=1$.

Suppose, on the contrary, that $g_2^{(1)}=1$. Then
\begin{equation}\label{eq:g2is1}
Z_6=\frac{\eta_6^{(1)}}{\eta_3^{(1)}}Z_3,\quad Z_5=\frac{\eta_5^{(1)}}{\eta_4^{(1)}}Z_4.
\end{equation}
We now consider the difference of the first two defining equations, \eqref{qp6:seg_comp_a} and \eqref{qp6:seg_comp_b}, of the Segre surface $\overline{\mathcal{Z}}_1$, given by
\begin{equation*}
(\mu_3-1)Z_3+(\mu_4-1)Z_4+(\mu_5-1)Z_5+(\mu_6-1)Z_6=Z_0.
\end{equation*}
Direct substitution of \eqref{eq:g2is1} and the equation for $\mu_k$, $3\leq k\leq 6$, in \eqref{eq:Tk_identitiescontinuum} leads to
\begin{equation*}
1/\eta_3^{(1)}(\widehat{\eta}_3^{(1)}-\eta_3^{(1)}+\widehat{\eta}_6^{(1)}-\eta_6^{(1)})Z_3+1/\eta_4^{(1)}(\widehat{\eta}_4^{(1)}-\eta_4^{(1)}+\widehat{\eta}_5^{(1)}-\eta_5^{(1)})Z_4=Z_0.
\end{equation*}
Now, both the coefficient of $Z_3$ and of $Z_4$ in the above expression are zero, due to equation \eqref{eq:Tk_continuumhat}, and it follows that $Z_0=0$. So $z$ lies in the hyperplane section at infinity of $\overline{\mathcal{Z}}_1$ and in particular $z\notin U_{\mathcal{Z}}$. Similarly, it is shown that $g_3^{(1)}\neq 1$ on $U_{\mathcal{Z}}$. It follows that
\begin{equation*}
    g_\mathcal{Z}(U_{\mathcal{Z}})\subseteq U_\mathcal{G}.
\end{equation*}

Now, $g_\mathcal{Z}$ has a rational inverse on $\overline{\mathcal{G}}$,
 given by
 \begin{equation}\label{eq:gzinverse}
\begin{aligned}
  Z_1&=\eta_{1}^{(1)}g_2^{(1)}g_1^{(1)}, & Z_3&=\eta_3^{(1)} g_2^{(1)}g_3^{(1)} & Z_5&=\eta_5^{(1)} g_2^{(1)},\\
  Z_2&=\eta_{2}^{(1)}g_3^{(1)}/g_1^{(1)}, & Z_4&=\eta_4^{(1)},  & Z_6&=\eta_6^{(1)}g_3^{(1)},  
\end{aligned}
 \end{equation}
and
\begin{equation}\label{eq:Z0defi}
Z_0= \widehat{\eta}_{1}^{(1)}g_2^{(1)} g_1^{(1)}+\widehat{\eta}_{2}^{(1)}g_3^{(1)}/g_1^{(1)}+\widehat{\eta}_{3}^{(1)}g_2^{(1)}g_3^{(1)}+\widehat{\eta}_{4}^{(1)}+\widehat{\eta}_{5}^{(1)}g_2^{(1)}+\widehat{\eta}_{6}^{(1)}g_3^{(1)}.
\end{equation}
Note that this mapping is analytic on $U_\mathcal{G}$ and the image of $U_\mathcal{G}$ is a subset of $\overline{\mathcal{Z}}_1$. Furthermore, note that none of the $Z_k$, $1\leq k\leq 6$, can equal zero on $U_\mathcal{G}$, since the coordinates $g_k^{(1)}$, $1\leq k\leq 3$, are by definition finite and nonzero. In particular, the image of this mapping is disjoint from any of the lines $L_{i,j,k}^{(1)}$, $(i,j,k)\in I$.

It remains to be checked that the image of the mapping is disjoint with the hyperplane section at infinity of $\overline{\mathcal{Z}}_1$, that is, the right-hand side of equation \eqref{eq:Z0defi} does not vanish on $U_\mathcal{G}$. Take a $g^{(1)}\in U_\mathcal{G}$, then $g_2^{(1)},g_3^{(1)}\neq 1$. The resultant, with respect to $g_1^{(1)}$, of equation \eqref{eq:tyurincubic} and the right-hand side of equation \eqref{eq:Z0defi}, each after multiplication by $g_1^{(1)}$, is given by
\begin{equation*}
   \tfrac{1}{256}g_2^{(1)}g_3^{(1)}(g_2^{(1)}-1)^2(g_3^{(1)}-1)^2(\upsilon_0-1)^2(\upsilon_0^{-1}-1)^2(\upsilon_t-\upsilon_t^{-1})^2(\upsilon_1-\upsilon_1^{-1})^2.
\end{equation*}
In particular, it is nonzero and thus the right-hand side of equation \eqref{eq:Z0defi} cannot vanish on $U_\mathcal{G}$.
It follows that the rational inverse of $g_\mathcal{Z}$, given above, is analytic and maps $U_\mathcal{G}$ into $U_{\mathcal{Z}}$. We conclude that $g_\mathcal{Z}$ maps $U_{\mathcal{Z}}$ biholomorphically onto $U_{\mathcal{G}}$.

Next, we consider the analytic mapping $g_\mathcal{Y}$ on $\overline{\mathcal{Y}}$. Its three components satisfy the cubic relation \eqref{eq:tyurincubic}. From the explicit formula for $g_2^{(1)}$, it follows that $g_2^{(1)}=0$ if and only if 
\begin{equation}\label{eq:hyperplaneg20}
    Y_2-Y_0\left(\frac{\upsilon_t}{\upsilon_\infty}+\frac{\upsilon_\infty}{\upsilon_t}\right)=0.
\end{equation}
On the other hand, the intersection of the Segre surface $\overline{\mathcal{Y}}$ with the hyperplane \eqref{eq:hyperplaneg20}, is given by the union of the two lines $L_{15}$ and $L_{16}$. In this way, we obtain the following equivalences,
\begin{subequations}\label{eq:equivalencesg2g3}
\begin{align}
    &g_2^{(1)}=0 \iff Y\in L_{15}\cup L_{16}, && &g_2^{(1)}=\infty \iff Y\in L_{13}\cup L_{14},\\
    &g_3^{(1)}=0 \iff Y\in L_{23}\cup L_{24}, && &g_3^{(1)}=\infty \iff Y\in L_{21}\cup L_{22}.
\end{align}
\end{subequations}
Since, by construction, the open set $U_{\mathcal{Y}}$ is disjoint with the eight lines on the right-hand sides of the above equivalences, $g_2^{(1)}$ and $g_3^{(1)}$ are finite and nonzero on $U_{\mathcal{Y}}$. Consequently, by the cubic relation \eqref{eq:tyurincubic}, $g_1^{(1)}$ is also finite and nonzero on $U_{\mathcal{Y}}$. Furthermore, observe that $g_2^{(1)}$ and $g_3^{(1)}$ cannot attain the value $1$ in the affine part of the Segre surface, since this requires $y_2=\infty$ or $y_3=\infty$. All in all, it follows that
\begin{equation*}
    g_\mathcal{Y}(U_{\mathcal{Y}})\subseteq U_\mathcal{G}.
\end{equation*}
Now, $g_\mathcal{Y}$ has a rational inverse on $\overline{\mathcal{G}}$, given by
\begin{align*}
    y_2=&\frac{(\upsilon_t\upsilon_\infty+\upsilon_t^{-1}\upsilon_\infty^{-1})g_2^{(1)}-(\upsilon_t\upsilon_\infty^{-1}+\upsilon_t^{-1}\upsilon_\infty)}{g_2^{(1)}-1},\\
    y_3=&\frac{(\upsilon_1\upsilon_\infty+\upsilon_1^{-1}\upsilon_\infty^{-1})g_3^{(1)}-(\upsilon_1\upsilon_\infty^{-1}+\upsilon_1^{-1}\upsilon_\infty)}{g_3^{(1)}-1},\\
    y_4=&y_2^{(1)}y_3^{(1)},
\end{align*}
and
\begin{align*}
        y_1=&(\upsilon_t-\upsilon_t^{-1})(\upsilon_1-\upsilon_1^{-1})(\upsilon_\infty-\upsilon_\infty^{-1})\frac{g_1^{(1)}g_2^{(1)}}{(g_2^{(1)}-1)(g_3^{(1)}-1)}\\
    &+\upsilon_\infty(\upsilon_0+\upsilon_0^{-1})-\upsilon_\infty(\upsilon_\infty-\upsilon_\infty^{-1})\frac{(\upsilon_t g_2^{(1)} -\upsilon_t^{-1})(\upsilon_1 g_3^{(1)}-\upsilon_1^{-1})}{(g_2^{(1)}-1)(g_3^{(1)}-1)}.
\end{align*}
Note that this mapping is clearly analytic on $U_\mathcal{G}$ and the corresponding image lies in $U_\mathcal{Y}$. We conclude that $g_\mathcal{Y}$ maps $U_\mathcal{Y}$ biholomorphically onto $U_\mathcal{G}$ and the lemma follows.
\end{proof}

We now have all the ingredients to prove Theorem \ref{thm:isomorphism}.
\begin{proof}[Proof of Theorem \ref{thm:isomorphism}]
Consider the bi-rational mapping
\begin{equation}\label{eq:bi-taional}
   \begin{array}{cccc}   \Phi:&\overline{\mathcal{Y}}&\dashrightarrow &\overline{\mathcal{Z}}_1\\
    & Y&\mapsto& Z\\
     \end{array}
\end{equation} 
obtained by composing $g_\mathcal{Y}$, defined in equation \eqref{eq:gydef}, with the rational inverse of $g_\mathcal{Z}$, given in equation \eqref{eq:gzinverse}. Due to Lemma \ref{lem:denseopenbiholo}, $\Phi$ maps the dense open subset $U_\mathcal{Y}\subseteq \overline{\mathcal{Y}}$ biholomorphically onto the dense open subset
$U_\mathcal{Z}\subseteq \overline{\mathcal{Z}}_1$. We proceed to check that $\Phi$ is a biholomorphism between the entire domain and co-domain. 

Now, the complements $\overline{\mathcal{Y}}\setminus U_\mathcal{Y}$ and $\overline{\mathcal{Z}}_1\setminus U_\mathcal{Z}$ each consist of $8$ lines and $2$ conics.
We first consider the conics on either side. These conics lie in the hyperplane sections $\overline{\mathcal{Y}}\setminus \mathcal{Y}$ and $\overline{\mathcal{Z}}_1\setminus \mathcal{Z}_1$ of the domain and co-domain respectively.
Recall that $\overline{\mathcal{Y}}\setminus \mathcal{Y}$ factors into the two conics described in equations \eqref{eq:conics}. On the first conic, equation \eqref{eq:conic1}, we have $Y_0=Y_2=0$ and the components of $g_\mathcal{Y}$ read
\begin{align*}
g_1&=\upsilon_\infty\frac{Y_4+(1+\upsilon_1\upsilon_\infty^{-1})Y_1+(\upsilon_t\upsilon_1-\upsilon_t^{-1}\upsilon_\infty^{-1})Y_3}{Y_4+(1+\upsilon_1\hspace{0.4mm}\upsilon_\infty\hspace{0.4mm})Y_1+(\upsilon_t\upsilon_1-\upsilon_t^{-1}\hspace{0.4mm}\upsilon_\infty\hspace{0.4mm})Y_3},\\
g_2&=\frac{Y_4-(\upsilon_t\upsilon_\infty^{-1}+\upsilon_t^{-1}\hspace{0.4mm}\upsilon_\infty\hspace{0.4mm})Y_3}{Y_4-(\upsilon_t\hspace{0.4mm}\upsilon_\infty\hspace{0.4mm}+\upsilon_t^{-1}\upsilon_\infty^{-1})Y_3},\\
g_3&=1.
\end{align*}
Applying the rational inverse of $g_\mathcal{Z}$ to this, we obtain
\begin{align*}
    Z_0&=0,\\
    Z_1/\eta_1^{(1)}&=\hspace{0.4mm} \upsilon_\infty \hspace{0.4mm} Y_4+(\upsilon_\infty-\upsilon_\infty^{-1})Y_1-(\upsilon_t+\upsilon_t^{-1})Y_3,\\
    Z_2/\eta_2^{(1)}&=\upsilon_\infty^{-1} Y_4+(\upsilon_\infty^{-1}-\upsilon_\infty)Y_1-(\upsilon_t+\upsilon_t^{-1})Y_3,\\
    Z_3/\eta_3^{(1)}&=Z_5/\eta_5^{(1)}=Y_4-(\upsilon_t\upsilon_\infty^{-1}+\upsilon_t^{-1}\upsilon_\infty)Y_3,\\
    Z_4/\eta_4^{(1)}&=Z_6/\eta_6^{(1)}=Y_4-(\upsilon_t \upsilon_\infty+\upsilon_t^{-1}\upsilon_\infty^{-1})Y_3. 
\end{align*}
This defines an isomorphism between the conics \eqref{eq:conic1} and \eqref{eq:conicIcontinuum} in $\overline{\mathcal{Y}}\setminus \mathcal{Y}$ and $\overline{\mathcal{Z}}_1\setminus \mathcal{Z}_1$ respectively. In particular, $\Phi$ maps the conic \eqref{eq:conic1} onto the conic \eqref{eq:conicIcontinuum}.

Similarly,  on the second conic, defined by equation \eqref{eq:conic2}, we have $Y_0=Y_3=0$ and the components of $g_\mathcal{Y}$ read
\begin{align*}
g_1&=\upsilon_\infty\frac{Y_4+(1+\upsilon_1\upsilon_\infty^{-1})Y_1+(1-\upsilon_1 \upsilon_\infty^{-1})Y_2}{Y_4+(1+\upsilon_1\hspace{0.4mm}\upsilon_\infty\hspace{0.4mm})Y_1+(1-\upsilon_1\hspace{0.4mm}\upsilon_\infty\hspace{0.4mm})Y_2},\\
g_2&=1,\\
g_3&=\frac{Y_4-(\upsilon_1\upsilon_\infty^{-1}+\upsilon_1^{-1}\hspace{0.4mm}\upsilon_\infty\hspace{0.4mm})Y_2}{Y_4-(\upsilon_1\hspace{0.4mm}\upsilon_\infty\hspace{0.4mm}+\upsilon_1^{-1}\upsilon_\infty^{-1})Y_2},
\end{align*}
Applying the rational inverse of $g_\mathcal{Z}$ to this, we obtain
\begin{align*}
    Z_0&=0,\\
    Z_1/\eta_1^{(1)}&=\hspace{0.4mm}\upsilon_\infty\hspace{0.4mm} Y_4+(\upsilon_\infty-\upsilon_\infty^{-1})Y_1-(\upsilon_1+\upsilon_1^{-1})Y_2,\\
    Z_2/\eta_2^{(1)}&=\upsilon_\infty^{-1} Y_4+(\upsilon_\infty^{-1}-\upsilon_\infty)Y_1-(\upsilon_1+\upsilon_1^{-1})Y_2,\\
    Z_3/\eta_3^{(1)}&=Z_6/\eta_6^{(1)}=Y_4-(\upsilon_1\upsilon_\infty^{-1}+\upsilon_1^{-1}\upsilon_\infty)Y_2,\\
    Z_4/\eta_4^{(1)}&=Z_5/\eta_5^{(1)}=Y_4-(\upsilon_1 \upsilon_\infty+\upsilon_1^{-1}\upsilon_\infty^{-1})Y_2. 
\end{align*}
This defines an isomorphism between the conics \eqref{eq:conic1} and \eqref{eq:conicIIcontinuum} in $\overline{\mathcal{Y}}\setminus \mathcal{Y}$ and $\overline{\mathcal{Z}}_1\setminus \mathcal{Z}_1$ respectively. In particular, $\Phi$ maps the conic \eqref{eq:conic2} onto the conic \eqref{eq:conicIIcontinuum}.

Next, we consider the remaining lines on either side.
By the explicit equation for $g_1^{(1)}$ on $\overline{\mathcal{Z}}_1$ in equations \eqref{eq:ztogdescription}, we have
\begin{align*}
    g_1^{(1)}=0&\iff Z_1=Z_6=0\iff Z\in L_{1,3,6}^{(1)}\cup L_{1,4,6}^{(1)},\text{ and}\\
        g_1^{(1)}=\infty&\iff Z_2=Z_5=0\iff Z\in L_{2,3,5}^{(1)}\cup L_{2,4,5}^{(1)}.
\end{align*}
Similarly, for $g_2^{(1)}$ and $g_3^{(1)}$, we find
\begin{align*}
    g_2^{(1)}=0&\iff Z_3=Z_5=0\iff Z\in L_{1,3,5}^{(1)}\cup L_{2,3,5}^{(1)},\text{ and}\\
        g_2^{(1)}=\infty&\iff Z_4=Z_6=0\iff Z\in L_{1,4,6}^{(1)}\cup L_{2,4,6}^{(1)},\text{ and}\\
    g_3^{(1)}=0&\iff Z_3=Z_6=0\iff Z\in L_{1,3,6}^{(1)}\cup L_{2,3,6}^{(1)},\text{ and}\\
        g_3^{(1)}=\infty&\iff Z_4=Z_5=0\iff Z\in L_{1,4,5}^{(1)}\cup L_{2,4,5}^{(1)}.
\end{align*}
Comparing these equivalences with those in equation \eqref{eq:equivalencesg2g3}, we find that
\begin{align*}
    &\Phi(L_{15}^\mathcal{Y}\cup L_{16}^\mathcal{Y})=L_{1,3,5}^{(1)}\cup L_{2,3,5}^{(1)}, &&&\Phi(L_{13}^\mathcal{Y}\cup L_{14}^\mathcal{Y})=L_{1,4,6}^{(1)}\cup L_{2,4,6}^{(1)}, \\
    &\Phi(L_{23}^\mathcal{Y}\cup L_{24}^\mathcal{Y})=L_{1,3,6}^{(1)}\cup L_{2,3,6}^{(1)}, &&
    &\Phi(L_{21}^\mathcal{Y}\cup L_{22}^\mathcal{Y})=L_{1,4,5}^{(1)}\cup L_{2,4,5}^{(1)}.
\end{align*}
To disentangle which line gets sent to which, see Remark \ref{remark:lines_correspondence}, and really check that $\Phi$ maps these lines isomorphically to one another, we require some workable formulas for the mapping $\Phi$. To obtain these, it is convenient to first compute some formulas for the affine $z$-variables in terms of the affine $x$-variables on $\mathcal{X}$.

Upon substituting the formulas for $g_k^{(1)}$, $1\leq k\leq 3$, in terms of the $x$-variables in Proposition \ref{prop:obtainingisomorphism}, into equation \eqref{eq:Z0defi} for $Z_0$, and simplifying modulo the Jimbo-Fricke cubic, we get
\begin{equation*}
    Z_0=\frac{\delta}{4\upsilon_0(x_2-(\upsilon_t\upsilon_\infty+\frac{1}{\upsilon_t\upsilon_\infty}))(x_3-(\upsilon_1\upsilon_\infty+\frac{1}{\upsilon_1\upsilon_\infty}))},
\end{equation*}
where the constant $\delta$ is as defined in Theorem \ref{thm:isomorphism}.

As a consequence, substituting the formulas for $g_k^{(1)}$, $1\leq k\leq 3$, in terms of the $x$-variables in Proposition \ref{prop:obtainingisomorphism}, into equations \eqref{eq:gzinverse} for $Z_k$, $1\leq k\leq 6$,  and
computing the quotient $z_k=Z_k/Z_0$, immediately gives us the formulas for the $z_k$, $3\leq k\leq 6$, given in Theorem \ref{thm:isomorphism}.  On the other hand, the resulting formulas for $z_1$ and $z_2$ are not (yet) in polynomial form, like in Theorem \ref{thm:isomorphism}. Regardless,
we obtain the following corresponding formulas with regards to the homogeneous $Y$-variables on $\overline{\mathcal{Y}}$ and $Z$-variables on $\overline{\mathcal{Z}}_1$, for the bi-rational mapping $\Phi$,
\begingroup
\allowdisplaybreaks
\begin{align}
Z_0=&Y_0,\label{YtoZequations}\\
Z_1=&\gamma^{-1}\Bigl[Y_4-\Bigl(\upsilon_1\upsilon_\infty+\frac{1}{\upsilon_1\upsilon_\infty}\Bigr) Y_2-\Bigl(\frac{\upsilon_t}{\upsilon_\infty}+\frac{\upsilon_\infty}{\upsilon_t}\Bigr)Y_3\nonumber\\
&+
\Bigl(\frac{\upsilon_t}{\upsilon_\infty}+\frac{\upsilon_\infty}{\upsilon_t}\Bigr)\Bigl(\upsilon_1\upsilon_\infty+\frac{1}{\upsilon_1\upsilon_\infty}\Bigr)Y_0
\Bigr]g_1^{(1)},\nonumber\\
    Z_2=&\gamma^{-1}\Bigl[
Y_4-\Bigl(\frac{\upsilon_1}{\upsilon_\infty}+\frac{\upsilon_\infty}{\upsilon_1}\Bigr)Y_2-\Bigl(\upsilon_t\upsilon_\infty+\frac{1}{\upsilon_t\upsilon_\infty}\Bigr)Y_3\nonumber\\
&+
\Bigl(\upsilon_t\upsilon_\infty+\frac{1}{\upsilon_t\upsilon_\infty}\Bigr)\Bigl(\frac{\upsilon_1}{\upsilon_\infty}+\frac{\upsilon_\infty}{\upsilon_1}\Bigr)Y_0\Bigr]1/g_1^{(1)},\nonumber\\
Z_3=&-\delta^{-1}\Bigl(\upsilon_0-\upsilon_t\upsilon_1\upsilon_\infty\Bigr)\Bigl(\upsilon_0-\frac{1}{\upsilon_t\upsilon_1\upsilon_\infty}\Bigr)\Bigl[
Y_4
-\Bigl(\frac{\upsilon_1}{\upsilon_\infty}+\frac{\upsilon_\infty}{\upsilon_1}\Bigr)Y_2\nonumber\\
&-\Bigl(\frac{\upsilon_t}{\upsilon_\infty}+\frac{\upsilon_\infty}{\upsilon_t}\Bigr)Y_3
+\Bigl(\frac{\upsilon_t}{\upsilon_\infty}+\frac{\upsilon_\infty}{\upsilon_t}\Bigr)
\Bigl(\frac{\upsilon_1}{\upsilon_\infty}+\frac{\upsilon_\infty}{\upsilon_1}\Bigr)
Y_0
\Bigr],\nonumber\\
Z_4=&-\delta^{-1}\Bigl(\upsilon_0-\frac{\upsilon_t\upsilon_1}{\upsilon_\infty}\Bigr)\Bigl(\upsilon_0-\frac{\upsilon_\infty}{\upsilon_t\upsilon_1}\Bigr)
\Bigl[Y_4-
\Bigl(\upsilon_1\upsilon_\infty+\frac{1}{\upsilon_1\upsilon_\infty}\Bigr)Y_2\nonumber\\
&-\Bigl(\upsilon_t\upsilon_\infty+\frac{1}{\upsilon_t\upsilon_\infty}\Bigr)Y_3
+\Bigl(\upsilon_t\upsilon_\infty+\frac{1}{\upsilon_t\upsilon_\infty}\Bigr)\Bigl(\upsilon_1\upsilon_\infty+\frac{1}{\upsilon_1\upsilon_\infty}\Bigr)Y_0
\Bigr],\nonumber\\
Z_5=&+\delta^{-1}\Bigl(\upsilon_0-\frac{\upsilon_t\upsilon_\infty}{\upsilon_1}\Bigr)\Bigl(\upsilon_0-\frac{\upsilon_1}{\upsilon_t \upsilon_\infty}\Bigr)
\Bigl[Y_4-\Bigl(\upsilon_1\upsilon_\infty+\frac{1}{\upsilon_1\upsilon_\infty}\Bigr) Y_2\nonumber\\
&-\Bigl(\frac{\upsilon_t}{\upsilon_\infty}+\frac{\upsilon_\infty}{\upsilon_t}\Bigr)Y_3+
\Bigl(\frac{\upsilon_t}{\upsilon_\infty}+\frac{\upsilon_\infty}{\upsilon_t}\Bigr)\Bigl(\upsilon_1\upsilon_\infty+\frac{1}{\upsilon_1\upsilon_\infty}\Bigr)Y_0
\Bigr],\nonumber\\
Z_6=&+\delta^{-1}\Bigl(\upsilon_0-\frac{\upsilon_1\upsilon_\infty}{\upsilon_t}\Bigr)\Bigl(\upsilon_0-\frac{\upsilon_t}{\upsilon_1\upsilon_\infty}\Bigr) \Bigl[
Y_4-\Bigl(\frac{\upsilon_1}{\upsilon_\infty}+\frac{\upsilon_\infty}{\upsilon_1}\Bigr)Y_2\nonumber\\
&-\Bigl(\upsilon_t\upsilon_\infty+\frac{1}{\upsilon_t\upsilon_\infty}\Bigr)Y_3+
\Bigl(\upsilon_t\upsilon_\infty+\frac{1}{\upsilon_t\upsilon_\infty}\Bigr)\Bigl(\frac{\upsilon_1}{\upsilon_\infty}+\frac{\upsilon_\infty}{\upsilon_1}\Bigr)Y_0\Bigr].\nonumber
\end{align}
\endgroup
where $\gamma$ is as defined in Theorem \ref{thm:isomorphism}.  

Recalling the explicit formulas for $g_1^{(1)}$ with respect to $Y$ in equations \eqref{eq:ytogdescription}, these equations, as written, are not projective linear in $Y$. Nonetheless, they are simple enough for explicit computations.
In particular, by a local analysis around each line on $\overline{\mathcal{Y}}$, we find that $\Phi$ maps it projective linearly onto
a corresponding line in $\overline{\mathcal{Z}}_1$, as detailed in Remark \ref{remark:lines_correspondence}.

For example, on the line $L_{15}^Y$, parametrised by
\begin{equation*}
    Y_1=\Bigl(\upsilon_0\nu_\infty+\frac{\nu_1}{\upsilon_t}\Bigr)Y_0-\frac{\upsilon_\infty}{\upsilon_t} Y_3,\quad Y_2=\Bigl(\frac{\upsilon_t}{\upsilon_\infty}-\frac{\upsilon_\infty}{\upsilon_t}\Bigr)Y_0,\quad Y_4=\Bigl(\frac{\upsilon_t}{\upsilon_\infty}-\frac{\upsilon_\infty}{\upsilon_t}\Bigr)Y_3,
\end{equation*}
with $[Y_0:Y_3]\in\mathbb{CP}^1$,
we find
\begin{align*}
Z_0=&Y_0,\\
Z_1=&Z_3=Z_5=0,\\
Z_2=&+\gamma^{-1}\Bigl[\Bigl(\frac{\upsilon_\infty}{\upsilon_t}-\frac{\upsilon_t}{\upsilon_\infty}\Bigr)Y_3-Y_0\Bigr((\upsilon_0+\upsilon_0^{-1})(\upsilon_\infty-\upsilon_\infty^{-1})-(\upsilon_t-\upsilon_t)(\upsilon_1+1/\upsilon_1)\Bigr)\Bigr],\\
Z_4=&-\delta^{-1}\Bigl(\upsilon_0-\frac{\upsilon_t\upsilon_1}{\upsilon_\infty}\Bigr)\Bigl(\upsilon_0-\frac{\upsilon_\infty}{\upsilon_t\upsilon_1}\Bigr)
(\upsilon_t-\upsilon_t^{-1})(\upsilon_\infty-\upsilon_\infty^{-1})\Bigl(Y_3-Y_0\Bigl(\upsilon_1\upsilon_\infty+\frac{1}{\upsilon_1\upsilon_\infty}\Bigr)\Bigr),\\
Z_6=&+\delta^{-1}\Bigl(\upsilon_0-\frac{\upsilon_1\upsilon_\infty}{\upsilon_t}\Bigr)\Bigl(\upsilon_0-\frac{\upsilon_t}{\upsilon_1\upsilon_\infty}\Bigr)(\upsilon_t-\upsilon_t^{-1})(\upsilon_\infty-\upsilon_\infty^{-1})\Bigl(Y_3-Y_0\Bigl(\frac{\upsilon_1}{\upsilon_\infty}+\frac{\upsilon_\infty}{\upsilon_1}\Bigr)\Bigr),
\end{align*}
so that $\Phi$ maps $L_{15}^Y$ projective linearly onto
 $L_{1,3,5}^{(1)}$.

All in all, we know that $\Phi$ maps the dense open subset $U_\mathcal{Y}$ biholomorphically onto $U_\mathcal{Z}$ and it maps each irreducible component of $\overline{\mathcal{Y}}\setminus U_\mathcal{Y}$ onto a corresponding irreducible component of $\overline{\mathcal{Z}}_1\setminus U_\mathcal{Z}$. To conclude that $\Phi$ is an isomorphism, we may proceed in two ways.

The first, is to check that $\Phi$ is locally biholomorphic around each point in the complement of $U_\mathcal{Z}$. From this it follows that  $\Phi$ is a locally biholomorpic bi-rational mapping and thus an isomorphism.

Alternatively, one can check that $\Phi$ respects the intersections between the different irreducible components in $\overline{\mathcal{Y}}\setminus U_\mathcal{Y}$ and $\overline{\mathcal{Z}}_1\setminus U_\mathcal{Z}$, as follows by comparing the Clebsch graphs in Figures \ref{fig:lines_intersection} and \ref{fig:intersectiongraph} and equations \eqref{eq:linescorrespondence}, noting in particular that all the ForestGreen coloured lines in Figure  \ref{fig:lines_intersection} are mapped to the lines $L_k^\mathcal{Y}$, $9\leq k\leq 16$, and all the purple coloured lines are mapped to the lines $L_k^\mathcal{Y}$, $17\leq k\leq 24$, see Remark \ref{remark:conics_correspondence}. This way, we can check that $\Phi$ is a bi-rational bijection between $\overline{\mathcal{Y}}$ and $\overline{\mathcal{Z}}_1$. Since $\overline{\mathcal{Y}}$ and $\overline{\mathcal{Z}}_1$ are smooth and thus normal and irreducible, it follows from Zariski's main theorem that $\Phi$ is an isomorphism.

Either way yields the required result. We now apply Lemma \ref{lem:segre_isomorpism}, which shows that
 $\Phi$ must extend to a unique (rank four) projective linear map between the ambient spaces $\mathbb{P}^4$ and $\mathbb{P}^6$. Furthermore, since
 $\Phi$ maps the hyperplane sections at infinity to one another, it follows that $\Phi$ also restricts to an affine equivalence $\Phi_{\mathcal{Y}}:=\Phi|_{\mathcal{Y}}$ between the affine Segre surfaces $\mathcal{Y}$ and $\mathcal{Z}_1$,
 \begin{equation*}
 \Phi_{\mathcal{Y}}:\mathcal{Y}\rightarrow \mathcal{Z}_1.
 \end{equation*}

 Furthermore, since the blow-down mapping $\pi$, defined in \eqref{eq:blowdownpi}, is an isomorphism between $\mathcal{X}$ and $\mathcal{Y}$ as affine varieties when restricted to $\mathcal{X}$, the map
 \begin{equation*}
     \Phi_{\mathcal{X}}=\Phi_{\mathcal{Y}}\circ \pi|_{\mathcal{X}},
 \end{equation*}
is an isomorphism of affine varieties,
\begin{equation*}
   \begin{array}{cccc}   \Phi_\mathcal{X}:&\mathcal{X}&\rightarrow &\mathcal{Z}_1.
     \end{array}
\end{equation*} 

What is left, is to derive the explicit formulas for $\Phi_\mathcal{X}$ in the theorem. Since $\Phi_{\mathcal{Y}}$ is an affine equivalence, we know that it can be written as an affine linear map as in equation \eqref{eq:affine_equivalence_pvi}. Therefore, $\Phi_{\mathcal{X}}$
can be written as
\begin{equation}\label{eq:phiXexplicit}
 z_k = \xi_{0k}+\xi_{1k} \;x_1+\xi_{2k}\; x_2+\xi_{3k} \;x_3+\xi_{4k}\; x_2\;x_3,
\qquad 1\leq k\leq 6,   
\end{equation}
for some coefficients $\xi_{jk}$, $0\leq j\leq 4$, $1\leq k\leq 6$.

The formulas for $Z_k$, $k=0,3,4,5,6$, in equations \eqref{YtoZequations}, immediately give us the formulas for $z_k$, $3\leq k\leq 6$, in terms of the affine $x$-variables in the theorem, using
\begin{equation*}
    Y_1=x_1 Y_0,\quad Y_2=x_2 Y_0,\quad Y_3=x_3 Y_0,\quad Y_4=x_2x_3Y_0.
\end{equation*}
Note, in particular, that these formulas are of the form \eqref{eq:phiXexplicit}.

Finally, we compute the coefficients in \eqref{eq:phiXexplicit} for $k=1,2$ to obtain the formulas for $z_1$ and $z_2$ in Theorem \ref{thm:isomorphism}. We do this by computing the $z_1$ and $z_2$ entries of the images under $\Phi_{\mathcal{X}}$ of five explicit points on the Jimbo-Fricke cubic,
\begin{align*}
p_1:& & x_1&=\upsilon_0\upsilon_\infty+\upsilon_0^{-1}\upsilon_\infty^{-1}, & x_2&=2, & x_3&=-2\upsilon_0\upsilon_\infty+\upsilon_0\nu_t+\upsilon_\infty\nu_1,\\
p_2:& & x_1&=\upsilon_0\upsilon_\infty^{-1}+\upsilon_0^{-1}\upsilon_\infty, & x_2&=2, & x_3&=-2\upsilon_0\upsilon_\infty^{-1}+\upsilon_0\nu_t+\upsilon_\infty^{-1}\nu_1,\\
p_3:& &  x_2&=\upsilon_0\upsilon_1+\upsilon_0^{-1}\upsilon_1^{-1}, & x_1&=2, & x_3&=-2\upsilon_0\upsilon_1+\upsilon_0\nu_t+\upsilon_1\nu_\infty,\\
p_4:& &  x_3&=\upsilon_0\upsilon_t^{-1}+\upsilon_0^{-1}\upsilon_t, & x_1&=2, & x_2&=-2\upsilon_0^{-1}\upsilon_t+\upsilon_0^{-1}\nu_1+\upsilon_t\nu_\infty,\\
p_5:& &  x_3&=\upsilon_0\upsilon_t+\upsilon_0^{-1}\upsilon_t^{-1}, & x_1&=2, & x_2&=-2\upsilon_0 \upsilon_t+\upsilon_0\nu_1+\upsilon_t\nu_\infty.
\end{align*}
The corresponding values of $z_1$ and $z_2$ under $\Phi_{\mathcal{X}}$ of these points are as follows,
\begin{align*}
z_1(p_1)&=-\frac{(2\upsilon_0\upsilon_\infty-\upsilon_0\nu_t-\upsilon_1(\upsilon_\infty-\upsilon_\infty^{-1}))(2\upsilon_0\upsilon_\infty-\upsilon_0\nu_t-\upsilon_1^{-1}(\upsilon_\infty-\upsilon_\infty^{-1}))}{\upsilon_0\gamma},\\  
z_2(p_1)&=-\frac{\upsilon_0(\upsilon_t-\upsilon_\infty)^2(\upsilon_t\upsilon_\infty-1)^2}{ \upsilon_t^2\upsilon_\infty^2\gamma},\\    
z_1(p_2)&=-\frac{(2\upsilon_0\upsilon_\infty-\upsilon_1\nu_t-\upsilon_0(\upsilon_\infty-\upsilon_\infty^{-1}))(2\upsilon_1\upsilon_\infty^{-1}-\upsilon_1^{-1}\nu_t-\upsilon_0(\upsilon_\infty-\upsilon_\infty^{-1}))}{\upsilon_0\gamma},\\  
z_2(p_2)&=-\frac{(\upsilon_t-\upsilon_\infty)^2(\upsilon_t\upsilon_\infty-1)^2}{\upsilon_0 \upsilon_t^2\upsilon_\infty^2\gamma},\\    
z_1(p_3)&=-\frac{(\upsilon_0\upsilon_t\upsilon_1\upsilon_\infty-1)(\upsilon_0\upsilon_1\upsilon_\infty-\upsilon_t)(2\upsilon_0\upsilon_1-\upsilon_0\nu_t-\upsilon_\infty(\upsilon_1-\upsilon_1^{-1}))}{\upsilon_0\upsilon_t\upsilon_1\upsilon_\infty \gamma},\\
z_2(p_3)&=-\frac{(\upsilon_0\upsilon_t\upsilon_1-\upsilon_\infty)(\upsilon_0\upsilon_1-\upsilon_t\upsilon_\infty)(2\upsilon_0\upsilon_1-\upsilon_0\nu_t-\upsilon_\infty^{-1}(\upsilon_1-\upsilon_1^{-1}))}{\upsilon_0\upsilon_t\upsilon_1\upsilon_\infty \gamma},\\
z_1(p_4)&=-\frac{(\upsilon_0\upsilon_1-\upsilon_t\upsilon_\infty)(\upsilon_0-\upsilon_t\upsilon_1\upsilon_\infty)(2\upsilon_0^{-1}\upsilon_t-\upsilon_0^{-1}\nu_1-\upsilon_\infty(\upsilon_t-\upsilon_t^{-1}))}{\upsilon_0\upsilon_t\upsilon_1\upsilon_\infty \gamma},\\
z_2(p_4)&=-\frac{(\upsilon_0\upsilon_1\upsilon_\infty-\upsilon_t)(\upsilon_0\upsilon_\infty-\upsilon_t\upsilon_1)(2\upsilon_0^{-1}\upsilon_t-\upsilon_0^{-1}\nu_1-\upsilon_\infty^{-1}(\upsilon_t-\upsilon_t^{-1}))}{\upsilon_0\upsilon_t\upsilon_1\upsilon_\infty \gamma},\\
z_1(p_5)&=-\frac{(\upsilon_0\upsilon_t\upsilon_1\upsilon_\infty-1)(\upsilon_0\upsilon_t\upsilon_\infty-\upsilon_1)(2\upsilon_0\upsilon_t-\upsilon_0\nu_1-\upsilon_\infty(\upsilon_t-\upsilon_t^{-1}))}{\upsilon_0\upsilon_t\upsilon_1\upsilon_\infty \gamma},\\
z_2(p_5)&=-\frac{(\upsilon_0\upsilon_t\upsilon_1-\upsilon_\infty)(\upsilon_0\upsilon_t-\upsilon_1\upsilon_\infty)(2\upsilon_0 \upsilon_t-\upsilon_0\nu_1-\upsilon_\infty^{-1}(\upsilon_t-\upsilon_t^{-1}))}{\upsilon_0\upsilon_t\upsilon_1\upsilon_\infty \gamma}.
\end{align*}

 By plugging these values into equations \eqref{eq:phiXexplicit}, we get two sets of five equations among five unknown coefficients, which each have a unique solution. The result yields the two explicit equations for $z_1$ and $z_2$ in Theorem \ref{thm:isomorphism}, thus concluding the proof the theorem.
\end{proof}

\begin{remark}
    We note that the involutive automorphism of the Segre surface $\overline{\mathcal{Z}}_1$, given by swapping $z_1$ and $z_2$,
    corresponds to the following automorphism of $\overline{\mathcal{Y}}$,
    \begin{equation*}
       y_1\mapsto -\omega_1-y_1-y_4,\qquad y_{2,3,4}\mapsto y_{2,3,4},
    \end{equation*}
    via the isomorphism in Theorem \ref{thm:isomorphism}. In turn, via the blow-down of the Jimbo-Fricke cubic, this defines an automorphism of the affine cubic $\mathcal{X}$,
    \begin{equation*}
       x_1\mapsto -\omega_1-x_1-x_2x_3,\qquad x_{2,3}\mapsto x_{2,3},
    \end{equation*}
    which is one of the generators of the extended modular group action on the Jimbo-Fricke cubic \cite{iwasaki}.

    Regarding the three Tyurin ratios $g^{(q)}=(g_1^{(q)},g_2^{(q)},g_3^{(q)})$,  the relation \eqref{eq:tyurincubic} when $q=1$, generalises to
    \begin{equation}\label{eq:tyurincubicgen}
\eta_{1}^{(q)}g_2^{(q)} g_1^{(q)}+\eta_{2}^{(q)}g_3^{(q)}/g_1^{(q)}+\eta_{3}^{(q)}g_2^{(q)}g_3^{(q)}+\eta_{4}^{(q)}+\eta_{5}^{(q)}g_2^{(q)}+\eta_{6}^{(q)}g_3^{(q)}=0.
\end{equation}
    This equation is quadratic in $g_1^{(q)}$ and the involutive automorphism above is equivalent to sending $g_1^{(q)}$ to its other root of this quadratic, with $q=1$, explicitly
    \begin{equation*}
       g_1^{(q)}\mapsto -1/\eta_1^{(q)}(\eta_3^{(q)} g_3^{(q)}+\eta_4^{(q)}/g_2^{(q)}+\eta_5^{(q)} +\eta_6^{(q)}g_3^{(q)}/g_2^{(q)}),\qquad g_{2,3}^{(q)}\mapsto g_{2,3}^{(q)}.
    \end{equation*}
The last formula gives a natural generalisation of the above automorphism to $\mathcal{Z}_q$ for $q\neq 1$, though its action on the $z$-variables seems quite involved.
\end{remark}

\section{The other Painlev\'e differential equations}\label{se:allpsegres}
\tcm{The confluence scheme of the Painlev\'e differential equations from the viewpoint of isomonodromic deformations \cite{ohyamaokumura}},
$$
\xymatrix
{&& \tcb{P_{III}}\ar[dr] \ar[r]&\tcr{P_{III}^{D_7}}\ar[dr]\ar[r]&\tcr{P_{III}^{D_8}}\\
\tcb{P_{VI}}\ar[r]&\tcb{P_{V}}\ar[r]\ar[ur]\ar[dr]
&\tcr{P_{V}^{deg}}\ar[dr]\ar[ur]& \tcb{\Ptwojm}\ar[r]&\tcr{P_I}\\
 &&\tcb{P_{IV}}\ar[ur]\ar[r] &\tcr{P_{II}^{FN}}\ar[ur]\\
}
$$
\tcm{corresponds to appropriate limits on the associated monodromy manifolds \cite{CMR}. }
In this section, we show that, in all un-ramified cases (in blue), the confluence scheme of the Painlev\'e monodromy manifolds can be carried through to the affine transformation constructed in Theorem \ref{thm:isomorphism}, therefore producing isomorphisms between each $\mathcal Y$--Segre, i.e. those ones resulting from blowing down the monodromy manifold at a line at infinity, and the $\mathcal Z$--Segre obtained by confluencing $\mathcal Z_1$. 

For the ramified cases (in red), the confluence either produces a
reducible Segre surface or a family which does not have the correct number of free parameters. This is not surprising because the confluences to ramified and non-ramified Painlev\'e equations are deeply different in geometric as well as analytic terms. In Section
6, we perform an in depth study of the singularity structure of all cubic surfaces with a
triangle of lines at infinity, of their blow downs to the corresponding $\mathcal Y$--Segre surfaces
and the expected singularity structure of the $\mathcal Z$--Segre ones. This will allow us to build
the isomorphic $\mathcal Z$--Segre for all ramified cases.

% where 
% \begin{equation}
% \begin{split}
% \epsilon_1=1, \hbox{ for } \Psix, \Pfive, \Pfour, \Pfive^ {deg}, 
% \Pthree^{D_6}, \Ptwo,\Pthree^{D_7}, \hbox{ and $0$ otherwise,}\\
% \epsilon_2=1, \hbox{ for } \Psix, \Pfive, \Pfour,  
% \Pthree^{D_6},\Ptwojm, \Pthree^{D_7}, \hbox{ and $0$ otherwise,}\\
% \epsilon_4=1, \hbox{ for } \Psix, \Pfive, \Pfour, \Pfive^ {deg}, \Pthree^{D_6},
%  \Ptwojm,\Pthree^{D_7},\Pone,\hbox{ and $0$ otherwise,}\\
% \epsilon_6=1, \hbox{ for } \Psix, \Pfive, \Pfour, \Pfive^ {deg}, 
%  \Ptwo,\Pone, \hbox{ and $0$ otherwise,}\\
%  \rho_3\neq 0 \hbox{ for } \Psix,\Pfive^{deg},\hbox{ and $0$ otherwise,}\\
%  \rho_5\neq 0 \hbox{ for } \Psix,\Pfive, \Pfive^ {deg},\Pthree^{D_6},\Pthree^{D_7}, \rho_5=1 \hbox{ for } \Pthree^{D_8} \hbox{ and $0$ otherwise,}\\
%   \rho_6\neq 0 \hbox{ for } \Psix, \rho_6=\frac{\rho_3}{\rho_5}\hbox{ for } \Pfive^ {deg},\hbox{ and $0$ otherwise,}\\
% \lambda_1 \hbox{ free } \hbox{ for } \Psix, \Pfive, \Pfour, 
% \Pthree^{D_6}, \hbox{ and $1$ otherwise,}\\
% \lambda_2 = \frac{\rho_5 \rho_6}{\rho_3\rho_4} \hbox{ for } \Psix, \hbox{ free   for } \Pfive, \Pfour,  \Ptwo \hbox{ and $1$ otherwise.}\\
% \end{split}
% \end{equation}
% Note that equation (\ref{gen-segre}b) is the specialisation of $h''_2$ given by 
% \eqref{eqp-h2-rho} to the case $\rho_2=0$. 

This section is organised as follows: In Subsection \ref{suse:CMR} we summarise the confluence of monodromy manifolds obtained in \cite{CMR} for the non-ramified cases and deduce the confluence on the corresponding $\mathcal{Y}$--Segre surfaces. Then, in \ref{suse:confl-p}, we apply the confluence of monodromy manifolds  to the affine transformation constructed in Theorem \ref{thm:isomorphism} and provide the Segre surfaces for all non-ramified cases.

\subsection{Confluence of monodromy manifolds}\label{suse:CMR} 

Following \cite{SvdP} and \cite{CMR} the monodromy manifolds of all Painlev\'e equations are given by 
\begin{equation}\label{eq:mon-mf}
\mathcal X^{(d)}:= {{\rm Spec}}(\mathbb C[x_1,x_2,x_3]\slash\langle\phi^{(d)}=0\rangle)
\end{equation}
where the polynomial $\phi^{(d)}$ has always the same form in the variables $x_1,x_2,x_3$, but different coefficients for different Painlev\'e equations:
\begin{equation}\label{eq:phi}
\phi^{(d)}=x_1 x_2 x_3 + \epsilon_1^{(d)} x_1^2+ \epsilon_2^{(d)} x_2^2+ \epsilon_3^{(d)} x_3^2 + \omega_1^{(d)} x_1  + \omega_2^{(d)} x_2 + \omega_3^{(d)} x_3+
\omega_4^{(d)}=0.
\end{equation}
Here $d$ is an index running on the list of all the Painlev\'e auxiliary linear problems $\Psix, \Pfive,\Pfive^{\text{deg}},\Pfour$, 
$\Pthree^{D_6},\Pthree^{D_7}$, 
$\Pthree^{D_8},\Ptwojm ,\Ptwofn,\Pone$,  the parameters $ \epsilon^{(d)}_{i}$ may take value $0$ or $1$ according to the chosen $d$, and the parameters $\omega^{(d)}_{i}$, $i=1,2,3,4$ are related to the Painlev\'e equations constants as given in section 2 of \cite{CMR}.
For convenience, we summarise the monodromy manifolds of the 
non-ramified Painlev\'e equations in Table \ref{tb:monmfd}. 
\begin{table}[H]
    \centering
\begin{tabular}{|c || c || c|} 
 \hline
     &  monodromy manifold & parameters \\
 \hline
$\tcb{\Psix}$ & $ x_1 x_2 x_3+ \sum_{k=1}^3 (x_k^2+\omega_k x_k) +\omega_4=0 $ & $\begin{array}{cc}
    \omega_1=&-(\nu_0\nu_\infty+\nu_t \nu_1),\\
    \omega_2=&-(\nu_0\nu_1+\nu_t\nu_\infty),\\
    \omega_3=&-(\nu_0\nu_t+\nu_1\nu_\infty),\\  \omega_4=&\nu_0^2+\nu_t^2+\nu_1^2+\nu_\infty^2+\\
   & \nu_0\nu_t\nu_1\nu_\infty-4.\\ \end{array}$
\\
 \hline
 $\tcb{\Pfive}$ & $x_1 x_2 x_3+ x_1^2 +x_2^2 + \sum_{k=1}^3 \omega_k x_k +\omega_4=0$  &
 $\begin{array}{cc}
    \omega_1=&-(\nu_0\upsilon_\infty+\nu_t ),\\
    \omega_2=&-(\nu_0 +\nu_t\upsilon_\infty),\\
    \omega_3=&-(\nu_0\nu_t+\upsilon_\infty),\\  \omega_4=&1+\upsilon_\infty^2+ \nu_0\nu_t\upsilon_\infty.\\ \end{array}$\\
 \hline
 $\tcr{\Pfive^\text{deg}}$ &  $x_1 x_2 x_3+ x_1^2 +x_2^2 + \sum_{k=1}^3 \omega_k x_k +\omega_4=0$
  & $\begin{array}{cc}
    \omega_1=&- \nu_0\\
    \omega_2=&-\nu_t,\\
    \omega_3=&0,\\  \omega_4=& 1.\\ \end{array}$\\ \hline
 $\tcb{\Pfour}$  & $x_1 x_2 x_3+ x_1^2 + \sum_{k=1}^3 \omega_k x_k +\omega_4=0$  & $\begin{array}{cc}
    \omega_1=&-(\nu_0\upsilon_\infty+1 ),\\
    \omega_2=&\omega_3=-\upsilon_\infty,\\  \omega_4=&\upsilon_\infty^2+ \nu_0\upsilon_\infty.\\ \end{array}$ \\
\hline
 $\tcb{\Pthree^{D_6}}$ &  $x_1 x_2 x_3+ x_1^2 +x_2^2 + \sum_{k=1}^3 \omega_k x_k +\omega_4=0$
  & $\begin{array}{cc}
    \omega_1=&- \upsilon_0\upsilon_\infty-1\\
    \omega_2=&-(\upsilon_0 +\upsilon_\infty),\\
    \omega_3=&0,\\  \omega_4=& \upsilon_0\upsilon_\infty.\\ \end{array}$\\
 \hline
 $\tcr{\Pthree^{D_7}}$ &  $x_1 x_2 x_3+ x_1^2 +x_2^2 + \sum_{k=1}^3 \omega_k x_k +\omega_4=0$
  & $\begin{array}{cc}
    \omega_1=&-1\\
    \omega_2=&-\upsilon_\infty,\\
    \omega_3=&0,\\  \omega_4=&0.\\ \end{array}$\\
 \hline
 $\tcr{\Pthree^{D_8}}$ &  $x_1 x_2 x_3+ x_1^2 +x_2^2 + \sum_{k=1}^3 \omega_k x_k +\omega_4=0$
  & $\begin{array}{cc}
    \omega_1=
    \omega_3=
    \omega_4=&0,\\  \omega_2=&-1.\\ \end{array}$\\
 \hline
$\tcr{\Ptwofn}$ & $x_1 x_2 x_3+ x_1^2+\sum_{k=1}^3 \omega_k x_k +\omega_4=0$  &$\begin{array}{cc}
    \omega_1=&-(\upsilon_0+\upsilon_0^{-1}),\\
    \omega_2=-1&\omega_3=0,\\  \omega_4=&1.\\ \end{array}$\\\hline
$\tcb{\Ptwojm}$ & $x_1 x_2 x_3+ \sum_{k=1}^3 \omega_k x_k +\omega_4=0$  &$\begin{array}{cc}
    \omega_1=&
    \omega_2=
    \omega_3=-\upsilon_\infty,\\  \omega_4=& \upsilon_\infty(1+\upsilon_\infty).\\ \end{array}$\\
 \hline
 $\tcr{\Pone}$ & $x_1 x_2 x_3+ \sum_{k=1}^3 \omega_k x_k +\omega_4=0$  &$\begin{array}{cc}
    \omega_1=-1,&
    \omega_2=-1,\\
    \omega_3=0, & \omega_4= 1.\\ \end{array}$\\
 \hline
\end{tabular}
\caption{The monodromy manifolds of the  Painlev\'e differential equations - red denotes ramified cases and blue un-ramified. The parameters $\nu_i$, $\upsilon_i$, $i=0,t,1,\infty$ are defined in \eqref{eq:pvi-om1}.}\label{tb:monmfd}
\end{table}

\tcm{\begin{remark}
    Notice that in the online version of \cite{SvdP}, the sign in front of $x_2^2$ in the $\Pthree^{D_8}$ cubic was corrected to
   a minus. However, this sign change can be easily produced by the simple rescaling $x_1\to i x_1$, $x_2\to -x_2$, $x_3\to -i x_3$. We prefer to stick to the cubic with the plus sign as it is the one which naturally appears under confluence \cite{CMR}.
\end{remark}}

The confluence procedure always involves a re-scaling of two variables $x_i, x_j$, $i\neq j$ and of some of the parameters in $\epsilon$ followed by taking the limit for $\epsilon\to 0$. We blow down each monodromy manifold to a $\mathcal{Y}$--Segre surface always in the same way, namely we set
$$
y_1:=x_1,\quad y_2:=x_2,\quad y_3:=x_3,\quad y_4:= x_2 x_3.
$$
We summarise these confluences in table \ref{tb:confluence}.

\def\mystrut{\rule{0pt}{1.4\normalbaselineskip}}
\begin{table}[H]
    \centering
\begin{tabular}{|c || c || c|| c|} 
 \hline
   Confluence & $\begin{array}{c}
     \hbox{Re-scaling on}  \\
          x_1,x_2,x_3\\
   \end{array}$
    & Re-scaling on $\mathcal Y$--Segre & $\begin{array}{c}
     \hbox{Re-scaling on}  \\
          \hbox{parameters}\\
   \end{array}$\\ 
 \hline
 $\Psix\mapsto\Pfive$& $x_1\mapsto\frac{x_1}{\epsilon}, x_2\mapsto\frac{x_2}{\epsilon}$, \mystrut 
 &$y_1\mapsto\frac{y_1}{\epsilon}, y_2\mapsto\frac{y_2}{\epsilon}, y_4\mapsto\frac{y_4}{\epsilon}$& $\nu_\infty\mapsto\frac{\upsilon_\infty}{\epsilon}$, 
 $\nu_1\mapsto\frac{1}{\epsilon}$\\[2mm] 
 \hline
$\Pfive\mapsto\Pfour$& $x_1\mapsto\frac{x_1}{\epsilon}, x_3\mapsto\frac{x_3}{\epsilon},$ \mystrut 
 &$y_1\mapsto\frac{y_1}{\epsilon}, y_3\mapsto\frac{y_3}{\epsilon}, y_4\mapsto\frac{y_4}{\epsilon}$&$ \upsilon_\infty\mapsto\frac{\upsilon_\infty}{\epsilon}$, 
 $\nu_t\mapsto\frac{1}{\epsilon}$\\[2mm] 
\hline
$\Pfive\mapsto\Pfive^{\text{deg}}$& $x_1\mapsto\frac{x_1}{\epsilon}, x_2\mapsto\frac{x_2}{\epsilon},$\mystrut 
 &$y_1\mapsto\frac{y_1}{\epsilon}, y_2\mapsto\frac{y_2}{\epsilon}, y_4\mapsto\frac{y_4}{\epsilon}$ &$ \upsilon_\infty\mapsto\frac{1}{\epsilon}$, $ \upsilon_1\mapsto\frac{\upsilon_1}{\epsilon}$\\[2mm] \hline
$\Pfive\mapsto\Pthree^{D_6}$& $x_1\mapsto\frac{x_1}{\epsilon}, x_2\mapsto\frac{x_2}{\epsilon},$\mystrut 
 &$y_1\mapsto\frac{y_1}{\epsilon}, y_2\mapsto\frac{y_2}{\epsilon}, y_4\mapsto\frac{y_4}{\epsilon}$ &$ \nu_0\mapsto\frac{\upsilon_0}{\epsilon}$, 
 $\nu_t\mapsto\frac{1}{\epsilon}$\\[2mm] 
 \hline
 $\Pthree^{D_6}\mapsto \Pthree^{D_7}$& $x_1\mapsto\frac{x_1}{\epsilon}, x_2\mapsto\frac{x_2}{\epsilon},$\mystrut 
 &$y_1\mapsto\frac{y_1}{\epsilon}, y_2\mapsto\frac{y_2}{\epsilon}, y_4\mapsto\frac{y_4}{\epsilon}$ &$ \upsilon_0\mapsto\frac{1}{\epsilon}$, 
 $\upsilon_\infty\mapsto {\epsilon}\upsilon_\infty $\\[2mm] 
 \hline
$\Pfour\mapsto\Ptwojm$& $x_2\mapsto\frac{x_2}{\epsilon}, x_3\mapsto\frac{x_3}{\epsilon},$\mystrut 
 &$y_2\mapsto\frac{y_2}{\epsilon}, y_3\mapsto\frac{y_3}{\epsilon}, y_4\mapsto\frac{y_4}{\epsilon^2}$&$ \nu_0\mapsto\frac{1}{\epsilon}$, 
 $\upsilon_\infty\mapsto\frac{\upsilon_\infty}{\epsilon}$\\[2mm] 
 \hline
\end{tabular}
\caption{Re-scalings giving rise to the confluence of the non-ramified Painlev\'e monodromy manifolds and $\mathcal Y$--Segre surfaces.}\label{tb:confluence}
\end{table}

\subsection{Confluence of the affine transformation between $\mathcal Y$--Segre and $\mathcal Z$--Segre}\label{suse:confl-p}

The affine transformation between $\mathcal Y$--Segre and $\mathcal Z$--Segre has always the same form:
$$
z_k = \xi_{0k}+\xi_{1k} y_1+\xi_{2k} y_2+\xi_{3k} y_3+\xi_{4k} y_4,
\quad k=1,\dots,6
$$
where the coefficients $\xi_{jk}$, $k=1,\dots 6$, $j=0,\dots,4$, depend on $\upsilon_i$, $i=0,t,1,\infty$ as in \eqref{eq:pvi-om1}.

\subsubsection{Segre surface of $\Pfive$}
The confluence limit form $\Psix$ to $\Pfive$, we see that the re-scaling of the parameters in the last column of Table \ref{tb:confluence} produces the following re-scalings on the coefficients $\xi_{kj}$:
$$
\xi_{0k}\mapsto \xi_{0k},\quad \xi_{1k}\mapsto\frac{\xi_{1k}}{\epsilon},\quad \xi_{2k}\mapsto\frac{\xi_{2k}}{\epsilon},\quad\xi_{3k}\mapsto \xi_{3k},\quad \xi_{4k}\mapsto\frac{\xi_{4k}}{\epsilon},
$$
with the limiting $\xi_{jk}$ not all zero for some fixed $k$.
This means that all $z_k$ remain finite because the re-scalings for the coefficients $\xi_{jk}$ are compensated by the re-scaling in the $y_1,\dots,y_4$ variables. Therefore the $\mathcal Z$--Segre equations maintain the same form.
The parameters are re-scaled as follows
$$
\rho_3\to \epsilon\rho_3, \quad
\rho_4\to\rho_4, \quad \rho_4\to\rho_4,\quad
\rho_6\to \epsilon\rho_6, \quad
{\rho_5\rho_6}{\rho_3}\to \lambda_2,\quad
\lambda_1\to \lambda_1,
$$
therefore the parameters $\rho_3,\rho_6$ become $0$.
\begin{subequations}\label{p5:seg}
\begin{align}
&  {z}_{1} +   {z}_{2} + {z}_{3} + {z}_{4} + {z}_{5} +  {z}_{6}=0,\\
 &  {z}_{4} + \rho_{5} {z}_{5}  - 1=0,\\
&{z}_{3} {z}_{4} -\lambda_{1} {z}_{1} {z}_{2} =0,\\
&z_5 z_6 - \lambda_2  z_1  z_2=0.
\end{align}
\end{subequations}
Explicitly, setting
$$
\gamma=(\upsilon_0-1)\left(\frac{1}{\upsilon_0}\right)\upsilon_\infty^2,
\quad\delta= (\upsilon_0-1)^2\nu_t \upsilon_\infty^2,
$$
the formulae relating the $\mathcal{Z}$-Segre to the monodromy manifold in the $\Pfive$ case are:
\begin{align*}
    z_1= \gamma^{-1}\left(x_2(\upsilon_\infty x_3-1)+\upsilon_\infty
    \left(x_1-\upsilon_\infty \nu_0\right)
    \right),\\
     z_2= -\gamma^{-1}\left(x_2-\nu_t \upsilon_\infty)+\upsilon_\infty
    x_1
    \right),\\
    z_3=\delta^{-1} \upsilon_0 (\upsilon_t x_2-\upsilon
    _\infty)(\upsilon_\infty x_3-\upsilon_\infty^2-1),\\
    z_4=\delta^{-1}\upsilon_t^{-1} (\upsilon_\infty \upsilon_0 - \upsilon_t)(\upsilon_t \upsilon_0 - \upsilon_\infty)(x_2 -\upsilon_t \upsilon_\infty),\\
    z_5=-\delta^{-1}\upsilon_t^{-2}( \upsilon_\infty  \upsilon_0\upsilon_t-1)(\upsilon_0-\upsilon_t\upsilon_\infty)(\upsilon_t x_2-\upsilon_\infty ),\\
    z_6=-\delta^{-1}\upsilon_t^{-1}\upsilon_0(\upsilon_\infty x_3-\upsilon_\infty^2-1)(x_2-\upsilon_\infty\upsilon_t).
\end{align*}

\subsubsection{Segre surface of $\Pfour$}
The confluence limit form $\Pfive$ to $\Pfour$  behaves in a similar way to the one from $\Psix$ to $\Pfive$, so we omit the discussion.
\begin{subequations}\label{p4:seg}
\begin{align}
&{z}_{1} +  {z}_{2} + {z}_{3} + {z}_{4} + {z}_{5} +  {z}_{6}=0,\\
 &  {z}_{4}  - 1=0,\\
&{z}_{3} {z}_{4} - \lambda_{1}{z}_{1} {z}_{2}=0,\\
&z_5 z_6 - \lambda_2  z_1  z_2=0.
\end{align}
\end{subequations}
Explicitly, 
the formulae relating the $\mathcal{Z}$-Segre to the monodromy manifold in the $\Pfour$ case are:
\begin{align*}
    z_1=\frac{1+\upsilon_0^2-\frac{\upsilon_0}{\upsilon_\infty} x_1-\frac{\upsilon_\infty}{\upsilon_1} x_2 x_3}{(\upsilon_0-1)^2},\quad 
    z_2=\frac{\upsilon_0(x_1-1)}{\upsilon_\infty(\upsilon_0-1)^2},\\
    z_3=\frac{\upsilon_0(x_2-\upsilon_\infty)(x_3-\upsilon_\infty)}{\upsilon_\infty(\upsilon_0-1)^2},
    \quad
    z_4=\frac{(\upsilon_0-\upsilon_\infty)(\upsilon_\infty\upsilon_0-1)}{\upsilon_\infty(\upsilon_0-1)^2},\\
    z_5=\frac{\upsilon_0(x_2-\upsilon_\infty)}{(\upsilon_0-1)^2},\quad z_6=\frac{\upsilon_0(x_3-\upsilon_\infty)}{(\upsilon_0-1)^2}.
\end{align*}

\subsubsection{Segre surface of $\Pthree^{D_6}$}
The re-scaling of the parameters in the last column of Table \ref{tb:confluence} corresponding to the confluence limit from $\Pfive$ to $\Pthree^{D_6}$ 
produces the following re-scalings on the coefficients $\xi_{kj}$:
$$
\xi_{0k}\mapsto \xi_{0k},\quad \xi_{1k}\mapsto\frac{\xi_{1k}}{\epsilon},\quad \xi_{2k}\mapsto\frac{\xi_{2k}}{\epsilon},\quad\xi_{3k}\mapsto \xi_{3k},\quad \xi_{4k}\mapsto\frac{\xi_{4k}}{\epsilon},
$$
with $\xi_{j6}=\mathcal O(\epsilon^2)$ for $j=0,3$ and $\xi_{j6}=\mathcal O(\epsilon^3)$ for $j=1,2,4$.
This means that the variable $z_6$ is of order $\epsilon^2$ and drops out of the first two equations defining the $\mathcal Z$--Segre. The parameters are rescaled as follows
$$
\rho_5\to \rho_5, \quad\lambda_1\to \lambda_1, \quad \lambda_2\to\epsilon^2\lambda_2.
$$
These re-scalings imply that the last two equations defining the $\mathcal Z$--Segre preserve their form.By rescaling $z_6$, we can set $\lambda_2=1$:  
\begin{subequations}\label{p3:seg}
\begin{align}
& {z}_{1} +  {z}_{2} + {z}_{3} + {z}_{4} + {z}_{5} =0,\\
&  {z}_{4} + \rho_{5} {z}_{5} - 1=0,\\
&{z}_{3} {z}_{4} - \lambda_{1}{z}_{1} {z}_{2} =0,\\
&z_5 z_6 -   z_1  z_2=0.
\end{align}
\end{subequations}
Explicitly, 
the formulae relating the $\mathcal{Z}$-Segre to the monodromy manifold in the $\Pthree$ case are:
\begin{align*}
    z_1=\frac{\upsilon_0\upsilon_\infty^2-\upsilon_\infty x_1+x_2-\upsilon_\infty x_2 x_3}{\upsilon_0\upsilon_\infty^2},\quad 
    z_2=\frac{\upsilon_\infty x_1+x_2-\upsilon_\infty }{\upsilon_0\upsilon_\infty^2},\\
    z_3=\frac{x_2(\upsilon_\infty x_3-\upsilon_\infty^2 -1)}{\upsilon_0\upsilon_\infty^2},\qquad
    z_4=\frac{(\upsilon_0\upsilon_\infty-1)(x_2-\upsilon_\infty )}{\upsilon_0\upsilon_\infty^2},\\
    z_5=\frac{x_2(\upsilon_\infty -\upsilon_0)}{\upsilon_0\upsilon_\infty^2},\qquad
    z_6=\frac{(x_2-\upsilon_\infty)(1+\upsilon_\infty^2- -\upsilon_\infty x_3)}{\upsilon_0\upsilon_\infty^2}.
\end{align*}

\subsubsection{Segre surface of $\Ptwojm$}\label{subsec:pIIJM}
The re-scaling of the parameters in the last column of Table \ref{tb:confluence} corresponding to the confluence limit from $\Pfour$ to $\Ptwojm$
produces the following re-scaling on the coefficients $\xi_{kj}$:
$$
\xi_{0k}\mapsto \xi_{0k},\quad \xi_{1k}\mapsto\xi_{1k},\quad \xi_{2k}\mapsto\frac{\xi_{2k}}{\epsilon},\quad\xi_{3k}\mapsto \frac{\xi_{3k}}{\epsilon},\quad \xi_{4k}\mapsto\frac{\xi_{4k}}{\epsilon^2},
$$
with $\xi_{j2}=\mathcal O(\epsilon^2)$ for $j=0,\dots,4$.
This means that the variable $z_2$ is of order $\epsilon^2$ and drops out of the first equation defining the $\mathcal Z$--Segre. The parameters are re-scaled as follows
$$
\rho_4\to \rho_4, \quad\lambda_1\to \frac{\lambda_1}{\epsilon^2}, \quad \lambda_2\to\frac{\lambda_2}{\epsilon^2}, 
$$
so that the last two equations defining the $\mathcal Z$--Segre preserve their form. 
\begin{subequations}\label{p2jm:seg}
\begin{align}
&{z}_{1} +  {z}_{3} + {z}_{4} + {z}_{5} + {z}_{6}=0\\
&\rho_4 {z}_{4}  - 1=0\\
&{z}_{3} {z}_{4} -  \lambda_{1}{z}_{1} {z}_{2}=0\\
&{z}_{3} {z}_{4} -  \lambda_{2}{z}_{5} {z}_{6}=0.
\end{align}
\end{subequations}
Explicitly
\begin{align*}
    z_1=1-\frac{x_2 x_3}{\upsilon_\infty},
    \quad z_2=\frac{x_1-1}{\upsilon_\infty},\quad
    z_3=\upsilon_\infty-x_2-x_3+ \frac{x_2 x_3}{\upsilon_\infty},\\
    z_4=\upsilon_\infty-1,\quad z_5=x_2-\upsilon_\infty,\quad z_6=x_3-\upsilon_\infty.
\end{align*}

Note that we can re-scale all $z_i$ variables to set $\rho_4=1$, then we can absorb the parameter $\lambda_1$ in $z_2$, hence the final $\mathcal Z$--Segre for $\Ptwojm$ is a one parameter family as expected.

% \subsubsection{Segre surface of $\Pone$}
% A confluence limit from $\Ptwojm$ to $\Pone$ is obtained by substituting $z_2\mapsto \epsilon z_2$ and $z_k\mapsto \epsilon^{-1} z_k$, for $1\leq k\leq 6$ with $k\neq 2$, and taking the limit for $\epsilon\to 0$
% in equations \eqref{p2jm:seg}:
% \begin{subequations}\label{p1:segalt}
% \begin{align}
% &{z}_{1} + {z}_{3} + {z}_{4} + {z}_{5} + {z}_{6}=0\\
% &  z_4-1=0\\
% &{z}_{3} {z}_{4} - {z}_{1} {z}_{2} \lambda_{1}=0\\
% &{z}_{3} {z}_{4} - {z}_{5} {z}_{6} =0.
% \end{align}
% \end{subequations}
% We can absorb $\lambda_1$ in $z_2$, thus eliminating all parameters.

% \subsubsection{Segre surface of $\Pthree^{D_8}$}
% The confluence limit  to $\Pthree^{D_8}$ is best obtained from $\Pthree$: by 
% substituting $z_3\to  z_3/\epsilon$,  $z_5\to  z_5/\epsilon$, 
% $\tilde\rho_5\to \epsilon\tilde\rho_5$, $\lambda_1\to 1/\epsilon$  and taking the limit as $\epsilon\to 0$ in \eqref{p3d7:seg}:
% \begin{subequations}\label{p3d8:seg}
% \begin{align}
% &   {z}_{3}  + {z}_{5} =0\\
% &  {z}_{4}  +\rho_5 {z}_{5} - 1=0\\
% &{z}_{3} {z}_{4} - {z}_{1} {z}_{2} =0\\
% &{z}_{3} {z}_{4} - {z}_{5} {z}_{6} =0.
% \end{align}
% \end{subequations}
% By re-scaling $z_3,z_5$ by $\rho_5$ we eliminate all parameters.

\section{Blow-downs of affine cubic surfaces}\label{se:blow-down}
Consider an embedded affine cubic surface  $\mathcal{X}\subseteq\mathbb{C}^3$ together with its canonical projective completion  $\overline{\mathcal{X}}\subseteq\mathbb{P}^3$. We say that $\mathcal{X}$ has a triangle (of lines) at infinity, if the hyperplane section 
at infinity, $\overline{\mathcal{X}}\setminus \mathcal{X}$, is a cubic curve which is the product of three lines that intersect pairwise at distinct points.

The embedded affine cubic surfaces corresponding to differential Painlev\'e equations \cite{CMR,SvdP} are all smooth for generic parameter values with a triangle of lines at infinity. They are, however, geometrically distinguished by their singularity structures on this triangle. {This is important in order to characterise the lines on the $\mathcal Y$--Segre obtained as blow down of the corresponding affine cubics. Indeed, as discussed in Subsections \ref{suse:iso} and \ref{se:allpsegres}, understanding the lines in the $\mathcal Y$--Segre is key to construct the isomorphism to the $\mathcal Z$--Segre in the ramified cases.}
        
In this section, we study smooth embedded affine cubic surfaces, with a triangle at infinity, and affine Segre surfaces naturally associated to them.

In Section \ref{subsec:classification}, we classify all embedded smooth cubic surfaces with a triangle at infinity. Then, in Section \ref{subsec:blowdown}, we give a natural construction of three associated Segre surfaces, in the regular case corresponding to blowing down any of the three lines at infinity.
In Section \ref{subsec:piicubics}, we use this construction to derive an explicit isomorphism between the two (decorated) character varieties for $\Ptwo$ known in the literature, one coming from the Jimbo-Miwa linear problem, the other from the Flaschka-Newell linear problem. 
Similarly, in sections \ref{suse:PI-Z-Segre}, \ref{suse:PVdeg-Z-Segre} and \ref{suse:PIIID7-Z-Segre} we build the remaining $\mathcal Z$--Segre surfaces for the ramified Painlev\'e equations.

\subsection{A classification}\label{subsec:classification}
In this section, we classify embedded smooth affine cubic surfaces with a triangle at infinity.
We start by deriving a normal form.
\begin{lemma}
Any embedded affine cubic surface in $\mathbb{C}^3$, with a triangle of lines at infinity, is affinely equivalent to
\begin{equation}\label{eq:cubic}
    x_1 x_2x_3+\epsilon_1 x_1^2+\epsilon_2 x_2^2+\epsilon_3 x_3^2+\omega_1 x_1+\omega_2 x_2+\omega_3 x_3+\omega_4=0,
\end{equation}
for some $\epsilon_{1,2,3}\in\{0,1\}$ and $\omega_k\in\mathbb{C}$, $1\leq k\leq 4$.
\end{lemma}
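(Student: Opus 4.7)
The plan is to reduce the cubic to the claimed normal form in three steps, using respectively the action of the affine group on the plane at infinity, translations, and diagonal scalings combined with an overall rescaling of the defining equation.

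First I would decompose the defining polynomial $f$ of $\mathcal{X}$ into homogeneous parts $f = f_3 + f_2 + f_1 + f_0$. The intersection of $\overline{\mathcal{X}}$ with the hyperplane $\{X_0 = 0\}$ is cut out by the leading cubic form $f_3(X_1,X_2,X_3)$, so the triangle-at-infinity hypothesis says exactly that $f_3$ factors as a product $L_1 L_2 L_3$ of three linear forms in general position in $\mathbb{P}^2$ (i.e.\ with no common zero). The affine group of $\mathbb{C}^3$ is precisely the stabiliser of $\{X_0=0\}$ in $\mathrm{PGL}_4(\mathbb{C})$, and its restriction to the plane at infinity realises the full group $\mathrm{PGL}_3(\mathbb{C})$. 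Since $\mathrm{PGL}_3(\mathbb{C})$ acts transitively on ordered triples of lines in general position, we can find an affine transformation sending $L_1,L_2,L_3$ to the coordinate lines $\{X_i=0\}$, $i=1,2,3$, and then absorb the resulting nonzero scalar by rescaling the whole equation, so that $f_3 = x_1 x_2 x_3$.

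Next, after the first step, the quadratic part has the form
\[
q(x_1,x_2,x_3)=\beta_1 x_1^2+\beta_2 x_2^2+\beta_3 x_3^2+\gamma_{12}x_1x_2+\gamma_{13}x_1x_3+\gamma_{23}x_2x_3.
\]
Under the translation $x_i\mapsto x_i+a_i$ (which leaves the triangle at infinity untouched), the cubic term $x_1 x_2 x_3$ expands to introduce the extra quadratic contribution $a_3 x_1x_2+a_2 x_1x_3+a_1 x_2x_3$; choosing $a_1=-\gamma_{23}$, $a_2=-\gamma_{13}$, $a_3=-\gamma_{12}$ therefore cancels the three mixed terms simultaneously, leaving the quadratic part diagonal. (The translation also modifies the linear and constant parts, but these are absorbed freely into $\omega_1,\omega_2,\omega_3,\omega_4$.)

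Finally, I would normalise the diagonal coefficients $\beta_i$ to lie in $\{0,1\}$ by applying a diagonal scaling $x_i\mapsto c_i x_i$ followed by dividing the equation by $c_1c_2c_3$; this preserves the cubic term $x_1x_2x_3$ and transforms $\beta_i\mapsto \beta_i c_i/(c_jc_k)$, where $\{i,j,k\}=\{1,2,3\}$. When all three $\beta_i$ are nonzero, the system $\beta_i c_i = c_j c_k$ gives $c_i^2=\beta_j\beta_k$, which is solvable over $\mathbb{C}$ with a compatible choice of square roots (the product condition $c_1c_2c_3=\beta_1\beta_2\beta_3$ is automatic), and sends each $\beta_i$ to $1$. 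If some of the $\beta_i$ vanish, there are fewer equations to solve and they are easily satisfied by choosing the $c_i$ one at a time, again normalising every nonzero $\beta_i$ to $1$. There is no real obstacle in the argument; the only point worth flagging is that the three operations (an affine change of variables fixing the hyperplane at infinity, a translation, and a diagonal scaling combined with an overall rescaling of the equation) are all affine equivalences of the embedded surface, and they compose to produce the desired normal form \eqref{eq:cubic}.
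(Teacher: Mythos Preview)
Your proposal is correct and follows essentially the same three-step approach as the paper: a linear change of variables to make the leading cubic form equal to $x_1x_2x_3$, a translation to kill the mixed quadratic terms, and a diagonal rescaling to normalise the remaining diagonal quadratic coefficients to $0$ or $1$. Your write-up is in fact more detailed than the paper's (which simply asserts that the rescaling can be done without solving the system $\beta_i c_i = c_j c_k$ explicitly), but the underlying argument is the same.
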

\begin{proof}
Let $\mathcal{X}$ be an affine cubic surface in $\{(x_1,x_2,x_3)\in\mathbb{C}^3\}$. Denote its canonical projective completion by $\overline{\mathcal{X}}$ in $\mathbb{P}^3$, so that the curve at infinity $\overline{\mathcal{X}}\setminus \mathcal{X}$ is a triangle of lines.
Using projective coordinates,
\begin{equation}\label{eq:homcoordinates}
    [X_0:X_1:X_2:X_3]=[1:x_1:x_2:x_3],
\end{equation}
the curve at infinity is thus described by
\begin{equation*}
    L_1 L_2 L_3=0,\quad X_0=0,
\end{equation*}
where each $L_k=L_k(X_1,X_2,X_3)$ is homogenous and linear. Since the three lines $L_k=0$, $1\leq k\leq 3$, pairwise intersect in distinct points, the affine map
\begin{equation*}
 x_k\mapsto L_k(x_1,x_2,x_3)\qquad (1\leq k\leq 3),   
\end{equation*}
has full rank and application of its inverse puts the cubic surface into the form
\begin{equation}\label{eq:cubic_mixed}
    x_1x_2x_3+a_1x_2x_3+a_2x_1x_3+a_3x_1x_2+Q(x_1,x_2,x_3)=0,
\end{equation}
for some $a_{1,2,3}\in\mathbb{C}$ and a quadratic polynomial $Q$ without mixed terms. Applying the affine scaling $x_k\mapsto x_k-a_k$, we may eliminate all mixed terms from \eqref{eq:cubic_mixed}, so that we are left with equation \eqref{eq:cubic}, for some $\epsilon_{1,2,3}\in\mathbb{C}$ and $\omega_k\in\mathbb{C}$, $1\leq k\leq 4$. By finally rescaling $x_k\mapsto c_kx_k$, for some $c_k\in\mathbb{C}^*$, $1\leq k\leq 3$, we can ensure that each $\epsilon_k\in\{0,1\}$ and the lemma follows.
\end{proof}

Next, we are going to have a look at singularities at infinity.
Take any (irreducible) affine cubic surface $\mathcal{X}\subseteq \mathbb{C}^3$, with a triangle of lines at infinity, in normal form \eqref{eq:cubic}. 
Using homogeneous coordinates \eqref{eq:homcoordinates}, its canonical projective completion is given by the zero locus of the homogeneous polynomial
\begin{equation*}
    F:=X_1 X_2X_3+(\epsilon_1 X_1^2+\epsilon_2 X_2^2+\epsilon_3 X_3^2)X_0+(\omega_1 X_1+\omega_2 X_2+\omega_3 X_3)X_0^2+\omega_4 X_0^3.
\end{equation*}
The curve at infinity $\overline{\mathcal{X}}\setminus \mathcal{X}$ is the triangle composed of the three lines
\begin{equation}\label{eq:lines_at_infinity}
    L_k^\infty=\{X\in \mathbb{P}^3: X_k=X_0=0\},\quad (k=1,2,3).
\end{equation}

The gradient of $F$ at $X_0=0$ is given by
\begin{equation*}
    \nabla F|_{X_0=0}=(\epsilon_1 X_1^2+\epsilon_2 X_2^2+\epsilon_3 X_3^2,X_2 X_3,X_1 X_3,X_1 X_2),
\end{equation*}
from which it immediately follows that the triangle of lines at infinity contains no singularities
 if and only if $\epsilon_1=\epsilon_2=\epsilon_3=1$. Furthermore, singularities can only be located at the three intersection points among the lines and, 
for any labelling $\{i,j,k\}=\{1,2,3\}$, the intersection point of $L_i^\infty$ and $L_j^\infty$ is a singularity if and only if $\epsilon_k=0$.

Let us now focus on one of the intersection points of the three lines at infinity, say
the point $S=[0:0:0:1]$. If $\epsilon_3=0$, then this is a singular point. We determine its type, in the notation by Arnol'd \cite{arnol_sing}, following \cite{bruce_wall79}. We start by introducing some inhomogeneous coordinates around $S$,
\begin{equation*}
    [X_0:X_1:X_2:X_3]=[u_1:u_2:u_3:1],
\end{equation*}
so that $S$ corresponds to $u=(0,0,0)$. The equation for the cubic in these local coordinates reads
\begin{equation*}
    f(u):=u_2u_3+\epsilon_1 u_1 u_2^2+\epsilon_2 u_1 u_3^2+\omega_1 u_1^2u_2+\omega_2 u_1^2u_3+\omega_3 u_1^2+\omega_4 u_1^3=0.
\end{equation*}
Next, we apply a weighted scaling 
\begin{equation*}
    u_k=r^{j_k}v_k\qquad (1\leq k\leq 3),
\end{equation*}
where $r$ is a free scalar, and we look for triples $(j_1,j_2,j_3)\in\mathbb{Q}_{>0}^3$, such that
\begin{equation*}
    f(u)=r\,f_0(v)+o(r)\qquad (r\rightarrow 0),
\end{equation*}
where the leading order coefficient $f_0(v)$ is such that $\{f_0(v)=0\}$ has an isolated singularity at $v=(0,0,0)$.
In such case, \cite[Lemma 1]{bruce_wall79} shows that the type of the singularity is given by
\begin{equation*}
    \operatorname{type}(S)=\begin{cases}
        A_n & \text{if }(j_1,j_2,j_3)=(\tfrac{1}{n+1},\tfrac{1}{2},\tfrac{1}{2}),\hspace{1.1cm} (n\geq 1),\\
        D_n & \text{if }(j_1,j_2,j_3)=(\tfrac{1}{n-1},\tfrac{n-2}{2(n-1)},\tfrac{1}{2}),\quad (n\geq 4),\\
        E_6 & \text{if }(j_1,j_2,j_3)=(\tfrac{1}{3},\tfrac{1}{4},\tfrac{1}{2}),\\
        E_7 & \text{if }(j_1,j_2,j_3)=(\tfrac{1}{3},\tfrac{2}{9},\tfrac{1}{2}),\\
        E_8 & \text{if }(j_1,j_2,j_3)=(\tfrac{1}{3},\tfrac{1}{5},\tfrac{1}{2}).
    \end{cases}
\end{equation*}

We first consider the following values for the weights $(j_1,j_2,j_3)=(\tfrac{1}{2},\tfrac{1}{2},\tfrac{1}{2})$, in which case
\begin{equation*}
    f(u)=r(v_2v_3+\omega_3 v_1^2)+o(r)\qquad (r\rightarrow 0).
\end{equation*}
Therefore, as long as $\omega_3\neq 0$, the leading order term has an isolated singularity at $v=(0,0,0)$, and $S$ is a singularity of type $A_1$.

Suppose now that $\omega_3=0$. Then we take the weights $(j_1,j_2,j_3)=(\tfrac{1}{3},\tfrac{1}{2},\tfrac{1}{2})$, so that
\begin{equation*}
    f(u)=r(v_2v_3+\omega_4v_1^3)+o(r)\qquad (r\rightarrow 0).
\end{equation*}
As long as $\omega_4\neq 0$, the leading order term has an isolated singularity at $v=(0,0,0)$, and $S$ is a singularity of type $A_2$.

Next, suppose that also $\omega_4=0$. Then we take the weights $(j_1,j_2,j_3)=(\tfrac{1}{4},\tfrac{1}{2},\tfrac{1}{2})$, so that
\begin{equation*}
    f(u)=r(v_2v_3+\omega_1 v_1^2v_2+\omega_2v_1^2v_3)+o(r)\qquad (r\rightarrow 0).
\end{equation*}
As long as $\omega_1\omega_2\neq 0$, the leading order term has an isolated singularity at $v=(0,0,0)$, and $S$ is a singularity of type $A_3$.

Next, suppose that also $\omega_1\omega_2=0$. Without loss of generality, we consider the case $\omega_1=0$. The only admissible choice of weights is $(j_1,j_2,j_3)=(\tfrac{1}{5},\tfrac{2}{5},\tfrac{3}{5})$, and these weights do not allow us to read of the singularity type.

We therefore first apply a locally invertible polynomial mapping, $u\mapsto (u_1,u_2-\omega_2u_1^2,u_3)$, before scaling with the weights $(j_1,j_2,j_3)=(\tfrac{1}{5},\tfrac{1}{2},\tfrac{1}{2})$, yielding
\begin{equation*}
    f(u_1,u_2-\omega_2u_1^2,u_3)=r(v_2v_3+\epsilon_1\omega_2^2 v_1^5)+o(r)\qquad (r\rightarrow 0).
\end{equation*}
Now, necessarily $\epsilon_1=1$ since else the cubic is reducible. Therefore, as long as $\omega_2\neq 0$, the leading order term has an isolated singularity at $v=(0,0,0)$, and $S$ is a singularity of type $A_4$. Similarly, if $\omega_2=0$ but $\omega_1\neq 0$, then $S$ is a singularity of type $A_4$.

All in all, we have
\begin{equation}\label{eq:sing_type}
    \operatorname{type}(S)=\begin{cases}
        \ns & \text{if }\epsilon_3=1,\\
        A_1 & \text{if }\epsilon_3=0,\omega_3\neq 0,\\
        A_2 & \text{if }\epsilon_3=0,\omega_3=0,\omega_4\neq 0,\\
        A_3 & \text{if }\epsilon_3=0,\omega_3=0,\omega_4= 0,\omega_1\omega_2\neq 0,\\
        A_4 & \text{if }\epsilon_3=0,\omega_3=0,\omega_4= 0,\omega_1\omega_2=0,\omega_1+\omega_2\neq 0,
    \end{cases}
\end{equation}
where $\ns$ signifies that $S$ is a regular point. In the final case when $\omega_k=0$, $1\leq k\leq 4$, the singularity is non-isolated. Indeed, in that case, necessarily $\epsilon_1=\epsilon_2=1$ (as otherwise the cubic surface is reducible) and its defining equation reads
\begin{equation*}
    X_1X_2X_3+X_0X_1^2+X_0 X_2^2=0.
\end{equation*}
This surface is singular on the line $\{X_1=X_2=0\}$ and has a further infinite number of lines lying in hyperplanes of the form $\{X_3=t\,X_0\}$, $t\in\mathbb{C}$, which all intersect this line.

Returning to the general discussion, we note, in particular, that singularities of types $A_5$, $D_4$, $D_5$, $E_6$, $\widetilde{E}_6$ cannot be realised as an intersection point in a triangle of lines on a cubic surface. These singularity types, however, do appear in cubic surfaces \cite{bruce_wall79}.
For example, the cubic surface
\begin{equation*}
   X_1X_2X_3-X_1^3 -X_2^3+X_0^2 X_1=0,
\end{equation*}
has an $A_5$ singularity at $[0:0:0:1]$. It contains only three lines,
\begin{align*}
    &\{X\in\mathbb{P}^3:X_1=X_2=0\},\\
    &\{X\in\mathbb{P}^3:X_2=0, X_0+X_1=0\},\\
    &\{X\in\mathbb{P}^3:X_2=0, X_0-X_1=0\},
\end{align*}
which all intersect at this singularity. In particular, this singularity is not the corner of a triangle.

It follows from the above considerations, that a smooth affine cubic surface, with a triangle of lines at infinity, can only have singularities (in its projective completion) at the three intersection points of lines at infinity, and they can only be of type $A_k$, $1\leq k\leq 4$.

By comparison with the classification of singular cubic surfaces, and their number of lines, in \cite{bruce_wall79}, we obtain a complete list of smooth affine cubic surfaces, with a triangle of lines at infinity, classified according to the singularity types of the corners. The result is given in Table \ref{table:afine_cubic_classification}.
We discuss a few examples in this table.

 \begin{example}[$A_1,A_1,A_3$]
By \eqref{eq:sing_type}, the general form of a cubic surface with these singularities at infinity is
\begin{equation*}
    x_1x_2x_3+\omega_1 x_1+\omega_2 x_2=0,\qquad \omega_1,\omega_2\neq 0.
\end{equation*}
It has $A_1$ singularities at $[0:1:0:0]$ and $[0:0:1:0]$, an $A_3$ singularity at $[0:0:0:1]$, and no further singularities in its canonical projective completion. Apart from the three lines at infinity, there are two further lines, given by $\{x_1=x_2=0\}$ and $\{x_1+x_2=x_3=0\}$.
By scaling $x_1\mapsto -\omega_2 x_1$ and $x_2\mapsto -\omega_1 x_2$, we may normalise the cubic such that $\omega_1=\omega_2=-1$.
 \end{example}

\begin{example}[$\ns,A_2,A_2$]
  By \eqref{eq:sing_type}, the general form of a cubic surface with these singularities at infinity is
\begin{equation*}
    x_1x_2x_3+x_1^2+\omega_1 x_1+\omega_4=0,\qquad \omega_4\neq 0.
\end{equation*}
 It has $A_2$ singularities at $[0:0:1:0]$ and $[0:0:0:1]$ in its projective completion, and no further singularities at infinity. Only when $\omega_1^2=4\omega_4$, the surface has a further finite singularity at
 $(x_1,x_2,x_3)=(-\tfrac{1}{2}\omega_1,0,0)$, of type $A_1$
 . Apart from the three lines at infinity, there are generically four further lines.
 By scaling $x_{1,2}\mapsto \sqrt{\omega_4} x_{1,2}$, we may normalise the cubic such that $\omega_4=1$, yielding a one-parameter family of affine cubic surfaces.
 \end{example}

\renewcommand{\arraystretch}{1.1}
\begin{table}[t]
\centering
\begin{tabular}{|c || c | c | c |} 
 \hline
singularities & P-eqn & \#lines & cubic \\
 \hline
$\ns$,$\ns$,$\ns$ & $\Psix$ & 24 & $x_1 x_2x_3+ x_1^2+ x_2^2+x_3^2+\omega_1 x_1+\omega_2 x_2+\omega_3 x_3+\omega_4$ \\
$\ns$,$\ns$,$A_1$ & $\Pfive$ & 18 & 
$x_1x_2x_3+x_1^2+x_2^2+\omega_1 x_1+\omega_2x_2+\omega_3x_3+R(\omega_{1,2,3})$\\
$\ns$,$\ns$,$A_2$ & $\Pthree^{D_6}$, $\Pfive^\text{deg}$ & 12 & 
$x_1x_2x_3+x_1^2+x_2^2+\omega_1 x_1+\omega_2x_2+\omega_1-1$\\
$\ns$,$\ns$,$A_3$ & $\Pthree^{D_7}$ & 7 & 
$x_1x_2x_3+x_1^2+x_2^2+\omega_1 x_1-x_2$\\
$\ns$,$\ns$,$A_4$ & $\Pthree^{D_8}$ & 3 & $x_1x_2x_3+x_1^2+x_2^2-x_2$\\
$\ns$,$A_1$,$A_1$ & $\Pfour$ & 13 & $x_1x_2x_3+x_1^2+\omega_1 x_1+\omega_2 (x_2+x_3)+\omega_2(1+\omega_1-\omega_2)$ \\
$A_1$,$A_1$,$A_1$ & $\Ptwo^\text{JM}$ & 9 & $x_1x_2x_3-x_1+\omega_2 x_2-x_3-\omega_2+1$ \\
$\ns$,$A_1$,$A_2$ & $\Ptwo^\text{FN}$ & 8 & $x_1x_2x_3+x_1^2+\omega_1 x_1-x_2+1$\\
$A_1$,$A_1$,$A_2$ & $\Pone$ & 5 & $x_1x_2x_3-x_1-x_2+1$ \\
\hline
$\ns$,$A_2$,$A_2$ & - & 4 & $x_1x_2x_3+x_1^2+\omega_1 x_1+1$\\
$\ns$,$A_1$,$A_3$ & - & 4 & $x_1x_2x_3+x_1^2-x_1-x_2$\\
$A_1$,$A_1$,$A_3$ & - & 2 & $x_1x_2x_3-x_1-x_2$\\
$A_1$,$A_2$,$A_2$ & - & 2 & $x_1x_2x_3-x_1+1$\\
$\ns$,$A_1$,$A_4$ & - & 1 & $x_1x_2x_3+x_1^2-x_2$ \\
$A_2$,$A_2$,$A_2$ & - & 0 & $x_1x_2x_3+1$\\
\hline
\end{tabular}
\caption{Table of smooth embedded affine cubic surfaces, with a triangle of lines at infinity, listed according to the types of singularities (in their canonical projective completions) at the three intersection points of lines at infinity. In the first column the types of singularities respectively at $[0:1:0:0]$, $[0:0:1:0]$, $[0:0:0:1]$, where the symbol '$\ns$' stands for a regular point. In the second column, the corresponding Painlev\'e equation(s), in the third column the number of affine lines and in the fourth column normal forms for the cubics. All the $\omega$'s are considered generic and in the second row, the rational function $R$ is given by
$R=1+\omega_3^2-\frac{\omega_3(\omega_2+\omega_1\omega_3)(\omega_1+\omega_2\omega_3)}{(\omega_3^2-1)^2}$.
}
\label{table:afine_cubic_classification}
\end{table}

\begin{example}[$A_1,A_1,A_2$]
By \eqref{eq:sing_type}, the general form of the cubic surface is
\begin{equation*}
    x_1x_2x_3+\omega_1 x_1+\omega_2 x_2+\omega_4=0,\qquad \omega_{1},\omega_2,\omega_4\neq 0.
\end{equation*}
 It has $A_1$ singularities at $[0:1:0:0]$ and $[0:0:1:0]$, an $A_2$ singularity at $[0:0:0:1]$, and no further singularities in its projective completion. Apart from the three lines at infinity, there are five further lines. 
 By scaling
 \begin{equation*}
     x_1\mapsto -\frac{\omega_1}{\omega_4} x_1,\quad
     x_2\mapsto -\frac{\omega_2}{\omega_4} x_2,\quad
     x_3\mapsto \frac{\omega_1\omega_2}{\omega_4} x_3,
 \end{equation*} 
 we may normalise the cubic such that $-\omega_1=-\omega_2=\omega_4=1$. This is the decorated character variety of $\Pone$.
\end{example}

\begin{example}[$\ns,\ns,A_2$]
By \eqref{eq:sing_type}, the general form of the cubic surface is
\begin{equation*}
    x_1x_2x_3+x_1^2+x_2^2+\omega_1 x_1+\omega_2 x_2+\omega_4=0,\qquad \omega_4\neq 0.
\end{equation*}
 It has an $A_2$ singularity at $[0:0:0:1]$, and no further singularities in its projective completion, unless
 \begin{equation*}
(\omega_1^2-4\;\omega_4^2)(\omega_2^2-4\;\omega_4^2)=0.
 \end{equation*}
  Apart from the three lines at infinity, there are generically $12$ further lines.
 By scaling $x_{1,2}\mapsto u\; x_{1,2}$, where $u$ is a root of $u^2-4\omega_1+\omega_4=0$, we may normalise the surface such that $\omega_4=\omega_1-1$, leading to the two-parameter family of cubic surfaces given in Table \ref{table:afine_cubic_classification}. 
 This is the decorated character variety of $\Pthree^{D_6}$.
\end{example}

\begin{example}[$\ns,A_1,A_1$]
By \eqref{eq:sing_type}, the general form of the cubic surface is
\begin{equation*}
    x_1x_2x_3+x_1^2+\omega_1 x_1+\omega_2 x_2+\omega_3 x_3+\omega_4=0,\qquad \omega_2,\omega_3,\omega_4\neq 0.
\end{equation*}
 It has $A_1$ singularities at $[0:0:1:0]$ and $[0:0:0:1]$, and no further singularities in its projective completion, unless
 \begin{equation*}
18\;\omega_1\omega_2\omega_3\omega_4+27\;\omega_2^2\omega_3^2-\omega_1^2\omega_4^2+4\;\omega_4^3-4\;\omega_1^3\omega_2\omega_3=0.
 \end{equation*}
 Apart from the three lines at infinity, there are generically $13$ further lines.
 By scaling, we can ensure that $\omega_2=\omega_3$ and $\omega_4=\omega_2(1+\omega_1-\omega_2)$, leading to the two-parameter family of cubic surfaces given in Table \ref{table:afine_cubic_classification}. 
 This is the decorated character variety of $\Pfour$.
\end{example}

% \begin{table}[h!]
% \centering
% \begin{tabular}{|c || c | c | c| c |} 
%  \hline
%  P-eqns & \#affine lines & \#lines at infinity & singularities & comment \\
%  \hline
%  $\Psix$ & 24 & 3 & - &-\\ \hline
%  $\Pfive$ & 18 & 3 & $A_1$ &-\\ \hline
%   $\Pfive^{\text{deg}}$ & 12 & 3 & $A_2$ &-\\ \hline
%   $\Pfour$ & 13 & 3 & $2 A_1$ &-\\ \hline
%  $\Pthree^{D_6}$ & 12 & 3 & $A_2$ &-\\ \hline
% $\Pthree^{D_7}$ & 7 & 3 & $A_3$ & -\\ \hline
% $\Pthree^{D_8}$ & 3 & 3 & $A_4$  &-\\ \hline
% $\Ptwo^{\text{JM}}$ & 9 & 3 & $3 A_1$ &- \\ \hline
% $\Ptwo^{\text{FN}}$ & 8 & 3 & $A_1\,A_2$ &-\\ \hline
% $\Pone$ & 5 & 3 & $2 A_1\,A_2$ &- \\ \hline
% \end{tabular}
% \caption{Table summerising geometric data of affine cubic surfaces corresponding to different Painlev\'e equations, for generic parameter values.}
% \label{table:cubic_summary}
% \end{table}

\subsection{Constructing Segre surfaces}\label{subsec:blowdown}
In Section \ref{sec:blowdownsegre}, we showed how to blow down one of the lines at infinity of the cubic surface for $\Psix$, leading to an associated affine Segre surface. In this section, we consider this construction for general affine cubic surfaces with a triangle of lines at infinity.

Let us return to the general affine cubic surface $\mathcal{X}\subseteq \mathbb{C}^3$, with a triangle of lines at infinity, in normal form \eqref{eq:cubic}. For simplicity, let us further assume that $\mathcal{X}$ has no finite singularities.

We focus on the construction of an affine Segre surface which, in the regular case, comes from a blow-down of the line $L_1^\infty$ at infinity. Correspondingly introducing the variables
\begin{equation*}
y_1=x_1,\quad y_2=x_2,\quad y_3=x_3,\quad y_4=x_2x_3,
\end{equation*}
we obtain an affine Segre surface $\mathcal{Y}\subseteq\mathbb{C}^4$, given by
\begin{align}
    &y_2y_3-y_4=0,\\
    &y_1y_4+\epsilon_1 y_1^2+\epsilon_2 y_2^2+\epsilon_3 y_3^2+\omega_1 y_1+\omega_2 y_2+\omega_3 y_3+\omega_4=0.
\end{align}
The polynomial mapping
\begin{equation*}
\pi: \mathcal{X}\rightarrow \mathcal{Y}, x\mapsto y,
\end{equation*}
is an isomorphism between the affine varieties $\mathcal{X}$ and $\mathcal{Y}$.
Using homogeneous coordinates
\begin{equation}\label{eq:homcoordinatesy}
    [Y_0:Y_1:Y_2:Y_3:Y_4]=[1:y_1:y_2:y_3:y_4],
\end{equation}
we define the canonical projective completion $\overline{\mathcal{Y}}\subseteq \mathbb{P}^4$ of $\mathcal{Y}$, by the homogeneous equations,
\begin{equation}\label{eq:segre_surface_eqns}
\begin{aligned}
    &Y_2Y_3-Y_4Y_0=0,\\
    &Y_1Y_4+\epsilon_1 Y_1^2+\epsilon_2 Y_2^2+\epsilon_3 Y_3^2+(\omega_1 Y_1+\omega_2 Y_2+\omega_3 Y_3)Y_0+\omega_4 Y_0^2=0.
\end{aligned}
\end{equation}
The curve at infinity, $\overline{\mathcal{Y}}\setminus \mathcal{Y}$, is described by
\begin{equation*}
Y_1Y_4+\epsilon_1 Y_1^2+\epsilon_2 Y_2^2+\epsilon_3 Y_3^2=0,\quad Y_2Y_3=0,\quad Y_0=0.
\end{equation*}
This quartic curve factorises into two quadratic curves,
\begin{equation*}
C_2^\infty:\quad Y_1Y_4+\epsilon_1 Y_1^2+\epsilon_3 Y_3^2=0,\quad Y_2=0,\quad Y_0=0,
\end{equation*}
and 
\begin{equation*}
C_3^\infty:\quad Y_1Y_4+\epsilon_1 Y_1^2+\epsilon_2 Y_2^2=0,\quad Y_3=0,\quad Y_0=0,
\end{equation*}
which meet in two points,
\begin{equation*}
[0:1:0:0:-\epsilon_1],\qquad q_1^\infty:=[0:0:0:0:1].
\end{equation*}
The mapping $\pi$ extends to a regular bi-rational mapping
\begin{equation}\label{eq:blow_down}
    \pi:\overline{\mathcal{X}}\rightarrow \overline{\mathcal{Y}},
\end{equation}
which is described on the three lines at infinity by
\begin{align*}
L_1^\infty:& & \pi([0:0:X_2:X_3])&=[0:0:0:0:1],\\
L_2^\infty:& & \pi([0:X_1:0:X_3])&=[0:X_1^2:0:X_1X_3:-(\epsilon_1 X_1^2+\epsilon_3 X_3^2)],\\
L_3^\infty:& & \pi([0:X_1:X_2:0])&=[0:X_1^2:X_1X_2:0:-(\epsilon_1 X_1^2+\epsilon_2 X_2^2)],
\end{align*}
see equations \eqref{eq:lines_at_infinity}. In particular, $\pi(L_1^\infty)=\{q_1^\infty\}$ and $\pi(L_k^\infty)\subseteq C_k^\infty$ for $k=2,3$. Let us introduce some notation for the corner points of the triangle at infinity of the cubic,
\begin{align*}
 p_{12}^\infty&=[0:0:0:1],\\
 p_{13}^\infty&=[0:0:1:0],\\
 p_{23}^\infty&=[0:1:0:0],
\end{align*}
so that $p_{jk}^\infty$ is the intersection point of $L_j^\infty$ and $L_k^\infty$ for appropriate indices $j,k$, see Figure \ref{fig:triangle_at_infinity}. Then $\pi$ maps these corner points respectively to the following points on $\overline{\mathcal{Y}}$,
\begin{align*}
    q_{12}^\infty&=[0:0:0:1-\epsilon_3:\epsilon_3],\\
        q_{13}^\infty&=[0:0:1-\epsilon_2:0:\epsilon_2],\\
            q_{23}^\infty&=[0:1:0:0:-\epsilon_1].
\end{align*}

\begin{figure}[ht]
	\centering
	\begin{tikzpicture}[scale=0.8]

	\tikzstyle{star}  = [circle, minimum width=3.5pt, fill, inner sep=0pt];
	\tikzstyle{starsmall}  = [circle, minimum width=3.5pt, fill, inner sep=0pt];
	\tikzstyle{dot}  = [circle, minimum width=2.5pt, fill, inner sep=0pt];

\node[star]     (p23) at (0,2) {};
	\node     at ($(p23)+(0,0.8)$) {$p_{23}^\infty$};

\node[star]     (p12) at ({2*cos((-1*pi/6) r)},{2*sin((-1*pi/6) r)} ) {};
	\node     at ($(p12)+(0.7,-0.3)$) {$p_{12}^\infty$};

 \node[star]     (p13) at ({2*cos((-5*pi/6) r)},{2*sin((-5*pi/6) r)} ) {};
	\node     at ($(p13)+(-0.6,-0.3)$) {$p_{13}^\infty$};

	\draw[domain=-0.4:1.4,smooth,variable=\x,blue] plot ({\x*2*cos((-5*pi/6) r)+(1-\x)*2*cos((-1*pi/6) r)},{\x*2*sin((-5*pi/6) r)+(1-\x)*2*sin((-1*pi/6) r)});
 
	\draw[domain=-0.4:1.4,smooth,variable=\x,blue] plot ({\x*0+(1-\x)*2*cos((-1*pi/6) r)},{\x*2+(1-\x)*2*sin((-1*pi/6) r)});

    \draw[domain=-0.4:1.4,smooth,variable=\x,blue] plot ({\x*0+(1-\x)*2*cos((-5*pi/6) r)},{\x*2+(1-\x)*2*sin((-5*pi/6) r)});

	% \node     at ($0.5*(p13)+0.5*(p12)-(0,+0.3)$) {$X_1=0$};
 %    \node at ($0.5*(p13)+0.5*(p12)-(0,-0.3)$) {$L_1^\infty$};

	% \node[rotate=-60]     at ($0.5*(p23)+0.5*(p12)+0.3*({cos((1*pi/6) r)},{sin((1*pi/6) r)} )$) {$X_2=0$};
 %    \node[rotate=-60]  at ($0.5*(p23)+0.5*(p12)-0.3*({cos((1*pi/6) r)},{sin((1*pi/6) r)} )$) {$L_2^\infty$};

	% \node[rotate=60]     at ($0.5*(p23)+0.5*(p13)+0.3*({cos((5*pi/6) r)},{sin((5*pi/6) r)} )$) {$X_3=0$};
 %    \node[rotate=60]  at ($0.5*(p23)+0.5*(p13)-0.3*({cos((5*pi/6) r)},{sin((5*pi/6) r)} )$) {$L_3^\infty$};

     \node at ($0.5*(p13)+0.5*(p12)-(0,0.3)$) {$L_1^\infty$};
     \node at ($0.5*(p23)+0.5*(p12)+0.4*({cos((1*pi/6) r)},{sin((1*pi/6) r)} )$) {$L_2^\infty$};
     \node at ($0.5*(p23)+0.5*(p13)+0.3*({cos((5*pi/6) r)},{sin((5*pi/6) r)} )$) {$L_3^\infty$};

	\end{tikzpicture}
	\caption{Notation for lines and intersection points in triangle at infinity of the embedded affine cubic $\mathcal{X}$.}
	\label{fig:triangle_at_infinity}
\end{figure}
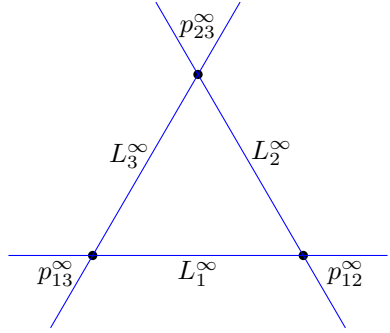

If $\epsilon_{2}=\epsilon_3=1$, then both $C_2^\infty$ and $C_3^\infty$ are irreducible conics, $\pi(L_k^\infty)=C_k^\infty$ for $k=2,3$, and $\pi$ is the blow-up of the Segre surface $\overline{\mathcal{Y}}$ at $q_1^\infty$, with exceptional divisor $L_1^\infty$.

If $\epsilon_2=0$, then $C_3^\infty$ is the product of two lines
\begin{align*}
C_{3,a}^\infty:& &Y_4+\epsilon_1 Y_1=0, & &Y_3=0, & &Y_0=0,\\
C_{3,b}^\infty:& &Y_1=0, & & Y_3=0, & &Y_0=0,
\end{align*}
which intersect at $q_{13}^\infty=[0:0:1:0:0]$, and $\pi(L_3^\infty)=C_{3,a}^\infty$.

If $\epsilon_3=0$, then $C_2^\infty$ is the product of two lines
\begin{align*}
C_{2,a}^\infty:& &Y_4+\epsilon_1 Y_1=0, & &Y_2=0, & &Y_0=0,\\
C_{2,b}^\infty:& &Y_1=0, & & Y_2=0, & &Y_0=0,
\end{align*}
which intersect at $q_{12}^\infty=[0:0:0:1:0]$, and $\pi(L_2^\infty)=C_{2,a}^\infty$. 

We conclude that the quartic curve at infinity, is either the product of two irreducible conics, or the product of an irreducible conic and two lines, or the product of four lines. In Figure \ref{fig:triangle_segre11}, the different cases are displayed.

Next, we consider singularities on $\overline{\mathcal{Y}}$. By assumption, $\mathcal{X}$ is smooth, so $\mathcal{Y}$ is smooth and thus singularities can only exist on the curve at infinity. The Jacobian of \eqref{eq:segre_surface_eqns} with respect to $Y$, at $Y_0=0$, is given by
\begin{equation*}
    J_Y|_{Y_0=0}=\begin{bmatrix}
        Y_4 & 0 & Y_3 & Y_2 & 0\\
        \omega_1 Y_1+\omega_2Y_2+\omega_3 Y_3 & Y_4+2\epsilon_1 Y_1 & 2\epsilon_2 Y_2 & 2\epsilon_3 Y_3 & Y_1
    \end{bmatrix}.
\end{equation*}
A point $Y$ at infinity is a singularity of $\overline{\mathcal{Y}}$ if and only if this Jacobian has rank less than two. It follows from this, that $\overline{\mathcal{Y}}$ can only have singularities at the intersection points of irreducible components of the curve at infinity. Note, furthermore, that $q_1^\infty$ is always a regular point, and $q_{23}^\infty\in \overline{\mathcal{Y}}$ is a singular point if and only if $p_{23}^\infty\in \overline{\mathcal{X}}$ is a singular point in which case their types are the same.

Note that the Jacobian has rank less than two at $q_{12}^\infty$ if and only if $\epsilon_3=\omega_3=0$. In such case, we can determine the singularity type of $q_{12}^\infty$ analogous to how we arrived at equation \eqref{eq:sing_type}. We use local affine variables $u$ defined through
\begin{equation*}
    [Y_0:Y_1:Y_2:Y_3:Y_4]=[u_1:u_2:u_1u_3:1:u_3],
\end{equation*}
so that $q_{12}^\infty$ corresponds to $u=(0,0,0)$. The first equation in \eqref{eq:segre_surface_eqns} is now trivially satisfied, and the second equation becomes
\begin{equation*}
    f(u):=u_2u_3+\epsilon_1u_2^2+\epsilon_2 u_1^2u_3^2+\omega_1 u_1u_2+\omega_2 u_1^2 u_3+\omega_4u_1^2=0.
\end{equation*}
The remainder of the procedure is the same as how we obtained equation \eqref{eq:sing_type}.  We apply a weighted scaling 
\begin{equation*}
    u_k=r^{j_k}v_k\qquad (1\leq k\leq 3),
\end{equation*}
where $r$ is a free scalar, and we look for triples $(j_1,j_2,j_3)\in\mathbb{Q}_{>0}^3$, such that a balance of overall weight $1$ occurs.

Putting $(j_1,j_2,j_3)=(\tfrac{1}{2},\tfrac{1}{2},\tfrac{1}{2})$, we have
\begin{equation*}
    f(u)=r(v_2v_3+\epsilon_1 v_2^2+\omega_1v_1v_2+\omega_4v_1^2)+o(r)\qquad (r\rightarrow 0).
\end{equation*}
As long as $\omega_4\neq 0$, the leading order term has an isolated singularity at $v=(0,0,0)$, and $q_{12}^\infty$ is a singularity of type $A_1$.

Next, suppose that also $\omega_4=0$. Then we apply the locally invertible polynomial mapping $u_3\mapsto u_3-\omega_1 u_1$
and take weights $(j_1,j_2,j_3)=(\tfrac{1}{3},\tfrac{1}{2},\tfrac{1}{2})$, to obtain
\begin{equation*}
    f(u_1,u_2,u_3-\omega_1 u_1)=r(v_2v_3+\epsilon_1v_2^2-\omega_1\omega_2 v_1^3)+o(r)\qquad (r\rightarrow 0).
\end{equation*}
As long as $\omega_1\omega_2\neq 0$, the leading order term has an isolated singularity at $v=(0,0,0)$, and $q_{12}^\infty$ is a singularity of type $A_2$.

Next, suppose that also $\omega_1\omega_2=0$. Without loss of generality, we consider the case $\omega_1=0$. Taking $(j_1,j_2,j_3)=(\tfrac{1}{4},\tfrac{1}{2},\tfrac{1}{2})$, gives the balance
\begin{equation*}
    f(u)=r(v_2v_3+\epsilon_1v_2^2+\omega_2 v_1^2v_3)+o(r)\qquad (r\rightarrow 0).
\end{equation*}
Now, necessarily $\epsilon_1=1$ since else the Segre surface is reducible. Therefore, as long as $\omega_2\neq 0$, the leading order term has an isolated singularity at $v=(0,0,0)$, and $q_{12}^\infty$ is a singularity of type $A_3$. Similarly, if $\omega_2=0$ but $\omega_1\neq 0$, then $q_{12}^\infty$ is a singularity of type $A_3$.

All in all,
\begin{equation*}
    \operatorname{type}(q_{12}^\infty)=\begin{cases}
        \ns & \text{if }\epsilon_3=1,\\
        \ns & \text{if }\epsilon_3=0,\omega_3\neq 0,\\
        A_1 & \text{if }\epsilon_3=0,\omega_3=0,\omega_4\neq 0,\\
        A_2 & \text{if }\epsilon_3=0,\omega_3=0,\omega_4= 0,\omega_1\omega_2\neq 0,\\
        A_3 & \text{if }\epsilon_3=0,\omega_3=0,\omega_4= 0,\omega_1\omega_2=0,\omega_1+\omega_2\neq 0,
    \end{cases}
\end{equation*}
 and in the final case when $\omega_k=0$, $1\leq k\leq 4$, the singularity is non-isolated and our running assumption that $\mathcal{X}$ is smooth is violated.
Comparing with \eqref{eq:sing_type}, we see that $q_{12}^\infty\in \overline{\mathcal{Y}}$ is an $A_{j-1}$ singularity when $p_{12}^\infty\in \overline{\mathcal{X}}$ is an $A_{j}$ singularity, for $1\leq j\leq 4$, where $A_0:=\ns$. Analogously, $q_{13}^\infty\in \overline{\mathcal{Y}}$ is an $A_{i-1}$ singularity when $p_{13}^\infty\in\overline{\mathcal{X}}$ is an $A_{i}$ singularity, $1\leq i\leq 4$. The correspondence between singularities on the triangle at infinity of the cubic and singularities on the quartic curve at infinity of the Segre surface are summarised in Figure \ref{fig:triangle_segre11}.

\begin{figure}[hb]

\centering
\begin{subfigure}{\textwidth}
 	\centering
	\begin{tikzpicture}[scale=0.8]

	\tikzstyle{star}  = [circle, minimum width=3.5pt, fill, inner sep=0pt];
	\tikzstyle{starsmall}  = [circle, minimum width=3.5pt, fill, inner sep=0pt];
	\tikzstyle{dot}  = [circle, minimum width=2.5pt, fill, inner sep=0pt];

\node[star]     (p23) at (0,2) {};
	\node     at ($(p23)+(0,0.8)$) {$p_{23}^\infty$};

\node[star]     (p12) at ({2*cos((-1*pi/6) r)},{2*sin((-1*pi/6) r)} ) {};
	\node     at ($(p12)+(0.7,-0.3)$) {$p_{12}^\infty$};

 \node[star]     (p13) at ({2*cos((-5*pi/6) r)},{2*sin((-5*pi/6) r)} ) {};
	\node     at ($(p13)+(-0.6,-0.3)$) {$p_{13}^\infty$};

	\draw[domain=-0.4:1.4,smooth,variable=\x,blue] plot ({\x*2*cos((-5*pi/6) r)+(1-\x)*2*cos((-1*pi/6) r)},{\x*2*sin((-5*pi/6) r)+(1-\x)*2*sin((-1*pi/6) r)});
 
	\draw[domain=-0.4:1.4,smooth,variable=\x,blue] plot ({\x*0+(1-\x)*2*cos((-1*pi/6) r)},{\x*2+(1-\x)*2*sin((-1*pi/6) r)});

    \draw[domain=-0.4:1.4,smooth,variable=\x,blue] plot ({\x*0+(1-\x)*2*cos((-5*pi/6) r)},{\x*2+(1-\x)*2*sin((-5*pi/6) r)});

     \node at ($0.5*(p13)+0.5*(p12)-(0,0.3)$) {$L_1^\infty$};
     \node at ($0.5*(p23)+0.5*(p12)+0.4*({cos((1*pi/6) r)},{sin((1*pi/6) r)} )$) {$L_2^\infty$};
     \node at ($0.5*(p23)+0.5*(p13)+0.3*({cos((5*pi/6) r)},{sin((5*pi/6) r)} )$) {$L_3^\infty$};

    \node[purple] at ($(p12)+(0.3,0.3)$) {{\boldmath $\ns$}};
    
    \node[purple] at ($(p23)+(0.5,0)$) {{\boldmath $A_k$}};

    \node[purple] at ($(p13)+(-0.3,0.3)$) {{\boldmath $\ns$}};

	\end{tikzpicture}\hspace{0.5cm}
 \begin{tikzpicture}[scale=0.8]

	\tikzstyle{star}  = [circle, minimum width=3.5pt, fill, inner sep=0pt];
	\tikzstyle{starsmall}  = [circle, minimum width=3.5pt, fill, inner sep=0pt];
	\tikzstyle{dot}  = [circle, minimum width=2.5pt, fill, inner sep=0pt];

    \draw[domain=-0.43:1.4,smooth,variable=\x,white] plot ({\x*0+(1-\x)*2*cos((-5*pi/6) r)},{\x*2+(1-\x)*2*sin((-5*pi/6) r)});

\node[star]     (q23) at (0,2) {};
	\node     at ($(q23)+(0,0.6)$) {$q_{23}^\infty$};

     \node[star] (q0) at ($(0,{2*sin((-1*pi/6) r)})$) {};
     \node at ($(q0)-(0,0.5)$) {$q_{1}^\infty$};

	\draw[domain=-1.5:2.5,smooth,variable=\y,blue] plot ({0.5*(\y-2)*(\y-2*sin((-1*pi/6) r))},{\y});
	\draw[domain=-1.5:2.5,smooth,variable=\y,blue] plot ({-0.5*(\y-2)*(\y-2*sin((-1*pi/6) r))},{\y});

	\node     at ($(1.6,0.5)$) {$C_2^\infty$};
	\node     at ($(-1.6,0.5)$) {$C_3^\infty$};

    \node[purple] at ($(q23)+(0,-0.4)$) {{\boldmath $A_k$}};

   \node[purple] at ($(q0)+(0,0.4)$) {{\boldmath $\ns$}};

	\end{tikzpicture}
    \caption{$\epsilon_2=\epsilon_3=1$}
    \label{fig:triangle_segre11A}
\end{subfigure}
\hfill
\begin{subfigure}{\textwidth}
	\centering
	\begin{tikzpicture}[scale=0.8]

	\tikzstyle{star}  = [circle, minimum width=3.5pt, fill, inner sep=0pt];
	\tikzstyle{starsmall}  = [circle, minimum width=3.5pt, fill, inner sep=0pt];
	\tikzstyle{dot}  = [circle, minimum width=2.5pt, fill, inner sep=0pt];

\node[star]     (p23) at (0,2) {};
	\node     at ($(p23)+(0,0.8)$) {$p_{23}^\infty$};

\node[star]     (p12) at ({2*cos((-1*pi/6) r)},{2*sin((-1*pi/6) r)} ) {};
	\node     at ($(p12)+(0.7,-0.3)$) {$p_{12}^\infty$};

 \node[star]     (p13) at ({2*cos((-5*pi/6) r)},{2*sin((-5*pi/6) r)} ) {};
	\node     at ($(p13)+(-0.6,-0.3)$) {$p_{13}^\infty$};

	\draw[domain=-0.4:1.4,smooth,variable=\x,blue] plot ({\x*2*cos((-5*pi/6) r)+(1-\x)*2*cos((-1*pi/6) r)},{\x*2*sin((-5*pi/6) r)+(1-\x)*2*sin((-1*pi/6) r)});
 
	\draw[domain=-0.4:1.4,smooth,variable=\x,blue] plot ({\x*0+(1-\x)*2*cos((-1*pi/6) r)},{\x*2+(1-\x)*2*sin((-1*pi/6) r)});

    \draw[domain=-0.4:1.4,smooth,variable=\x,blue] plot ({\x*0+(1-\x)*2*cos((-5*pi/6) r)},{\x*2+(1-\x)*2*sin((-5*pi/6) r)});

     \node at ($0.5*(p13)+0.5*(p12)-(0,0.3)$) {$L_1^\infty$};
     \node at ($0.5*(p23)+0.5*(p12)+0.4*({cos((1*pi/6) r)},{sin((1*pi/6) r)} )$) {$L_2^\infty$};
     \node at ($0.5*(p23)+0.5*(p13)+0.3*({cos((5*pi/6) r)},{sin((5*pi/6) r)} )$) {$L_3^\infty$};

    \node[purple] at ($(p12)+(0.3,0.3)$) {{\boldmath $\ns$}};
    
    \node[purple] at ($(p23)+(0.5,0)$) {{\boldmath $A_k$}};

    \node[purple] at ($(p13)+(-0.3,0.3)$) {{\boldmath $A_i$}};

	\end{tikzpicture}\hspace{0.5cm}
 \begin{tikzpicture}[scale=0.8]

	\tikzstyle{star}  = [circle, minimum width=3.5pt, fill, inner sep=0pt];
	\tikzstyle{starsmall}  = [circle, minimum width=3.5pt, fill, inner sep=0pt];
	\tikzstyle{dot}  = [circle, minimum width=2.5pt, fill, inner sep=0pt];

    \draw[domain=-0.43:1.4,smooth,variable=\x,white] plot ({\x*0+(1-\x)*2*cos((-5*pi/6) r)},{\x*2+(1-\x)*2*sin((-5*pi/6) r)});

\node[star]     (q23) at (0,2) {};
	\node     at ($(q23)+(0,0.6)$) {$q_{23}^\infty$};

     \node[star] (q0) at ($(0,{2*sin((-1*pi/6) r)})$) {};
     \node at ($(q0)-(0,0.5)$) {$q_{1}^\infty$};

\node[star]     (q13) at (-1.5,0.5) {};
	\node     at ($(q13)+(-0.6,0)$) {$q_{13}^\infty$};

	%\draw[domain=-1.5:2.5,smooth,variable=\y,blue] plot ({0.5*(\y-2)*(\y-2*sin((-1*pi/6) r))},{\y});
 	\draw[domain=-0.4:1.4,smooth,variable=\x,blue] plot 
  ({\x*(-1.5)+(1-\x)*0},{\x*0.5+(1-\x)*2});
 	\draw[domain=-0.4:1.4,smooth,variable=\x,blue] plot 
  ({\x*(-1.5)+(1-\x)*0},{\x*0.5+(1-\x)*(-1)});
	\draw[domain=-1.5:2.5,smooth,variable=\y,blue] plot ({-0.5*(\y-2)*(\y-2*sin((-1*pi/6) r))},{\y});

	\node     at ($(1.65,0.5)$) {$C_2^\infty$};
	\node     at ($(-1.1,1.5)$) {$C_{3,a}^\infty$};
 	\node     at ($(-1.2,-0.45)$) {$C_{3,b}^\infty$};

   \node[purple] at ($(q13)+(0.75,0)$) {{\boldmath $A_{i-1}$}};

   \node[purple] at ($(q23)+(0,-0.4)$) {{\boldmath $A_k$}};

   \node[purple] at ($(q0)+(0,0.4)$) {{\boldmath $\ns$}};

	\end{tikzpicture}
    \caption{$\epsilon_2=0,\epsilon_3=1$}
    \label{fig:triangle_segre1B}
\end{subfigure}
\end{figure}
\begin{figure}[ht]\ContinuedFloat

\centering
\begin{subfigure}{\textwidth}
	\centering
	\begin{tikzpicture}[scale=0.8]

	\tikzstyle{star}  = [circle, minimum width=3.5pt, fill, inner sep=0pt];
	\tikzstyle{starsmall}  = [circle, minimum width=3.5pt, fill, inner sep=0pt];
	\tikzstyle{dot}  = [circle, minimum width=2.5pt, fill, inner sep=0pt];

\node[star]     (p23) at (0,2) {};
	\node     at ($(p23)+(0,0.8)$) {$p_{23}^\infty$};

\node[star]     (p12) at ({2*cos((-1*pi/6) r)},{2*sin((-1*pi/6) r)} ) {};
	\node     at ($(p12)+(0.7,-0.3)$) {$p_{12}^\infty$};

 \node[star]     (p13) at ({2*cos((-5*pi/6) r)},{2*sin((-5*pi/6) r)} ) {};
	\node     at ($(p13)+(-0.6,-0.3)$) {$p_{13}^\infty$};

	\draw[domain=-0.4:1.4,smooth,variable=\x,blue] plot ({\x*2*cos((-5*pi/6) r)+(1-\x)*2*cos((-1*pi/6) r)},{\x*2*sin((-5*pi/6) r)+(1-\x)*2*sin((-1*pi/6) r)});
 
	\draw[domain=-0.4:1.4,smooth,variable=\x,blue] plot ({\x*0+(1-\x)*2*cos((-1*pi/6) r)},{\x*2+(1-\x)*2*sin((-1*pi/6) r)});

    \draw[domain=-0.4:1.4,smooth,variable=\x,blue] plot ({\x*0+(1-\x)*2*cos((-5*pi/6) r)},{\x*2+(1-\x)*2*sin((-5*pi/6) r)});

     \node at ($0.5*(p13)+0.5*(p12)-(0,0.3)$) {$L_1^\infty$};
     \node at ($0.5*(p23)+0.5*(p12)+0.4*({cos((1*pi/6) r)},{sin((1*pi/6) r)} )$) {$L_2^\infty$};
     \node at ($0.5*(p23)+0.5*(p13)+0.3*({cos((5*pi/6) r)},{sin((5*pi/6) r)} )$) {$L_3^\infty$};

    \node[purple] at ($(p12)+(0.3,0.3)$) {{\boldmath $A_j$}};
    
    \node[purple] at ($(p23)+(0.5,0)$) {{\boldmath $A_k$}};

    \node[purple] at ($(p13)+(-0.3,0.3)$) {{\boldmath $\ns$}};

	\end{tikzpicture}\hspace{0.5cm}
 \begin{tikzpicture}[scale=0.8]

	\tikzstyle{star}  = [circle, minimum width=3.5pt, fill, inner sep=0pt];
	\tikzstyle{starsmall}  = [circle, minimum width=3.5pt, fill, inner sep=0pt];
	\tikzstyle{dot}  = [circle, minimum width=2.5pt, fill, inner sep=0pt];

    \draw[domain=-0.43:1.4,smooth,variable=\x,white] plot ({\x*0+(1-\x)*2*cos((-5*pi/6) r)},{\x*2+(1-\x)*2*sin((-5*pi/6) r)});

\node[star]     (q23) at (0,2) {};
	\node     at ($(q23)+(0,0.6)$) {$q_{23}^\infty$};

     \node[star] (q0) at ($(0,{2*sin((-1*pi/6) r)})$) {};
     \node at ($(q0)-(0,0.5)$) {$q_{1}^\infty$};

\node[star]     (q12) at (1.5,0.5) {};
	\node     at ($(q12)+(+0.6,0)$) {$q_{12}^\infty$};

	\draw[domain=-1.5:2.5,smooth,variable=\y,blue] plot ({0.5*(\y-2)*(\y-2*sin((-1*pi/6) r))},{\y});
 	\draw[domain=-0.4:1.4,smooth,variable=\x,blue] plot 
  ({\x*(1.5)+(1-\x)*0},{\x*0.5+(1-\x)*2});
 	\draw[domain=-0.4:1.4,smooth,variable=\x,blue] plot 
  ({\x*(1.5)+(1-\x)*0},{\x*0.5+(1-\x)*(-1)});
%	\draw[domain=-1.5:2.5,smooth,variable=\y,blue] plot ({-0.5*(\y-2)*(\y-2*sin((-1*pi/6) r))},{\y});

	\node     at ($(1.1,1.5)$) {$C_{2,a}^\infty$};
 	\node     at ($(1.2,-0.45)$) {$C_{2,b}^\infty$};
	\node     at ($(-1.6,0.5)$) {$C_3^\infty$};

   \node[purple] at ($(q23)+(0,-0.4)$) {{\boldmath $A_k$}};

   \node[purple] at ($(q0)+(0,0.4)$) {{\boldmath $\ns$}};

    \node[purple] at ($(q12)+(-0.8,0)$) {{\boldmath $A_{j-1}$}};

	\end{tikzpicture}
    \caption{$\epsilon_2=1,\epsilon_3=0$}
    \label{fig:triangle_segre11C}
\end{subfigure}
\hfill
\begin{subfigure}{\textwidth}
	\centering
	\begin{tikzpicture}[scale=0.8]

	\tikzstyle{star}  = [circle, minimum width=3.5pt, fill, inner sep=0pt];
	\tikzstyle{starsmall}  = [circle, minimum width=3.5pt, fill, inner sep=0pt];
	\tikzstyle{dot}  = [circle, minimum width=2.5pt, fill, inner sep=0pt];

\node[star]     (p23) at (0,2) {};
	\node     at ($(p23)+(0,0.8)$) {$p_{23}^\infty$};

\node[star]     (p12) at ({2*cos((-1*pi/6) r)},{2*sin((-1*pi/6) r)} ) {};
	\node     at ($(p12)+(0.7,-0.3)$) {$p_{12}^\infty$};

 \node[star]     (p13) at ({2*cos((-5*pi/6) r)},{2*sin((-5*pi/6) r)} ) {};
	\node     at ($(p13)+(-0.6,-0.3)$) {$p_{13}^\infty$};

	\draw[domain=-0.4:1.4,smooth,variable=\x,blue] plot ({\x*2*cos((-5*pi/6) r)+(1-\x)*2*cos((-1*pi/6) r)},{\x*2*sin((-5*pi/6) r)+(1-\x)*2*sin((-1*pi/6) r)});
 
	\draw[domain=-0.4:1.4,smooth,variable=\x,blue] plot ({\x*0+(1-\x)*2*cos((-1*pi/6) r)},{\x*2+(1-\x)*2*sin((-1*pi/6) r)});

    \draw[domain=-0.4:1.4,smooth,variable=\x,blue] plot ({\x*0+(1-\x)*2*cos((-5*pi/6) r)},{\x*2+(1-\x)*2*sin((-5*pi/6) r)});

     \node at ($0.5*(p13)+0.5*(p12)-(0,0.3)$) {$L_1^\infty$};
     \node at ($0.5*(p23)+0.5*(p12)+0.4*({cos((1*pi/6) r)},{sin((1*pi/6) r)} )$) {$L_2^\infty$};
     \node at ($0.5*(p23)+0.5*(p13)+0.3*({cos((5*pi/6) r)},{sin((5*pi/6) r)} )$) {$L_3^\infty$};

    \node[purple] at ($(p12)+(0.3,0.3)$) {{\boldmath $A_j$}};
    
    \node[purple] at ($(p23)+(0.5,0)$) {{\boldmath $A_k$}};

    \node[purple] at ($(p13)+(-0.3,0.3)$) {{\boldmath $A_i$}};

	\end{tikzpicture}\hspace{0.5cm}
 \begin{tikzpicture}[scale=0.8]

	\tikzstyle{star}  = [circle, minimum width=3.5pt, fill, inner sep=0pt];
	\tikzstyle{starsmall}  = [circle, minimum width=3.5pt, fill, inner sep=0pt];
	\tikzstyle{dot}  = [circle, minimum width=2.5pt, fill, inner sep=0pt];

    \draw[domain=-0.43:1.4,smooth,variable=\x,white] plot ({\x*0+(1-\x)*2*cos((-5*pi/6) r)},{\x*2+(1-\x)*2*sin((-5*pi/6) r)});

\node[star]     (q23) at (0,2) {};
	\node     at ($(q23)+(0,0.6)$) {$q_{23}^\infty$};

     \node[star] (q0) at ($(0,{2*sin((-1*pi/6) r)})$) {};
     \node at ($(q0)-(0,0.5)$) {$q_{1}^\infty$};

\node[star]     (q13) at (-1.5,0.5) {};
	\node     at ($(q13)+(-0.6,0)$) {$q_{13}^\infty$};

\node[star]     (q12) at (1.5,0.5) {};
	\node     at ($(q12)+(+0.6,0)$) {$q_{12}^\infty$};

	%\draw[domain=-1.5:2.5,smooth,variable=\y,blue] plot ({0.5*(\y-2)*(\y-2*sin((-1*pi/6) r))},{\y});
 	\draw[domain=-0.4:1.4,smooth,variable=\x,blue] plot 
  ({\x*(-1.5)+(1-\x)*0},{\x*0.5+(1-\x)*2});
 	\draw[domain=-0.4:1.4,smooth,variable=\x,blue] plot 
  ({\x*(-1.5)+(1-\x)*0},{\x*0.5+(1-\x)*(-1)});
   	\draw[domain=-0.4:1.4,smooth,variable=\x,blue] plot 
  ({\x*(1.5)+(1-\x)*0},{\x*0.5+(1-\x)*2});
 	\draw[domain=-0.4:1.4,smooth,variable=\x,blue] plot 
  ({\x*(1.5)+(1-\x)*0},{\x*0.5+(1-\x)*(-1)});
%	\draw[domain=-1.5:2.5,smooth,variable=\y,blue] plot ({-0.5*(\y-2)*(\y-2*sin((-1*pi/6) r))},{\y});

	\node     at ($(1.1,1.5)$) {$C_{2,a}^\infty$};
 	\node     at ($(1.2,-0.45)$) {$C_{2,b}^\infty$};
	\node     at ($(-1.1,1.5)$) {$C_{3,a}^\infty$};
 	\node     at ($(-1.2,-0.45)$) {$C_{3,b}^\infty$};

   \node[purple] at ($(q23)+(0,-0.4)$) {{\boldmath $A_k$}};

   \node[purple] at ($(q0)+(0,0.4)$) {{\boldmath $\ns$}};

   \node[purple] at ($(q13)+(0.75,0)$) {{\boldmath $A_{i-1}$}};

       \node[purple] at ($(q12)+(-0.8,0)$) {{\boldmath $A_{j-1}$}};

	\end{tikzpicture}
    \caption{$\epsilon_2=\epsilon_3=0$}
    \label{fig:triangle_segre11D}
\end{subfigure}        
\caption{The configurations of lines and singularities on the curves at infinity of the embedded affine cubic surface $\mathcal{X}$ and embedded affine Segre surface $\mathcal{Y}$, for different choices of $\epsilon_2,\epsilon_3\in\{0,1\}$. Here the singularity type at each intersection point is depicted in purple, with corresponding indices satisfying $0\leq k\leq 4$, $1\leq i,j\leq 4$, and $A_0:=\ns$. The admissable values of $(i,j,k)$ can be read off Table \ref{table:afine_cubic_classification}.}
	\label{fig:triangle_segre11}

\end{figure}
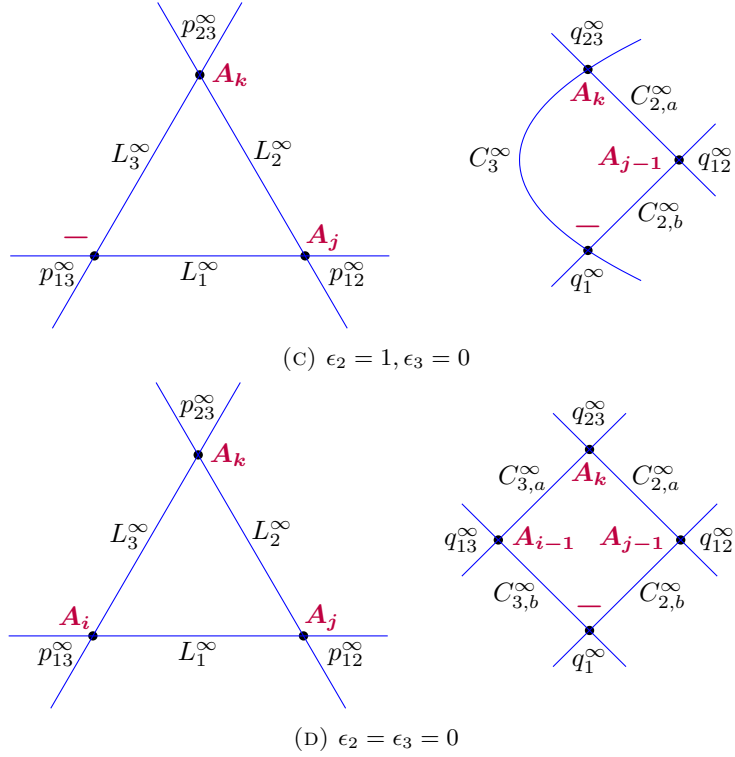

Finally, we discuss the relationship between affine lines on the cubic and affine lines on the Segre surface, in the following lemma.
\begin{lemma}\label{lem:linescubicsegre}
Let $L$ be any line on the cubic surface $\overline{\mathcal{X}}$, not at infinity. If $L$ intersects only with the line $L_1^\infty$ at infinity, then $\pi$ maps $L$ to a conic in $\overline{\mathcal{Y}}$. Otherwise, $\pi$ maps $L$ to a line in $\overline{\mathcal{Y}}$. All the lines in the Segre surface $\overline{\mathcal{Y}}$, not at infinity, arise in this way.
\end{lemma}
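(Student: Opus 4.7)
The plan is to exploit that the polynomial map $\pi$ restricts to an isomorphism $\pi|_\mathcal{X}\colon \mathcal{X}\to \mathcal{Y}$ of affine varieties, with regular inverse $(y_1,y_2,y_3,y_4)\mapsto (y_1,y_2,y_3)$, and then reduce the problem to an explicit parametric computation on affine lines. Any affine line $L\cap\mathcal{X}\subseteq \mathbb{C}^3$ may be written as $x=ta+b$, $t\in\mathbb{C}$, with direction vector $a=(a_1,a_2,a_3)\neq 0$, and its point at infinity $P_\infty=[0:a_1:a_2:a_3]$ necessarily lies on the triangle $L_1^\infty\cup L_2^\infty\cup L_3^\infty$ since $L\subseteq \overline{\mathcal{X}}$; hence at least one coordinate $a_k$ vanishes. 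Under this correspondence, the hypothesis that $L$ intersects only $L_1^\infty$ at infinity translates into $a_1=0$ with $a_2,a_3\neq 0$, while the complementary case corresponds to $a_2=0$ or $a_3=0$.

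Applying $\pi$ to the parametrisation gives $y_k=ta_k+b_k$ for $1\leq k\leq 3$ together with $y_4=(ta_2+b_2)(ta_3+b_3)$. If $a_2 a_3\neq 0$, then $y_4$ is genuinely quadratic in $t$ and elimination of $t$ shows the image lies in the projective $2$-plane cut out by $Y_1=b_1 Y_0$ and $a_3(Y_2-b_2Y_0)=a_2(Y_3-b_3Y_0)$, inside which it satisfies the quadric $Y_2 Y_3-Y_0Y_4=0$. Since the parametrisation is injective, this gives an irreducible plane conic in $\overline{\mathcal{Y}}$. If instead $a_2=0$ (the case $a_3=0$ being symmetric), then $y_2=b_2$ is constant and $y_4=b_2 y_3$, so all four coordinates $y_k$ are affine-linear in $t$ and the projective closure of $\pi(L\cap \mathcal{X})$ is a line in $\overline{\mathcal{Y}}$. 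The doubly-degenerate case $a_2=a_3=0$ is handled analogously.

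For the converse, let $L'\subseteq \overline{\mathcal{Y}}$ be any line not contained in the curve at infinity, and parametrise $L'\cap\mathcal{Y}$ as $y=ta+b$, $t\in\mathbb{C}$. Substituting into the defining equation $y_2y_3-y_4=0$ and comparing coefficients of powers of $t$ yields the constraints $a_2a_3=0$, $a_2b_3+a_3b_2=a_4$, and $b_2b_3=b_4$; in particular $(a_1,a_2,a_3)\neq 0$, so $x=t(a_1,a_2,a_3)+(b_1,b_2,b_3)$ defines an affine line in $\mathbb{C}^3$. Because $\pi|_\mathcal{X}$ is an isomorphism with the regular inverse above, this line is contained in $\mathcal{X}$, and its projective closure $L\subseteq \overline{\mathcal{X}}$ satisfies $\pi(L)=L'$ by density of $L\cap\mathcal{X}$ in $L$.

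The main obstacle I anticipate is bookkeeping around the point at infinity: the map $\pi$, written homogeneously as $[X_0:X_1:X_2:X_3]\mapsto[X_0^2:X_0X_1:X_0X_2:X_0X_3:X_2X_3]$, has base locus along $L_1^\infty$ and must be resolved there using the cubic equation, as spelled out at the start of Section~\ref{subsec:blowdown}. Since that resolution is already established, checking that $\pi(L)$ really equals the projective closure of the affine image curve in each case reduces to matching $\pi(P_\infty)$ with that closure using the explicit formulas on $L_2^\infty$ and $L_3^\infty$ recorded there, which is a direct verification.
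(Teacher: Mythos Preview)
Your proposal is correct and follows essentially the same approach as the paper: both parametrise the affine part of the line as $x=ta+b$, observe that the point at infinity forces at least one $a_k=0$, and then analyse whether $y_4=(ta_2+b_2)(ta_3+b_3)$ is genuinely quadratic or affine-linear in $t$; the converse is likewise handled in both by parametrising $L'\cap\mathcal{Y}$ and using $y_2y_3=y_4$ to conclude $(a_1,a_2,a_3)\neq 0$. Your treatment is slightly more explicit about the plane containing the conic and about the boundary bookkeeping, but the argument is the same.
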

\begin{proof}
Take any line $L$ in the cubic surface $\overline{\mathcal{X}}$, not at infinity. Suppose that this line intersects only $L_1^\infty$ at infinity. Then its affine part admits a parametrisation
\begin{equation}\label{eq:rational_parametrisation}
    x=t\,a+b\quad (t\in\mathbb{C}),
\end{equation}
for some $a,b\in\mathbb{C}^3$, with $a_1=0$ and $a_{2},a_3\neq 0$. Therefore
\begin{equation*}
    \pi(L)=\{[t_0^2:b_1 t_0^2:a_2t_1t_0+b_2 t_0^2:a_3t_1t_0+b_3t_0^2:(a_2t_1+b_2 t_0)(a_3t_1+b_3t_0)]:[t_0:t_1]\in\mathbb{P}^1\},
\end{equation*}
is a conic.

Otherwise, the affine part of the line admits a parametrisation \eqref{eq:rational_parametrisation}, with either $a_2=0$ or $a_3=0$, and the image under $\pi$ is thus a line, since $y_4=y_2 y_3$ will be affine linear in $t$.

Conversely, take any line $L$ in the Segre surface $\overline{\mathcal{Y}}$, not at infinity. The affine part of the line admits a parametrisation of the form
\begin{equation*}
y=a\,t+b,
\end{equation*}
for some $a,b\in\mathbb{C}^4$, with at least one $a_k\neq 0$, $1\leq k\leq 3$, since $y_4=y_2y_3$. It follows that
\begin{equation*}
\pi^{-1}(L)=\{[a_1 t_1+b_1 t_0:a_2t_1+b_2 t_0:a_3t_1+b_3t_0]:[t_0:t_1]\in\mathbb{P}^1\},
\end{equation*}
is a line. In particular $L$ is the image under $\pi$ of a line in $\overline{\mathcal{X}}$ and the lemma follows.
\end{proof}

\begin{remark}
The classical Cayley–Salmon theorem \cite{cay1849,sal1849} states that any smooth cubic surface in $\mathbb{P}^3$ contains precisely 27 lines. Furthermore, it is well-known that any line in a smooth cubic surface intersects with precisely 10 others. Therefore, setting $\epsilon_{1,2,3}=1$, it follows from the above lemma that the smooth Segre surface $\overline{\mathcal{Y}}$ contains precisely $27-10-1=16$ lines, as expected. This is also consistent with the line counting on the blow-up model in Remark \ref{remark:lines_segre}.
\end{remark}

Starting with any embedded affine cubic $\mathcal{X}$ given by the normal form \eqref{eq:cubic}, we can carry out the above construction with respect to any of the three lines at infinity, leading, in general, to three inequivalent affine Segre surfaces. The results of this for the cubic surfaces corresponding to Painlev\'e equations are given in Table \ref{table:segre_summary}.

In the following sections, we will discuss some examples of this. For this purpose, and for general reference, 
we have provided Table \ref{table:segre_classification} which details the possible singularity configurations and number of lines on Segre surfaces. We remark that the singularity configuration does not always determine the number of lines, and vice versa. For example, as follows from the table, there exist Segre surfaces with one singularity, of type $A_3$, both with $5$ lines and with $4$ lines on them.

\renewcommand{\arraystretch}{1.2}
\begin{table}[t]
\centering
\begin{tabular}{|c | c || c | c | c |} 
 \hline
$\mathcal Y$ singularities & P-eqn & $L_1^\infty$ & $L_2^\infty$ & $L_3^\infty$ \\
 \hline
$\ns$,$\ns$,$\ns$ & $\Psix$ & $\ns$ $(16,0)$ & $\ns$ $(16,0)$ & $\ns$ $(16,0)$\\ \hline
$\ns$,$\ns$,$A_1$ & $\Pfive$ & $\ns$ $(14,2)$ & $\ns$ $(14,2)$ & $A_1$ $(12,0)$\\ \hline
$\ns$,$\ns$,$A_2$ & $\Pthree^{D_6}$, $\Pfive^\text{deg}$ & $A_1$ $(10,2)$ & $A_1$ $(10,2)$ & $A_2$ $(8,0)$\\ \hline
$\ns$,$\ns$,$A_3$ & $\Pthree^{D_7}$ & $A_2$ $(6,2)$ & $A_2$ $(6,2)$ & $A_3$ $(5,0)$\\ \hline
$\ns$,$\ns$,$A_4$ & $\Pthree^{D_8}$ & $A_3$ $(3,2)$ & $A_3$ $(2,2)$ & $A_4$ $(3,0)$\\ \hline
$\ns$,$A_1$,$A_1$ & $\Pfour$ & $\ns$ $(12,4)$ & $A_1$ $(10,2)$ & $A_1$ $(10,2)$\\ \hline
$A_1$,$A_1$,$A_1$ & $\Ptwo^\text{JM}$ & $A_1$ $(8,4)$ & $A_1$ $(8,4)$ & $A_1$ $(8,4)$\\ \hline
$\ns$,$A_1$,$A_2$ & $\Ptwo^\text{FN}$ & $A_1$ $(8,4)$ & $2A_1$ $(7,2)$ & $A_2$ $(6,2)$\\ \hline
$A_1$,$A_1$,$A_2$ & $\Pone$ & $2A_1$ $(5,4)$ & $2A_1$ $(5,4)$ & $A_2$ $(4,4)$\\ \hline
% \hline
% $\ns$,$A_2$,$A_2$ & - & & & \\ \hline
% $\ns$,$A_1$,$A_3$ & - & & & \\ \hline
% $A_1$,$A_1$,$A_3$ & -  & & & \\ \hline
% $A_1$,$A_2$,$A_2$ & -  & & & \\ \hline
% $\ns$,$A_1$,$A_4$ & -  & & & \\ \hline
% $A_2$,$A_2$,$A_2$ & -  & & & \\ \hline
% \hline
\end{tabular}
\caption{Table summarising geometric data of embedded affine Segre surfaces obtained by applying the construction in Section \ref{subsec:blowdown} with respect to any of the three lines at infinity, for each of the Painlev\'e cubic surfaces, for generic parameter values. The rows represent the different Painlev\'e cubic surfaces and, in the three columns on the right,
the line at infinity used in the construction is specified. Each entry takes the form `$S$, $(m,n)$', with $S$ listing the singularities at infinity on the Segre surface, $m$ the number of affine lines and $n$ the number of lines at infinity.}\label{table:segre_summary}
\end{table}

\renewcommand{\arraystretch}{1.2}
\begin{table}[t]
\centering
\begin{tabular}{|c | c | c | c | c | c |} 
 \hline
 $\ns$ & $A_1$ & $A_2$ & $A_3$ & $A_3$ & $A_4$ \\
  \hline
 $16$ & $12$ & $8$ & $5$ & $4$ & $3$\\
 \hline \hline
   $2A_1$& $2A_1$ & $A_1+A_2$ & $A_1+A_3$ & $3 A_1$ & $2A_1+A_2$ \\
   \hline
  $9$ & $8$ & $6$ & $3$ & $6$ & $4$\\
   \hline \hline
   $2A_1+A_3$ & $4A_1$ & $D_4$ & $D_5$ & &\\
   \hline
   $2$ & $4$ & $2$ & $1$ & &\\
   \hline
\end{tabular}
\caption{Possible singularity configurations and number of lines on Segre surfaces in $\mathbb C\mathbb P^4$, taken from Dolgachev \cite[Table 8.6]{Dolgachev}. In each of the three blocks, the first row gives the singularities and the second row the number lines. }\label{table:segre_classification}
\end{table}

% The confluence between cubics is completely algebraic, namely it can be obtained by appropriately re-scaling the variables $x_i$ and the parameters $\omega_i$ in a parameter and then sending that parameter to zero. 

% Whatever algebraic confluence we take, this is respected by the y-Segres, namely the confluence can only proceed vertically in table 6: we cannot generate the PV Segre corresponding to $L^\infty_3$ from the $\Psix$ Segre  corresponding to $L^\infty_2$.

\subsection{Affine cubic and Segre surfaces associated with $\Ptwo$}\label{subsec:piicubics}
In this section, we study the two affine cubic surfaces associated with $\Ptwo$ and corresponding Segre surfaces. In particular, we are going to use the construction in the last section to derive an explicit isomorphism between the two affine cubics. In particular, this shows that the $\mathcal{Z}$-Segre for $\Ptwo^{\text{JM}}$ is also a natural $\mathcal{Z}$-Segre for $\Ptwo^{\text{FN}}$.

Consider the general $\Ptwo^{\text{JM}}$ cubic
\begin{equation}\label{eq:piijmcubic}
    x_1x_2x_3-x_1+\omega_2x_2-x_3-\omega_2+1=0.
\end{equation}
It is smooth unless $\omega_2=-1$ and reduces to the character variety of $\Pone$ as $\omega_2\rightarrow 0$.

For generic $\omega_2$, the embedded cubic contains $9$ affine lines, explicitly given by
\begin{align*}
&L_1^{\text{JM}}=\{x_1+\omega_2=0, x_3=1\} & 
&L_6^{\text{JM}}=\{x_1-1=0,x_2=1\}, \\
&L_2^{\text{JM}}=\{x_3+\omega_2=0, x_1=1\}, & 
&L_7^{\text{JM}}=\{x_3-1=0,x_2=1\}, \\
&L_3^{\text{JM}}=\{x_1+\omega_2=0,x_2=-\omega_2^{-1}\}, & 
&L_8^{\text{JM}}=\{x_1=0,x_3-\omega_2x_2=1-\omega_2\}, \\
&L_4^{\text{JM}}=\{x_3+\omega_2=0,x_2=-\omega_2^{-1}\}, & 
&L_9^{\text{JM}}=\{x_3=0,x_1-\omega_2x_2=1-\omega_2\}, \\
&L_5^{\text{JM}}=\{x_2=0,x_1+x_3=1-\omega_2\}. & 
&
\end{align*}

We follow the construction in the previous section with respect to the line $L_3^\infty$ at infinity and introduce variables
\begin{equation*}
    y_1=x_1,\quad y_2=x_2,\quad y_3=x_3,\quad y_4=x_1 x_2,
\end{equation*}
to obtain the corresponding affine Segre surface $\mathcal{Y}^\text{JM}\subseteq\mathbb{C}^4$, defined by
\begin{subequations}
\begin{align}
    &y_1y_2-y_4=0,\label{eq:PIIJMsegre:a}\\
    &y_3y_4-y_1+\omega_2y_2-y_3-\omega_2+1=0.\label{eq:PIIJMsegre:b}
\end{align}
\end{subequations}
We obtain the canonical projective completion $\overline{\mathcal{Y}}^\text{JM}\subseteq\mathbb{P}^4$ through homogeneous coordinates \eqref{eq:homcoordinatesy}, and the curve at infinity is given by
\begin{equation*}
    Y_0=0,\quad Y_1 Y_2=0,\quad Y_3 Y_4=0.
\end{equation*}
This curve is the product of four lines, intersecting in four points, forming a rectangle, as in Figure \ref{fig:triangle_segre11D}. The four intersection points are given by
\begin{equation*}
p_k^\text{JM}:\quad Y_k=1,\quad Y_j=0\quad (0\leq j\leq 4, j\neq k),
\end{equation*}
where $1\leq k\leq 4$. The point $p_3^\text{JM}$ is an $A_1$ singularity and the other three are regular points in $\overline{\mathcal{Y}}^\text{JM}$.
Note, furthermore, that $x\rightarrow y$ maps $L_k^{\text{JM}}$, $1\leq k\leq 8$ to affine lines on $\mathcal{Y}^\text{JM}$, but $L_9^{\text{JM}}$ is mapped to a conic. Thus $\overline{\mathcal{Y}}^\text{JM}$ is a Segre surface with one $A_1$ singularity and $12$ lines, four of which lie at infinity, as indicated in the corresponding entry of Table \ref{table:segre_summary}.

Now, recall the affine Segre surface \eqref{p2jm:seg} of $\Ptwo^\text{JM}$ obtained by confluence in Section \ref{subsec:pIIJM}. It is affinely equivalent to $\mathcal{Y}^\text{JM}$ under
\begin{align*}
    z_1&=\frac{\omega_2(y_4-1)}{1+\omega_2}, &  z_4&=1,\\
    z_2&=-1-\frac{y_3}{\omega_2}, & z_5&=\frac{y_1-1}{1+\omega_2},\\
    z_3&=\frac{1-y_1+\omega_2y_2-\omega_2y_4}{1+\omega_2}, & z_6&=-\frac{1+\omega_2y_2}{1+\omega_2},
\end{align*}
with parameter values in \eqref{p2jm:seg} given by
$\rho_4=\lambda_1=1$ and $\lambda_2=1+\omega_2$.

Next, we turn our attention to the embedded affine cubic surface of $\Ptwo^\text{FN}$, defined by
\begin{equation}\label{eq:piifncubic}
  \tilde x_1\tilde x_2\tilde x_3+\tilde x_1^2+\omega_1 \tilde x_1-\tilde x_2+1=0.
\end{equation}
It is smooth unless $\omega_1^2=4$. It will be helpful to write $\omega_1=-(\upsilon_0+\upsilon_0^{-1})$, so that the $8$ affine lines on the cubic admit a simple description,
\begin{align*}
&L_1^{\text{FN}}=\{\tilde{x}_1=\upsilon_0, \tilde{x}_3=\upsilon_0^{-1}\}, & 
&L_5^{\text{FN}}=\{x_2=1,\tilde{x}_1+\tilde{x}_3=\upsilon_0+\upsilon_0^{-1}\}, \\
&L_2^{\text{FN}}=\{\tilde{x}_3=\upsilon_0, \tilde{x}_1=\upsilon_0^{-1}\}, & 
&L_6^{\text{FN}}=\{\tilde{x}_1=\upsilon_0^{-1},\tilde{x}_2=0\}, \\
&L_3^{\text{FN}}=\{\tilde{x}_1=\upsilon_0,\tilde{x}_2=0\}, & 
&L_7^{\text{FN}}=\{\upsilon_0\tilde{x}_1+\tilde{x}_2=1,\tilde{x}_3=\upsilon_0^{-1}\}, \\
&L_4^{\text{FN}}=\{\tilde{x}_3=\upsilon_0,\tilde{x}_1+\upsilon_0\tilde{x}_2=\upsilon_0\}, & 
&L_8^{\text{FN}}=\{\tilde{x}_1=0,\tilde{x}_2=1\}.
\end{align*}

% \textbf{Calculation:} The first two linear equations are tautologically true. Regarding the third,
% \begin{align*}
%     {z}_{3} {z}_{4} - {z}_{1} {z}_{2} \lambda_{1}&=\frac{1-\omega_2}{\omega_2^2}\left((1-y_1-y_3+y_4)-(\omega_2-y_4)(y_2-1)\right)\\
%     &=\frac{1-\omega_2}{\omega_2^2}\left(1-y_1-y_3-\omega_2y_2+\omega_2+y_2y_4\right)
% \end{align*}
% which is equation \eqref{eq:PIIJMsegre:b}.

% Similarly,
% \begin{align*}
%     {z}_{3} {z}_{4} - {z}_{5} {z}_{6} \lambda_{1}&=\frac{1-\omega_2}{\omega_2^2}\left((1-y_1-y_3+y_4)-(y_1-1)(y_3-1)\right)\\
%     &=\frac{1-\omega_2}{\omega_2^2}\left(y_4-y_1y_3\right)
% \end{align*}
% which is equation \eqref{eq:PIIJMsegre:a}.
Next, we consider the construction Section \ref{sec:blowdownsegre} with respect to the line $L_1^\infty$ at infinity, so that the resulting affine Segre surface has only an $A_1$ singularity in its canonical projective completion, like $\overline{\mathcal{Y}}^\text{JM}$. So, we introduce the variables
\begin{equation*}
    \tilde y_1=\tilde x_1,\quad \tilde y_2=\tilde x_2,\quad \tilde y_3=\tilde x_3,\quad \tilde y_4=\tilde x_2 \tilde x_3,
\end{equation*}
to obtain the affine Segre surface $\mathcal{Y}^\text{FN}\subseteq\mathbb{C}^4$ defined by
\begin{align*}
    &\tilde{y}_2\tilde{y}_3-\tilde{y}_4=0,\\
    &\tilde{y}_1\tilde{y}_4+\tilde{y}_1^2+\omega_1\tilde{y}_1-\tilde{y}_2+1=0.
\end{align*}
Its canonical projective completion $\overline{\mathcal{Y}}^\text{FN}\subseteq\mathbb{P}^4$ through homogeneous coordinates $\widetilde{Y}$, see equation \eqref{eq:homcoordinatesy}, and the curve at infinity is given by
\begin{equation*}
    \tilde Y_0=0,\quad \tilde Y_2 \tilde Y_3=0,\quad \tilde Y_1(\tilde Y_1+\tilde Y_4)=0.
\end{equation*}
This curve is also the product of four lines, intersecting in four points, forming a rectangle, as in Figure \ref{fig:triangle_segre11D}. The four intersection points are given by
\begin{align*}
p_1^\text{FN}&=[0:1:0:0:-1],\\
p_2^\text{FN}&=[0:0:0:0:1],\\
p_3^\text{FN}&=[0:0:0:1:0],\\
p_4^\text{FN}&=[0:0:1:0:0],
\end{align*}
where $1\leq k\leq 4$. The point $p_3^\text{FN}$ is an $A_1$ singularity and the other three are regular points in $\overline{\mathcal{Y}}^\text{FN}$.
Note, furthermore, that $x\rightarrow y$ maps each of the affine lines $L_k^{\text{FN}}$, $1\leq k\leq 8$, to an affine line on $\mathcal{Y}^\text{FN}$.

Since $\mathcal{Y}^\text{JM}$ and $\mathcal{Y}^\text{FN}$ are both smooth affine Segre surfaces with a `rectangle of lines' at infinity in their canonical projective completions, with one $A_1$ singularity, it is natural to ask whether they might be affinely equivalent. We thus look for an $A\in GL_5(\mathbb{C})$, such that the mapping
\begin{equation*}
   \mathbb{P}^4\rightarrow \mathbb{P}^4, Y\mapsto \widetilde{Y}=AY,
\end{equation*}
maps $\mathcal{Y}^\text{JM}$ onto $\mathcal{Y}^\text{FN}$. We require that the curves at infinity are mapped to one another, and to this end we impose that the intersection point $p_k^\text{JM}$ is mapped to $p_k^\text{FN}$, $1\leq k\leq 4$. As a result, $A$ takes the form
\begin{equation*}
    A=\begin{bmatrix}
        a_{11} & 0 & 0 & 0 & 0\\
        a_{21} & a_{22} & 0 & 0 & 0\\
        a_{31} & 0 & 0 & 0 & a_{35}\\
        a_{41} & 0 & 0 & a_{44} & 0\\
        a_{51} & -a_{22} & a_{53} & 0 & 0\\
    \end{bmatrix}.
\end{equation*}
where we may scale $A$ such that $a_{11}=1$. By direct substitution of
\begin{equation}\label{eq:yytildemapping}
    \widetilde{y}=A\cdot(1,y_1,y_2,y_3,y_4)^T,
\end{equation}
into the equations defining $\mathcal{Y}^\text{FN}$, and simplification modulo the equations defining 
$\mathcal{Y}^\text{JM}$, we obtain an overdetermined linear system, from which the remaining coefficients of $A$ are determined,
\begin{equation*}
    A=\begin{bmatrix}
        1 & 0 & 0 & 0 & 0\\
        0 & \frac{1-\omega_2}{\omega_1\omega_2} & 0 & 0 & 0\\
        1 & 0 & 0 & 0 & -1\\
        0 & 0 & 0 & \frac{1-\omega_2}{\omega_1\omega_2} & 0\\
        -\omega_1 & \frac{\omega_2-1}{\omega_1\omega_2} & \frac{1-\omega_2}{\omega_1} & 0 & 0\\
    \end{bmatrix},
\end{equation*}
as well as a remaining necessary and sufficient condition relating $\omega_1$ and $\omega_2$, namely 
\begin{equation}\label{eq:omega_relations}
   \omega_1^2+2=\omega_2+\omega_2^{-1}.
\end{equation}
In other words, when \eqref{eq:omega_relations} holds, then \eqref{eq:yytildemapping} defines an affine transformation that induces an isomorphism from $\mathcal{Y}^\text{JM}$ to $\mathcal{Y}^\text{FN}$.
As a consequence, we obtain the following result.
\begin{proposition}
\label{prop:PII}
The affine cubic surface of $\Ptwo^{\text{JM}}$, defined by equation \eqref{eq:piijmcubic}, and
 the affine cubic surface of $\Ptwo^{\text{FN}}$, defined by equation \eqref{eq:piifncubic}, 
are isomorphic when their parameters are related by \eqref{eq:omega_relations}.
Using the rational parametrisation 
\begin{equation*}
    \omega_1=-(\upsilon_0+\upsilon_0^{-1}), \quad \omega_2=-\upsilon_0^2, \quad \upsilon_0\in\mathbb{C}^*,
\end{equation*} an explicit isomorphism is given by
 \begin{equation*}
 \tilde x_1=\upsilon_0^{-1}x_1,\quad \tilde x_2=1-x_1x_2,\quad    \tilde x_3=\upsilon_0^{-1}x_3, 
 \end{equation*}
 with inverse
\begin{equation*}
x_1=\upsilon_0\tilde x_1,\quad x_2=1+\upsilon_0^{-2}-\upsilon_0^{-1}(\tilde x_1+\tilde x_2\tilde x_3),\quad    x_3=\upsilon_0\tilde x_3.
 \end{equation*} 
\end{proposition}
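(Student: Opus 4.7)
The strategy is to leverage the work already carried out in this section: an explicit affine equivalence between the Segre surfaces $\mathcal{Y}^{\text{JM}}$ and $\mathcal{Y}^{\text{FN}}$ has been constructed (the matrix $A$), valid whenever the parameter relation \eqref{eq:omega_relations} holds. Since the blow-down mappings restrict to isomorphisms between the affine cubics and their associated affine Segre surfaces, pulling back this affine equivalence via the blow-downs will give the desired isomorphism between $\mathcal{X}^{\text{JM}}$ and $\mathcal{X}^{\text{FN}}$. The only non-trivial work is to unpack the composition into the explicit closed-form expressions stated, and to massage the inverse map into polynomial rather than rational form.

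Concretely, the plan is the following. First, I would observe that $\pi^{\text{JM}}\colon\mathcal{X}^{\text{JM}}\to\mathcal{Y}^{\text{JM}}$, $(x_1,x_2,x_3)\mapsto(x_1,x_2,x_3,x_1x_2)$, and $\pi^{\text{FN}}\colon\mathcal{X}^{\text{FN}}\to\mathcal{Y}^{\text{FN}}$, $(\tilde{x}_1,\tilde{x}_2,\tilde{x}_3)\mapsto(\tilde{x}_1,\tilde{x}_2,\tilde{x}_3,\tilde{x}_2\tilde{x}_3)$, are isomorphisms of affine varieties (by Lemma \ref{lem:linescubicsegre} and the general remark that $\pi$ restricts to a biholomorphism on the affine parts). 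Next, I would read off from the matrix $A$, setting $k=(1-\omega_2)/(\omega_1\omega_2)$, the component formulas
\[
\tilde{y}_1=k\,y_1,\quad \tilde{y}_2=1-y_4,\quad \tilde{y}_3=k\,y_3,\quad \tilde{y}_4=-\omega_1-k\,y_1+\tfrac{1-\omega_2}{\omega_1}y_2.
\]
Composing with $\pi^{\text{JM}}$ and using $y_4=x_1x_2$ then gives the candidate isomorphism
\[
\tilde{x}_1=k\,x_1,\qquad \tilde{x}_2=1-x_1x_2,\qquad \tilde{x}_3=k\,x_3.
\]
Specialising to the rational parametrisation $\omega_1=-(\upsilon_0+\upsilon_0^{-1})$, $\omega_2=-\upsilon_0^2$, the constant $k$ simplifies as $k=(1+\upsilon_0^2)/(\upsilon_0(1+\upsilon_0^2))=\upsilon_0^{-1}$, yielding the forward formulas in the proposition.

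For the inverse, the relations $x_1=\upsilon_0\tilde{x}_1$ and $x_3=\upsilon_0\tilde{x}_3$ are immediate; only $x_2$ requires care. From $\tilde{x}_2=1-x_1x_2$ one has a priori only the rational expression $x_2=(1-\tilde{x}_2)/(\upsilon_0\tilde{x}_1)$. The key manoeuvre is to eliminate the denominator using the $\text{P}_{\text{II}}^{\text{FN}}$ cubic equation \eqref{eq:piifncubic}, rewritten as
\[
1-\tilde{x}_2=-\tilde{x}_1\bigl(\tilde{x}_2\tilde{x}_3+\tilde{x}_1+\omega_1\bigr).
\]
Dividing by $\upsilon_0\tilde{x}_1$ and substituting $\omega_1=-(\upsilon_0+\upsilon_0^{-1})$ then produces the polynomial formula $x_2=1+\upsilon_0^{-2}-\upsilon_0^{-1}(\tilde{x}_1+\tilde{x}_2\tilde{x}_3)$ stated in the proposition.

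Finally, I would verify that the parametrisation is consistent with the parameter relation singled out in the Segre-surface analysis, and close the argument by direct substitution: plug the forward formulas into \eqref{eq:piifncubic}, expand, and reduce modulo \eqref{eq:piijmcubic} to confirm that the image lies on $\mathcal{X}^{\text{FN}}$; then plug the inverse formulas into \eqref{eq:piijmcubic} and reduce modulo \eqref{eq:piifncubic} for the reverse direction. These are short polynomial identities in $\mathbb{Q}(\upsilon_0)[x_1,x_2,x_3]$ and $\mathbb{Q}(\upsilon_0)[\tilde{x}_1,\tilde{x}_2,\tilde{x}_3]$ respectively. The only mildly subtle point is that the \emph{rational} formula for $x_2$ is only converted to polynomial form after invoking the defining cubic, so strictly speaking the polynomial inverse is a choice of representative of the same morphism in the coordinate ring of $\mathcal{X}^{\text{FN}}$; this is the main conceptual point to articulate, but there is no real obstacle to the proof.
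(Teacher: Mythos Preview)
Your proposal is correct and follows essentially the same approach as the paper: compose the affine-part isomorphisms $\pi^{\text{JM}}$, the affine equivalence $\mathcal{Y}^{\text{JM}}\to\mathcal{Y}^{\text{FN}}$ given by the matrix $A$, and $(\pi^{\text{FN}})^{-1}$ to obtain the stated map. The paper's proof simply records this composition without spelling out the derivation of the polynomial inverse; your use of the $\Ptwo^{\text{FN}}$ cubic relation to clear the denominator in $x_2=(1-\tilde{x}_2)/(\upsilon_0\tilde{x}_1)$ is a correct and helpful elaboration that the paper omits.
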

\begin{proof}
From equation \eqref{eq:yytildemapping}, we obtain the affine transformation
\begin{align*}
    \widetilde{y}_1=\upsilon_0^{-1}y_1,\quad
    \widetilde{y}_2=1-y_4,\quad
    \widetilde{y}_3=\upsilon_0^{-1}y_3,\quad
    \widetilde{y}_4=\upsilon_0+\upsilon_0^{-1}-\upsilon_0^{-1}y_1-\upsilon_0y_2,
\end{align*}
which induces an isomorphism from $\mathcal{Y}^\text{JM}$ to $\mathcal{Y}^\text{FN}$.
Now, note that the mapping $x\mapsto y$ from the affine cubic of $\mathcal{Y}^\text{JM}$ to $\mathcal{Y}^\text{JM}$ is an isomorphism of affine varieties, and similarly $\widetilde{x}\mapsto \widetilde{y}$ is   from the affine cubic of $\mathcal{Y}^\text{FN}$ to $\mathcal{Y}^\text{FN}$ is an isomorphism. Composing the three isomorphisms, $x\mapsto y$, $y\mapsto \widetilde{y}$ and $\widetilde{y}\mapsto \widetilde{x}$, yields the isomorphism between the affine cubic surfaces of $\mathcal{Y}^\text{JM}$ and $\mathcal{Y}^\text{FN}$ in the proposition.
\end{proof}
\begin{remark}
 The isomorphism in Proposition \ref{prop:PII} maps the line $L_k^{\text{FN}}$ to $L_k^{\text{JM}}$, $1\leq k\leq 8$. Furthermore, the inverse image of the remaining affine line, $L_9^{\text{JM}}$, is a conic in $\mathcal{Y}^\text{FN}$.
\end{remark}

\begin{remark}
    We note that a bi-rational mapping between the cubic surfaces of $\Ptwo^{\text{FN}}$ and $\Ptwo^{\text{JM}}$ was given in \cite[Remark 2.1]{CMR}. That mapping is singular along $x_2x_3=0$ and does not provide a global isomorphism as in Proposition \ref{prop:PII}.
\end{remark}

We further obtain an affine equivalence between the $\mathcal{Z}$--Segre listed in Table \ref{tb:Z-Segre}, with parameter value $\lambda_2=1-\upsilon_0^2$, and the Segre surface $\mathcal{Y}^\text{FN}$ with $\omega_1=-(\upsilon_0+\upsilon_0^{-1})$, given by
\begin{align*}
    z_1&=\frac{\upsilon_0^2\,\widetilde{y}_2}{1-\upsilon_0^2}, &  z_4&=1,\\
    z_2&=\frac{\widetilde{y}_3}{\upsilon_0}-1, & z_5&=\frac{\upsilon_0 \widetilde{y}_1-1}{1-\upsilon_0^2},\\
    z_3&=\frac{\upsilon_0(\widetilde{y}_4-\upsilon_0\widetilde{y}_2)}{1-\upsilon_0^2}, & z_6&=\frac{\upsilon_0(\upsilon_0-\widetilde{y}_1-\widetilde{y}_4)}{1-\upsilon_0^2}.
\end{align*}

\subsection{Cubic and Segre surfaces associated with $\Pone$}\label{suse:PI-Z-Segre}

Consider the general embedded affine cubic for $\Pone$  given by
\begin{equation*}
    x_1x_2x_3-x_1-x_2+1=0.
\end{equation*}
It is smooth and contains $5$ affine lines, given by
\begin{align*}
&L_1^{\text{I}}=\{x_1=0, x_2=1\} & 
&L_4^{\text{I}}=\{x_2=1,x_3=1\}, \\
&L_2^{\text{I}}=\{x_1=1, x_2=0\}, & 
&L_5^{\text{I}}=\{x_3=0,x_1+x_2=1\}, \\
&L_3^{\text{I}}=\{x_1=1,x_3=1\}. & 
&
\end{align*}

We follow the construction in the Section \ref{subsec:blowdown} with respect to the line $L_3^\infty$ at infinity and introduce variables
\begin{equation*}
    y_1=x_1,\quad y_2=x_2,\quad y_3=x_3,\quad y_4=x_1 x_2,
\end{equation*}
to obtain the corresponding affine Segre surface $\mathcal{Y}^\text{I}\subseteq\mathbb{C}^4$, defined by
\begin{align*}
    &y_1y_2-y_4=0,\\
    &y_3y_4-y_1-y_2+1=0.
\end{align*}

We obtain the canonical projective completion $\overline{\mathcal{Y}}^\text{I}\subseteq\mathbb{P}^4$ through homogeneous coordinates \eqref{eq:homcoordinatesy}, and the curve at infinity is given by
\begin{equation*}
    Y_0=0,\quad Y_1 Y_2=0,\quad Y_3 Y_4=0.
\end{equation*}
This curve is the product of four lines, intersecting in four points, forming a rectangle, as in Figure \ref{fig:triangle_segre11D}. The four intersection points are given by
\begin{equation*}
p_k^\text{I}:\quad Y_k=1,\quad Y_j=0\quad (0\leq j\leq 4, j\neq k),
\end{equation*}
where $1\leq k\leq 4$. The point $p_3^\text{I}$ is an $A_2$ singularity and the other three are regular points in $\overline{\mathcal{Y}}^\text{I}$.
Note, furthermore, that $x\rightarrow y$ maps $L_k^{\text{I}}$, $1\leq k\leq 4$ to affine lines on $\mathcal{Y}^\text{I}$, but $L_5^{\text{I}}$ is mapped to a conic. Thus $\overline{\mathcal{Y}}^\text{I}$ is a Segre surface with one $A_2$ singularity and $8$ lines, four of which lie at infinity, as indicated in the corresponding entry of Table \ref{table:segre_summary}. 
Setting
\begin{align*}
    z_1&=-y_4, &  z_4&=1,\\
    z_2&=y_3-1, & z_5&=y_1-1,\\
    z_3&=1-y_1-y_2+y_4, & z_6&=y_2-1,
\end{align*}
and vice-versa
\begin{align*}
y_1=z_5+1,& & y_2=z_6+1, & & y_3=z_2+1, & & y_4=-z_1,
\end{align*}
we obtain that ${\mathcal{Y}}^\text{I}$ is isomorphic to the following $\mathcal Z$--Segre:
\begin{subequations}\label{p1:segalt}
 \begin{align}
&{z}_{1} + {z}_{3} + {z}_{4} + {z}_{5} + {z}_{6}=0\\
&  z_4-1=0\\
 &{z}_{3} {z}_{4} - {z}_{1} {z}_{2} =0\\
 &{z}_{3} {z}_{4} - {z}_{5} {z}_{6} =0.
 \end{align}
 \end{subequations}

\subsection{Cubic and Segre surfaces associated with $\Pfive^{(deg)}$}\label{suse:PVdeg-Z-Segre} 

Looking at Table \ref{tb:monmfd}, we see that the $\Pthree^{D_6}$ and the $\Pfive^{(deg)}$ coincide, even though the parameters $\upsilon_i$ and $\nu_j$ in the two cases don't. Denoting the $\Pthree^{D_6}$ variables and parameters with the index ${}^{III}$ and the ones of $\Pfive^{(deg)}$ with an index ${}^{V_d}$ is easy to see that setting
$$
x_1^{V_d}= \frac{x_1^{III}}{\sqrt{\upsilon_0^{III}\upsilon_\infty^{III}}},
\quad x_2^{V_d}= \frac{x_2^{III}}{\sqrt{\upsilon_0^{III}\upsilon_\infty^{III}}},
\quad x_3^{V_d}= x_3^{III},
$$
$$
\upsilon_0^{V_d}=
\sqrt{\upsilon_0^{III}\upsilon_\infty^{III}},\quad \upsilon_t^{V_d}=\sqrt{\frac{\upsilon_\infty^{III}}{\upsilon_0^{III}}}
$$
we map the $\Pfive^{(deg)}$  cubic to the $\Pthree^{D_6}$ one. This shows that the two $\mathcal Z$--Segre surfaces coincide.

We can also produce the $\Pfive^{deg}$ by blowing down the $\Pfive^{deg}$ cubic
$$
x_1 x_2 x_3+ x_1^2 + x_2^2 + \omega_1 x_1+ \omega_2 x_2,
$$
at the line $X_0=X_3=0$. By an linear affine transformation (see mathematica file), this leads to
\begin{subequations}\label{p5deg:seg}
\begin{align}
& z_1+ {z}_{3} + {z}_{4} + {z}_{5} + {z}_{6}=0\\
&   \rho_{3} {z}_{3}+ {z}_{4} + \rho_{5} {z}_{5}+ \rho_{6} {z}_{6}=0\\
&{z}_{3} {z}_{4} - {z}_{1} {z}_{2}=0\\
&{z}_{3} {z}_{4} - {z}_{5} {z}_{6} =0,
\end{align}
\end{subequations}
where $\rho_6= \frac{\rho_3}{\rho_5}$.
The family \eqref{p5deg:seg} only depends on two independent parameters.

\subsection{Cubic and Segre surfaces associated with $\Pthree^{D_7}$}\label{suse:PIIID7-Z-Segre}
The affine cubic surface associated with $\Pthree^{D_7}$ is defined by
\begin{equation*}
    x_1x_2x_3+x_1^2+x_2^2+\omega_1x_1-x_2=0,
\end{equation*}
with $\omega_1\in\mathbb{C}^*$.
It is smooth for any value of $\omega_1$ and reduces to the cubic surface for $\Pthree^{D_8}$ as $\omega_1\rightarrow 0$.

The embedded surface contains $7$ affine lines, given by
\begin{align*}
&L_1=\{x_1=0,\, x_2=0\}, & 
&L_5=\{x_3+\omega_1+\omega_1^{-1}=0,\,\omega_1x_1-x_2=0\}, \\
&L_2=\{x_1=0,\, x_2=1\}, & 
&L_6=\{x_3+\omega_1+\omega_1^{-1}=0,\,x_1-\omega_1x_2+\omega_1=0\}, \\
&L_3=\{x_1+\omega_1=0,\, x_2=0\}, & 
&L_7=\{x_1+\omega_1=0,\,1-x_2+\omega_1x_3=0\}, \\
&L_4=\{x_2=1,\, x_1+x_3+\omega_1=0\}, &
\end{align*}
and three lines at infinity.

We follow the construction in Section \ref{subsec:blowdown} with respect to the line $L_1^\infty$ at infinity and introduce variables
\begin{equation*}
    y_1=x_1,\quad y_2=x_2,\quad y_3=x_3,\quad y_4=x_2 x_3,
\end{equation*}
to obtain the corresponding affine Segre surface $\mathcal{Y}\subseteq\mathbb{C}^4$, defined by
\begin{align*}
    &y_2y_3-y_4=0,\\
    &y_1y_4+y_1^2+y_2^2+\omega_1y_1-y_2=0.
\end{align*}

We obtain the canonical projective completion $\overline{\mathcal{Y}}\subseteq\mathbb{P}^4$ through homogeneous coordinates \eqref{eq:homcoordinatesy}, and the curve at infinity is given by
\begin{equation*}
    Y_0=0,\quad Y_2 Y_3=0,\quad Y_1(Y_1+Y_4)=0.
\end{equation*}
This curve is the product of two lines and a conic, intersecting in three points, see Figure \ref{fig:triangle_segre11C}. The point where the two lines at infinity intersect,
\begin{equation*}
    [Y_0:Y_1:Y_2:Y_3:Y_4]=[0:0:0:1:0],
\end{equation*}
is an $A_2$ singularity, and $\overline{\mathcal{Y}}$ is smooth elsewhere.

Note, furthermore, that $x\rightarrow y$ maps the lines $L_k$, $1\leq k\leq 6$ to affine lines on $\mathcal{Y}$, but $L_7$ is mapped to a conic. Thus $\overline{\mathcal{Y}}$ is a Segre surface with one $A_2$ singularity and $8$ lines, two of which lie at infinity, as indicated in the corresponding entry of Table \ref{table:segre_summary}. 
Setting
\begin{align*}
    z_1&=\omega_1^{-1}y_1+y_2+\omega_1^{-1}y_4, &  z_4&=1-y_2,\\
    z_2&=-1-\omega_1^{-1}y_1+y_2, & z_5&=\omega_1^{-2}y_2,\\
    z_3&=-((1+\omega_1^{-2})y_2+\omega_1^{-1}y_4), & z_6&=-1+(1+\omega_1^2)(y_2-1)-\omega_1y_3+\omega_1y_4,
\end{align*}
we obtain the corresponding $\mathcal Z$--Segre surface 
 \begin{align*}
&{z}_{1} + {z}_{2} + {z}_{3} + {z}_{4} + {z}_{5}=0,\\
&  z_4+\omega_1^2 z_5-1=0,\\
 &{z}_{3} {z}_{4} - {z}_{1} {z}_{2} =0,\\
 &{z}_{3} {z}_{4} - {z}_{5} {z}_{6} =0.
 \end{align*}
This is the entry in Table \ref{tb:Z-Segre} for $\Pthree^{D_7}$ with $\rho_5=\omega_1^2$.

%\subsection{Cubic and Segre surfaces associated with $\Pthree^{D_8}$}\label{suse:PIIID8-Z-Segre}

\section{Poisson structure}

Any algebraic variety defined as zero set of $n-2$ polynomials in $n$ variables is endowed by a natural Poisson bracket that was introduced by Nambu \cite{nambu}. \\tc{On surfaces, this bracket coincides with the one induced by the natural symplectic structure \cite{tak}.}

\tcm{In the case of the Painlev\'e differential equations, for each equation, there are three surfaces:  the monodromy manifold $\mathcal X^{(d)}$, its blow down to  $\mathcal Y^{(d)}$ and the isomorphic Segre surface $\mathcal Z^{(d)}$}. 

\tcm{In this section, we give explicit formulae for the natural Poisson brackets on each of these surfaces and show that the blow down and the isomorphism preserve the natural symplectic structure and are therefore Poisson maps}.

\subsection{qPVI}\label{subsec:qpvi_poisson}
For the Segre surface $\mathcal Z_q$ the Nambu Poisson bracket is defined by the following formula
\begin{equation}\label{eq:NT}
    \{f,g\}_{{}_{\mathcal Z}}:= \frac{df\wedge dg\wedge dh_1\wedge dh_2\wedge dh_3\wedge dh_4}{dz_1\wedge dz_2\wedge dz_3\wedge dz_4\wedge dz_5\wedge dz_6}.
\end{equation}
Applying this formula to $h_1,h_3,h_4$ defined by \eqref{qp6:segh}, and
 $h_2''$ defined in \eqref{eqp-h2-rho} in  \eqref{eq:NT}, we obtain the following:
{\allowdisplaybreaks
\begin{align}\label{eq:NTZ}
\{z_1,z_2\}_{{}_{\mathcal Z}}=& \lambda_2 ( z_3 ( z_5 (- \rho_3+ \rho_5)+ z_6 ( \rho_3- \rho_6))+ z_4 ( z_5- z_5
 \rho_5+ z_6 (-1+ \rho_6))),\\
\{z_1,z_3\}_{{}_{\mathcal Z}}=& z_3 \lambda_2 ( z_5 ( \rho_2- \rho_5)- z_6 (\rho_2- \rho_6))+ z_1 \lambda_1 ( z_5\lambda_2 (1- \rho_5)- z_6 \lambda_2 (1- \rho_6)+\nn\\
&+
z_3 (\rho_6-\rho_5)),\nn\\
\{z_1,z_4\}_{{}_{\mathcal Z}}=&  z_1 \lambda_1 \lambda_2 ( z_5 ( \rho_5- \rho_3)- z_6 ( \rho_6- \rho_3))+ z_4
( z_5 \lambda_2 ( \rho_5- \rho_2)+ z_6 \lambda_2 ( \rho_2- \rho_6)+\nn\\
&+ z_1 \lambda_1 ( \rho_5- \rho_6)),\nn\\
\{z_1,z_5\}_{{}_{\mathcal Z}}=& z_5 \lambda_2 ( z_4 (\rho_2-1)+ z_3 ( \rho_3- \rho_2))+ z_1 \lambda_1
( z_5 \lambda_2 ( \rho_3-1)+ z_3 ( \rho_3- \rho_6)+\nn \\
& z_4 ( \rho_6-1)),\nn\\
  \{z_1,z_6\}_{{}_{\mathcal Z}}=&z_6 \lambda_2 ( z_4- z_4  \rho_2+ z_3 ( \rho_2- \rho_3))+ z_1 \lambda_1 ( z_4+ z_6 \lambda_2- z_3  \rho_3- z_6 \lambda_2  \rho_3+\nn\\ 
  & z_3  \rho_5- z_4
 \rho_5),\nn\\
   \{z_2,z_3\}_{{}_{\mathcal Z}}=& z_3 \lambda_2 ( z_5  \rho_5- z_6  \rho_6)+ z_2 \lambda_1 ( z_6  \lambda_2+ z_5 \lambda_2 ( \rho_5-1)+ z_3  \rho_5- z_3  \rho_6- z_6 \lambda_2 \rho_6)  ,\nn\\
         \{z_2,z_4\}_{{}_{\mathcal Z}}=& z_4 \lambda_2 ( z_6  \rho_6- z_5  \rho_5)+ z_2 \lambda_1 ( z_5 \lambda_2 ( \rho_3- \rho_5)+ z_6 \lambda_2 ( \rho_6- \rho_3)+ z_4 (\rho_6- \rho_5)),\nn\\
  \{z_2,z_5\}_{{}_{\mathcal Z}}=& z_5 \lambda_2 ( z_4- z_3  \rho_3)+ z_2 \lambda_1 ( z_4+ z_5 \lambda_2- z_3
 \rho_3- z_5 \lambda_2  \rho_3+ z_3  \rho_6- z_4  \rho_6),\nn\\ 
  \{z_2,z_6\}_{{}_{\mathcal Z}}=& z_6 \lambda_2 (z_3  \rho_3- z_4)+ z_2 \lambda_1 ( z_6 \lambda_2 (\rho_3-1)+ z_3 ( \rho_3- \rho_5)+ z_4 ( \rho_5-1)),\nn\\ 
  \{z_3,z_4\}_{{}_{\mathcal Z}}=&\lambda_1 \lambda_2 ( z_2  z_5 (- \rho_2+ \rho_5)+ z_2  z_6 ( \rho_2-\rho_6)+ z_1 (z_6  \rho_6- z_5  \rho_5))
 ,\nn\\ 
  \{z_3,z_5\}_{{}_{\mathcal Z}}=& z_3  z_5 \lambda_2  \rho_2+ z_2 \lambda_1 ( z_5 \lambda_2 (\rho_2-1)+ z_3
( \rho_2- \rho_6))+ z_1 \lambda_1 ( z_5 \lambda_2+ z_3  \rho_6),\nn\\ 
  \{z_3,z_6\}_{{}_{\mathcal Z}}=& - z_3  z_6 \lambda_2  \rho_2- z_1 \lambda_1 ( z_6 \lambda_2+ z_3 \rho_5)+ z_2 \lambda_1 ( z_6 \lambda_2- z_3  \rho_2- z_6 \lambda_2  \rho_2+ z_3
 \rho_5),\nn\\
  \{z_4,z_5\}_{{}_{\mathcal Z}}=&
   z_4  z_6 \lambda_2  \rho_2+ z_1  z_6 \lambda_1 \lambda_2  \rho_3+ z_2
\lambda_1 ( z_6 \lambda_2 ( \rho_2- \rho_3)+ z_4 ( \rho_2- \rho_5))+ z_1  z_4
\lambda_1  \rho_5
  ,\nn\\
  \{z_5,z_6\}_{{}_{\mathcal Z}}=& \lambda_1 ( z_2 ( z_4- z_4  \rho_2+ z_3 ( \rho_2- \rho_3))+ z_1 (- z_4+ z_3
 \rho_3)),\nn
 \end{align}}
By construction, this bracket  defines a Poisson bracket (namely it satisfies the Jacobi identity \cite{tak}) on $\mathbb C[z_1,\dots,z_6]$ and the functions $h_1,h_2'',h_3, h_4$ are central elements. This means that it can be restricted to the Segre surface
\tcm{$$
\mathcal Z^{(d)}_q=\hbox{Spec}\left( \mathbb{C}[z_1,\dots,z_6]\slash\langle h_1,h_2'',h_3, h_4\rangle\right),
$$
where for each $d$, the polynomials $h_1,h_2'',h_3, h_4$ are given in Table \ref{tb:Z-Segre}.}

\begin{lemma}
All the central elements of the Poisson bracket \eqref{eq:NTZ} are algebraically dependent on $h_1,\dots,h_4$.
\end{lemma}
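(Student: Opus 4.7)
The plan is to use the Jacobian criterion for algebraic dependence. Let $f\in\mathbb{C}[z_1,\dots,z_6]$ be central for the bracket \eqref{eq:NTZ}. From the Nambu definition \eqref{eq:NT}, centrality means
\begin{equation*}
    df\wedge dg\wedge dh_1\wedge dh_2''\wedge dh_3\wedge dh_4=0
\end{equation*}
as a $6$-form on $\mathbb{C}^6$, for every $g\in\mathbb{C}[z_1,\dots,z_6]$. Taking $g=z_k$ for $k=1,\dots,6$ shows that in fact
\begin{equation*}
   \Omega_f:= df\wedge dh_1\wedge dh_2''\wedge dh_3\wedge dh_4
\end{equation*}
must vanish identically as a $5$-form on $\mathbb{C}^6$, for otherwise $\Omega_f$ would have some nonzero component $c\, dz_1\wedge\cdots\wedge\widehat{dz_k}\wedge\cdots\wedge dz_6$, and then $\{z_k,f\}_{{}_{\mathcal{Z}}}\neq 0$.

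The next step is to observe that $dh_1,dh_2'',dh_3,dh_4$ are generically linearly independent on $\mathbb{C}^6$. This follows easily from the explicit formulas \eqref{qp6:segh}, \eqref{eqp-h2-rho}: the first two are linearly independent linear forms, while $h_3$ and $h_4$ are quadratic with mutually disjoint monomial supports (one contains $z_3z_4$, the other $z_5z_6$) and the $z_1z_2$-coefficients differ by $\lambda_1\neq\lambda_2$ for generic parameters; one can also appeal directly to the rank computation in the proof of Proposition~\ref{prop:qpvialgebraic}. Consequently $\Omega_f\equiv 0$ forces $df$ to lie in the $\mathbb{C}(z_1,\dots,z_6)$-linear span of $dh_1,dh_2'',dh_3,dh_4$ at the generic point of $\mathbb{C}^6$.

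To conclude, I would invoke the standard Jacobian criterion for algebraic dependence over $\mathbb{C}$: polynomials $f_0,f_1,\dots,f_r\in\mathbb{C}[z_1,\dots,z_n]$ are algebraically dependent if and only if their differentials are linearly dependent over the function field $\mathbb{C}(z_1,\dots,z_n)$. Applied to $(f,h_1,h_2'',h_3,h_4)$, the rank-drop just established yields a nontrivial polynomial relation $P(f,h_1,h_2'',h_3,h_4)=0$, i.e.\ $f$ is algebraically dependent on the four Casimirs. The only place where a subtle point could arise is the verification of the generic independence of $dh_1,dh_2'',dh_3,dh_4$, but as remarked above this is immediate from the explicit form of the polynomials, so no real obstacle is expected.
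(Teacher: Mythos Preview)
Your proposal is correct and follows essentially the same argument as the paper: both use that the Jacobian of $h_1,h_2'',h_3,h_4$ has generic rank $4$ (the paper cites the proof of Proposition~\ref{prop:qpvialgebraic} for this, just as you do), deduce that $df$ lies in the span of the $dh_i$ at a generic point, and then invoke the Jacobian criterion to conclude algebraic dependence. The paper phrases the last step as writing $\partial f/\partial z_j=\sum_i\alpha_i\,\partial h_i/\partial z_j$, but this is the same content as your appeal to the standard criterion.
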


\begin{proof}
The proof of this statement can be extracted from Section 3 of \cite{OdR}. In our case the proof can also be done directly by observing that
     thanks to the proof of Proposition \ref{prop:qpvialgebraic}, the Jacobian matrix of $h_1,\dots,h_4$ has maximal rank $4$. Then 
because any $f$ is central element must satisfy$\{f,z_i\}_{\mathcal Z}=0$ for all $i=1,\dots,6$, it is easy to prove that 
$$
\frac{\partial f}{\partial z_j} = \sum_{i=1}^4 \alpha_i  \frac{\partial h_i}{\partial z_j} , \quad j=1,2,\dots,6,
$$
for some polynomials $\alpha_1,\dots,\alpha_4$ that are constant on the Segre surface.
We
conclude that $f$ must be a algebraically dependent on $h_1,\dots,h_4$.
\end{proof}

\subsection{Poisson structure on the monodromy manifolds of the differential equations and their blow down.}
In this subsection we prove the following 
\begin{lemma}\label{lm:poisson}
    For each Painlev\'e differential equation, the blow down of its monodromy manifold $\mathcal X^{(d)}$, defined as zero set of the polynomial \tcm{$\phi^{(d)}$} defined in \eqref{eq:phi}, is a Poisson map.
\end{lemma}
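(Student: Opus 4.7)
The plan is to verify the Poisson-map property of $\pi$ by a direct check on a set of generators. The natural Nambu bracket on the cubic $\mathcal X^{(d)}$ is
\[
\{f,g\}_{\mathcal X}=\frac{df\wedge dg\wedge d\phi^{(d)}}{dx_1\wedge dx_2\wedge dx_3},
\]
while on $\mathcal Y^{(d)}\subseteq\mathbb C^4$, defined (choosing the blow-down of $L_1^\infty$ for concreteness; the other two choices, relevant in Sections \ref{subsec:piicubics} and \ref{suse:PI-Z-Segre}, are symmetric) by
\[
H_1:= y_2y_3-y_4,\qquad H_2:=y_1y_4+\epsilon_1 y_1^2+\epsilon_2 y_2^2+\epsilon_3 y_3^2+\omega_1 y_1+\omega_2 y_2+\omega_3 y_3+\omega_4,
\]
it takes the form
\[
\{f,g\}_{\mathcal Y}=\frac{df\wedge dg\wedge dH_1\wedge dH_2}{dy_1\wedge dy_2\wedge dy_3\wedge dy_4}.
\]
Both expressions descend to Poisson brackets on the respective coordinate rings, with $\phi^{(d)}$, respectively $H_1,H_2$, as Casimirs, by the same argument as in Section \ref{subsec:qpvi_poisson}. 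Since $\pi$ is an isomorphism of affine varieties (Section \ref{subsec:blowdown}) with $\pi^*y_i=x_i$ for $i=1,2,3$ and $\pi^*y_4=x_2x_3$, the derivation property of Poisson brackets reduces the claim to checking the six identities
\[
\pi^*\{y_i,y_j\}_{\mathcal Y}=\{\pi^*y_i,\pi^*y_j\}_{\mathcal X},\qquad 1\leq i<j\leq 4,
\]
each modulo the defining ideals.

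The core of the argument is a short determinant expansion. Because $\nabla H_1=(0,y_3,y_2,-1)$ has a $-1$ in the $y_4$-slot, the Nambu determinant on $\mathcal Y$ collapses and yields, for any permutation $\{i,j,k\}=\{1,2,3\}$,
\[
\{y_i,y_j\}_{\mathcal Y}\equiv \varepsilon_{ijk}\,\partial_{y_k}\widetilde\phi\pmod{H_1},
\]
where $\widetilde\phi$ denotes $\phi^{(d)}$ with each $x_k$ replaced by $y_k$. On the cubic side, a one-line Nambu computation gives $\{x_i,x_j\}_{\mathcal X}=\varepsilon_{ijk}\,\partial_{x_k}\phi^{(d)}$, and the three identities for $(i,j)\in\{(1,2),(1,3),(2,3)\}$ follow at once under $\pi^*$. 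For the three pairs involving $y_4$, the Casimir property of $H_1$ on $\mathcal Y^{(d)}$ together with the Leibniz rule yields $\{y_k,y_4\}_{\mathcal Y}=y_2\{y_k,y_3\}_{\mathcal Y}+y_3\{y_k,y_2\}_{\mathcal Y}$, while the analogous Leibniz expansion of $\{x_k,x_2x_3\}_{\mathcal X}$ on $\mathcal X^{(d)}$ reduces these cases to the three already verified.

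The main subtlety, rather than a genuine obstacle, is the need to keep careful track of the defining ideals. For example, $\{y_2,y_3\}_{\mathcal Y}$ evaluates directly to $y_4+2\epsilon_1 y_1+\omega_1$, whereas $\partial_{y_1}\widetilde\phi=y_2y_3+2\epsilon_1 y_1+\omega_1$; the two differ by exactly $H_1$ and so agree on $\mathcal Y^{(d)}$, and under $\pi^*$ both map to $\{x_2,x_3\}_{\mathcal X}=x_2x_3+2\epsilon_1 x_1+\omega_1$, so this causes no difficulty. Since the parameters $\epsilon_i^{(d)}\in\{0,1\}$ and $\omega_j^{(d)}$ enter only as polynomial coefficients, the argument is uniform across all Painlev\'e differential equations in Table \ref{tb:monmfd}, and by symmetry it applies equally to blow-downs at $L_2^\infty$ or $L_3^\infty$, completing the proof.
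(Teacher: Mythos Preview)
Your proof is correct and takes essentially the same approach as the paper: define the Nambu brackets on both surfaces via the wedge-product formula and verify the Poisson condition on generators by direct comparison. Your sign convention $H_1=y_2y_3-y_4$ (the paper in effect uses $y_4-y_2y_3$) produces an exact match, whereas the paper records $\{y_i,y_j\}_{\mathcal Y}=-\{y_i(x),y_j(x)\}_{\mathcal X}$; and your use of the Leibniz rule together with the Casimir property of $H_1$ to handle the brackets $\{y_k,y_4\}$ is a tidy shortcut over computing those three determinants directly.
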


\begin{proof} 
    Observe that $\mathcal X^{(d)}$ admits the 
natural Poisson bracket 
defined by:
\begin{equation}\label{eq:nambuX}
\{x_1,x_2\}_{{}_{\mathcal X}}
=\frac{\partial\phi^{(d)}}{\partial x_3},\quad \{x_2,x_3\}_{{}_{\mathcal X}}=\frac{\partial\phi^{(d)}}{\partial x_1},\quad 
\{x_3,x_1\}_{{}_{\mathcal X}}=\frac{\partial\phi^{(d)}}{\partial x_2}.
\end{equation}
Let us choose the following blow down:
\begin{equation}\label{gen-Segre-Y1}
\mathcal Y^{(d)}:=\{(y_1,y_2,y_3,y_4)\in\mathbb C^4|  y_4 - y_2 \tcm{y_3},  y_1 y_4 +  \sum_{i=1}^3(\epsilon_i^{(d)} i^2+\omega_i^{(d)} y_i)  + \
\omega_4^{(d)}=0\}.
\end{equation}
Then, the Nambu Poisson bracket on the surface $\mathcal Y^{(d)}$ is defined as follows
$$
\{f,g\}_{{}_{\mathcal Y}}=\frac{df\wedge dg\wedge d\Phi_1^{(d)}\wedge d \Psi_1}{dy_1\wedge d y_2\wedge dy_3\wedge dy_4}.
$$
Therefore, for $i<j$, 
\begin{equation}\label{eq:PBY}
    \{y_i,y_j\}_{{}_{\mathcal Y}} =\left[\begin{array}{cccc}
    0 & -y_3 & -y_2 & 1  \\
    y_4 +2 \epsilon_1^{(d)} y_1 + \omega_1^{(d)}& 
    2 \epsilon_2^{(d)} y_2 + \omega_2^{(d)}&2 \epsilon_3^{(d)} y_3 + \omega_3^{(d)} & y_1\\
\end{array}\right]_{ij},
\end{equation}
where $[\,]_{ij}$ denotes the determinant of the matrix obtained erasing the i-th and the j-th columns. Using this formula, it is a straightforward computation to prove that 
$$
\{y_i,y_j\}_{{}_{\mathcal Y}}=-\{y_i(x_1,x_2,x_3),y_j(x_1,x_2,x_3)\}_{\mathcal X},
$$
hence proving that the blow down map is Poisson.
\end{proof}

\subsubsection{Poisson bracket on the $\mathcal Z$--Segre.}

In this subsection we prove the following 

\begin{proposition}\label{prop:poisson}
    For each Painlev\'e differential equation, the linear affine transformation 
    $$
    \Phi^{(d)}:\mathcal Y^{(d)}\to \mathcal Z^{(d)},
    $$
    where $d$ is an index that runs trough the list of the Painlev\'e equations, 
    is \tcm{a Poisson map} with Poisson inverse.
\end{proposition}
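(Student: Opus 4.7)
The strategy I would take is direct verification, exploiting the fact that $\Phi^{(d)}$ is affine linear and that both Poisson brackets are given by explicit Nambu-type determinantal formulas. Writing $\Phi^{(d)}(y) = z$ with $z_k = \xi_{0k} + \sum_{j=1}^4 \xi_{jk} y_j$ for explicit coefficients $\xi_{jk}\in\mathbb{C}$ (depending on the Painlev\'e parameters), the Poisson property amounts to the polynomial identity
\[
\sum_{k,l=1}^{4} \xi_{ki}\xi_{lj}\,\{y_k, y_l\}_{\mathcal Y} \;\equiv\; \{z_i, z_j\}_{\mathcal Z}\bigl|_{z = \Phi^{(d)}(y)} \pmod{(P_1, P_2)},
\]
for every pair $1\leq i < j \leq 6$, where $(P_1, P_2)$ is the ideal defining $\mathcal Y^{(d)}$. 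Both sides are polynomials in $(y_1, y_2, y_3, y_4)$ whose coefficients are rational expressions in the Painlev\'e parameters: the brackets on the left are read off \eqref{eq:PBY}, and those on the right from the appropriate limit of \eqref{eq:NTZ}.

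I would begin with the $\Psix$ case, taking $\Phi^{(\text{VI})} = \Phi_{\mathcal Y}$ from Theorem \ref{thm:isomorphism}. The affine coefficients $\xi_{jk}$ are then read off the explicit formulas in that theorem. The right-hand brackets $\{z_i, z_j\}_{\mathcal Z}$ specialise from \eqref{eq:NTZ} to the $\Psix$ parametrisation \eqref{eq:mu-cl}, \eqref{eq:la-cl}, \eqref{eq:rhos1}, in which $\rho_2 = 0$ and $\lambda_1/\lambda_2 = \rho_5\rho_6/\rho_3$. The verification of each of the fifteen identities then reduces to polynomial manipulation modulo $(P_1, P_2)$. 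This is a finite, routine check that is most cleanly performed symbolically.

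For the remaining Painlev\'e equations I would use a confluence argument rather than repeating the computation from scratch. The rescalings of variables and parameters in Table \ref{tb:confluence} extend naturally to the Nambu brackets on $\mathcal Y^{(d)}$ and on $\mathcal Z^{(d)}$: since the defining polynomials scale by explicit powers of $\epsilon$, the Nambu brackets do too, and the identity displayed above is preserved after the same rescaling and so survives the limit $\epsilon \to 0$. In the ramified cases ($\Pfive^{\text{deg}}$, $\Pthree^{D_7}$, $\Ptwofn$, $\Pone$) the explicit $\Phi^{(d)}$ constructed in Sections \ref{suse:confl-p}--\ref{suse:PIIID7-Z-Segre} and \ref{subsec:piicubics} can alternatively be checked by the same direct method. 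That the inverse $(\Phi^{(d)})^{-1}$ is Poisson is then automatic: $\Phi^{(d)}$ is an affine linear isomorphism of affine varieties, and intertwining of Poisson brackets is a symmetric condition on the two directions.

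The principal obstacle is bookkeeping in the $\Psix$ verification: although each individual step is elementary, expanding and reducing the fifteen identities modulo $(P_1, P_2)$ produces sizeable polynomial expressions with coefficients in $\mathbb{Q}(\upsilon_0^{\pm 1},\upsilon_t^{\pm 1},\upsilon_1^{\pm 1},\upsilon_\infty^{\pm 1})$, and a computer algebra system is essentially indispensable. A secondary subtlety arises in the confluence step, where one must track the $\epsilon$-powers carefully to ensure that both sides of the identity scale consistently and that the limit does not degenerate the brackets; this is the same bookkeeping that already appears in the confluence of the isomorphism itself in Section \ref{suse:confl-p}.
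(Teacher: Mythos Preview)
Your approach---direct verification of the polynomial identity between the two Nambu brackets, using the explicit affine coefficients $\xi_{jk}$---is exactly what the paper does. Two points of difference are worth noting. First, rather than verifying $\Psix$ and then descending by confluence, the paper writes a single unified formula for $\{z_i,z_j\}_{\mathcal Z}$ on the generic family \eqref{gen-segre}, with the parameters $\epsilon_i^{(d)},\rho_i^{(d)},\lambda_i^{(d)}$ left free, and compares this once against \eqref{eq:PBY}; your confluence route would work, but the unified formula avoids the $\epsilon$-bookkeeping entirely. Second, and more substantively, the paper's direct comparison does not yield your displayed identity on the nose but rather
\[
\{z_i,z_j\}_{\mathcal Z}=c^{(d)}\,\{z_i(y),z_j(y)\}_{\mathcal Y},
\]
with an explicit nonzero constant $c^{(d)}$ (for $\Psix$ it is $\upsilon_1\upsilon_t(\upsilon_0-1)^2(\upsilon_\infty^2-1)/\bigl(\upsilon_0\upsilon_\infty(\upsilon_1-1)^2(\upsilon_t^2-1)\bigr)$). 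So the map is Poisson only up to a global scalar---equivalently, it pulls back the symplectic form up to a constant factor, as the remark following the proposition makes explicit. You should expect to see this scalar appear when you carry out your $\Psix$ verification; it is not an error but a feature of how the two Nambu brackets are normalised.
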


\begin{proof}
    We already know that for each Painlev\'e equation $\Phi^{(d)}$ is invertible. We only need to prove that Poisson relations are sent to Poisson relations. We start by observing that
the Poisson structure on the Segre $\mathcal Z_1$ is defined in the same way as for the $\mathcal Z_q$, namely by \eqref{eq:NT}. The properties of this bracket are the same for 
all members of Table \ref{tb:Z-Segre}, so that we obtain a unified formula that by specializing the parameters $\epsilon^{(d)}_i$ and  $\rho^{(d)}_i,\lambda_i^{(d)}$
gives the Poisson bracket on $\mathcal Z^{(d)}$:
\begin{equation}
\begin{split}
\{z_1,z_2\}_{{}_{\mathcal Z}}&= \lambda_2^{(d)} z_3\left(z_5 (\rho_5^{(d)}-\rho_3^{(d)})+z_6 (\epsilon_6^{(d)} \rho_3^{(d)}-\epsilon_4^{(d)}\rho_6^{(d)})\right)\\
&+\lambda_2^{(d)} z_4\left(z_5  (1-\epsilon_4^{(d)}\rho_5^{(d)})+z_6 (\rho_6^{(d)}-\epsilon_6^{(d)})\right),\\
\{z_1,z_3\}_{{}_{\mathcal Z}}&= 
z_1\lambda_1^{(d)}\left(z_5\lambda_2^{(d)}(1-\epsilon_4^{(d)}\rho_5^{(d)})+z_3(\rho_6^{(d)}-\epsilon_6^{(d)} \rho_5^{(d)}) +\right.\\
&+\left. z_6\lambda_2^{(d)}(\epsilon_4^{(d)}\rho_6^{(d)}-\epsilon_6^{(d)})\right)
+z_3\epsilon_2^{(d)}\lambda_2^{(d)}(\rho_6^{(d)} z_6-\rho_5^{(d)} z_5),\\
\{z_1,z_4\}_{{}_{\mathcal Z}}&=\lambda_1^{(d)} z_1\left(z_5 \lambda_2^{(d)}(\rho_5^{(d)}-\rho_3^{(d)})+
z_4  (\epsilon_6^{(d)}\rho_5^{(d)}-\epsilon_4^{(d)}\rho_6^{(d)})
\right.
\\
&\left.+ z_6 \lambda_2^{(d)}(\epsilon_6^{(d)}\rho_3^{(d)}-\rho_6^{(d)})\right)+z_4\lambda_2^{(d)}\epsilon_2^{(d)}(z_5 \rho_5^{(d)}-z_6 \rho_6^{(d)}),\\
\{z_1,z_5\}_{{}_{\mathcal Z}}&= \lambda_1^{(d)} z_1\left (z_5 \lambda_2^{(d)}(\epsilon_4^{(d)}\rho_3^{(d)}-1)
+z_3 (\epsilon_6^{(d)}\rho_3^{(d)}-\rho_6^{(d)})+ z_4(\epsilon_4^{(d)}\rho_6^{(d)}-\epsilon_6^{(d)})
  \right)\\
   &+ z_5\epsilon_2^{(d)} \lambda_2^{(d)}(z_3\rho_3^{(d)}-z_4),\\
   \{z_1,z_6\}_{{}_{\mathcal Z}}&=  \lambda_1^{(d)} z_1\left (z_6\lambda_2^{(d)}(1-\epsilon_4^{(d)}\rho_3^{(d)})
+z_3(\rho_5^{(d)}-\rho_3^{(d)})+ z_4(1-\epsilon_4^{(d)}\rho_5^{(d)})
  \right)\\
   &+ z_6\epsilon_2^{(d)} \lambda_2^{(d)}(z_4-z_3\rho_3^{(d)}),\\
   \{z_2,z_3\}_{{}_{\mathcal Z}}&= \lambda_1^{(d)} z_2\left (z_5 
   \lambda_2^{(d)}(\epsilon_4^{(d)}\rho_5^{(d)}-1)
+z_3(\epsilon_6^{(d)}\rho_5^{(d)}-\epsilon_4^{(d)}\rho_6^{(d)})+\right.\\
   &+\left. z_6 \lambda_2^{(d)}(\epsilon_6^{(d)}-\rho_6^{(d)})
  \right)+ z_3\epsilon_1^{(d)} \lambda_2^{(d)}(z_5\rho_5^{(d)}-z_6\rho_6^{(d)}),\\
   \{z_2,z_4\}_{{}_{\mathcal Z}}&=  \lambda_1^{(d)} z_2\left(z_5   \lambda_2^{(d)}(\rho_3^{(d)}-\rho_5^{(d)})+z_4 (\rho_6^{(d)}-\epsilon_6^{(d)}\rho_5^{(d)})+z_6 \lambda_2^{(d)} (\rho_6^{(d)}-\epsilon_6^{(d)}\rho_3^{(d)})\right)\\
   &+z_4\epsilon_1^{(d)}\lambda_2^{(d)}(z_6\rho_6^{(d)}-z_5\rho_5^{(d)}),\\
    \{z_2,z_5\}_{{}_{\mathcal Z}}&= \lambda_1^{(d)}z_2\left(z_5 \lambda_2^{(d)}(1-\epsilon_4^{(d)}\rho_3^{(d)})+z_4(\epsilon_6^{(d)}-\epsilon_4^{(d)}\rho_6^{(d)})+
    z_3(\rho_6^{(d)}-\epsilon_6^{(d)}\rho_3^{(d)})\right)\\
    &+z_5\epsilon_1^{(d)}\lambda_2^{(d)}(z_4-z_3\rho_3^{(d)}),\\
     \{z_2,z_6\}_{{}_{\mathcal Z}}&= \lambda_1^{(d)} z_2\left( z_6\lambda_2^{(d)}(\epsilon_4^{(d)}\rho_3^{(d)}-1)+z_3 (\rho_3^{(d)}-\rho_5^{(d)})+z_4 (\epsilon_4^{(d)}\rho_5^{(d)}-1)\right)\\
     &+z_6\epsilon_1^{(d)}\lambda_2^{(d)}(z_3\rho_3-z_4), \\
     \{z_3,z_6\}_{{}_{\mathcal Z}}&=\lambda_1^{(d)}(z_2\epsilon_2^{(d)}-z_1\epsilon_1^{(d)})(z_6 \lambda_2+z_3 \rho_5^{(d)}),\\
     \{z_4,z_5\}_{{}_{\mathcal Z}}&= \lambda_1(z_2\epsilon_2^{(d)}-z_1\epsilon_1^{(d)})(z_5 \lambda_2^{(d)}\rho_3^{(d)}+z_4 \rho_6^{(d)}),\\
     \{z_4,z_6\}_{{}_{\mathcal Z}}&= -\lambda_1(z_2\epsilon_2^{(d)}-z_1\epsilon_1^{(d)})(z_6 \lambda_2^{(d)}\rho_3^{(d)}+z_4 \rho_5^{(d)}),\\
      \{z_5,z_6\}_{{}_{\mathcal Z}}&= \lambda_1^{(d)}(z_2\epsilon_2^{(d)}-z_1\epsilon_1^{(d)})(z_4-z_3\rho_3^{(d)}).
\end{split}
\end{equation}
Using these formulae, by direct comparison with \eqref{eq:PBY}, we obtain
$$
\{z_i,z_j\}_{{}_{\mathcal Z}}= \frac{\upsilon_1\upsilon_t(\upsilon_0-1)^2(\upsilon_\infty^2-1)}{\upsilon_0\upsilon_\infty(\upsilon_1-1)^2(\upsilon_t^2-1)}\{z_i(y_1,\dots,y_4),z_j(y_1,\dots,y_4)\}_{{}_{\mathcal Y}},
$$
hence proving that $\Phi^{(d)}$ is Poisson. A similar computation can be done for the inverse.
\end{proof}

\tcm{\begin{remark}
    One of the anonymous referees rightly pointed out that the Poisson structures on the surfaces arise naturally from symplectic forms determined by the geometry of their projective completions. In the case of $\Psix$, since the projective completion $\overline{\mathcal Z}_1$ is a complete intersection of two quadrics, the adjunction formula gives that its canonical divisor is $-\overline{\mathcal Z}_1\setminus{\mathcal Z}_1$. This implies the existence of a rational 2-form $\Omega_{\mathcal Z}$ on $\overline{\mathcal Z}_1$ with simple poles along  $\overline{\mathcal Z}_1\setminus{\mathcal Z}_1$ which restricts to a non-degenerate holomorphic $2$-form 
on the affine surface ${\mathcal Z}_1$. This form induces the Poisson bracket  \eqref{eq:NTZ}. An analogous construction holds for $\overline{\mathcal Y}$, which therefore admits a rational 2-form $\Omega_{\mathcal Y}$ on $\overline{\mathcal Y}$ with simple poles along  $\overline{\mathcal Y}\setminus{\mathcal Y}$ which restricts to a non-degenerate holomorphic $2$-form 
on the affine surface ${\mathcal Y}$ and gives rise to the Poisson bracket \eqref{eq:PBY}. Since the isomorphism $\Phi:\mathcal Y\to\mathcal Z_1$ is linear, it preserves the symplectic form up to a constant factor.
Similarly, the projective completion $\overline{\mathcal X}$ of the monodromy manifold ${\mathcal X}$ of $\Psix$ admits a rational $2$--form $\Omega_{\mathcal X}$ analogous to ones just discussed. The 
blow-down map $\pi:\overline{\mathcal X}\to\overline{\mathcal Y}$ contracts a line contained in the boundary and restricts to an isomorphism $ {\mathcal X}\to{\mathcal Y}$ and one has that $\Omega_{\mathcal X}=\pi*\Omega_{\mathcal Y}$. This implies the statements of Lemma \ref{lm:poisson} and Proposition \ref{prop:poisson}. 
A similar discussion could be repeated for each Painlev\'e equation, hence making these statements trivial. We nonetheless keep them as they are explicit, and therefore might be useful. 
\end{remark}}

\section{Conclusion}
In this paper, we showed that the monodromy manifold of $\tcm{q}\Psix$ gives rise to a number of new results related to monodromy manifolds and symplectic geometry of the differential Painlev\'e equations. First, the continuum limit gives rise to a Segre surface which is isomorphic to the Jimbo-Fricke cubic surface well known to be the monodromy manifold of $\Psix$. Second, we use confluence limits to obtain Segre surfaces, called $\mathcal Z$-Segre surfaces, that are isomorphic to the conventional monodromy manifolds of each Painlev\'e equation. Blow-downs of the latter give rise to other surfaces, which we call $\mathcal Y$-Segre surfaces. We show that the $\mathcal Z$-Segre surfaces and $\mathcal Y$-Segre surfaces are affinely equivalent and the linear affine transformation between them is Poisson. It is interesting to note that the blow-down maps from each Painlev\'e equation's cubic monodromy manifold to the $\mathcal Z$- and $\mathcal Y$-Segre surfaces are Poisson maps. An interesing open question is to ask whether Segre surfaces exist as monodromy manifolds for all remaining equations in Sakai's diagram.

\end{document}